\documentclass{article} %
\usepackage{iclr2026_conference_arxiv,times}

\usepackage{amsmath,amsfonts,bm}

\def\eqref#1{equation~\ref{#1}}

\def\1{\bm{1}}

\def\va{{\bm{a}}}
\def\vb{{\bm{b}}}
\def\vc{{\bm{c}}}

\def\vf{{\bm{f}}}

\def\vh{{\bm{h}}}

\def\vm{{\bm{m}}}

\def\vp{{\bm{p}}}
\def\vq{{\bm{q}}}

\def\vu{{\bm{u}}}

\def\vx{{\bm{x}}}
\def\vy{{\bm{y}}}

\def\mA{{\bm{A}}}
\def\mB{{\bm{B}}}

\def\mD{{\bm{D}}}

\def\mI{{\bm{I}}}

\def\mL{{\bm{L}}}
\def\mM{{\bm{M}}}

\def\mS{{\bm{S}}}

\def\mU{{\bm{U}}}

\def\mX{{\bm{X}}}

\DeclareMathAlphabet{\mathsfit}{\encodingdefault}{\sfdefault}{m}{sl}
\SetMathAlphabet{\mathsfit}{bold}{\encodingdefault}{\sfdefault}{bx}{n}

\newcommand{\E}{\mathbb{E}}

\newcommand{\Var}{\mathrm{Var}}

\usepackage{hyperref}
\usepackage{cleveref}
\usepackage{url}
\usepackage{wrapfig}  %
\usepackage{xcolor}
\usepackage[dvipsnames]{xcolor}
\usepackage{marginnote}
\usepackage{todonotes}
\usepackage{bbm}
\usepackage[linesnumbered,ruled,vlined]{algorithm2e}
\usepackage{algpseudocode}
\usepackage{mathtools}
\usepackage{amsmath}
\usepackage{amsthm}
\usepackage{dsfont}
\usepackage{hyperref}
\usepackage{enumitem}
\usepackage{booktabs} %
\usepackage{subcaption}
\usepackage{microtype}
\usepackage{float}  %
\usepackage[table]{xcolor}
\usepackage{threeparttable}

\usepackage[T1]{fontenc}
\usepackage[utf8]{inputenc}

\setlist[itemize]{noitemsep, topsep=0pt, parsep=0pt, partopsep=0pt}
\setlist[enumerate]{noitemsep, topsep=0pt, parsep=0pt, partopsep=0pt}

\newcommand{\poly}{\textrm{poly}}
\newcommand{\vol}{\textrm{vol}}
\newtheorem{theorem}{Theorem}[section]
\newtheorem{lemma}{Lemma}[section]
\newtheorem{defn}{Definition}[section]
\newtheorem{fact}{Fact}[section]
\newtheorem{corollary}{Corollary}[section]
\newtheorem{remark}{Remark}[section]

\crefname{defn}{Definition}{Definitions}
\crefname{fact}{Fact}{Facts}

\allowdisplaybreaks

\title{Sublinear Spectral Clustering Oracle with Little Memory}

\author{Ranran Shen$^{1}$, Xiaoyi Zhu$^{2}$, Pan Peng$^{1*}$, Zengfeng Huang$^{23}$\thanks{Corresponding authors.}\\
$^{1}$School of Computer Science and Technology, University of Science and Technology of China, Hefei, China\\
$^{2}$School of Data Science, Fudan University, Shanghai, China\\
$^{3}$Shanghai Innovation Institute, Shanghai, China\\
\texttt{ranranshen@mail.ustc.edu.cn,zhuxy22@m.fudan.edu.cn,}\\
\texttt{ppeng@ustc.edu.cn, huangzf@fudan.edu.cn}}

\iclrfinalcopy %
\begin{document}

\maketitle

\begin{abstract}
     We study the problem of designing \emph{sublinear spectral clustering oracles} for well-clusterable graphs. Such an oracle is an algorithm that, given query access to the adjacency list of a graph $G$, first constructs a compact data structure $\mathcal{D}$ that captures the clustering structure of $G$. Once built, $\mathcal{D}$ enables sublinear time responses to \textsc{WhichCluster}$(G,x)$ queries for any vertex $x$. A major limitation of existing oracles is that constructing $\mathcal{D}$ requires $\Omega(\sqrt{n})$ memory, which becomes a bottleneck for massive graphs and memory-limited settings. In this paper, we break this barrier and establish a memory-time trade-off for sublinear spectral clustering oracles. Specifically, for well-clusterable graphs, we present oracles that construct $\mathcal{D}$ using much smaller than $O(\sqrt{n})$ memory (e.g., $O(n^{0.01})$) while still answering membership queries in sublinear time. We also characterize the trade-off frontier between memory usage $S$ and query time $T$, showing, for example, that $S\cdot T=\widetilde{O}(n)$ for clusterable graphs with a logarithmic conductance gap, and we show that this trade-off is nearly optimal (up to logarithmic factors) for a natural class of approaches. Finally, to complement our theory, we validate the performance of our oracles through experiments on synthetic networks.
\end{abstract}

\vspace{-0.7em}
\section{Introduction}
\vspace{-0.3em}

A central task in graph analysis is to uncover communities, which are groups of vertices that are more densely connected internally than externally. This problem, known as \textit{graph clustering}, has long been a cornerstone of graph theory and algorithms \citep{hagen1992new, chan1993spectral, ng2001spectral, czumaj2015testing, peng2020robust}. Beyond its theoretical significance, graph clustering underlies diverse applications, ranging from community detection in networks \citep{van2013community, bedi2016community, li2024comprehensive} to bioinformatics \citep{paccanaro2006spectral} and image segmentation \citep{shi2000normalized, felzenszwalb2004efficient}.

Despite their importance, most graph clustering algorithms are impractical for large graphs, as they require reading the entire input, spending $\Omega(n)$ time, and/or building data structures of size $\Omega(n)$, where $n$ is the number of vertices. Even when only a few cluster memberships are needed, these methods still carry out full global computations, making them unsuitable for massive graphs where both time and memory (or space) matter---but memory is the primary bottleneck.

From a systems perspective, this memory bottleneck is especially pressing. Many realistic environments severely restrict available working memory: streaming models limit algorithms to a single pass with sublinear space; cloud-based platforms often impose high storage and data-transfer costs, making it infeasible to materialize the entire graph; and GPUs and TPUs offer massive compute but only modest on-chip memory relative to dataset size. In all these settings, the primary challenge is to fit a compact representation of the clustering structure into limited fast memory. Thus, developing memory-efficient clustering algorithms is not only a theoretical pursuit but also a practical necessity for analyzing trillion-edge graphs in modern computing environments.

These considerations have motivated the study of \emph{local} clustering oracles that run in sublinear time and space. Our focus is on \emph{sublinear spectral clustering oracles} \citep{peng2020robust, gluch2021spectral, shen2024sublinear}, which construct a compact data structure $\mathcal{D}$ from query access to the adjacency list of the graph. Once built, $\mathcal{D}$ enables efficient evaluation of \textsc{WhichCluster}$(G,x)$ queries, that is, determining the cluster assignment of any vertex $x$ without incurring the global $\Omega(n)$ costs. Importantly, these oracles return consistent assignments (with a fixed random seed) and closely approximate the ground-truth clustering, thereby making local access to clustering information both theoretically sound and practically useful.

Several recent works \citep{peng2020robust, gluch2021spectral, shen2024sublinear} demonstrate that such oracles are possible under planted clustering assumptions, supporting cluster membership queries in both sublinear time and sublinear space. However, all existing sublinear spectral clustering oracles require at least $\Omega(\sqrt{n})$ space. In particular, Peng \citep{peng2020robust} constructs an oracle using $\tilde{\Theta}(\sqrt{n})$ space, while both Gluch et al. \citep{gluch2021spectral} and Shen et al. \citep{shen2024sublinear} require $\Omega(n^{1-\delta})$ space for any $\delta \le \tfrac{1}{2}$, which is again at least $\sqrt{n}$. We refer to \Cref{table:compare} and \Cref{sec:related} for more details. 
For truly massive graphs, this requirement is prohibitive, as limited working memory and frequent main-memory access quickly dominate the overall cost. 
This raises the central question:
\begin{center}
\emph{Is it possible to design a spectral clustering oracle that breaks the $\Omega(\sqrt{n})$ space barrier---can we use substantially less memory while still achieving sublinear query time? If so, what kinds of trade-offs between space and query efficiency can be realized?}
\end{center}

To the best of our knowledge, the question of establishing a space--time trade-off for sublinear spectral clustering oracle has not been explicitly studied in the prior literature. %
This challenge is reminiscent of recent work on space--time trade-offs in learning, beginning with \citet{raz2017time}’s result on parity learning and later extended to tasks such as linear regression \citep{sharan2019memory} and noisy parity \citep{garg2021memory}. In the area of distribution testing, a series of works \citep{diakonikolas2019communication,berg2022memory,roy2023testing,canonne2024simpler} have established sharp space--time trade-offs for fundamental problems such as uniformity testing and closeness testing. %
Much like in these learning problems and in recent advances on distribution testing, the central question for sublinear spectral clustering is how far memory usage can be reduced without making query times impractically large.

    In this paper, we give the first sublinear spectral clustering oracles with little memory (i.e., much less than $O(\sqrt{n})$) and a trade-off between memory usage $S$ and query time $T$ satisfying $S\cdot T\approx \widetilde{O}(n)$ (for a class of well clusterable graphs). We show that this trade-off is nearly optimal (up to logarithmic factors) for a natural class of approaches. In the following, we first present some basic definitions, then state our main results, provide a high-level overview of our techniques, and finally review related work.

\subsection{Basic definitions}

We measure cluster connectivity using conductance, a widely studied metric (e.g., \citep{chiplunkar2018testing,dey2019spectral, manghiuc2021hierarchical, shen2024sublinear}).  Let $G=(V,E)$ be an undirected graph. For any vertex $v\in V$, let $d_v$ denote the degree of $v$ in $G$. For any subset $C\subseteq V$, let $\vol(C)=\sum_{v\in C}{d_v}$ denote the volume of $C$. For any two subsets $S,C\subseteq V$, let $E(S,C)$ denote the set of edges between $S$ and $C$. 

        \begin{defn}[Outer and inner conductance]
            \upshape
            \label{defn:conductance}
            For any non-empty subset $C\subseteq V$, the \textit{outer conductance} and \textit{inner conductance} of $C$ is defined to be 
            \[
              \phi_{\textup{out}}(C,V)={|E(C,V\backslash C)|}/{\vol(C)}, \quad\phi_{\textup{in}}(C)=\min_{S\subseteq C, 0<\vol(S)\le {\vol(C)}/{2}}{\phi_{\textup{out}}(S, C)}.
            \]
            Specially, the \textit{conductance} of graph $G$ {\rm is defined to be} $\phi(G) = \min\limits_{C\subseteq V, 0< \vol(C)\le {\vol(V)}/{2}}{\phi_{\rm out}(C,V)}$.
        \end{defn}

        Intuitively, inner (resp. outer) conductance captures the internal (resp. external) connectivity of a cluster. A ``good'' cluster exhibits both large inner conductance and small outer conductance. Based on the definition of conductance, we give the formal definition of the input graph which is assumed to have a planted clustering structure (see~\Cref{defn:clusterable-graph}).

        \begin{defn}[$k$-partition]
            \upshape
            \label{defn:k-partition}
            Let $G=(V,E)$ be a graph. A \textit{$k$-partition of $V$} is a collection of $k$ disjoint subsets $C_1,\dots,C_k$ such that $\bigcup_{i=1}^k{C_i}=V$.
        \end{defn}
    
        \begin{defn}[$(k,\varphi,\varepsilon)$-clusterable graph]
            \upshape
            \label{defn:clusterable-graph}
            Let $k\ge 2$ be an integer and let $\varphi\in(0,1)$ and $\varepsilon\in[0,1)$. Let $G=(V,E)$ be a graph. If there exists a $k$-partition of $V$, denoted by $C_1,\dots,C_k$, such that for all $i\in[k]$, $\phi_{\textup{in}}(C_i)\ge \varphi$, $\phi_{\textup{out}}(C_i,V)\le \varepsilon$ and for all $i,j\in[k]$, one has $\frac{|C_i|}{|C_j|}\in O(1)$, then we call $G$ is a $(k,\varphi,\varepsilon)$\emph{-clusterable graph}.
        \end{defn}

\vspace{-0.2em}
We work in the \emph{adjacency list model}, where the algorithm can query any neighbor of a specified vertex in constant time.

  \vspace{-0.2em}
    \subsection{Main results}%

\paragraph{Sublinear spectral clustering oracle} 
A key contribution of this work is a spectral clustering oracle that operates with very little memory and provides an explicit trade-off between memory and query time. Given a $(k,\varphi,\varepsilon)$-clusterable graph, the goal of a clustering oracle is to build a data structure $\mathcal{D}$ in sublinear time such that, for any vertex $x$, the oracle can answer \textsc{WhichCluster}$(G,x)$ in sublinear time. Moreover, the clustering induced by answering \textsc{WhichCluster}$(G,x)$ for all $x$ should have a small misclassification error, that is, only a small fraction of vertices are assigned to the wrong clusters compared to the ground truth.  

In what follows, we state our main theorem in the simplified setting where $\varphi=\Omega(1)$ and $d,k=O(1)$. %
While we state our results for $d$-regular graphs, they naturally extend to \textbf{$d$-bounded} graphs, i.e., graphs in which every vertex has degree at most $d$ (see~\Cref{sec:prelimi}).

\begin{theorem}[Informal main result; full statement in \Cref{thrm:main}]
\label{thrm:informal-main}
Suppose $\varphi = \Omega(1)$, $d,k = O(1)$, and $\varepsilon \leq h(d,k,\varphi)$ for some function $h$. Let $G=(V,E)$ be a $d$-regular $(k,\varphi,\varepsilon)$-clusterable graph with clusters $C_1,\dots,C_k$. %
Let $1\le M \le O\left(n^{1/2 - O(\varepsilon)}\right)$ be a trade-off parameter. Then there exists a sublinear spectral clustering oracle that:

\begin{itemize}
    \item constructs a data structure $\mathcal{D}$ using $\widetilde{O}\left(n^{O(\varepsilon)}\cdot M\right)$ bits of space,
    \item answers any \textsc{WhichCluster} query in $\widetilde{O}\left(n^{1+O(\varepsilon)}/M\right)$ time,
    \item misclassifies at most $O(\varepsilon^{1/3})|C_i|$ vertices in each cluster $C_i$, $i \in [k]$.
\end{itemize}
\end{theorem}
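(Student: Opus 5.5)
The plan is to follow the standard random-walk-based spectral clustering oracle framework of \citep{gluch2021spectral, shen2024sublinear}, while changing how the dot products $\langle \vm_x, \vm_y\rangle$ of lazy random walk distributions are estimated. We will only store $M$ sampled walk endpoints or counts instead of the $\Omega(\sqrt{n})$ samples that collision-based estimators need. The spectral embedding step is taken as known from prior work. For $t = \Theta(\log n/\varphi^2)$, the $t$-step walk distribution $\vm_x$ is close to $\vm_y$ (in the sense $\langle \vm_x-\vm_y,\vm_x-\vm_y\rangle$ small) when $x,y$ belong to the same cluster. It is far apart when they belong to different clusters, except for an $O(\varepsilon^{1/3})$ fraction of ``bad'' vertices in each cluster. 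Moreover, $\|\vm_x\|_2^2 \approx n^{-1+O(\varepsilon)}$, because walks stay mostly inside a cluster of size $\Theta(n/k)$, and the spectral bound gives $\|\vm_x\|_\infty \le n^{-1+O(\varepsilon)}$ for good $x$.

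\textbf{Preprocessing and data structure.} Sample $s = O(k\log k)$ seed vertices, so that each cluster receives a good seed with constant probability (amplified by repetition). The data structure $\mathcal{D}$ will be the following. For each seed $u$, we store $R = \widetilde{O}(n^{O(\varepsilon)} M)$ endpoints of independent $t$-step random walks started from $u$, as a multiset of vertex IDs taking $O(\log n)$ bits each. We also store a clustering of the seeds. To cluster the seeds, we test pairs $u,u'$ by comparing stored endpoints. The pair-collision count between the two multisets estimates $\langle\vm_u,\vm_{u'}\rangle$ with relative error controlled by $R^2\|\vm\|_2^2 \approx R^2 n^{-1}$. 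When $R < \sqrt{n}$ this count is too weak. So instead we use a streaming approach: run additional walks from $u'$ one at a time and look up each endpoint in the stored multiset of $u$. This uses time $\approx n^{1+O(\varepsilon)}/R$ and no extra memory, since only a counter is kept. Space then stays at $\widetilde{O}(n^{O(\varepsilon)}M)$, and preprocessing time $\widetilde{O}(n^{1+O(\varepsilon)}/M)$ per pair, which is still sublinear.

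\textbf{Query.} For \textsc{WhichCluster}$(G,x)$, we run $T' = \widetilde{O}(n^{1+O(\varepsilon)}/M)$ walks from $x$ and do not store them. For each endpoint $w$ and each seed $u$, we increment counter $c_u$ by the multiplicity of $w$ in $u$'s stored multiset. Then $\E[c_u] = T' R\langle \vm_x,\vm_u\rangle$. We also estimate $\|\vm_x\|^2$ and $\|\vm_u\|^2$ similarly. The self-norm for $x$ can be estimated by storing only $\widetilde O(M)$ walk endpoints from $x$ temporarily, within the space budget. We then assign $x$ to the cluster of the seed maximizing the normalized similarity. The variance bound is the key calculation. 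We have $\Var[c_u] \le T'R\langle\vm_x,\vm_u\rangle + T'R^2\sum_w \vm_x(w)\vm_u(w)^2 + T'^2R\sum_w \vm_x(w)^2\vm_u(w)$. Using $\|\vm\|_\infty \le n^{-1+O(\varepsilon)}$, Chebyshev gives constant relative error once $T'R \ge n^{1+O(\varepsilon)}$, which yields the product $S\cdot T = \widetilde O(n^{1+O(\varepsilon)})$. The constraint $M\le n^{1/2-O(\varepsilon)}$ keeps the cross terms dominated, so that $T'\ge R$. A median-of-means step boosts the success probability to $1-1/\poly(n)$, which gives consistency over a fixed random seed.

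\textbf{Main obstacle.} The hardest step is the misclassification analysis, combined with the $\ell_\infty$ control. The $\ell_\infty$ bound holds only for good vertices, so heavy endpoints could inflate the variance. I would first try to bound the variance using only $\|\vm_u\|_\infty$ for good seeds, and use $\|\vm_x\|_2$ for the query side. I would then show that the dot-product estimates within a multiplicative $(1\pm 0.1)$ factor suffice for the separation argument in \citep{gluch2021spectral}. Together with the spectral-embedding lemma, this gives at most $O(\varepsilon^{1/3})|C_i|$ errors per cluster.
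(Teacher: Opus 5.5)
There is a genuine gap, and it is in the clustering layer, not in the space--time estimator. Your argument works directly with the raw walk distributions $\vm_x=\mM^t\mathds{1}_x$. You assert, as known, that these separate clusters with only $O(\varepsilon^{1/3})|C_i|$ bad vertices, and that $(1\pm 0.1)$-multiplicative estimates of $\langle \vm_x,\vm_y\rangle$ suffice. That holds only when $\varepsilon\ll 1/\log n$.

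The theorem allows constant $\varepsilon$, which is why the $n^{O(\varepsilon)}$ factors appear. With $t=\Theta(\log n/\varphi^2)$, a walk can then cross cluster boundaries $\Theta(\varepsilon t)=\Theta(\varepsilon\log n/\varphi^2)$ times, so it does not ``stay mostly inside a cluster''. Spectrally, $\vm_x=\sum_i(1-\lambda_i/2)^t u_i(x)u_i$, and for $2\le i\le k$ the weights $(1-\lambda_i/2)^t$ can be as small as $n^{-\Theta(\varepsilon/\varphi^2)}$. Hence
\[
\langle \vm_x,\vm_y\rangle=\tfrac1n+\sum_{i=2}^k(1-\lambda_i/2)^{2t}u_i(x)u_i(y),
\]
and whenever $\lambda_2,\dots,\lambda_k$ are of order $\varepsilon$, this is dominated by the uniform term $\frac1n$ for every pair. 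For example, two $\varphi$-expanders joined at rate $\varepsilon$ give values $\frac1n(1\pm n^{-\Theta(\varepsilon/\varphi^2)})$ for same- versus different-cluster pairs. A constant-factor estimate therefore carries no membership information.

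Even exact values only give the spectral embedding distorted by direction-dependent factors. These factors can differ by up to $n^{\Theta(\varepsilon/\varphi^2)}$ when the $\lambda_i$ differ. The separation and $\varepsilon^{1/3}$ misclassification analyses of Shen et al. and Gluch et al. are statements about $\vf_x$, and they do not transfer to this distorted geometry.

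The missing idea is the Gluch et al.\ correction that undoes these weights, which the paper keeps intact. It proceeds as follows:
\begin{enumerate}
\item Sample $s=n^{O(\varepsilon/\varphi^2)}$ vertices.
\item Estimate the Gram matrix $(\mM^t\mS)^T(\mM^t\mS)$ entrywise to \emph{additive} error about $\xi n^{-1-O(\varepsilon/\varphi^2)}$.
\item Form $\Psi=\frac ns\widehat W_{[k]}\widehat\Sigma_{[k]}^{-2}\widehat W_{[k]}^T$.
\item Answer with $\bm\alpha_x^T\Psi\bm\alpha_y$, which is within $\xi/n$ of $\langle\vf_x,\vf_y\rangle$.
\end{enumerate}
The paper's only change is to compute every collision probability with the batched estimator: $R/M$ batches of $M$ walks from each endpoint, averaged. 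The variance bound of \Cref{lemma:bounded-space-z-E-V} gives the needed additive accuracy with $O(M)$ words and $R\approx n^{1+O(\varepsilon/\varphi^2)}/M$ walks. The Shen et al.\ oracle is then used verbatim, so no new misclassification analysis is needed.

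Your asymmetric estimator stores a fixed multiset of $R$ endpoints per vertex and streams walks from the query vertex. It could plausibly replace the batched primitive, since its irreducible $O(1/R)$ variance from the fixed stored sample is harmless once $R\ge n^{O(\varepsilon)}(k/\xi)^{O(1)}$. However, it has to feed into the $\Psi$-based reconstruction of $\langle\vf_x,\vf_y\rangle$, not into a direct comparison of walk distributions.
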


\vspace{-0.2em}
Note that the space $S$ used to build $\mathcal{D}$ and the query time $T$ satisfy the trade-off
$S \cdot T = \widetilde{O}\!\left(n^{1+O(\varepsilon)}\right)$. The oracle is built upon a new subroutine \textsc{EstColliProb} (Alg.~\ref{algo:bounded-space-esti-colli-prob}) for estimating the collision probability of two random walk distributions with asymptotically space--time trade-off. 
In particular, when $\varepsilon \ll 1/\log n$, this simplifies to $S \cdot T = \widetilde{O}(n)$. The theorem establishes a trade-off: larger space $S$ yields faster queries, while smaller $S$ slows them down. Unlike prior oracles that require at least $\Omega(\sqrt{n})$ space, our method operates with substantially less space, often far below $\sqrt{n}$, thereby breaking the $\sqrt{n}$ space barrier. 

We provide a more detailed comparison between our main algorithmic result (\Cref{thrm:main}) and prior work in \Cref{table:compare}. Note that there are two types of results, one with $O(\log k \cdot \varepsilon)|C_i|$ misclassification error at the cost of larger space usage and query time (e.g., \cite{gluch2021spectral}) and the other with $O(\textup{poly}(k)\cdot \varepsilon^{1/3})|C_i|$ misclassification error and slightly smaller space usage and query time (e.g., \cite{shen2024sublinear}).

\vspace{-0.5em}
\begin{table}[H]
\centering
\caption{Comparison of our results (\Cref{thrm:main}) with previous work in terms of space usage, query time and misclassification error. We use $O_\varphi$ to suppress dependence on $\varphi$ and $\widetilde{O}$ to hide all $\poly(\log n)$ factors. Here $\delta\in (0,\frac{1}{2}]$ is a constant and $1\le M\le O(\frac{n^{1/2-O(\varepsilon/\varphi^2)}}{k})$ is a trade-off parameter.%
}
\label{table:compare}
\small
\begin{tabular}{lccc}

\toprule
work & space usage & query time & misclassification error \\
\midrule
\citet{peng2020robust}
& $\widetilde{O}_\varphi(\sqrt{n}\cdot \textup{poly}(\frac{k}{\varepsilon}))$
& $\widetilde{O}_\varphi(\sqrt{n}\cdot \textup{poly}(\frac{k}{\varepsilon}))$
& $O(kn\sqrt{\varepsilon})$
\\

\citet{gluch2021spectral}
& $\widetilde{O}_\varphi(n^{1-\delta+O(\varepsilon)}\cdot \textup{poly}(\frac{k}{\varepsilon}))$
& $\widetilde{O}_\varphi(n^{\delta+O(\varepsilon)}\cdot \textup{poly}(\frac{k}{\varepsilon}))$
& $O(\log k\cdot \varepsilon)|C_i|$ $^\dagger$

\\

\textbf{our (\Cref{itm:main-case1})} 
& $\widetilde{O}_\varphi(n^{O(\varepsilon)}\cdot M\cdot \textup{poly}(\frac{k}{\varepsilon}))$
& $\widetilde{O}_\varphi(n^{1+O(\varepsilon)}\cdot \frac{1}{M}\cdot \textup{poly}(\frac{k}{\varepsilon}))$
& $O(\log k\cdot \varepsilon)|C_i|$ $^\dagger$

\\

\citet{shen2024sublinear}
& $\widetilde{O}_\varphi(n^{1-\delta+O(\varepsilon)}\cdot \textup{poly}(k))$
& $\widetilde{O}_\varphi(n^{\delta+O(\varepsilon)}\cdot \textup{poly}(k))$
& $O(\textup{poly}(k)\cdot \varepsilon^{1/3})|C_i|$  $^\dagger$

\\

\textbf{our (\Cref{itm:main-case2})} 
& $\widetilde{O}_\varphi(n^{O(\varepsilon)}\cdot M\cdot \textup{poly}(k))$
& $\widetilde{O}_\varphi(n^{1+O(\varepsilon)}\cdot \frac{1}{M}\cdot \textup{poly}(k))$
& $O(\textup{poly}(k)\cdot \varepsilon^{1/3})|C_i|$  $^\dagger$

\\
\bottomrule
\end{tabular}

\begin{tablenotes}
\footnotesize
\item $\dagger$\quad for each cluster $C_i,i\in[k].$
\end{tablenotes}

\end{table}

Previous oracles require at least $\Omega(\sqrt{n})$ space usage while our oracle operates within much less space. Moreover, we stress that, our new clustering algorithms (\Cref{itm:main-case1} and \Cref{itm:main-case2}), although they introduce a space constraint, affect only the space usage and query time; all other guarantees (e.g., the conductance gap and the misclassification error) remain unchanged.

\paragraph{Distinguishing $\mathbf{1}$-cluster vs. $\mathbf{2}$-cluster} 
As a corollary of our main result, we obtain a sublinear algorithm for distinguishing between a single-cluster expander and a graph consisting of two disjoint clusters.  
Formally, let $\varphi = \Omega(1)$ and $d = O(1)$. Consider the following promise problem: the input is a $d$-regular graph $G = (V,E)$ that is guaranteed to be in one of two cases:  
(i) $G$ is a $\varphi$-expander on $n$ vertices (i.e., $(1,\varphi,0)$-clusterable); or (ii) $G$ is the disjoint union of two identical $\varphi$-expanders, each on $n/2$ vertices (i.e., $(2,\varphi,0)$-clusterable). The goal of the $1$-cluster vs.\ $2$-cluster problem is to determine which case holds.

We address this problem with an \textsc{EstColliProb}-based algorithm, yielding the following result.

\begin{theorem}[Upper bound]
\label{thrm:distinguish}
For any trade-off parameter $1 \le M \le O(\sqrt{n})$, there exists an algorithm (Alg.~\ref{algo:bounded-space-dinstinguish}) that, with probability at least $1 - 2n^{-100}$, solves the $1$-cluster vs. $2$-cluster problem. Moreover, the algorithm:
\begin{itemize}

\item uses $\widetilde{O}(M)$ bits of space,
\item runs in $\widetilde{O}\left(\tfrac{n}{M}\right)$ time.
\end{itemize}

\end{theorem}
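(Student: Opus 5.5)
We distinguish the two cases through the collision probability of lazy random walks. Fix a walk length $t=\Theta(\log n/\varphi^2)$ and write $\vp_v^t$ for the distribution of a $t$-step lazy random walk started at $v$. If $G$ is a $\varphi$-expander on $n$ vertices, the mixing bound from the spectral gap $\lambda_2\ge \varphi^2/2$ (Cheeger) gives $\|\vp_v^t-\1/n\|_2\le n^{-10}$. Hence $\|\vp_v^t\|_2^2\le \frac1n(1+n^{-5})$ for every $v$. If $G$ is two disjoint $\varphi$-expanders of size $n/2$, the walk from $v$ stays in $v$'s component and mixes there, so $\|\vp_v^t\|_2^2\ge \frac{2}{n}$. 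The two cases therefore differ by a factor $2$ in $\|\vp_v^t\|_2^2$ at an arbitrary vertex $v$. The algorithm picks any single vertex $v$ and estimates this collision probability to within a multiplicative factor $(1\pm 1/4)$ with failure probability at most $2n^{-100}$. It then thresholds the estimate at $\frac{3}{2n}$.

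The estimation is done with \textsc{EstColliProb} (Alg.~\ref{algo:bounded-space-esti-colli-prob}) under the memory budget $M$. Instead of keeping all $R$ walk endpoints in memory, we proceed in rounds. In each round we run $M$ walks from $v$ and store only their endpoints, using $O(M\log n)$ bits. We then run $N$ further walks from $v$ and, for each, count collisions of its endpoint with the stored batch, for example with a hash table or by sorting. Each cross pair $(i,j)$ collides with probability exactly $\|\vp_v^t\|_2^2$, so the number of collisions over a batch is an unbiased estimator of $MN\|\vp_v^t\|_2^2$. Over the whole run the estimator accumulates $P$ cross pairs, and the working space stays $\widetilde O(M)$: endpoints, counters, and the current walk position, which needs $O(\log n)$ bits plus $O(\log t)$ for the step counter.

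The main obstacle is the variance analysis showing that $P=\widetilde\Theta(n)$ pairs suffice while the time stays $\widetilde O(n/M)$. Let $\alpha=\|\vp_v^t\|_2^2=\Theta(1/n)$. For the bipartite collision count $Z$, we have $\Var(Z)\le MN\alpha + (M^2N+MN^2)\|\vp_v^t\|_3^3$. Because $\vp_v^t$ is nearly uniform in both cases, $\|\vp_v^t\|_3^3=O(1/n^2)$. With $M\le N$, Chebyshev then gives constant relative error once $MN\alpha\gtrsim 1$ and $N\gtrsim M\cdot n\|\vp\|_3^3/\alpha^2 = O(M)$, i.e. once $MN=\Omega(n)$. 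We therefore take $N=\Theta(n/M)$, which is at least $M$ precisely because $M\le O(\sqrt n)$. This is where the restriction on $M$ enters.

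Each trial costs $(M+N)t=\widetilde O(n/M)$ time and $\widetilde O(M)$ bits. We repeat $O(\log n)$ independent trials sequentially, keeping only a running list of estimates, and output the median. A Chernoff bound boosts the success probability to $1-n^{-100}$. A union bound with the $n^{-100}$ slack coming from the mixing-time bound, taking $t$ large enough, yields the claimed total probability $1-2n^{-100}$. The total space is $\widetilde O(M)$ bits and the running time is $\widetilde O(n/M)$, as claimed.
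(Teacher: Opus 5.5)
Your argument is correct in substance, but it is a genuinely different proof from the paper's.

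\textbf{The paper's route.} The paper analyses Alg.~\ref{algo:bounded-space-dinstinguish} itself. It samples $s=O(\log n)$ vertices, builds $\mathcal{G}\approx(\mM^t\mS)^T(\mM^t\mS)$ with \textsc{EstColliProb} at $R=\Theta(n/M)$ and entrywise accuracy $\sigma_{\textup{err}}=10^{-4}/n$, and thresholds $(v_2(\frac{n}{s}\mathcal{G}))^2$. Correctness rests on three ingredients:
\begin{itemize}
\item the bound $v_2(\frac{n}{s}(\mM^t\mS)(\mM^t\mS)^T)\le n^{-9}$ for a single expander;
\item the matrix concentration bound of \Cref{lemma:gluch-21}, which gives $v_2\ge 0.99$ for two expanders and is where $s=\Omega(\log n)$ is needed;
\item \Cref{lemma:weyl}, to transfer these bounds to the squared estimate.
\end{itemize}

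\textbf{Your route.} You replace the second eigenvalue by a scalar statistic, $\lVert\mM^t\mathds{1}_v\rVert_2^2$ at one vertex. By mixing and Cauchy--Schwarz it is $\frac1n+O(n^{-20})$ in one case and at least $\frac2n$ in the other. You estimate it with a stored-batch/streamed-walk collision counter.

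\textbf{Comparison.} Your route removes the vertex sampling, the matrix concentration and the eigenvalue perturbation, and it saves polylogarithmic factors. The price is that it exploits the exact factor-$2$ gap that comes from two equal, disconnected halves. The paper's test is instead a direct specialization of the $v_k$ versus $v_{k+1}$ gap behind \textsc{InitOracle}, so its upper bound literally runs the clustering oracle's subroutine.

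\textbf{Caveats on what you prove.} The theorem is phrased about Alg.~\ref{algo:bounded-space-dinstinguish}, whereas you establish the trade-off for a different algorithm. Your estimator is also not literally Alg.~\ref{algo:bounded-space-esti-colli-prob}. However, \textsc{EstColliProb} with $I_S=\{v\}$ would work too: by \Cref{lemma:bounded-space-z-E-V} its relative variance is $O(\frac{n}{RM}+\frac{\sqrt n}{R})$, which is small for $R=\Theta(n/M)$ and $M\le O(\sqrt n)$.

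\textbf{A step in the variance arithmetic needs repair.} With the bound $\Var(Z)\le MN\alpha+(M^2N+MN^2)\lVert\vp_v^t\rVert_3^3$ as you state it, the relative variance is $\frac{1}{MN\alpha}+\frac{\lVert\vp_v^t\rVert_3^3}{\alpha^2}\bigl(\frac1M+\frac1N\bigr)$. Since $\lVert\vp_v^t\rVert_3^3/\alpha^2\approx 1$ in both cases, the $MN^2$ term contributes about $1/M$, which is not small for constant $M$ (e.g.\ $M=1$). Your condition ``$N\gtrsim M\cdot n\lVert\vp_v^t\rVert_3^3/\alpha^2=O(M)$'' is not what comes out; in fact $n\lVert\vp_v^t\rVert_3^3/\alpha^2=\Theta(n)$.

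The fix is easy, in either of two ways:
\begin{itemize}
\item keep the exact covariance $\lVert\vp_v^t\rVert_3^3-\alpha^2=\sum_x\vp_v^t(x)(\vp_v^t(x)-\alpha)^2$, which is $O(n^{-20})$ because $\vp_v^t$ is within $n^{-10}$ of uniform on its support in both cases; or
\item store $\max(M,C)$ endpoints for a suitable constant $C$.
\end{itemize}
Finally, the mixing bound is deterministic, so no union bound with an $n^{-100}$ slack is needed for it.
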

We complement this with a lower bound for distinguishing between the two cases when the graph can only be accessed through random walk queries%
. %

\begin{defn}[Random walk queries]
        \upshape
        For any specified starting vertex $x$, a random walk query returns the endpoint of an $O(\log n)$-step random walk starting from $x$.
        \label{defn:random-walk-queries}
    \end{defn}

\begin{theorem}[Lower bound]
\label{thrm:distinguish_lower}

Any algorithm that correctly solves the $1$-cluster vs. $2$-cluster problem with error at most $1/3$ using only random walk oracles must satisfy $S\cdot T \geq \Omega(n)$, where $S$ and $T$ denote the space complexity and time complexity of the algorithm, respectively. 
\end{theorem}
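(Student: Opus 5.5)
We prove \Cref{thrm:distinguish_lower} by reduction to a space-bounded collision-detection problem. In the random-walk-query model the algorithm only observes endpoints of $O(\log n)$-step walks. The plan is to show that these endpoints are essentially uniform samples from a hidden distribution $p$, and that $p$ differs substantially between the two cases only in its collision structure. Since $\varphi=\Omega(1)$ and $d=O(1)$, a $\varphi$-expander has spectral gap $\Omega(1)$ by Cheeger's inequality. With a suitably large constant in the walk length, a walk started at $x$ therefore mixes to within $n^{-10}$ in total variation of the uniform distribution on the component containing $x$. Up to this negligible error, each query on start vertex $x$ returns an independent uniform sample from the component of $x$. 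In case (i) that component is all of $V$; in case (ii) it is a hidden half $A\ni x$ of size $n/2$.

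To keep the instances indistinguishable by vertex labels, we randomize the construction. In case (i) the graph is a fixed expander under a uniformly random labeling. In case (ii) a uniformly random balanced bipartition $(A,\bar A)$ is chosen, and an identical expander is placed on each side. The algorithm learns nothing about component membership except through coincidences among returned vertices and start vertices, since the labels are otherwise exchangeable. The two cases then look like this. In case (i), all samples are i.i.d.\ uniform on $[n]$. In case (ii), a sample from start $x$ lands in $\bar A$ with probability $0$, whereas in case (i) it would do so with probability $1/2$. Because $\bar A$ is hidden, the algorithm can only detect this when a sample coincides with a vertex it already knows to lie on the other side, for instance a sample equal to a start vertex or to an earlier sample obtained from a start on the opposite side. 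To distinguish the cases, the algorithm must observe a collision between two samples whose side information is correlated, i.e.\ a ``cross'' collision versus ``no cross'' collision.

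Next comes the space--time argument, which follows the memory-bounded uniformity and collision testing lower bounds \citep{diakonikolas2019communication,berg2022memory}. We model the algorithm as a branching program of width $2^S$ that reads $T$ samples. We partition the run into $T/S$ blocks of $S$ samples each. Within a block the memory holds at most $S$ bits, which is information about at most $O(S)$ previously seen labels. The probability that a fresh uniform sample hits one of these labels is $O(S/n)$. Over $T$ samples, the expected number of ``informative'' coincidences is therefore $O(ST/n)$. If $ST=o(n)$, then with probability $1-o(1)$ no informative coincidence occurs, and conditioned on that event the transcripts in the two cases have identical distributions by the exchangeability of labels. A Markov argument then shows the success probability is at most $1/2+o(1)<2/3$, which forces $ST=\Omega(n)$.

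The main obstacle is making the step ``$S$ bits remember $O(S)$ labels'' rigorous. Memory could store hashed or aggregate information, such as parity-style fingerprints, instead of explicit labels. To handle this, I would rely on the existing lemma for memory-bounded uniformity testing: any $S$-bit algorithm that distinguishes the uniform distribution on $[n]$ from a distribution with $\ell_2^2$-collision probability larger by a constant factor ($1/n$ versus $2/n$ in our setting, conditioned on start sides) requires $S\cdot T=\Omega(n)$. I would then carry out a simulation argument. Using independent randomness, any random-walk-query algorithm can be turned into a sample-based tester for this collision problem with the same $S$ and $T$ up to polylogarithmic factors. The tester embeds its samples into a random-labeled graph, answering start-vertex choices through fresh public randomness. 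The lower bound then follows from the cited memory-bounded testing lower bound.
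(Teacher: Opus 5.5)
Your proposal has a genuine gap in the step you ultimately rely on: the black-box reduction to a memory-bounded uniformity-testing lemma. Those lower bounds concern i.i.d.\ samples from one fixed hidden distribution $p$. The random-walk oracle is \emph{adaptive}. In case (ii), the answer to a query is uniform on the side containing the start vertex $x$, and the algorithm chooses $x$, possibly as a function of earlier answers.

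A tester holding samples $s\sim p$, with $p$ uniform on a hidden half $A$, can answer a query from $x$ only if it knows whether $x\in A$. If so it returns $s$; otherwise, in a paired construction, it returns the pair-mate of $s$. That bit is exactly the hidden information. ``Fresh public randomness'' does not supply it:
\begin{itemize}
\item Under a fixed public relabeling, the side of the image of $x$ is still unknown.
\item A lazy relabeling that sends each new start vertex into the support of $p$ must be stored, which takes far more than $S$ bits once $T\gg S$.
\item It also puts every start vertex on one side. That instance is generated adaptively and is not any fixed case-(ii) graph, so the algorithm's correctness guarantee does not transfer.
\end{itemize}
The easy simulation does exist: an adaptive algorithm can always query from a single start vertex to obtain a fixed-$p$ sampler. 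But that goes in the wrong direction for a lower bound.

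Separately, the lemma as you state it is false. Against a \emph{known} alternative with collision probability $2/n$, e.g.\ uniform on $\{1,\dots,n/2\}$, constant memory and $O(1)$ samples suffice. The known bound is for a random hidden family with independent per-pair signs. A uniformly random balanced bipartition does not have that independent per-pair structure.

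What is missing is a memory-bounded lower bound proved directly for the adaptive oracle. The paper builds this into its hard instance as follows.
\begin{itemize}
\item Vertices are paired as $\{2i-1,2i\}$, and independent signs $Y_i$ decide the sides.
\item The player names $W_t$ and receives a sample uniform on the side of $W_t$; call that side's sign $Z_t$.
\item The Diakonikolas et al.\ mutual-information argument is rerun with $Z_tY_j$ in place of $Y_j$. One gets $I(X;s_t\mid M_{t-1})\le O(1)\cdot\frac{1}{n}\sum_j \E[Z_tY_j\mid M_{t-1},W_t]^2$.
\item $W_t$ is a function of $M_{t-1}$, and $(Z_tY_1,\dots,Z_tY_n)$ has entropy at least $n-1$. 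Hence the sum is $O(S+1)$, so $\Omega(1)\le I(X;M_T)\le O(ST/n)$.
\item Only then are random-walk endpoints coupled to these ideal answers. The per-query total-variation error $0.01/n^2$ is accumulated over $T\le O(n^2)$ queries.
\end{itemize}
Your mixing and coupling step matches this last part and is fine. Your block/branching-program heuristic is, as you suspected, not a proof. The middle step has to be an argument of this kind for the adaptive oracle, not a citation to a fixed-distribution result.
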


Note that a random walk query can be simulated with $O(\log n)$ adjacency-list queries, so our upper bound matches the lower bound up to $\poly(\log n)$ factors. Since the \textsc{EstColliProb}-based approach solves the $1$-cluster vs. $2$-cluster problem, our lower bound indicates that its trade-off is nearly tight. This, in turn, suggests that the space--time trade-off of our clustering oracle is essentially tight, at least for approaches based on collision probability estimation.

\subsection{Technical overview}\label{sec:tech_overview}

\paragraph{Sublinear spectral clustering oracle}

To obtain sublinear spectral clustering oracles that rely on a $\log(k)$ or $\poly(k)$ conductance gap, a key primitive is the estimation of the dot product $\langle \vf_x,\vf_y\rangle$, where $\vf_x$ is the spectral embedding of $x\in V$ (see~\Cref{defn:spectral-embedding}). Suppose there exists an algorithm that estimates such dot products using $S$ space and $T$ time. We can then design a clustering oracle %
based on this primitive, which uses $\widetilde{O}(\poly(k)\cdot S)$ space to construct a data structure $\mathcal{D}$ and answers \textsc{WhichCluster} queries in $\widetilde{O}(\poly(k)\cdot T)$ time (see \Cref{sec:Spectral clustering oracles with little memory}). Thus, the central task is to understand the space--time trade-off for dot product estimation, as it directly determines the efficiency of the resulting clustering oracle.

Indeed, the previous $\Omega(\sqrt{n})$ space bottleneck in constructing $\mathcal{D}$ arises precisely from this dot product estimation step, rather than from the clustering procedure itself. This observation motivates our technical improvements. In particular, the dot product estimation algorithm of \citet{gluch2021spectral} does not directly compute \(\langle \vf_x, \vf_y \rangle\) for arbitrary vertex pairs. Instead, it applies a sequence of transformations and shows that estimating \(\langle \vf_x, \vf_y \rangle\) can be reduced to computing the collision probability 
$
(\mM^t\mathds{1}_x)^T(\mM^t\mathds{1}_y)=\langle \mM^t\mathds{1}_x, \mM^t\mathds{1}_y\rangle$, 
where $\mM$ is the random walk transition matrix of $G$ and $\mathds{1}_s$ is the indicator vector of vertex $s$.

        Previous dot product oracle estimates $\langle \mM^t \mathds{1}_x, \mM^t \mathds{1}_y \rangle$ by performing $R \approx \sqrt{n}$ independent random walks of length $t=O(\frac{\log n}{\varphi^2})$ from each vertex $x$ and $y$, respectively. The endpoints of these walks are stored to construct empirical distributions, whose dot product is then computed. This approach requires $O(R)$ words of space and $O(R t)$ time, tightly coupling space usage with computation time. In particular, to ensure sufficient accuracy, $R$ must be at least $\Omega(\sqrt{n})$, which implies that the space usage cannot be reduced below $O(\sqrt{n})$.
    
        To reduce the memory requirement below $O(\sqrt{n})$ and achieve a more flexible trade-off between space and time, we propose a batch-based estimation strategy. The idea behind this approach is inspired by \citet{canonne2024simpler}, where a similar batching technique is used to design memory-efficient algorithms for uniformity testing under memory constraints. While the underlying technique is inspired by prior work, we are the first to apply this idea in the graph setting to rigorously analyze random walks. Specifically, we partition the total of $R$ random walks into $B = R/M$ batches. In each batch, $M$ walks of length $t$ are performed from each vertex, and only the endpoints within the batch are stored to construct empirical distributions. The batch-level dot product is computed, and the final estimate is obtained by averaging over all batches. This approach reduces the space requirement to $O(M)$ words while keeping the total number of walks. %
By choosing $M$ smaller than $O(\sqrt{n})$, we can achieve a space--time trade-off satisfies $M\cdot R\approx n$. This allows for efficient estimation of the dot product even under memory constraints (see~\Cref{sec:dot-product-oracle}).
\paragraph{Distinguishing $\mathbf{1}$-cluster vs. $\mathbf{2}$-cluster} 

The core idea of our algorithm (Alg. \ref{algo:bounded-space-dinstinguish}) for distinguishing the $1$-cluster vs. $2$-cluster is to reduce the task to detecting a spectral gap in the random walk operator. Specifically, we set $t=O(\frac{\log n}{\varphi^2})$ so that in the $1$-cluster case, the second largest eigenvalue of $\mM^t$ becomes negligibly small, while in the $2$-cluster case it remains exactly $1$. To capture this behavior within bounded space, we avoid storing $\mM^t$ explicitly and instead construct a compact surrogate matrix $\mathcal{G}$ using the batch-based strategy described above. This surrogate preserves the essential spectral information of $\mM^t$, so that the separation between the two cases is faithfully reflected in the spectrum of $\mathcal{G}$ (~\Cref{lemma:expander-bounded-space-W} in~\Cref{subsec:upper-bound}). Consequently, analyzing $\mathcal{G}$ suffices to distinguish between the $1$-cluster and $2$-cluster cases using only $O(M)$ space.

To establish the space--time lower bound, we note that analyzing the distribution of random walks of the two cases reveals a fundamental discrepancy: in the $1$-cluster case, this distribution converges to uniformity over the entire set of points; whereas in the $2$-cluster case, it decomposes into two separate uniform distributions, each concentrated over half of the points. Under a sublinear space constraint, the algorithm cannot store enough indices to reliably identify which cluster a given sample belongs to, making the two cases intrinsically hard to distinguish. We formalize this intuition via a reduction to space-bounded distribution testing, leveraging the information-theoretic framework for distribution-testing lower bounds of \citet{diakonikolas2019communication}. A key technical challenge is that random walks do not produce perfectly uniform samples, and small deviations could accumulate over multiple steps and affect the memory state. To address this, we develop an inductive coupling argument that carefully controls the deviation between the random-walk and ideal uniform distributions, ensuring the accumulated discrepancy remains negligible.

A key novelty of our approach is a new reduction that connects random-walk-based graph
clustering with space-bounded distribution testing. We construct paired hard instances and show how any random-walk algorithm for distinguishing $1$-cluster vs. $2$-cluster instances can be simulated in the distribution-testing setting (ss~\Cref{subsec:lower-bound}). %

\subsection{Related work}\label{sec:related}
\citet{peng2020robust} (see also \citep{czumaj2015testing}) provided a robust sublinear spectral clustering oracle that constructs a data structure using $O(\sqrt{n}\cdot \poly(\frac{k\log n}{\varepsilon}))$ bits of space\footnote{Although the paper does not explicitly state the space complexity, it can be directly inferred from the algorithm description.} and answers any \textsc{WhichCluster}$(G,x)$ in $O(\sqrt{n}\cdot \poly(\frac{k\log n}{\varepsilon}))$ time. This oracle relies on a $\poly(k)\log n$ conductance gap between inner and outer conductance and misclassifies at most  $O(kn\sqrt{\varepsilon})$ vertices. 
\citet{gluch2021spectral} (resp. \citet{shen2024sublinear}\footnote{\citet{shen2024sublinear} stated their result for $\delta = 1/2$. Since their algorithm relies on the dot product oracle in \citet{gluch2021spectral}, the guarantee extends naturally to any $\delta \in (0,\tfrac{1}{2}]$.}) gave a sublinear spectral clustering oracle that constructs a data structure using $O(n^{1-\delta+O(\varepsilon)}\cdot\poly(\frac{k\log n}{\varepsilon}))$ (resp. $O(n^{1-\delta+O(\varepsilon)}\cdot \poly(k\log n))$) bits of space and answers any \textsc{WhichCluster}$(G,x)$ in $O(n^{\delta+O(\varepsilon)}\cdot \poly(\frac{k\log n}{\varepsilon}))$) (resp. $O(n^{\delta+O(\varepsilon)}\cdot \poly(k\log n))$) time, where $\delta\in(0,\frac{1}{2}]$. These two oracles have different conductance gap and misclassification error. 

Recently, \citet{neumann2022sublinear} studied designing sublinear spectral clustering oracles for signed graph. \citet{kapralov2023learning} studied designing sublinear hierarchical clustering oracle for graphs exhibiting hierarchical structure.

Besides the above most directly related work on sublinear spectral clustering oracles, several other research directions are also relevant to our study.

\paragraph{Property testing} %
One line of work is property testing (i.e., \emph{testing graph clusterability}), where the goal is to quickly distinguish whether a graph can be partitioned into $k$ clusters with high inner conductance, or whether it is far from having such clustering. For example, \citet{czumaj2015testing} studied testing whether a graph admits a good cluster structure in the adjacency list query model, providing algorithms with sublinear query time. This direction was later advanced by \citet{chiplunkar2018testing}. While property testing algorithms do not provide explicit cluster assignments, they capture the feasibility of clustering in sublinear resources and thus serve as an important precursor to oracle-based approaches like ours. For example, \citet{czumaj2015testing} implicitly yields a sublinear spectral clustering oracle under a $\log n$ conductance gap. This was later extended by \citet{peng2020robust}, who developed a robust oracle capable of handling noise.
    
    \paragraph{Local graph clustering} Another line of related work is \textit{local graph clustering} \citep{andersen2006local,spielman2013local,zhu2013local,gharan2014partitioning,andersen2016almost}. The goal of this category is to identify a cluster associated with a given vertex. In this setting, the algorithm outputs a set of vertices related to the input vertex, and its running time and memory usage are bounded by the size of the output cluster, up to a weak dependence on $n$. In particular, when the graph contains $k$ clusters and $n$ vertices, the complexity can be as large as $\Omega(n/k)$.

    \paragraph{Grapah problems under limited memory}
    Recently, there has been a surge of work on understanding learning under limited memory. Graph problems inherently require substantial space and time to compute, and have attracted increasing attention. One line of
    research focuses on the semi-streaming model where the algorithm is permitted $O(n \cdot \poly(\log n))$ space. Both upper bound algorithms and lower bound results are proposed for various graph problems, including Maximal Independent Set \citep{assadi2024log} and Matching \citep{kapralov2013better}.  There is also significant work on the Massively Parallel Computation model, where machines have sublinear memory to solve graph problems \citep{behnezhad2019massively,assadi2019massively,lkacki2020walking,ghaffari2020massively,nowicki2021dynamic}.

\section{Preliminaries}
\label{sec:prelimi}

    Let $G=(V,E)$ denote an unweighted, undirected $d$-regular graph with $n$ vertices, where $V=\{1,2,\dots,n\}$. Let $i\in[n]$ denote $1\le i\le n$. %
    For a graph $G=(V,E)$, let $  \mA\in \mathbb{R}^{n \times n}$ denote the adjacency matrix of $G$, where $  \mA(i,j)=1$ if $(i,j)\in E$, and $  \mA(i,j)=0$ otherwise, $i,j\in[n]$. Let $  \mD\in \mathbb{R}^{n \times n}$ denote a diagonal matrix, where $  \mD(i,i)=d_i$, $i\in[n]$. Let $  \mL=\mD^{-1}(\mD-\mA)\mD^{-1}=\mI-\frac{\mA}{d}$ denote the normalized Laplacian matrix of $G$, where $\mI\in\mathbb{R}^{n\times n}$ is the identity matrix. For $  \mL$, we use $0=\lambda_1\le \dots\le \lambda_n\le 2$ to denote its eigenvalues and $  \vu_1,\dots,\vu_n\in\mathbb{R}^n$ to denote the corresponding eigenvectors. Without loss of generality, we assume $\{\vu_1,\dots,\vu_n\}$ forms an orthonormal basis of $\mathbb{R}^n$. Let $  \mU=(\vu_1,\dots,\vu_n)\in\mathbb{R}^{n\times n}$. Based on $\mU$, we give the definition of spectral embedding (see~\Cref{defn:spectral-embedding}). Moreover, let $  \mM=\frac{1}{2}(\mI +\frac{\mA}{d})=\mI-\frac{\mL}{2}$ denote the transition matrix of lazy random walk on $G$. That is, if the walker is currently at a vertex $x\in V$, then in the next step it stays at $x$ with probability $\frac{1}{2}$, or moves to each neighbor of $x$ with probability $\frac{1}{2d}$.

    \begin{defn}[spectral embedding]
        \upshape
        \label{defn:spectral-embedding}
        Let $G=(V,E)$ be a graph. For any vertex $x\in V$, we use $  \vf_x\in\mathbb{R}^k$ to denote the \emph{spectral embedding} of $x$, where $\vf_x=\mU_{[k]}^T\mathds{1}_x=(\vu_1(x), \dots, \vu_k(x))^T.$

    \end{defn}

    \begin{defn}[$\varphi$-expander]
        \upshape
        \label{defn:expander}
        Let $G=(V,E)$ be a graph. Let $\varphi\in (0,1)$. Let $\phi(G)$ denote the conductance of $G$ (see~\Cref{defn:conductance}). If $\phi(G)\ge \varphi$, then we call $G$ a $\varphi$\emph{-expander}. 
    \end{defn}

    Let $  \va\in \mathbb{R}^n$ denote a column vector (unless otherwise stated). For any two vectors $  \va,\vb\in \mathbb{R}^n$, we use $ \langle \va,\vb\rangle=\va^T\vb$ to denote the dot product of $  \va$ and $  \vb$. For any $x\in V$, let $\mathds{1}_x\in \mathbb{R}^n$ denote the indicator vector of $x$, where $\mathds{1}_x(i)=1$ if $i=x$ and $0$ otherwise. For a vector $\va=(\va(1),\dots,\va(n))^T$, the $p$-norm ($p\ge 1$) of $\va$ is defined to be$\lVert \va \rVert_p=(\sum_{i=1}^n{|\va(i)|^p})^{\frac{1}{p}}$.

    For any symmetric matrix $\mB\in\mathbb{R}^{n\times n}$, we use $v_i(\mB)$ to denote the $i$-th largest eigenvalue of $\mB$%
    , $\lVert \mB\rVert_F=\sqrt{\sum_{i=1}^n{\sum_{j=1}^n{\mB^2(i,j)}}}$ to denote the Frobenius norm of $\mB$, $\lVert \mB\rVert_2=\max_{\vx\in\mathbb{R}^n,\lVert\vx\rVert_2=1}{\lVert \mB\vx\rVert_2}$ to denote the spectral norm of $\mB$, and $\mB_{[i]}$ to denote the first $i$ columns of $\mB, 1\le i\le n$.

    \begin{defn}[TV distance]
        \label{defn:TV-distance}
        \upshape
        For any two probability distributions $\vp,\vq$ over $[n]$, the \emph{total variance distance} (i.e., TV distance) of $\vp,\vq$ is defined to be
        \[
            d_\textup{TV}(\vp,\vq)=\frac{1}{2}\lVert \vp-\vq\rVert_1.
        \]
    \end{defn}

    \begin{fact}
        \label{fact:norm}
        For any vector $\vp\in\mathbb{R}^n$, we have $\lVert \vp\rVert_4^2\le \lVert \vp\rVert_2^2$.
        \begin{proof}
            Let $\lVert \vp\rVert_{\infty}=\max_{i=1}^n{|\vp(i)|}$%
            . Then, we have 

            \begin{align*}
                \lVert \vp\rVert_4^2&=\sqrt{\sum_{i=1}^n {\vp^4(i)}}\le \sqrt{\sum_{i=1}^n {\vp^2(i)\cdot \lVert \vp\rVert_{\infty}^2}}=\sqrt{\lVert \vp\rVert_{\infty}^2}\sqrt{\sum_{i=1}^n{\vp^2(i)}}\le \sqrt{\sum_{i=1}^n{\vp^2(i)}}\sqrt{\sum_{i=1}^n{\vp^2(i)}}=\lVert \vp\rVert_2^2.
            \end{align*}

        \end{proof}
    \end{fact}

    \paragraph{From $d$-bounded graphs to $d$-regular graphs}
    
    Although we state our results for $d$-regular graphs, they extend naturally to $d$-bounded graphs, i.e., graphs in which every vertex has degree at most $d$. The extension is straightforward: for a $d$-bounded graph $G^\prime=(V,E^\prime)$, for every $x\in V$, we can add $d-d_x$ self-loops with weight $\frac{1}{2}$ to $x$ to get a $d$-regular graph $G=(V,E)$. Note that the lazy random walk on $G$ is equivalent to the random walk on $G^\prime$, with the random walk satisfying that if the walker is currently at $x\in V$, then in the next step it stays at $x$ with probability $1-\frac{d_x}{2d}$, or moves to each neighbor of $x$ with probability $\frac{1}{2d}$.

\section{Dot product oracle with little memory}
\label{sec:dot-product-oracle}

As discussed in the technique overview, the main bottleneck in constructing sublinear spectral clustering oracles lies in dot product estimation of $\langle \vf_x,\vf_y\rangle$, whose space--time trade-off directly determines the overall efficiency. In this section, we present our batch-based dot product oracle for estimating $\langle \vf_x,\vf_y\rangle$ in small space and analyze its performance. The following theorem states the performance guarantees of our oracle.

        \begin{theorem}
            \label{thrm:bounded-space-dot-product-oracle}
            Let $k\ge 2$ be an integer. Let $\varepsilon,\varphi\in(0,1)$ with $\frac{\varepsilon}{\varphi^2}\le \frac{1}{10^5}$. Let $G=(V,E)$ be a $d$-regular and $(k,\varphi,\varepsilon)$-clusterable graph. 
            Let $\frac{1}{n^5}<\xi<1$. Let $1\le M_\textup{init},M_\textup{query}\le O(\frac{n^{1/2-20\varepsilon/\varphi^2}}{k})$. Then, with probability at least $1-2n^{-100}$, \textsc{InitOracle}$(G,k,\xi,M_\textup{init})$ (Alg. \ref{algo:bounded-space-initialize-oracle}) computes a sublinear space matrix $\Psi$ of size $n^{O(\varepsilon/\varphi^2)}\cdot \log^2n\cdot (\frac{k}{\xi})^{O(1)}$, such that the following property is satisfied:

            for every pair of vertices $x,y\in V$, \textsc{QueryDot}$(G,x,y,\xi,\Psi,M_\textup{query})$ (Alg. \ref{algo:bounded-space-query-dot}) computes an output value $\langle \vf_x,\vf_y\rangle_{\textup{apx}}$ such that with probability at least $1-6n^{-100}$:
            \[
                |\langle \vf_x,\vf_y\rangle_{\textup{apx}}-\langle \vf_x,\vf_y\rangle|\le \frac{\xi}{n}.
            \]
            Moreover, let $S_\textup{init},T_\textup{init}$ be the space and time costs of \textsc{InitOracle}$(G,k,\xi,M_\textup{init})$ (Alg.\ref{algo:bounded-space-initialize-oracle}), and let $S_\textup{query},T_\textup{query}$ be those of a single \textsc{QueryDot}$(G,x,y,\xi,\Psi,M_\textup{query})$ query (Alg.\ref{algo:bounded-space-query-dot}). %
            Then we have
            
    \begin{itemize}
\item $S_\textup{init}=(\frac{k}{\xi})^{O(1)}\cdot n^{O(\varepsilon/\varphi^2)}\cdot M_\textup{init}\cdot \log^4 n$, $\quad T_\textup{init}=(\frac{k}{\xi})^{O(1)}\cdot n^{1+O(\varepsilon/\varphi^2)}\cdot \frac{ \log^4 n}{M_\textup{init}}\cdot \frac{1}{\varphi^2}$,

\item $S_\textup{query}=(\frac{k}{\xi})^{O(1)}\cdot n^{O(\varepsilon/\varphi^2)}\cdot M_\textup{query}\cdot \log^3 n$, $\quad T_\textup{query}=(\frac{k}{\xi})^{O(1)}\cdot n^{1+O(\varepsilon/\varphi^2)}\cdot \frac{\log^3 n}{M_\textup{query}}\cdot \frac{1}{\varphi^2}$.

            \end{itemize}
        \end{theorem}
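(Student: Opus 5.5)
The plan is to follow the reduction of \citet{gluch2021spectral}. We show that their dot product oracle touches the random walks only through estimates of collision probabilities $\langle \mM^t\mathds{1}_x,\mM^t\mathds{1}_y\rangle$, and then replace that single primitive with the batch-based estimator \textsc{EstColliProb}. Recall their structure. \textsc{InitOracle} samples a set $I_S$ of $s=n^{O(\varepsilon/\varphi^2)}(k/\xi)^{O(1)}\log n$ vertices and forms the Gram matrix $\mathcal{G}\approx \frac{n}{s}(\mM^t\mS)^T(\mM^t\mS)$, with $t=O(\log n/\varphi^2)$. Its top-$k$ eigendecomposition $\widehat W_{[k]}\widehat\Sigma_{[k]}^{-1}\widehat W_{[k]}^T$ gives the $s\times s$ matrix $\Psi$. \textsc{QueryDot} then estimates the vectors $\vm_x=(\mM^t\mS)^T\mM^t\mathds{1}_x$ and $\vm_y$, each of whose entries is again a collision probability, and outputs $\frac{n}{s}\vm_x^T\Psi\vm_y$. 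The deterministic part of their analysis says the output is within $\xi/n$ of $\langle\vf_x,\vf_y\rangle$ provided two conditions hold. First, every entry of $\mathcal{G}$ and of $\vm_x,\vm_y$ is estimated with additive error roughly $\xi^{O(1)}k^{-O(1)}n^{-1-O(\varepsilon/\varphi^2)}$. Second, the collision-probability norms $\lVert\mM^t\mathds{1}_v\rVert_2^2$ are bounded by $n^{-1+O(\varepsilon/\varphi^2)}$ for most sampled $v$ and for the queried vertices, which is their ``good vertex'' lemma. I will take that part as given and redo only the estimation layer.

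The central new step is a concentration lemma for \textsc{EstColliProb}. For vertices $u,v$ with $\vp=\mM^t\mathds{1}_u$ and $\vq=\mM^t\mathds{1}_v$, run $B=R/M$ batches. In each batch, draw $M$ walk endpoints from $u$ and $M$ from $v$, count collisions $Z_b=\frac{1}{M^2}\sum_{i,j}\mathbb{1}[a_i=b_j]$, and average $\bar Z=\frac1B\sum_b Z_b$. Since $\E[Z_b]=\langle\vp,\vq\rangle$, we bound the variance of the U-statistic:
\[
\Var(Z_b)\le \frac{\langle\vp,\vq\rangle}{M^2}+\frac{1}{M}\left(\sum_i\vp(i)^2\vq(i)+\sum_i\vp(i)\vq(i)^2\right),
\]
and the cubic terms are controlled by $\lVert\vp\rVert_4^2\lVert\vq\rVert_2$, using \Cref{fact:norm} and Cauchy--Schwarz. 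With $\lVert\vp\rVert_2^2,\lVert\vq\rVert_2^2\le n^{-1+O(\varepsilon/\varphi^2)}$ this gives $\Var(\bar Z)\lesssim \frac{n^{O(\varepsilon/\varphi^2)}}{B}\left(\frac{1}{nM^2}+\frac{1}{n^{3/2}M}\right)$. Chebyshev then yields error $\sigma\, n^{-1}$ with constant probability once $BM^2\gtrsim n^{1+O(\varepsilon/\varphi^2)}/\sigma^2$, that is, $R=BM\approx n^{1+O(\varepsilon)}/(M\sigma^2)$. Here the constraint $M\le n^{1/2-20\varepsilon/\varphi^2}/k$ is what makes the $\frac{1}{nM^2}$ term dominate. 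A median of $O(\log n)$ independent repetitions boosts the success probability to $1-n^{-110}$. The space is $O(M\log n)$ bits, since only the current batch's endpoints and a running sum are stored. The time is $O(Rt)$ walk steps, which is $n^{1+O(\varepsilon/\varphi^2)}\log^2 n/(M\varphi^2)$ up to the $(k/\xi)^{O(1)}$ factors coming from $\sigma$.

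Assembly is then bookkeeping. \textsc{InitOracle} estimates the $s^2$ entries of $\mathcal{G}$ one at a time, reusing $O(M_\textup{init}\log n)$ working space. It stores $\mathcal{G}$ and $\Psi$ in $s^2=n^{O(\varepsilon/\varphi^2)}(k/\xi)^{O(1)}\log^2 n$ words and computes the eigendecomposition in that space. This gives the stated $S_\textup{init}$, and $T_\textup{init}=s^2\cdot$(time per estimate), which is where the extra $\log$ factors come from. \textsc{QueryDot} estimates the $2s$ entries of $\vm_x,\vm_y$ sequentially with $M_\textup{query}$, storing only those vectors and $\Psi$. A union bound over the $s^2$ initialization entries gives probability $1-2n^{-100}$ (including the event that the sampled set is good). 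The $2s$ query entries, together with the event that $x,y$ behave well, give the $1-6n^{-100}$ bound.

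The main obstacle is the variance bound under the walk distributions. It needs $\lVert\mM^t\mathds{1}_v\rVert_2^2\le n^{-1+O(\varepsilon/\varphi^2)}$ for all relevant vertices, while the Gluch et al. lemma only guarantees this for most vertices or in aggregate. I would first try to apply their bound through $\lVert\mM^t\mathds{1}_v\rVert_2^2\le \sum_{i}e^{-\lambda_i t}\vu_i(v)^2$, splitting it into the top-$k$ part, where $\lVert\vf_v\rVert_2^2\le n^{-1+O(\varepsilon/\varphi^2)}$ holds for good $v$, and a tail that decays as $n^{-\Omega(1)}$. For vertices where this fails, I would absorb them into the misclassification budget, as the original analysis does.
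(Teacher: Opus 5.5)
Your estimation layer matches the paper's proof. You use the same second-moment computation for $Z_b$ (\Cref{lemma:bounded-space-z-b-E-V}, including the $\lVert\cdot\rVert_4^2\le\lVert\cdot\rVert_2^2$ step) and divide by $B=R/M$. You apply Chebyshev, with the cap $M\le O(n^{1/2-20\varepsilon/\varphi^2}/k)$ making the $1/M^2$ term dominate, and boost with a median over $O(\log n)$ runs. The entrywise errors then go into the Gluch et al.\ perturbation argument, with the same complexity bookkeeping; your additive space accounting is slightly tighter than the paper's multiplicative one.

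The flaw is the ``good vertex'' assumption in your first paragraph and the obstacle built on it in your last. The bound you need is unconditional. Lemma 22 of \citet{gluch2021spectral} (\Cref{lemma:gluch_0} here) gives $\lVert\mM^t\mathds{1}_x\rVert_2\le O(k\,n^{-1/2+20\varepsilon/\varphi^2})$ for \emph{every} $x\in V$ and every $t\ge 20\log n/\varphi^2$. The paper plugs it directly into \Cref{lemma:bounded-space-z-E-V} to get \Cref{lemma:bounded-space-z} for arbitrary $x,y$. In the proof of \Cref{lemma:bounded-space-apx}, the query vertices enter only through $\lVert\mathbf{a}_x\rVert_2$ and $\lVert\mathbf{a}_y\rVert_2$, which are bounded with probability $1$ by the same lemma. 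So your spectral split goes through for all $v$, and the proof closes by citing this lemma; no good-vertex lemma appears anywhere.

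Your fallback of absorbing bad vertices into a misclassification budget is not a valid repair. This theorem claims $\xi/n$ accuracy for \emph{every} pair $x,y$, and the dot-product oracle has no misclassification slack. A query vertex with large collision norm would simply make that pair's estimate fail, so you would have proved a strictly weaker statement. For the same reason, the failure probability contains no event ``$x,y$ behave well.'' The $6n^{-100}$ collects three things:
\begin{itemize}
\item the events about the sample $I_S$ (Lemmas 19 and 28 of Gluch et al., and the bound on $\lVert A\rVert_2$);
\item the \textsc{EstColliProb} guarantee;
\item the two median-boosted query vectors $\bm{\alpha}_x,\bm{\alpha}_y$.
\end{itemize}

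A smaller slip in your recollection of the construction: $\Psi=\frac{n}{s}\widehat W_{[k]}\widehat\Sigma_{[k]}^{-2}\widehat W_{[k]}^T$ uses the inverse \emph{square} of the top-$k$ eigenvalues of $\frac{n}{s}\mathcal{G}$. Equivalently, this is $\widetilde\Sigma_{[k]}^{-4}$ in the singular values of $\sqrt{n/s}\,\mM^t\mS$. With the plain inverse, the output would target $\sum_{i\le k}(1-\lambda_i/2)^{2t}\vu_i(x)\vu_i(y)$ instead of $\langle\vf_x,\vf_y\rangle$.
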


Note that to ensure that \textsc{InitOracle}$(G,k,\xi,M_\textup{init})$ (Alg. \ref{algo:bounded-space-initialize-oracle}) and \textsc{QueryDot}$(G,x,y,\xi,\Psi,M_\textup{query})$ (Alg. \ref{algo:bounded-space-query-dot}) run in sublinear time, it is required that $M_\textup{init},M_\textup{query}\ge n^{c\cdot \varepsilon/\varphi^2}$, where $c$ is a constant that is larger than the constant hidden in $O(\cdot)$-term of $n^{1+O(\varepsilon/\varphi^2)}$ in both $T_\textup{init}$ and $T_\textup{query}$. 

For initializing the dot product oracle, the previous dot product oracle in \cite{gluch2021spectral} requires at least $\widetilde{\Omega}(\sqrt{n})$ bits of space, whereas our proposed oracle can perform accurate estimation using at most  $\widetilde{O}(\sqrt{n})$ bits of space, thus breaking the $\sqrt{n}$ barrier.

\subsection{The dot product oracle}

Algorithm~\ref{algo:bounded-space-esti-rw-dot} estimates the collision probability (i.e., $\langle \mM^t\mathds{1}_x,\mM^t\mathds{1}_x\rangle$) of the random walk distributions from two given vertices within a bounded space $\widetilde{O}(M)$. This bounded-space guarantee is achieved through our batch technique, and we are the first to apply this idea in the graph setting for analyzing random walks. The formal guarantee of Alg.~\ref{algo:bounded-space-esti-rw-dot} is stated in~\Cref{lemma:bounded-space-z}.

\begin{algorithm}[H]%
        \DontPrintSemicolon %
        \caption{\textsc{EstRWDot}$(G,R,t,M,x,y)$}%
        \label{algo:bounded-space-esti-rw-dot}

        $Z\coloneqq0,B\coloneqq\frac{R}{M}$\textcolor{gray}{\Comment{$B$: number of batch}}\;
        
        \For{$b=1$ to $B$}{
            Run $M$ independent random walks of length $t$ starting from $x$ (resp. from $y$)\;

            Define $\widehat{\vp}_x(i)$ (resp. $\widehat{\vp}_y(i)$) as the fraction of random walks from $x$ (resp. from $y$) that end at $i$\;
            
            $Z_b\coloneqq \langle \widehat{\vp}_x,\widehat{\vp}_y\rangle$\;
            $Z\coloneqq Z+Z_b$\;
        }
        $Z\coloneqq \frac{Z}{B}$\;
        return $Z$\;
\end{algorithm}

Algorithm~\ref{algo:bounded-space-esti-colli-prob} computes an estimate of the Gram matrix $(\mM^t \mS)^T (\mM^t \mS)$ corresponding to the random walk distributions from a set $S$ of vertices, where $\mS\in\mathbb{R}^{n\times |S|}$ is a matrix whose $i$-th column is an indicator vector $\mathds{1}_{v}$ for $v\in S$, while operating within a bounded space $\widetilde{O}(M\cdot |S|^2)$. The formal guarantee of Alg.~\ref{algo:bounded-space-esti-colli-prob} are stated in~\Cref{lemma:bounded-space-G}.

\begin{algorithm}[H]%
        \DontPrintSemicolon %
        \caption{\textsc{EstColliProb}$(G,R,t,M, I_S)$}%
        \label{algo:bounded-space-esti-colli-prob}
        $s\coloneqq |I_S|=|\{s_1,\dots,s_s\}|$\;
        \For{$l=1$ to $O(\log n)$}{
            \For{$i=1$ to $s$}{
                \For{$j=i$ to $s$}{
                    $\mathcal{G}_l(j,i)\coloneqq\mathcal{G}_l(i,j)\coloneqq$ \textsc{EstRWDot}$(G,R,t,M,s_i,s_j)$\;
                }
            }
        }
        Let $\mathcal{G}$ be a matrix obtained by taking the entrywise median of $\mathcal{G}_l$'s\textcolor{gray}{\Comment{$\mathcal{G}\in\mathbb{R}^{s\times s}$} is symmetric}\;
        return $\mathcal{G}$\;
    \end{algorithm}

Algorithm~\ref{algo:bounded-space-initialize-oracle} initializes the dot product oracle by constructing a compact matrix $\Psi$ within approximately bounded space $\widetilde{O}(M)$. Then Algorithm~\ref{algo:bounded-space-query-dot} leverages $\Psi$ to estimate $\langle \vf_x,\vf_y \rangle$ while still operating under the same bounded space. The formal guarantees of these two procedures are stated in~\Cref{thrm:bounded-space-dot-product-oracle}.

\begin{algorithm}[H]%
        \DontPrintSemicolon %
        \caption{\textsc{InitOracle}($G,k,\xi,M_\textup{init}$)}%
        \label{algo:bounded-space-initialize-oracle}
        $t\coloneqq\frac{20\log n}{\varphi^2}$\;
        $R_{\textup{init}}\coloneqq \Theta(\frac{n^{1+920\varepsilon/\varphi^2}}{M_\textup{init}}\cdot \frac{k^{14}}{\xi^2})$ \;
        $s\coloneqq O(n^{480\cdot \varepsilon/\varphi^2}\cdot \log n\cdot k^8/\xi^2)$\;
        Let $I_S=\{s_1,\dots,s_s\}$ be the multiset of $s$ indices chosen i.u.r. from $V=\{1,\dots,n\}$\;
        $\mathcal{G}\coloneqq$ \textsc{EstColliProb}$(G,R_{\textup{init}},t,M_\textup{init}, I_S)$\;
        Let $\frac{n}{s}\cdot \mathcal{G}:=\widehat{W}\widehat{\Sigma}\widehat{W}^T$ be the eigendecomposition of $\frac{n}{s}\cdot \mathcal{G}$\;
        
        \If{$\widehat{\Sigma}^{-1}$ exists}{
            $\Psi:=\frac{n}{s}\cdot \widehat{W}_{[k]}\widehat{\Sigma}_{[k]}^{-2}\widehat{W}^T_{[k]}$\textcolor{gray}{\Comment{$\Psi\in\mathbb{R}^{s\times s}$}}\;
            return $\Psi$\;
        }
    \end{algorithm}

 \begin{algorithm}[H]%
        \DontPrintSemicolon %
        \caption{\textsc{QueryDot}($G,x,y,\xi$,$\Psi,M_\textup{query}$)}%
        \label{algo:bounded-space-query-dot}
        $t\coloneqq\frac{20\log n}{\varphi^2}$\;
        $R_{\textup{query}}\coloneqq \Theta(\frac{n^{1+440\varepsilon/\varphi^2}}{M_{\textup{query}}}\cdot\frac{k^6}{\xi^2})$ \;

        \For{$l=1$ to $O(\log n)$}{
            \For{$i=1$ to $s$}{
                $  \vx_l(i)\coloneqq$\textsc{EstRWDot}$(G,R_{\textup{query}},t,M_\textup{query},x,s_i)$\;
                $  \vy_l(i)\coloneqq$\textsc{EstRWDot}$(G,R_{\textup{query}},t,M_\textup{query},y,s_i)$\;
            }
        }

        Let $\bm{\alpha}_x$ (resp. $\bm{\alpha}_y$) be a vector obtained by taking entrywise median of $  \vx_l$'s (resp. $  \vy_l$'s)\textcolor{gray}{\Comment{$\bm{\alpha}_x,\bm{\alpha}_y\in\mathbb{R}^s$}}\;
        
        return $\langle \vf_x,\vf_y\rangle_{\textup{apx}}=\bm{\alpha}_x^T\Psi\bm{\alpha}_y$\;
    \end{algorithm}

\subsection{Analysis of the dot product oracle}

To prove~\Cref{thrm:bounded-space-dot-product-oracle}, we begin by analyzing $Z_b$ defined in Alg.~\ref{algo:bounded-space-esti-rw-dot}. The following lemma shows that $Z_b$ is an unbiased estimator of $\langle \mM^t\mathds{1}_x,\mM^t\mathds{1}_x \rangle$ and quantifies its variance.
    
    \begin{lemma}
        \label{lemma:bounded-space-z-b-E-V}
        Let $G=(V,E)$ be a graph. Let $R,t,M$ be integers, where $1\le M\le R$. Let $x,y\in V$ be two vertices. Let $\mM$ be the random walk transition matrix of $G$. Let $Z_b$ ($1\le b\le \frac{R}{M}$) be the random variable defined in \textsc{EstRWDot}$(G,R,t,M,x,y)$ (see line $5$ of Alg. \ref{algo:bounded-space-esti-rw-dot}). Then, we have
        \begin{align*}
            \E[Z_b]&=\langle \mM^t\mathds{1}_x,\mM^t\mathds{1}_y\rangle,\\
            \Var[Z_b]&\le\frac{1}{M^2}\lVert \mM^t\mathds{1}_x\rVert_2\cdot \lVert \mM^t\mathds{1}_y\rVert_2+\frac{1}{M}\left(\lVert \mM^t\mathds{1}_x\rVert_2 \cdot\lVert \mM^t\mathds{1}_y\rVert_2^2+\lVert \mM^t\mathds{1}_x\rVert_2^2 \cdot\lVert \mM^t\mathds{1}_y\rVert_2\right).
        \end{align*}

        \begin{proof}
            Run $M$ random walks of length $t$ from $x$ (resp. from $y$). Let $\vc_x(i)$ (resp. $\vc_y(i)$) denote the number of random walks from $x$ (resp. from $y$) that end at vertex $i$. It's clear that we have $\widehat{\vp}_x(i)=\frac{\vc_x(i)}{M}$ and $\widehat{\vp}_y(i)=\frac{\vc_y(i)}{M}$ (see line $4$ of Alg. \ref{algo:bounded-space-esti-rw-dot}). Let $\vp_x=\mM^t\mathds{1}_x$ (resp. $\vp_y=\mM^t\mathds{1}_y$) be the probability distribution of a length $t$ random walk starting from $x$ (resp. from $y$). Note that $\vc_x(i)\sim \mathrm{Binomial}(M, \vp_x(i))$ and $\vc_y(i)\sim \mathrm{Binomial}(M, \vp_y(i))$. According to line $5$ of Alg. \ref{algo:bounded-space-esti-rw-dot}, we have $Z_b=\langle\widehat{\vp}_x,\widehat{\vp}_y\rangle$. Therefore, about $\E[Z_b]$, we have
            
            \begin{align*}
                \E[Z_b]&=\langle \widehat{\vp}_x,\widehat{\vp}_y\rangle\\
                &=\E\left[\sum_{i=1}^n{\widehat{\vp}_x(i)\widehat{\vp}_y(i)}\right]\\
                &=\frac{1}{M^2}\cdot \sum_{i=1}^n{\E[\vc_x(i)\vc_y(i)]}\\
                &=\frac{1}{M^2}\cdot \sum_{i=1}^n{\E[\vc_x(i)]\E[\vc_y(i)]}\\
                &=\frac{1}{M^2}\cdot \sum_{i=1}^n{M\vp_x(i)M\vp_y(i)}\\
                &=\sum_{i=1}^n{\vp_x(i)\vp_y(i)}\\
                &=\langle \vp_x,\vp_y\rangle=\langle \mM^t\mathds{1}_x,\mM^t\mathds{1}_y\rangle.
            \end{align*}

            About $\Var[Z_b]$, since $\Var[Z_b]=\E[Z_b^2]-(\E[Z_b])^2$, it suffices to calculate $\E[Z_b^2]$ to get $\Var[Z_b]$.

            \begin{align*}
                \E[Z_b^2]&=\E\left[\langle \widehat{\vp}_x,\widehat{\vp}_y\rangle^2\right]\\
                &=\E\left[\left(\sum_{i=1}^n{\widehat{\vp}_x(i)\widehat{\vp}_y(i)}\right)^2\right]\\
                &=\E\left[\sum_{i=1}^n{\sum_{j=1}^n{\widehat{\vp}_x(i)\widehat{\vp}_y(i)\widehat{\vp}_x(j)\widehat{\vp}_y(j)}}\right]\\
                &=\frac{1}{M^4}\sum_{i=1}^n{\sum_{j=1}^n{\E\left[\vc_x(i)\vc_y(i)\vc_x(j)\vc_y(j)\right]}} \\
                &=\frac{1}{M^4}\sum_{i=1}^n{\sum_{j=1}^n{\E\left[\vc_x(i)\vc_x(j)\right]\cdot\E\left[\vc_y(i)\vc_y(j)\right]}} \\
                &=\frac{1}{M^4}\sum_{i=1}^n{\E\left[\vc_x^2(i)\right]\cdot \E\left[\vc_y^2(i)\right]}+\frac{1}{M^4}\sum_{i=1}^n{\sum_{j=1,j\ne i}^n{\E\left[\vc_x(i)\vc_x(j)\right]\cdot\E\left[\vc_y(i)\vc_y(j)\right]}}.
            \end{align*}
            
            For convenience, we use $A_1$ to denote $\frac{1}{M^4}\sum_{i=1}^n{\E\left[\vc_x^2(i)\right]\cdot \E\left[\vc_y^2(i)\right]}$ and $A_2$ to denote $\frac{1}{M^4}\sum_{i=1}^n{\sum_{j=1,j\ne i}^n{\E\left[\vc_x(i)\vc_x(j)\right]\cdot\E\left[\vc_y(i)\vc_y(j)\right]}}$.
            
            Since $\vc_x(i)\sim \mathrm{Binomial}(M, \vp_x(i))$, we have $\E[\vc_x(i)]=M\vp_x(i)$ and $\E[\vc_x^2(i)]=\Var[\vc_x(i)]+(\E[\vc_x(i)])^2=M\vp_x(i)(1-\vp_x(i))+M^2\vp_x^2(i)=M[\vp_x(i)+(M-1)\vp_x^2(i)]$. Therefore, we have
            \begin{align*}
                A_1&=\frac{1}{M^4}\sum_{i=1}^n{\E\left[\vc_x^2(i)\right]\cdot \E\left[\vc_y^2(i)\right]}\\
                &=\frac{1}{M^4}\sum_{i=1}^n{M\left[\vp_x(i)+(M-1)\vp_x^2(i)\right]\cdot M\left[\vp_y(i)+(M-1)\vp_y^2(i)\right]}\\
                &=\frac{1}{M^2}\sum_{i=1}^n{\vp_x(i)\vp_y(i)+(M-1)\left(\vp_x\vp_y^2(i)+\vp_x^2(i)\vp_y(i)\right)+(M-1)^2\vp_x^2(i)\vp_y^2(i)}\\
                &=\frac{1}{M^2}\langle \vp_x,\vp_y\rangle+\frac{M-1}{M^2}\left(\langle \vp_x,\vp_y^2\rangle+\langle \vp_x^2,\vp_y\rangle\right)+\frac{(M-1)^2}{M^2}\langle \vp_x^2,\vp_y^2\rangle,
            \end{align*}
    
            where with a slight abuse of notation, we use $\langle p_x,p_y^2\rangle$ to denote $\sum_{i=1}^n{p_x(i)p_y^2(i)}$, and we use $\langle p_x^2,p_y^2\rangle$ to denote $\sum_{i=1}^n{p_x^2(i)p_y^2(i)}$.
    
            To calculate $A_2$, we need to calculate $\E[\vc_x(i)\vc_x(j)]$ where $i\ne j$. We define $X_a^i$ %
            as follows:
            \[
            X_a^i = 
            \begin{cases}
            1, & \text{The $a$-th random walk from $x$ ends at $i$} \\
            0, & \text{otherwise}
            \end{cases}.
            \]

            So we have $\E[\vc_x(i)\vc_x(j)]=\E\left[\sum_{a=1}^M{X_a^i}\sum_{a=1}^M{X_a^j}\right]=\sum_{a=1}^M{\sum_{b=1}^M{\E[X_a^iX_b^j]}}$. For all $a=b$ and $i\ne j$, we have $\E[X_a^iX_b^j=0]$, since for a single random walk, it cannot ends at $i$ and $j$ the same time. For all $a\ne b$ and $i\ne j$, we have $\E[X_a^iX_b^j]=\vp_x(i)\vp_x(j)$. So we can get $\E[\vc_x(i)\vc_x(j)]=M(M-1)\vp_x(i)\vp_x(j)$. By the same augment, we get that for all $i\ne j$, $\E[\vc_y(i)\vc_y(j)]=M(M-1)\vp_y(i)\vp_y(j)$. Therefore,
            \begin{align*}
                A_2&=\frac{1}{M^4}\sum_{i=1}^n{\sum_{j=1,j\ne i}^n{\E\left[\vc_x(i)\vc_x(j)\right]\cdot\E\left[\vc_y(i)\vc_y(j)\right]}}\\
                &=\frac{1}{M^4}\sum_{i=1}^n{\sum_{j=1,j\ne i}^n{M(M-1)\vp_x(i)\vp_x(j)\cdot M(M-1)\vp_y(i)\vp_y(j)}}\\
                &=\frac{(M-1)^2}{M^2}\sum_{i=1}^m{\sum_{j=1,j\ne i}^n{\vp_x(i)\vp_y(i)\cdot \vp_x(j)\vp_y(j)}}\\
                &=\frac{(M-1)^2}{M^2}\left(\sum_{i=1}^n{\sum_{j=1}^n{\vp_x(i)\vp_y(i)\cdot \vp_x(j)\vp_y(j)}}-\sum_{i=1}^n{\vp_x^2(i)\vp_y^2(i)}\right)\\
                &=\frac{(M-1)^2}{M^2}\left(\sum_{i=1}^n{\vp_x(i)\vp_y(i)}\sum_{j=1}^n{\vp_x(j)\vp_y(j)}-\langle \vp_x^2,\vp_y^2\rangle\right)\\
                &=\frac{(M-1)^2}{M^2}\left(\langle \vp_x,\vp_y\rangle^2-\langle \vp_x^2,\vp_y^2\rangle\right).
            \end{align*}

            Put them together, we get
            \begin{align*}
                \E[Z_b^2]&=A_1+A_2\\
                &=\frac{1}{M^2}\langle \vp_x,\vp_y\rangle+\frac{M-1}{M^2}\left(\langle \vp_x,\vp_y^2\rangle+\langle \vp_x^2,\vp_y\rangle\right)+\frac{(M-1)^2}{M^2}\langle \vp_x^2,\vp_y^2\rangle\\
                &+\frac{(M-1)^2}{M^2}\left(\langle \vp_x,\vp_y\rangle^2-\langle \vp_x^2,\vp_y^2\rangle\right)\\
                &=\frac{1}{M^2}\langle \vp_x,\vp_y\rangle+\frac{M-1}{M^2}\left(\langle \vp_x,\vp_y^2\rangle+\langle \vp_x^2,\vp_y\rangle\right)+\frac{(M-1)^2}{M^2}\langle \vp_x,\vp_y\rangle^2.
            \end{align*}
            
            Therefore, we have
            \begin{align*}
                \Var[Z_b]&=\E[Z_b^2]-(\E[Z_b])^2\\
                &=\frac{1}{M^2}\langle \vp_x,\vp_y\rangle+\frac{M-1}{M^2}\left(\langle \vp_x,\vp_y^2\rangle+\langle \vp_x^2,\vp_y\rangle\right)+\frac{(M-1)^2}{M^2}\langle \vp_x,\vp_y\rangle^2-\langle \vp_x,\vp_y\rangle^2\\
                &=\frac{1}{M^2}\langle \vp_x,\vp_y\rangle+\frac{M-1}{M^2}\left(\langle \vp_x,\vp_y^2\rangle+\langle \vp_x^2,\vp_y\rangle\right)+\frac{1-2M}{M^2}\langle \vp_x,\vp_y\rangle^2\\
                &\le \frac{1}{M^2}\langle \vp_x,\vp_y\rangle+\frac{1}{M}\left(\langle \vp_x,\vp_y^2\rangle+\langle \vp_x^2,\vp_y\rangle\right)\\
                &=\frac{1}{M^2}\sum_{i=1}^n{\vp_x(i)\vp_y(i)}+\frac{1}{M}\left(\sum_{i=1}^n{\vp_x(i)\vp_y^2(i)}+\sum_{i=1}^n{\vp_x^2(i)\vp_y(i)}\right)\\
                &\le \frac{1}{M^2}\lVert \vp_x\rVert_2\cdot \lVert 
                \vp_y\rVert_2+\frac{1}{M}\left(\lVert \vp_x\rVert_2 \cdot\lVert\vp_y\rVert_4^2+\lVert \vp_x\rVert_4^2 \cdot\lVert \vp_y\rVert_2\right) %
                \\
                &\le \frac{1}{M^2}\lVert \vp_x\rVert_2\cdot \lVert \vp_y\rVert_2+\frac{1}{M}\left(\lVert \vp_x\rVert_2 \cdot\lVert \vp_y\rVert_2^2+\lVert \vp_x\rVert_2^2 \cdot\lVert \vp_y\rVert_2\right),%
            \end{align*}
            where the second-to-last inequality uses the Cauchy–Schwarz inequality and the last one follows from~\Cref{fact:norm}.
        \end{proof}
    \end{lemma}

    Building on Lemma~\ref{lemma:bounded-space-z-b-E-V}, we now consider the estimator $Z$ obtained by averaging $B=R/M$ independent copies of $Z_b$. The following lemma shows that $Z$ remains an unbiased estimator with variance reduced by a factor of $B=R/M$.
    
    \begin{lemma}
        \label{lemma:bounded-space-z-E-V}
        Let $G=(V,E)$ be a graph. Let $R,t,M$ be integers, where $1\le M\le R$. Let $x,y\in V$ be two vertices. Let $\mM$ be the random walk transition matrix of $G$. Let $Z$ be the output of \textsc{EstRWDot}$(G,R,t,M,x,y)$ (Alg. \ref{algo:bounded-space-esti-rw-dot}). Then, we have

        \begin{align*}
            \E[Z]&=\langle \mM^t\mathds{1}_x,\mM^t\mathds{1}_y\rangle,\\
            \Var[Z]&\le \frac{1}{R}\left[\frac{1}{M}\lVert \mM^t\mathds{1}_x\rVert_2\cdot \lVert \mM^t\mathds{1}_y\rVert_2+\left(\lVert \mM^t\mathds{1}_x\rVert_2 \cdot\lVert \mM^t\mathds{1}_y\rVert_2^2+\lVert \mM^t\mathds{1}_x\rVert_2^2 \cdot\lVert \mM^t\mathds{1}_y\rVert_2\right)\right].
        \end{align*}

        \begin{proof}
            According to Alg. \ref{algo:bounded-space-esti-rw-dot}, we know that $Z=\frac{1}{B}\sum_{b=1}^B{Z_b}$, where $B=\frac{R}{M}$. Therefore, using~\Cref{lemma:bounded-space-z-b-E-V}, we have $\E[Z]=\frac{1}{B}\sum_{b=1}^B\E[Z_b]=\langle\mM^t\mathds{1}_x,\mM^t\mathds{1}_y\rangle$ and 
            \begin{align*}
                \Var[Z]&=\frac{1}{B^2}\sum_{b=1}^B{\Var[Z_b]}\\
                &=\frac{1}{B}\Var[Z_b]\\
                &=\frac{M}{R}\Var[Z_b]\\
                &\le \frac{M}{R}\left[\frac{1}{M^2}\lVert \mM^t\mathds{1}_x\rVert_2\cdot \lVert \mM^t\mathds{1}_y\rVert_2+\frac{1}{M}\left(\lVert\mM^t\mathds{1}_x\rVert_2 \cdot\lVert \mM^t\mathds{1}_y\rVert_2^2+\lVert \mM^t\mathds{1}_x\rVert_2^2 \cdot\lVert \mM^t\mathds{1}_y\rVert_2\right)\right]\\
                &=\frac{1}{R}\left[\frac{1}{M}\lVert \mM^t\mathds{1}_x\rVert_2\cdot \lVert \mM^t\mathds{1}_y\rVert_2+\left(\lVert\mM^t\mathds{1}_x\rVert_2 \cdot\lVert \mM^t\mathds{1}_y\rVert_2^2+\lVert \mM^t\mathds{1}_x\rVert_2^2 \cdot\lVert \mM^t\mathds{1}_y\rVert_2\right)\right].
            \end{align*}
        \end{proof}
    \end{lemma}

\Cref{lemma:bounded-space-z} shows that, with suitable input parameters, \textsc{EstRWDot}$(G,R,t,M,x,y)$ (Alg. \ref{algo:bounded-space-esti-rw-dot}) approximates the dot product of the random walk distributions from any two vertices $x,y\in V$ within an error of $\sigma_\textup{err}$.

\begin{lemma}
        \label{lemma:bounded-space-z}

        Let $k\ge 2$ be an integer and $\varphi,\varepsilon\in(0,1)$. Let $G=(V,E)$ be a $d$-regular and $(k,\varphi,\varepsilon)$-clusterable graph. Let $\mM$ be the random walk transition matrix of $G$. %
        Let $Z$ be the output of \textsc{EstRWDot}$(G,R,t,M,x,y)$ (Alg. \ref{algo:bounded-space-esti-rw-dot}). Let $\sigma_\textup{err}>0$. Let $c>1$ be a large enough constant. For any $t\ge \frac{20\log n}{\varphi^2}$ and any $x,y\in V$, if $R\ge  \frac{c\cdot k^2n^{-1+40\varepsilon/\varphi^2}}{\sigma_\textup{err}^2M}$ and $1\le M\le O(\frac{n^{1/2-20\varepsilon/\varphi^2}}{k})$, then with probability at least $0.99$, we have
        \[
            |Z-\langle \mM^t\mathds{1}_x,\mM^t\mathds{1}_y\rangle|\le \sigma_\textup{err}.
        \]

        Moreover, \textsc{EstRWDot}$(G,R,t,M,x,y)$ runs in $O(Rt)$ time and uses $O(M\cdot \log n)$ bits of space.
    \end{lemma}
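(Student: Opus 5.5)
The plan is to apply Chebyshev's inequality to the estimator $Z$, using the unbiasedness and variance bound from \Cref{lemma:bounded-space-z-E-V}. All that is then needed is a uniform upper bound on $\lVert \mM^t\mathds{1}_x\rVert_2$ for every vertex $x$ when $t\ge \frac{20\log n}{\varphi^2}$ and $G$ is $(k,\varphi,\varepsilon)$-clusterable.

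\textbf{Step 1: the $\ell_2$-norm bound.} I would show that
\[
\lVert \mM^t\mathds{1}_x\rVert_2^2\le O\!\left(\tfrac{k}{n}\right)n^{40\varepsilon/\varphi^2}.
\]
This is the standard random-walk norm bound for clusterable graphs, as in \citet{gluch2021spectral}. The route is through the spectral decomposition
\[
\lVert \mM^t\mathds{1}_x\rVert_2^2=\sum_{i}(1-\lambda_i/2)^{2t}\vu_i(x)^2.
\]
\begin{itemize}
\item \emph{Tail eigenvalues.} By higher-order Cheeger, $\lambda_{k+1}\gtrsim\varphi^2$. With the choice of $t$, the terms with $i>k$ therefore contribute at most $(1-\varphi^2/c')^{2t}\le n^{-2}$ in total.
\item \emph{Top-$k$ eigenvectors.} The remaining terms are at most $\lVert \vf_x\rVert_2^2$. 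The needed bound $\lVert\vf_x\rVert_\infty$-type estimate $\lVert \vf_x\rVert_2^2\le O(k/n)\,n^{40\varepsilon/\varphi^2}$ is the delicate part. It comes from the fact that the first $k$ eigenvectors are $\ell_\infty$-bounded: one writes $\vu_i=\mM^{t'}\vu_i/(1-\lambda_i/2)^{t'}$ with $\lambda_i\le 2\varepsilon$, and the factor $(1-\varepsilon)^{-t'}$ equals $n^{O(\varepsilon/\varphi^2)}$. This is the step I expect to be the main obstacle. I would cite or reprove the corresponding lemma of \citet{gluch2021spectral}, which gives precisely $\lVert\mM^t\mathds{1}_x\rVert_2\le O(\sqrt{k}\,n^{-1/2+20\varepsilon/\varphi^2})$, and the constants in the statement are evidently chosen to match it.
\end{itemize}

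\textbf{Step 2: plug into the variance bound.} Write $\beta=\sqrt{k}\,n^{-1/2+20\varepsilon/\varphi^2}$, so that both norms are at most $O(\beta)$. Then
\[
\Var[Z]\le \frac{1}{R}\left(\frac{\beta^2}{M}+2\beta^3\right).
\]
The hypothesis $M\le O(n^{1/2-20\varepsilon/\varphi^2}/k)$ is equivalent, up to constants, to $M\beta\le O(1/\sqrt{k})\le O(1)$, that is, $\beta^3\le O(\beta^2/M)$. Hence
\[
\Var[Z]\le O\!\left(\frac{\beta^2}{RM}\right)=O\!\left(\frac{k\,n^{-1+40\varepsilon/\varphi^2}}{RM}\right).
\]

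\textbf{Step 3: Chebyshev.} We have $\Pr[|Z-\E Z|>\sigma_\textup{err}]\le \Var[Z]/\sigma_\textup{err}^2$. With $R\ge c\,k^2n^{-1+40\varepsilon/\varphi^2}/(\sigma_\textup{err}^2M)$, this probability is at most $O(1/(ck))\le 0.01$ once $c$ is large enough. Since $\E[Z]=\langle \mM^t\mathds{1}_x,\mM^t\mathds{1}_y\rangle$ by \Cref{lemma:bounded-space-z-E-V}, this gives the claimed accuracy guarantee.

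\textbf{Resources.} The algorithm runs $2R$ walks of length $t$, each step simulated in $O(1)$ time, for a total of $O(Rt)$ time. At any moment it stores only the $2M$ endpoints of the current batch, each using $O(\log n)$ bits, plus the running sum $Z$. The batch dot product $\langle\widehat{\vp}_x,\widehat{\vp}_y\rangle$ can be computed from the stored endpoint lists, for example by sorting them or comparing pairs, within $O(M\log n)$ bits. The time of this computation is dominated by, or can be charged to, the cost of the walks.
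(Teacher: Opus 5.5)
Your proposal is correct and follows the paper's proof. You plug the Gluch et al.\ norm bound into the variance bound of Lemma~\ref{lemma:bounded-space-z-E-V}, use the upper bound on $M$ to absorb the cubic term, apply Chebyshev, and account for time and space by reusing the per-batch storage.

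One small correction: the cited result (Lemma~\ref{lemma:gluch_0}) is stated with $O(k\cdot n^{-1/2+20\varepsilon/\varphi^2})$ rather than $\sqrt{k}$. The $k^2$ in $R$ and the $1/k$ in the range of $M$ are calibrated to this weaker bound, and your Steps~2--3 go through verbatim with $\beta=k\,n^{-1/2+20\varepsilon/\varphi^2}$, since then $M\beta=O(1)$ and the failure probability is $O(1/c)$.
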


    \begin{remark}
        \label{remark:median-trick}
        The success probability of~\Cref{lemma:bounded-space-z} can be boosted up to $1-n^{-100}$ using median trick, i.e., by taking the median of $O(\log n)$ independent runs.
    \end{remark}

    To prove~\Cref{lemma:bounded-space-z}, we need the following lemma in \citet{gluch2021spectral}.
    
    \begin{lemma}[Lemma 22 in \citet{gluch2021spectral}]
        \label{lemma:gluch_0}
        Let $k\ge 2$ be an integer and $\varphi,\varepsilon\in(0,1)$. Let $G=(V,E)$ be a $d$-regular and $(k,\varphi,\varepsilon)$-clusterable graph. Let $\mM$ be the random walk transition matrix of $G$. For any $t\ge \frac{20\log n}{\varphi^2}$ and any $x\in V$ we have
        \[
            \lVert \mM^t\mathds{1}_x\rVert_2\le O(k\cdot n^{-1/2+(20\varepsilon/\varphi^2)}).
        \]

    \end{lemma}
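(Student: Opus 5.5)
The plan is to expand $\mM^t\mathds{1}_x$ in the orthonormal eigenbasis $\vu_1,\dots,\vu_n$ of $\mL$. Since $\mM=\mI-\mL/2$, we have
\[
\lVert \mM^t\mathds{1}_x\rVert_2^2=\sum_{i=1}^n\Bigl(1-\tfrac{\lambda_i}{2}\Bigr)^{2t}\vu_i(x)^2 .
\]
I will split this sum into the top block $i\le k$ and the tail $i>k$.

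\textbf{Tail ($i>k$).} For the tail I will use the standard spectral consequence of $(k,\varphi,\varepsilon)$-clusterability (as in \citet{gluch2021spectral}): $\lambda_{k+1}\ge \varphi^2/2$. The intended route is to take the $k$ clusters, each with inner conductance at least $\varphi$, and apply Cheeger's inequality inside each cluster together with a variational argument on $k+1$ orthogonal test vectors. Given this bound, every term with $i>k$ satisfies
\[
\Bigl(1-\tfrac{\lambda_i}{2}\Bigr)^{2t}\le e^{-\varphi^2 t/2}\le n^{-10}
\]
when $t\ge 20\log n/\varphi^2$. Since $\sum_i\vu_i(x)^2=1$, the tail contributes at most $n^{-10}$, which is negligible. (Laziness guarantees $1-\lambda_i/2\ge 0$, so no sign issues arise.)

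\textbf{Top block ($i\le k$).} Here the factor $(1-\lambda_i/2)^{2t}$ is at most $1$. The statement therefore reduces to an $\ell_\infty$ bound on the first $k$ eigenvectors:
\[
\lVert \vu_i\rVert_\infty\le O\bigl(n^{-1/2+20\varepsilon/\varphi^2}\bigr),\qquad i\le k .
\]
Summing the squares of these $k$ bounds gives $O(k\,n^{-1+40\varepsilon/\varphi^2})$, and taking the square root yields $O(\sqrt{k}\,n^{-1/2+20\varepsilon/\varphi^2})$, which is within the claimed $O(k\cdot n^{-1/2+20\varepsilon/\varphi^2})$.

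\textbf{Main obstacle: the $\ell_\infty$ bound.} This is where I expect the real work. I would start from the eigen-identity
\[
\vu_i(x)=\Bigl(1-\tfrac{\lambda_i}{2}\Bigr)^{-t}(\mM^t\vu_i)(x)=\Bigl(1-\tfrac{\lambda_i}{2}\Bigr)^{-t}\,\E[\vu_i(X_t)],
\]
where $X_t$ is the lazy walk from $x$. For $i\le k$, a test-vector argument using cluster indicators gives $\lambda_i\le 2\varepsilon$. Hence the prefactor is at most $(1-\varepsilon)^{-t}\le e^{2\varepsilon t}$, which is $n^{O(\varepsilon/\varphi^2)}$ for $t=\Theta(\log n/\varphi^2)$. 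To get exactly the exponent $20\varepsilon/\varphi^2$, I would take $t=20\log n/\varphi^2$ and track the constants carefully.

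It then remains to show $|\E[\vu_i(X_t)]|\le O(n^{-1/2})$. Here I would split on whether the walk has left the starting cluster $C$.
\begin{itemize}
\item On the event that the walk stays in $C$, the walk mixes inside $C$ because $\phi_{\textup{in}}(C)\ge\varphi$. Its distribution is then close to uniform on $C$, so $\E[\vu_i(X_t)]$ is close to the average of $\vu_i$ over $C$. By Cauchy--Schwarz this average is at most $\lVert\vu_i\rVert_2/\sqrt{|C|}=O(\sqrt{k/n})$, using balanced cluster sizes.
\item On the event that the walk leaves $C$, I cannot easily bound the contribution directly. Instead I plan a bootstrap: let $m=\max_i\lVert\vu_i\rVert_\infty$, bound this contribution by (escape probability)$\cdot m$, and then solve the resulting self-referential inequality for $m$.
\end{itemize}
If the escape probability is not small enough for every starting vertex, I would fall back to the averaged version of this argument used in \citet{gluch2021spectral}.
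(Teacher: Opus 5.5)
Your reduction is the right one, and it matches how the cited bound is obtained in \citet{gluch2021spectral}; the present paper only cites it. You expand $\mM^t\mathds{1}_x$ in the eigenbasis, kill the tail with $\lambda_{k+1}\ge\varphi^2/2$, and bound the head by $\sum_{i\le k}\vu_i(x)^2$ through a per-eigenvector $\ell_\infty$ estimate. The tail bound follows from Courant--Fischer on the codimension-$k$ space orthogonal to $\mathds{1}_{C_1},\dots,\mathds{1}_{C_k}$, plus Cheeger inside each cluster.

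Note, however, that the $\ell_\infty$ estimate has to be $\lVert\vu_i\rVert_\infty\le n^{O(\varepsilon/\varphi^2)}/\sqrt{\min_j|C_j|}=O(\sqrt{k/n})\,n^{O(\varepsilon/\varphi^2)}$, which is also all your own sketch produces. The $k$-free version you state is false in general: for $k$ disjoint equal expanders, the normalized cluster indicators are a legitimate choice of $\vu_1,\dots,\vu_k$ and have $\ell_\infty$-norm $\sqrt{k/n}$. Summing $k$ such squares gives exactly the factor $k$ in the statement, not $\sqrt{k}$.

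The genuine gap is the proof of the $\ell_\infty$ estimate itself, which is the whole content of the lemma. Write $m=\max_x|\vu_i(x)|$ and $p=\max_x\Pr_x[\text{walk leaves its cluster within } t \text{ steps}]$. Your bootstrap then reads $m\le e^{O(\varepsilon t)}\bigl(O(\sqrt{k/n})+2pm\bigr)$, which can be solved for $m$ only if $2p\,e^{O(\varepsilon t)}<1$. The prefactor is at least $1$, and $p$ is not small.
\begin{itemize}
\item Outer conductance $\le\varepsilon$ is only a cluster average, so a vertex may have all its edges leaving its cluster. Its escape probability is then $1/2$ after a single step, and nothing prevents $m$ from being attained there.
\item Even ignoring such vertices, suppose every vertex has an $\varepsilon$-fraction of its edges leaving its cluster. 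Then the walk stays in $C$ for $t$ steps with probability $(1-\varepsilon/2)^t=n^{-\Theta(\varepsilon/\varphi^2)}\to 0$.
\item You cannot shorten $t$: with the spectral mixing bound $(1-\varphi^2/4)^t$ your argument relies on, accuracy $\sqrt{k/n}$ inside $C$ requires $t=\Omega(\log(n/k)/\varphi^2)$.
\end{itemize}
So the argument could at best work when $\varepsilon\ll\varphi^2/\log n$, which is precisely where the factor $n^{20\varepsilon/\varphi^2}=O(1)$ is not needed. An averaged version can only control most vertices. The lemma, and its use for arbitrary query vertices in \Cref{lemma:bounded-space-z}, needs every $x\in V$.

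What works instead is a per-vertex argument based on small-set expansion rather than escape probabilities.
\begin{itemize}
\item Every $S\subseteq V$ with $|S|\le\frac12\min_j|C_j|$ satisfies $|E(S,V\setminus S)|\ge\varphi d|S|$. Each $S\cap C_j$ has at most half the volume of $C_j$ and so expands inside $C_j$.
\item By the Dirichlet form of Cheeger's inequality, $w^T\mL w\ge\frac{\varphi^2}{2}\lVert w\rVert_2^2$ for every $w$ supported on such an $S$.
\item Apply this to $w=(\vu_i-\theta)_+$ with $\theta\ge\theta_0:=\sqrt{2/\min_j|C_j|}$, so that $|\mathrm{supp}\,w|\le 1/\theta_0^2$. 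Using $w^T\mL w\le\langle w,\mL\vu_i\rangle=\lambda_i\langle w,\vu_i\rangle$ and $\lambda_i\le 2\varepsilon$, you get $\lVert(\vu_i-\theta)_+\rVert_2\le O(\varepsilon/\varphi^2)\,\theta\,|\{\vu_i>\theta\}|^{1/2}$.
\item Combine this with Chebyshev, $|\{\vu_i>(1+\eta)\theta\}|\le\lVert(\vu_i-\theta)_+\rVert_2^2/(\eta\theta)^2$. This gives $\lVert(\vu_i-(1+\eta)\theta)_+\rVert_2^2\le\frac14\lVert(\vu_i-\theta)_+\rVert_2^2$ for some $\eta=\Theta(\varepsilon/\varphi^2)$.
\item Iterate $O(\log n)$ times from $\theta_0$, and do the same for $-\vu_i$. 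This yields $\lVert\vu_i\rVert_\infty\le 2\theta_0(1+\eta)^{O(\log n)}=O(\sqrt{k/n})\,n^{O(\varepsilon/\varphi^2)}$ at every vertex.
\end{itemize}
With constants tracked, the exponent stays below $20\varepsilon/\varphi^2$ whenever $\varepsilon/\varphi^2<1/40$. For $\varepsilon/\varphi^2\ge 1/40$ the lemma is trivial, since then the right-hand side is at least $1\ge\lVert\mM^t\mathds{1}_x\rVert_2$. Plugging this bound into your head estimate finishes the proof.
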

    
    Now we are ready to prove~\Cref{lemma:bounded-space-z}.
    \begin{proof} [Proof of~\Cref{lemma:bounded-space-z}.]

        \textbf{Correctness.} By~\Cref{lemma:bounded-space-z-E-V} and~\Cref{lemma:gluch_0}, we can get that 
        
        \begin{align*}
            \Var[Z]&\le \frac{1}{R}\left[\frac{1}{M}\lVert \mM^t\mathds{1}_x\rVert_2\cdot \lVert \mM^t\mathds{1}_y\rVert_2+\left(\lVert \mM^t\mathds{1}_x\rVert_2 \cdot\lVert \mM^t\mathds{1}_y\rVert_2^2+\lVert \mM^t\mathds{1}_x\rVert_2^2 \cdot\lVert \mM^t\mathds{1}_y\rVert_2\right)\right]\\
            &=\frac{1}{R}\left(\frac{O(k^2\cdot n^{-1+40\varepsilon/\varphi^2})}{M}+O(k^3\cdot n^{-3/2+60\varepsilon/\varphi^2})\right).
        \end{align*}

        Using Chebyshev's inequality, we have

        \begin{align*}
            \Pr[|Z-\langle \mM^t\mathds{1}_x,\mM^t\mathds{1}_y\rangle|\ge \sigma_\textup{err}]&=\Pr[|Z-\E[Z]|\ge \sigma_\textup{err}]\\
            &\le \frac{\Var[Z]}{\sigma_\textup{err}^2}\\
            &\le \frac{1}{\sigma_\textup{err}^2}\cdot \frac{1}{R}\left(\frac{O(k^2\cdot n^{-1+40\varepsilon/\varphi^2})}{M}+O(k^3\cdot n^{-3/2+60\varepsilon/\varphi^2})\right)\\
            &\le \frac{1}{\sigma_\textup{err}^2}\cdot \frac{1}{R}\cdot O\left(\frac{k^2\cdot n^{-1+40\varepsilon/\varphi^2}}{M}\right) %
            \\
            &\le \frac{1}{100},
        \end{align*}
        where the second-to-last inequality holds by $M\le O\left(\frac{n^{1/2-20\varepsilon/\varphi^2}}{k}\right)$. And the last inequality holds by our choice of 
        \[
            R\ge \frac{c\cdot k^2n^{-1+40\varepsilon/\varphi^2}}{\sigma_\textup{err}^2M},
        \]
        where $c$ is a large enough constant that cancels the constant hidden in $O\left(\frac{k^2\cdot n^{-1+40\varepsilon/\varphi^2}}{M}\right)$.

        \textbf{Runtime and space.} 
        Algorithm \textsc{EstRWDot}$(G,R,t,M,x,y)$ (Alg. \ref{algo:bounded-space-esti-rw-dot}) performs $B=\frac{R}{M}$ bathches (i.e., $B=\frac{R}{M}$ iterations of the for-loop). In each batch, it runs $M$ random walks of length $t$, which requires $O(Mt)$ time and $O(M)$ words of space to store the $O(M)$ endpoints of the walks. Computing the dot product of two probability distributions takes $O(M)$ time, since each distribution has at most $M$ nonzero entries. Therefore, the runtime and space per batch are $O(Mt+M)=O(Mt)$ time and $O(M)$ words, respectively. Moreover, the space used within each batch can be reused across batches. Consequently, the overall runtime and space complexity of \textsc{EstRWDot}$(G,R,t,M,x,y)$ (Alg. \ref{algo:bounded-space-esti-rw-dot}) are $B\cdot O(Mt)=\frac{R}{M}\cdot O(Mt)=O(Rt)$ and $O(M)$ words (i.e., $O(M\cdot \log n)$ bits of space, since each endpoint can be stored in $\log n$ bits), respectively.
    \end{proof}

    \Cref{lemma:bounded-space-G} states that, under appropriate input parameters, the output $\mathcal{G}$ of our algorithm \textsc{EstColliProb} $(G,R,t,M,I_S)$ (Alg. \ref{algo:bounded-space-esti-colli-prob}) is close to $(\mM^t\mS)^T(\mM^t\mS)$ in spectral norm, where $(\mM^t\mS)^T(\mM^t\mS)$ is the Gram matrix of the random walk distributions from vertices in the sample set.
    
    \begin{lemma}
        \label{lemma:bounded-space-G}
        Let $k\ge 2$ be an integer and $\varphi,\varepsilon\in(0,1)$. Let $G=(V,E)$ be a $d$-regular and $(k,\varphi,\varepsilon)$-clusterable graph. Let $\mM$ be the random walk transition matrix of $G$. Let $I_S=\{s_1,\dots,s_s\}$ be a multiset of $s$ indices chosen from $\{1,\dots,n\}$. Let $\mS\in\mathbb{R}^{n\times s}$ be the matrix whose $i$-th column equals $\mathds{1}_{s_i}$. Let $\mathcal{G}\in \mathbb{R}^{s\times s}$ be the output of \textsc{EstColliProb} $(G,R,t,M,I_S)$ (Alg. \ref{algo:bounded-space-esti-colli-prob}). Let $\sigma_\textup{err}>0$. Let $c>1$ be a large enough constant. For any $t\ge \frac{20\log n}{\varphi^2}$, if $R\ge \frac{c\cdot k^2n^{-1+40\varepsilon/\varphi^2}}{\sigma_\textup{err}^2M}$ and $1\le M\le O\left(\frac{n^{1/2-20\varepsilon/\varphi^2}}{k}\right)$, then with probability at least $1-n^{-100}$, we have
        \[
            \lVert \mathcal{G}-(\mM^t\mS)^T(\mM^t\mS)\rVert_2\le s\cdot\sigma_\textup{err}.
        \]

        Moreover, \textsc{EstColliProb} $(G,R,t,M,I_S)$ runs in $O(Rt\cdot \log n\cdot s^2)$ time and uses $O(M\cdot \log^2 n\cdot s^2)$ bits of space.

        \begin{proof}
            \textbf{Correctness.} 
            Note that in line $5$ of Alg. \ref{algo:bounded-space-esti-colli-prob}, we get $\mathcal{G}_l(i,j)\coloneqq$\textsc{EstRWDot}$(G,R,t,M,s_i,s_j)$ (Alg. \ref{algo:bounded-space-esti-rw-dot}). Since $t\ge \frac{20\log n}{\varphi^2}$, $R\ge \frac{c\cdot k^2n^{-1+40\varepsilon/\varphi^2}}{\sigma_\textup{err}^2M}$ and $1\le M\le O\left(\frac{n^{1/2-20\varepsilon/\varphi^2}}{k}\right)$, then by~\Cref{lemma:bounded-space-z}, with probability at least $0.99$, for all $i,j\in[s]$, we have
            \[
                |\mathcal{G}_l(i,j)-\langle \mM^t\mathds{1}_{s_i},\mM^t\mathds{1}_{s_j}\rangle|=|\mathcal{G}_l(i,j)-(\mM^t\mathds{1}_{s_i})^T(\mM^t\mathds{1}_{s_j})|\le \sigma_\textup{err}.
            \]
            Note that in line $6$ of Alg. \ref{algo:bounded-space-esti-colli-prob}, we define $\mathcal{G}$ as a matrix obtained by taking the entrywise median of $\mathcal{G}_l$'s over $O(\log n)$ runs. Thus with probability at least $1-n^{-100}$ (see~\Cref{remark:median-trick}), for all $i,j\in[s]$, we have 
            \[
                |\mathcal{G}(i,j)-(\mM^t\mathds{1}_{s_i})^T(\mM^t\mathds{1}_{s_j})|\le \sigma_\textup{err},
            \]
            which implies 
            \[
                \lVert \mathcal{G}-(\mM^t\mS)^T(\mM^t\mS)\lVert_F\le s\cdot \sigma_\textup{err}.
            \]
            Moreover, we have
            \[
                \lVert \mathcal{G}-(\mM^t\mS)^T(\mM^t\mS)\lVert_2\le\lVert \mathcal{G}-(\mM^t\mS)^T(\mM^t\mS)\lVert_F \le s\cdot \sigma_\textup{err}.
            \]

            \textbf{Runtime and space.} In Alg.~\ref{algo:bounded-space-esti-colli-prob}, Alg.~\ref{algo:bounded-space-esti-rw-dot} is called $\log n \cdot s^2$ times. Since the runtime and space of Alg.~\ref{algo:bounded-space-esti-rw-dot} are $O(Rt)$ and $O(M \log n)$ bits, respectively, the runtime and space of Alg.~\ref{algo:bounded-space-esti-colli-prob} are $O(Rt \cdot \log n \cdot s^2)$ and $O(M \cdot \log^2 n \cdot s^2)$ bits, respectively.
        \end{proof}
    \end{lemma}

    The following lemma shows that the output value $\langle \vf_x,\vf_y\rangle_\textup{apx}$ of Alg. \ref{algo:bounded-space-query-dot} is close to $(\mM^t\mathds{1}_x)^T(\mM^t\mS)\left(\frac{n}{s}\cdot \widetilde{W}_{[k]}\widetilde{\Sigma}_{[k]}^{-4}\widetilde{W}_{[k]}^T\right)(\mM^t\mS)^T(\mM^t\mathds{1}_y)$. The proof of~\Cref{lemma:bounded-space-apx} is largely analogous to that of Lemma 29 in \citet{gluch2021spectral}. We therefore defer the proof to~\Cref{appendix:section-thrm-dot-product} for completeness.

    \begin{lemma}
        \label{lemma:bounded-space-apx}
        Let $k\ge 2$ be an integer and $\varphi,\varepsilon\in(0,1)$. Let $G=(V,E)$ be a $d$-regular and $(k,\varphi,\varepsilon)$-clusterable graph. Let $\mM$ be the random walk transition matrix of $G$. Let $I_S=\{s_1,\dots,s_s\}$ be a multiset of $s$ indices chosen independently and uniformly at random form $V=\{1,\dots,n\}$. Let $\mS\in\mathbb{R}^{n\times s}$ be the matrix whose $i$-th column equals $\mathds{1}_{s_i}$. Let $\sqrt{\frac{n}{s}}\cdot \mM^t\mS=\widetilde{U}\widetilde{\Sigma}\widetilde{W}^T$ be an SVD of $\sqrt{\frac{n}{s}}\cdot \mM^t\mS$ where $\widetilde{U}\in\mathbb{R}^{n\times n},\widetilde{\Sigma}\in\mathbb{R}^{n\times n},\widetilde{W}\in\mathbb{R}^{s\times n}$. Let $\frac{1}{n^6}<\xi<1$ and $1\le M_\textup{init}\le O\left(\frac{n^{1/2-20\varepsilon/\varphi^2}}{k}\right)$. Let $t\ge \frac{20\log n}{\varphi^2}$. Let $c>1$ be a large enough constant. Let $s\ge c\cdot n^{240\varepsilon/\varphi^2}\cdot \log n\cdot k^4$. Let $\Psi$ denote the matrix constructed by \textsc{InitOracle} $(G,k,\xi,M_{\textup{init}})$ (Alg. \ref{algo:bounded-space-initialize-oracle}). 
        
        Let $x,y\in V$. Let $\langle \vf_x,\vf_y\rangle_{\textup{apx}}\in \mathbb{R}$ denote the value returned by \textsc{QueryDot} $(G,x,y,\xi,\Psi,M_{\textup{query}})$ (Alg. \ref{algo:bounded-space-query-dot}). If $\frac{\varepsilon}{\varphi^2}\le \frac{1}{10^5}$, Alg. \ref{algo:bounded-space-initialize-oracle} succeeds and $1\le M_\textup{query}\le O\left(\frac{n^{1/2-20\varepsilon/\varphi^2}}{k}\right)$, then with probability at least $1-5n^{-100}$ matrix $\widetilde{\Sigma}_{[k]}^{-4}$ exists and we have 
        \[
            \left|\langle \vf_x,\vf_y\rangle_{\textup{apx}}-(\mM^t\mathds{1}_x)^T(\mM^t\mS)\left(\frac{n}{s}\cdot \widetilde{W}_{[k]}\widetilde{\Sigma}_{[k]}^{-4}\widetilde{W}_{[k]}^T\right)(\mM^t\mS)^T(\mM^t\mathds{1}_y)\right|<\frac{\xi}{n}.
        \]
    \end{lemma}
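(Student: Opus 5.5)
We follow the structure of the proof of Lemma 29 in \citet{gluch2021spectral}. The one change is that every \textsc{EstRWDot} call is now the batched estimator, and its accuracy is controlled by \Cref{lemma:bounded-space-z} and \Cref{lemma:bounded-space-G}. Write $\mA=\mM^t\mS$ and $\va_x=\mA^T(\mM^t\mathds{1}_x)\in\mathbb{R}^s$, and similarly $\va_y$, so that the target quantity is $\va_x^T\Phi\va_y$ with $\Phi=\frac{n}{s}\widetilde W_{[k]}\widetilde\Sigma_{[k]}^{-4}\widetilde W_{[k]}^T$. The output is $\bm\alpha_x^T\Psi\bm\alpha_y$. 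We split the error by the triangle inequality:
\[
|\bm\alpha_x^T\Psi\bm\alpha_y-\va_x^T\Phi\va_y|\le |(\bm\alpha_x-\va_x)^T\Psi\bm\alpha_y|+|\va_x^T\Psi(\bm\alpha_y-\va_y)|+|\va_x^T(\Psi-\Phi)\va_y|.
\]
We then bound each of the three terms by $\xi/(3n)$.

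\textbf{Step 1: spectral facts on the sample.} The first task is to show that $\widetilde\Sigma_{[k]}^{-4}$ exists. For this we use the Gluch et al.\ lemmas (stated there for the unbatched oracle, but they depend only on $G$ and the sample $I_S$), with $s\ge c\,n^{240\varepsilon/\varphi^2}\log n\,k^4$. With probability $1-n^{-100}$, the top $k$ singular values of $\sqrt{n/s}\,\mA$ are at least roughly $n^{-O(\varepsilon/\varphi^2)}$, while the $(k+1)$-th singular value is at most $n^{-5}$ or so, because $t\ge 20\log n/\varphi^2$. We also use $\|\mM^t\mathds{1}_x\|_2\le O(k n^{-1/2+20\varepsilon/\varphi^2})$ from \Cref{lemma:gluch_0}. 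Together these give $\|\va_x\|_2\le\sqrt s\,O(k^2n^{-1+40\varepsilon/\varphi^2})$.

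\textbf{Step 2: accuracy of $\Psi$.} Since Alg.~\ref{algo:bounded-space-initialize-oracle} succeeds, \Cref{lemma:bounded-space-G} gives $\|\mathcal G-\mA^T\mA\|_2\le s\sigma_{\rm init}$. Here $\sigma_{\rm init}$ is determined by the choice $R_{\rm init}=\Theta(n^{1+920\varepsilon/\varphi^2}k^{14}/(\xi^2M_{\rm init}))$: plugging this into the condition $R\ge ck^2n^{-1+40\varepsilon/\varphi^2}/(\sigma^2M)$ yields $\sigma_{\rm init}\approx \xi n^{-1-440\varepsilon/\varphi^2}k^{-6}$. The key point is that $M_{\rm init}$ cancels, so the error is exactly the one in the unbatched analysis. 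Note that $\frac{n}{s}\mA^T\mA=\widetilde W\widetilde\Sigma^2\widetilde W^T$. We apply Weyl's inequality and Davis--Kahan to the top-$k$ eigenspace of $\frac ns\mathcal G$ versus that of $\frac ns\mA^T\mA$; the eigengap from Step 1 makes this valid. This bounds $\|\Psi-\Phi\|_2$ by a polynomial in $n^{O(\varepsilon/\varphi^2)}k$ times $\frac ns\cdot s\sigma_{\rm init}$ divided by the $k$-th eigenvalue raised to a power. Combining with $\|\va_x\|\|\va_y\|$ bounds the third term. We also get $\|\Psi\|_2\le \frac ns\cdot n^{O(\varepsilon/\varphi^2)}$.

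\textbf{Step 3: query vectors.} Each entry $\bm\alpha_x(i)$ comes from \textsc{EstRWDot} with $R_{\rm query}$ and a median over $O(\log n)$ runs. By \Cref{lemma:bounded-space-z} and \Cref{remark:median-trick}, together with a union bound over the $2s$ entries, we get $|\bm\alpha_x(i)-\va_x(i)|\le\sigma_{\rm q}$ with $\sigma_{\rm q}\approx\xi n^{-1-200\varepsilon/\varphi^2}k^{-2}$; again $M_{\rm query}$ cancels. Hence $\|\bm\alpha_x-\va_x\|_2\le\sqrt s\sigma_{\rm q}$. The first two terms are then at most $\sqrt s\sigma_{\rm q}\|\Psi\|_2(\|\va_y\|_2+\sqrt s\sigma_{\rm q})$. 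Union-bounding the failure events for Step 1, the $\bm\alpha_x$ entries, the $\bm\alpha_y$ entries, and so on gives the stated total failure probability $5n^{-100}$.

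\textbf{Main obstacle.} The main difficulty is the exponent bookkeeping in Step 2. We must verify that the perturbation of $\widetilde\Sigma_{[k]}^{-4}$, which is amplified by fourth powers of a smallest singular value of order $n^{-O(\varepsilon/\varphi^2)}$, is still beaten by the $n^{-440\varepsilon/\varphi^2}$ slack, using $\varepsilon/\varphi^2\le10^{-5}$. Here I would simply reproduce the constants of Lemma 29 in \citet{gluch2021spectral}. The justification is that our $\sigma_{\rm init}$ and $\sigma_{\rm q}$ coincide with theirs, so their perturbation chain carries over verbatim.
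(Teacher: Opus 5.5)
Your proposal is correct and essentially matches the paper's proof, which is likewise modeled on Lemma 29 of \citet{gluch2021spectral}. It uses the same $\sigma_{\rm init}$ and $\sigma_{\rm q}$ (with $M_{\rm init},M_{\rm query}$ cancelling) and the same bounds on $\|\va_x\|_2$, $\|\Phi\|_2$, $\|\Psi-\Phi\|_2$ and $\|\bm\alpha_x-\va_x\|_2$. The differences are cosmetic: the paper expands $(\va_x+\mathbf e_x)(\Phi+E)(\va_y+\mathbf e_y)$ into seven terms and gets $\|\Psi-\Phi\|_2\le\frac ns\xi'$ with $\xi'=\xi/(c'k^4n^{80\varepsilon/\varphi^2})$ from \Cref{lemma:bounded-space-W} (Gluch's Lemma 18 applied to the squared matrices), rather than from a direct Weyl/Davis--Kahan argument.

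One correction to your ``main obstacle'' remark: the exponents that must cancel exactly are in the query terms, not in Step 2. There $\sqrt s\,\sigma_{\rm q}\,\|\Phi\|_2\,\|\va_y\|_2$ is of order $\xi\, n^{-1+(-200+160+40)\varepsilon/\varphi^2}$. So you need the precise bound $\|\Phi\|_2\le 4n^{1+160\varepsilon/\varphi^2}/s$, which comes from $v_k\ge n^{-80\varepsilon/\varphi^2}/2$; a generic $\frac ns\, n^{O(\varepsilon/\varphi^2)}$ is not enough.
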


    Having~\Cref{lemma:bounded-space-z} and~\Cref{lemma:bounded-space-apx}, to prove~\Cref{thrm:bounded-space-dot-product-oracle}, we also need the following lemma.

    \begin{lemma}[Lemma 19 in \citet{gluch2021spectral}]
        \label{lemma:gluch_3}
        Let $k\ge 2$ be an integer and $\varphi,\varepsilon\in(0,1)$. Let $G=(V,E)$ be a $d$-regular and $(k,\varphi,\varepsilon)$-clusterable graph. Let $\mM$ be the random walk transition matrix of $G$. Let $I_S=\{s_1,\dots,s_s\}$ be a multiset of $s$ indices chosen independently and uniformly at random form $V=\{1,\dots,n\}$. Let $\mS\in\mathbb{R}^{n\times s}$ be the matrix whose $i$-th column equals $\mathds{1}_{s_i}$. Let $\sqrt{\frac{n}{s}}\cdot \mM^t\mS=\widetilde{U}\widetilde{\Sigma}\widetilde{W}^T$ be an SVD of $\sqrt{\frac{n}{s}}\cdot \mM^t\mS$ where $\widetilde{U}\in\mathbb{R}^{n\times n},\widetilde{\Sigma}\in\mathbb{R}^{n\times n},\widetilde{W}\in\mathbb{R}^{s\times n}$. Let $\frac{1}{n^6}<\xi<1$ and $t\ge \frac{20\log n}{\varphi^2}$. Let $c>1$ be a large enough constant. Let $s\ge c\cdot n^{480\varepsilon/\varphi^2}\cdot \log n\cdot k^8/\xi^2$. If $\frac{\varepsilon}{\varphi^2}\le \frac{1}{10^5}$, then with probability at least $1-n^{-100}$, matrix $\widetilde{\Sigma}_{[k]}^{-4}$ exists and we have
        \[
            \left|\mathds{1}_x^T\mU_{[k]}\mU_{[k]}^T\mathds{1}_y-(\mM\mathds{1}_x)^T(\mM^t\mS)\left(\frac{n}{s}\cdot \widetilde{W}_{[k]}\widetilde{\Sigma}_{[k]}^{-4}\widetilde{W}_{[k]}^T\right)(\mM^t\mS)^T(\mM\mathds{1}_y)\right|\le \frac{\xi}{n}.
        \]
        
    \end{lemma}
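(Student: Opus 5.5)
The plan is to rewrite the sketched quadratic form as a projection onto the top-$k$ left singular space of the sampled matrix, and then compare that singular structure with the true spectral structure of $\mM^t$. I read the vectors $\mM\mathds{1}_x,\mM\mathds{1}_y$ in the statement as $\mM^t\mathds{1}_x,\mM^t\mathds{1}_y$, consistent with \Cref{lemma:bounded-space-apx}.

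\textbf{Step 1 (algebraic identity).} From the SVD $\sqrt{\frac{n}{s}}\mM^t\mS=\widetilde U\widetilde\Sigma\widetilde W^T$ we get
\[
\frac{n}{s}(\mM^t\mS)\widetilde W_{[k]}\widetilde\Sigma_{[k]}^{-4}\widetilde W_{[k]}^T(\mM^t\mS)^T=\widetilde U_{[k]}\widetilde\Sigma_{[k]}^{-2}\widetilde U_{[k]}^T .
\]
So the quantity to control is $\mathds{1}_x^T\mM^t\widetilde U_{[k]}\widetilde\Sigma_{[k]}^{-2}\widetilde U_{[k]}^T\mM^t\mathds{1}_y$. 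The target is $\mathds{1}_x^T\mU_{[k]}\mU_{[k]}^T\mathds{1}_y=\mathds{1}_x^T\mM^t\,\mU_{[k]}\Lambda_{[k]}^{-2t}\mU_{[k]}^T\,\mM^t\mathds{1}_y$, where $\mM^t=\mU\Lambda^t\mU^T$ and $\Lambda=\mI-\frac12\mathrm{diag}(\lambda_i)$. Hence it suffices to show that $\widetilde U_{[k]}\widetilde\Sigma_{[k]}^{-2}\widetilde U_{[k]}^T$ approximates $\mU_{[k]}\Lambda_{[k]}^{-2t}\mU_{[k]}^T$ when tested against the vectors $\mM^t\mathds{1}_x,\mM^t\mathds{1}_y$. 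By \Cref{lemma:gluch_0} these vectors have norm $O(kn^{-1/2+20\varepsilon/\varphi^2})$, so an operator-norm error of $\xi n^{-O(\varepsilon/\varphi^2)}/k^2$ (in the appropriate restricted sense) would give the required $\xi/n$.

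\textbf{Step 2 (spectral facts for clusterable graphs).} I would use three standard facts. First, $\lambda_k\le 2\varepsilon$, which gives $\Lambda_{[k]}^t\ge(1-\varepsilon)^t\ge n^{-O(\varepsilon/\varphi^2)}$. Second, by higher-order Cheeger, $\lambda_{k+1}\ge\Omega(\varphi^2)$, so $\|\Lambda^t_{-[k]}\|\le(1-\Omega(\varphi^2))^{t}\le n^{-5}$ for $t=20\log n/\varphi^2$. Third, the incoherence bound $\|\vf_x\|_2^2\le O(k n^{-1+O(\varepsilon/\varphi^2)})$ for every $x$. Writing $\mM^t\mS=\mU_{[k]}\Lambda^t_{[k]}\mU_{[k]}^T\mS+E$, the remainder satisfies $\|E\|_2\le n^{-5}\sqrt s$, which is negligible.

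\textbf{Step 3 (matrix concentration).} The matrix $\frac{n}{s}\mU_{[k]}^T\mS\mS^T\mU_{[k]}$ is a sum of $s$ i.i.d.\ rank-one terms $n\,\vf_{s_i}\vf_{s_i}^T/s$, each with mean $\mI_k/s$ and norm at most $O(kn^{O(\varepsilon/\varphi^2)})/s$. By matrix Chernoff, with $s\ge c\,n^{480\varepsilon/\varphi^2}\log n\,k^8/\xi^2$, we get
\[
\Big\|\frac{n}{s}\mU_{[k]}^T\mS\mS^T\mU_{[k]}-\mI_k\Big\|_2\le \delta
\]
with probability $1-n^{-100}$, where $\delta=\xi\, n^{-\Omega(\varepsilon/\varphi^2)}/k^{O(1)}$. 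Consequently $\sqrt{\frac ns}\mU_{[k]}^T\mS=Q(\mI+\Delta)$ with $Q$ orthogonal and $\|\Delta\|\le\delta$. Then $\frac{n}{s}\mM^t\mS\mS^T\mM^t\approx\mU_{[k]}\Lambda^t_{[k]}(\mI+\Delta')\Lambda^t_{[k]}\mU_{[k]}^T$. Its top $k$ eigenvalues are $\widetilde\Sigma_{[k]}^2$, which are bounded below by $n^{-O(\varepsilon/\varphi^2)}$, so $\widetilde\Sigma^{-4}_{[k]}$ exists. The remaining $n-k$ eigenvalues are at most $n^{-9}$.

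\textbf{Step 4 (perturbation bound, the main obstacle).} Let $P=\frac ns\mM^t\mS\mS^T\mM^t$. Its top-$k$ part inverts as $\widetilde U_{[k]}\widetilde\Sigma_{[k]}^{-2}\widetilde U_{[k]}^T=(P|_{\text{top }k})^{+}$, while the ideal object is $(\mU_{[k]}\Lambda^{2t}_{[k]}\mU_{[k]}^T)^+$. Rather than using Davis--Kahan separately on the subspace and the eigenvalues, I would treat $P=\mU_{[k]}\Lambda^t_{[k]}(\mI+\Delta')\Lambda^t_{[k]}\mU_{[k]}^T+F$ with $\|F\|\le n^{-8}$. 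The spectral gap between the top $k$ eigenvalues ($\ge n^{-O(\varepsilon/\varphi^2)}$) and the rest ($\le n^{-9}$) makes the pseudo-inverse of the top-$k$ part a Lipschitz function of $F$. Sandwiching with $\Lambda^t_{[k]}\mU_{[k]}^T$ (coming from $\mM^t\mathds{1}_x$) yields
\[
\Lambda^t_{[k]}\Lambda^{-t}_{[k]}(\mI+\Delta')^{-1}\Lambda^{-t}_{[k]}\Lambda^t_{[k]}=(\mI+\Delta')^{-1}=\mI+O(\delta).
\]
So the conditioning factors $\Lambda^{-t}$ cancel exactly in the main term. The expected difficulty is that in the cross terms with $F$ they do not cancel: $\|\Lambda_{[k]}^{-2t}\|\le n^{O(\varepsilon/\varphi^2)}$, and this amplification must be absorbed by the $n^{-8}$ smallness of $F$. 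That is where $\varepsilon/\varphi^2\le10^{-5}$ is used.

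\textbf{Step 5 (assembling the bound).} The total error is $\|\vf_x\|\|\vf_y\|\cdot O(\delta)+\text{negligible}\le O(kn^{-1+O(\varepsilon/\varphi^2)})\delta\le\xi/n$ with the chosen $s$. A union bound over the failure events of Steps 2--3 gives probability $1-n^{-100}$.
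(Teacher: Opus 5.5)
The paper gives no proof of this lemma; it imports it from \citet{gluch2021spectral}. The argument behind the cited lemma concentrates the \emph{full} matrix, $\|\mM^{2t}-\frac{n}{s}(\mM^t\mS)(\mM^t\mS)^T\|_2\le\xi'$, via \Cref{lemma:gluch-21}. It then applies \Cref{lemma:gluch_1} to $\mM^{2t}$ (so $\gamma_k\ge n^{-80\varepsilon/\varphi^2}$ and $\gamma_{k+1}\le n^{-10}$) together with your Step~1 identity, and multiplies by the norms from \Cref{lemma:gluch_0}. Choosing $\xi'=\Theta(\xi n^{-200\varepsilon/\varphi^2}/k^2)$ is exactly what produces $s\ge c\,n^{480\varepsilon/\varphi^2}k^8\log n/\xi^2$.

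Your route is genuinely different. Your reading of $\mM\mathds{1}_x$ as $\mM^t\mathds{1}_x$ is correct. Step~1, the concentration of $\frac{n}{s}\mU_{[k]}^T\mS\mS^T\mU_{[k]}$, and the exact cancellation of $\Lambda_{[k]}^{\pm t}$ in the main term of Step~4 are all fine.

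The gap is the bound $\|F\|_2\le n^{-8}$, on which your Step~4 rests. With $\mM^t\mS=\mU_{[k]}\Lambda^t_{[k]}\mU_{[k]}^T\mS+E$ as in Step~2,
\[
F=\tfrac{n}{s}\bigl(\mU_{[k]}\Lambda^t_{[k]}\mU_{[k]}^T\mS E^T+E\mS^T\mU_{[k]}\Lambda^t_{[k]}\mU_{[k]}^T+EE^T\bigr),
\]
and only the last term is $O(n^{-9})$. The cross terms are linear in $E$, and your estimates give only $\|\sqrt{n/s}\,\mU_{[k]}^T\mS\|_2\cdot\sqrt{n/s}\,\|E\|_2=O(\sqrt{n}\cdot n^{-5})=O(n^{-4.5})$. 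A deterministic argument cannot do much better in general (think of a sample with heavily repeated indices).

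Feed this into the Lipschitz bound for the top-$k$ pseudo-inverse, whose constant is $O(1/\gamma_k^2)=n^{O(\varepsilon/\varphi^2)}$ (essentially \Cref{lemma:gluch_1} applied to your $P_0$ and $P$). The $F$-contribution is then of order $\|\mM^t\mathds{1}_x\|_2\|\mM^t\mathds{1}_y\|_2\cdot n^{-4.5+O(\varepsilon/\varphi^2)}\approx k^2n^{-5.5+O(\varepsilon/\varphi^2)}$. This exceeds $\xi/n$ once $\xi$ is below roughly $k^2n^{-4.5+O(\varepsilon/\varphi^2)}$, whereas the lemma allows $\xi$ down to $n^{-6}$.

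The missing idea is that the cross block is a mean-zero random matrix, not a deterministic error. Write $\frac{n}{s}\mU_{[k]}^T\mS\mS^T\mU_{-[k]}\Lambda^t_{-[k]}=\frac{n}{s}\sum_{i}\vf_{s_i}\bigl(\Lambda^t_{-[k]}\mU_{-[k]}^T\mathds{1}_{s_i}\bigr)^T$; each summand has expectation $\frac1s\mU_{[k]}^T\mU_{-[k]}\Lambda^t_{-[k]}=0$. Matrix Bernstein then gives norm $O(n^{-4.5}\sqrt{\log n/s})$ with high probability, so $\|F\|_2=O(n^{-4.5}\sqrt{\log n/s})+n^{-9}$. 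With $s\ge c\,n^{480\varepsilon/\varphi^2}k^8\log n/\xi^2$, the $F$-contribution is $O(\xi n^{-5.5})+O(k^2n^{-10+O(\varepsilon/\varphi^2)})$, well below $\xi/n$.

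With this fix your argument goes through. It would even need a smaller $s$ than the statement requires, since your main term incurs no $\gamma_k^{-2}$ amplification. Alternatively, concentrate the whole $n\times n$ matrix as \citet{gluch2021spectral} do. That controls the tail and cross terms at the same scale $\xi'$ and is simpler, at the price of the worse exponents.

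Finally, your bound $\|\Lambda^t_{-[k]}\|_2\le n^{-5}$ needs $\lambda_{k+1}\ge\varphi^2/2$ with exactly this constant. Higher-order Cheeger loses $\mathrm{poly}(k)$ factors and does not give it. Use instead the direct argument: take a nonzero $\vx\in\mathrm{span}(\vu_1,\dots,\vu_{k+1})$ orthogonal to every $\mathds{1}_{C_i}$, and apply Cheeger inside each cluster using $\phi_{\textup{in}}(C_i)\ge\varphi$.
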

    
    Now we are ready to prove~\Cref{thrm:bounded-space-dot-product-oracle}.
    \begin{proof}[Proof of~\Cref{thrm:bounded-space-dot-product-oracle}]
        \textbf{Correctness.}
        Equipped with~\Cref{lemma:bounded-space-apx}, based on the correctness proof of Theorem 2 in \citet{gluch2021spectral}, we can directly obtain the correctness. Nevertheless, for completeness, we provide a concise proof here.

        Note that in line $3$ of Alg. \ref{algo:bounded-space-initialize-oracle}, we set $s=O(n^{480\varepsilon/\varphi^2}\cdot \log n\cdot k^8/\xi^2)$, and in line $4$ of Alg. \ref{algo:bounded-space-initialize-oracle}, we sample $s$ indices independently and uniformly at random form $V=\{1,\dots,n\}$ to get $I_S=\{s_1,\dots,s_s\}$. Recall that $\mM$ is the random walk transition matrix of $G$. Let $\mS\in\mathbb{R}^{n\times s}$ be the matrix whose $i$-th column is $\mathds{1}_{s_i}$. Let $\sqrt{\frac{n}{s}}\cdot \mM^t\mS=\widetilde{U}\widetilde{\Sigma}\widetilde{W}^T$ be an SVD of $\sqrt{\frac{n}{s}}\cdot \mM^t\mS$ where $\widetilde{U}\in\mathbb{R}^{n\times n},\widetilde{\Sigma}\in\mathbb{R}^{n\times n},\widetilde{W}\in\mathbb{R}^{s\times n}$.

        Recall that for any vertex $x\in V$, we define $\vf_x=\mU_{[k]}^T\mathds{1}_x$ (see~\Cref{defn:spectral-embedding}), thus we have $\langle \vf_x,\vf_y\rangle=\vf_x^T\vf_y=(\mU_{[k]}^T\mathds{1}_x)^T\mU_{[k]}^T\mathds{1}_y=\mathds{1}_x^T\mU_{[k]}\mU_{[k]}^T\mathds{1}_y$. For convenience, let us denote $\mB=(\mM^t\mathds{1}_x)^T(\mM^t\mS)\left(\frac{n}{s}\cdot \widetilde{W}_{[k]}\widetilde{\Sigma}_{[k]}^{-4}\widetilde{W}_{[k]}^T\right)(\mM^t\mS)^T(\mM^t\mathds{1}_y)$. By trangle inequality, we have
        \begin{align*}
            |\langle \vf_x,\vf_y \rangle_\textup{apx}-\langle \vf_x,\vf_y \rangle|&=|\langle \vf_x,\vf_y \rangle_\textup{apx}-\mB+\mB-\langle \vf_x,\vf_y \rangle|\\
            &\le |\langle \vf_x,\vf_y \rangle_\textup{apx}-\mB|+|\mB-\langle \vf_x,\vf_y \rangle|\\
            &=|\langle \vf_x,\vf_y \rangle_\textup{apx}-\mB|+|\mB-\langle  \mathds{1}_x^T\mU_{[k]}\mU_{[k]}^T\mathds{1}_y\rangle|.
        \end{align*}

        Let $\xi^\prime=\frac{\xi}{2}$. Let $c^\prime$ be a constant in front of $s$ form~\Cref{lemma:bounded-space-apx}. Since $s=O(n^{480\varepsilon/\varphi^2}\cdot \log n\cdot k^8/\xi^2)\ge c^\prime \cdot n^{240\varepsilon/\varphi^2}\cdot \log n\cdot k^4$, then by~\Cref{lemma:bounded-space-apx}, with probability at least $1-5n^{-100}$, we have $|\langle \vf_x,\vf_y \rangle_\textup{apx}-\mB|\le \frac{\xi^\prime}{n}=\frac{\xi}{2n}$.
        
        Let $c$ be a constant in front of $s$ form~\Cref{lemma:gluch_3}. Since $s=O(n^{480\varepsilon/\varphi^2}\cdot \log n\cdot k^8/\xi^2)\ge c\cdot n^{480\varepsilon/\varphi^2}\cdot \log n\cdot k^8/{\xi^\prime}^2$ and $\frac{\varepsilon}{\varphi^2}\le \frac{1}{10^5}$, then by~\Cref{lemma:gluch_3}, with probability at least $1-n^{-100}$, we have $|\mB-\langle  \mathds{1}_x^T\mU_{[k]}\mU_{[k]}^T\mathds{1}_y\rangle|\le \frac{\xi^\prime}{n}=\frac{\xi}{2n}$.

        Therefore, by union bound, with probability at least $1-5n^{-100}-n^{-100}=1-6n^{-100}$ , we have $|\langle \vf_x,\vf_y \rangle_\textup{apx}-\langle \vf_x,\vf_y \rangle|\le \frac{\xi}{2n}+\frac{\xi}{2n}=\frac{\xi}{n}$.

        \textbf{Runtime and space of \textsc{InitOracle}.} Algorithm \textsc{InitOracle}($G,k,\xi,M_\textup{init}$) (Alg. \ref{algo:bounded-space-initialize-oracle}) calls \textsc{EstColliProb}$(G,R_{\textup{init}},t,M_\textup{init},I_S)$ (Alg. \ref{algo:bounded-space-esti-colli-prob}) to get $\mathcal{G}$ (see line $5$ of Alg. \ref{algo:bounded-space-initialize-oracle}). According to~\Cref{lemma:bounded-space-G}, \textsc{EstColliProb}$(G,R_{\textup{init}},t,M_\textup{init},I_S)$ runs in $O(R_\textup{init}\cdot t\cdot \log n \cdot s^2)$ time and uses $O(M_\textup{init}\cdot \log^2 n\cdot s^2)$ bits of space. 
        Then in line $7$ of \textsc{InitOracle}, it computes the SVD of matrix $\mathcal{G}$ in $s^3$ time and it uses $s^2\cdot \log n$ bits of space to store $\Psi\in \mathbb{R}^{n\times n}$. Thus overall \textsc{InitOracle} runs in $O(R_\textup{init}\cdot t\cdot \log n\cdot s^2+s^3)$ time and uses $O( M_\textup{init}\cdot \log^2 n\cdot s^2 + s^2\cdot \log n)$ bits of space. By the choice of $t\coloneqq\frac{20\log n}{\varphi^2}$,
        $R_{\textup{init}}\coloneqq \Theta(\frac{n^{1+920\varepsilon/\varphi^2}}{M_{\textup{init}}}\cdot \frac{k^{14}}{\xi^2})$ and 
        $s\coloneqq O(n^{480\cdot \varepsilon/\varphi^2}\cdot \log n\cdot k^8/\xi^2)$ as in \textsc{InitOracle}, we get that \textsc{InitOracle} runs in $T_\textup{init}=(\frac{k}{\xi})^{O(1)}\cdot n^{1+O(\varepsilon/\varphi^2)}\cdot \frac{1}{M_\textup{init}}\cdot \log^4 n\cdot \frac{1}{\varphi^2}$ time and uses $S_\textup{init}=(\frac{k}{\xi})^{O(1)}\cdot n^{O(\varepsilon/\varphi^2)}\cdot M_\textup{init}\cdot \log^4 n$ bits of space.

        \textbf{Runtime and space of \textsc{QueryDot}.} In \textsc{QueryDot} (Alg. \ref{algo:bounded-space-query-dot}), in lines $3\sim 6$, it calls \textsc{EstRWDot}$(G,R_\textup{query},t,M_\textup{query},x,s_i)$ (Alg. \ref{algo:bounded-space-esti-rw-dot}) for $O(\log n\cdot s)$ times. According to~\Cref{lemma:bounded-space-z}, \textsc{EstRWDot}$(G,R_\textup{query},t,M_\textup{query},x,s_i)$ runs in $O(R_\textup{query}\cdot t)$ time and uses $O(M_\textup{query}\cdot \log n)$ bits of space. Moreover, in line $9$ of \textsc{QueryDot}, it returns $\langle \vf_x,\vf_y\rangle_\textup{apx}=\bm\alpha_x^T\Psi\bm\alpha_y$, which can be computed in $O(s^2)$ time, since we can compute $\bm{a}=\bm\alpha_x^T\Psi$ in $s^2$ time and then we compute $\bm{a}\bm\alpha_y$ in $s^2$ time. Thus overall \textsc{QueryDot} runs in $O(\log n\cdot s\cdot R_\textup{query}\cdot t+s^2)$ time and $O(\log^2 n\cdot s\cdot M_\textup{query})$ bits of space. By the choice of $t\coloneqq\frac{20\log n}{\varphi^2}$,
        $R_{\textup{query}}\coloneqq \Theta(\frac{n^{1+440\varepsilon/\varphi^2}}{M_{\textup{query}}}\cdot \frac{k^{6}}{\xi^2})$ and 
        $s\coloneqq O(n^{480\cdot \varepsilon/\varphi^2}\cdot \log n\cdot k^8/\xi^2)$ as in \textsc{QueryDot}, we get that \textsc{QueryDot} runs in $T_\textup{query}=(\frac{k}{\xi})^{O(1)}\cdot n^{1+O(\varepsilon/\varphi^2)}\cdot \frac{1}{M_\textup{query}}\cdot \log^3 n\cdot \frac{1}{\varphi^2}$ time and uses $S_\textup{query}=(\frac{k}{\xi})^{O(1)}\cdot n^{O(\varepsilon/\varphi^2)}\cdot M_\textup{query}\cdot \log^3 n$ bits of space.

    \end{proof}

\section{Spectral clustering oracles with little memory}
\label{sec:Spectral clustering oracles with little memory}

In this section, we present and prove our main algorithmic result, stated in the theorem below. We emphasize that the resulting algorithms exhibit different trade-offs between the conductance gap ($\varphi$ vs.\ $\varepsilon$), the misclassification ratio, and the corresponding space--time bounds, depending on the clustering algorithms employed, either that of \citet{gluch2021spectral} or \citet{shen2024sublinear}.

       \begin{theorem}
            \label{thrm:main}
            
            Let $k\ge 2$ be an integer, $\varphi,\varepsilon\in (0,1)$ and $h_1(k,\varphi),h_2(k,\varepsilon)$ and $h_3(k,\varphi,\varepsilon)$ be three functions. Let $\varepsilon \ll h_1(k,\varphi)$. Let $G=(V,E)$ be a $d$-regular and $(k,\varphi,\varepsilon)$-clusterable graph with $C_1,\dots,C_k$. Let $1\le M\le O(\frac{n^{1/2-O(\varepsilon/\varphi^2)}}{k})$ be a trade-off parameter. There exists a sublinear spectral clustering oracle that, with probability at least $0.9$:%

            \begin{itemize}
                \item constructs a data structure $\mathcal{D}$ using $\widetilde{O}_\varphi(h_2(k)\cdot n^{O(\varepsilon/\varphi^2)}\cdot M)$ bits of space,
                \item answers any \textsc{WhichCluster} query using $\mathcal{D}$ in $\widetilde{O}_\varphi(h_2(k)\cdot n^{1+O(\varepsilon/\varphi^2)}\cdot \frac{1}{M})$ time\footnote{In order for the query time to be sublinear, $M$ must satisfy $M\ge n^{c\cdot \varepsilon/\varphi^2}$, where $c$ is a constant that is larger than the constant hidden in $O(\cdot)$-term of $n^{1+O(\varepsilon/\varphi^2)}$.},
                \item has $O\left(h_3(k,\varphi,\varepsilon)\right)|C_i|$ misclassification error for each $i\in[k]$,
            \end{itemize}
            where we use $O_\varphi$ to suppress dependence on $\varphi$ and $\widetilde{O}$ to hide all $\textup{poly}(\log n)$ factors and:
            \begin{enumerate}[label=\arabic*]
                \item \label{itm:main-case1} if $h_1(k,\varphi)=\frac{\varphi^3}{\log k}$, then $h_2(k,\varepsilon)=(\frac{k}{\varepsilon})^{O(1)}$ and $h_3(k,\varphi,\varepsilon)=\frac{\varepsilon}{\varphi^3}\cdot \log k$;
                
                \item \label{itm:main-case2} if $h_1(k,\varphi)=\frac{\varphi^2\cdot \gamma^3}{k^{\frac{9}{2}}\cdot\log^3 k}$, then $h_2(k)=(\frac{k}{\gamma})^{O(1)}$ and $h_3(k,\varphi,\varepsilon)=(\frac{\varepsilon}{\varphi^2})^{\frac{1}{3}}\cdot k^{\frac{3}{2}}$, where $\gamma\in (0.001,1]$ is a constant such that for all $i\in[k]$, $\gamma\frac{n}{k}\le |C_i|\le \frac{n}{\gamma k}$.
            \end{enumerate}
        \end{theorem}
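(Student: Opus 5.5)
The plan is a black-box reduction. We take the clustering oracles of \citet{gluch2021spectral} (case~\ref{itm:main-case1}) and \citet{shen2024sublinear} (case~\ref{itm:main-case2}) and replace every call to their dot product oracle by our \textsc{InitOracle}/\textsc{QueryDot} pair from \Cref{thrm:bounded-space-dot-product-oracle}.

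Both prior oracles use the graph only in a restricted way. Their preprocessing samples a set of $\poly(k/\varepsilon)$ (resp.\ $\poly(k/\gamma)$) vertices $\log n$ times. It initializes the dot product oracle once, computes all pairwise approximate dot products $\langle\vf_x,\vf_y\rangle_{\textup{apx}}$ among the sampled vertices, and uses them to build a compact set of cluster representatives or centers. A \textsc{WhichCluster}$(G,x)$ query then makes $\poly(k)$ dot product queries between $x$ and the stored representatives and outputs the argmax or thresholded membership. The correctness analysis in those papers uses the dot product oracle only through the guarantee $|\langle\vf_x,\vf_y\rangle_{\textup{apx}}-\langle\vf_x,\vf_y\rangle|\le \xi/n$ with high probability. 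They fix $\xi$ as a function of $(k,\varepsilon,\varphi)$ or $(k,\gamma,\varphi)$. \Cref{thrm:bounded-space-dot-product-oracle} delivers exactly this guarantee with the same failure probability. Hence the conductance-gap requirement $\varepsilon\ll h_1$ and the misclassification bounds $h_3$ carry over verbatim. I will re-state the relevant lemmas from each paper (the ``$\vf_x$ close to cluster mean'' and ``good sampled set'' lemmas) and observe that nothing else in the graph is accessed.

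For complexity, I set $M_{\textup{init}}=M_{\textup{query}}=M$, and $M\le O(n^{1/2-O(\varepsilon/\varphi^2)}/k)$ matches the hypothesis of \Cref{thrm:bounded-space-dot-product-oracle}. The preprocessing space is $S_{\textup{init}}$ plus $O(1)$ words for each of the $\poly(k/\xi)\log n$ dot products. Those dot products are computed sequentially, reusing the $O(M)$ workspace of \textsc{QueryDot}. Storing $\Psi$ and the representatives costs $n^{O(\varepsilon/\varphi^2)}\poly(k/\xi)\log^2 n$ bits. The total is $\widetilde{O}_\varphi(h_2\cdot n^{O(\varepsilon/\varphi^2)}M)$. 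A query makes $\poly(k/\xi)$ calls at cost $T_{\textup{query}}$ each, giving $\widetilde{O}_\varphi(h_2\cdot n^{1+O(\varepsilon/\varphi^2)}/M)$. A union bound over the $\poly(k)\log n$ dot product calls, each failing with probability $O(n^{-100})$, together with the constant-probability events of the original analyses (good sample sets) gives overall success at least $0.9$.

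The main obstacle is checking that the original clustering procedures really are space-light apart from the dot product oracle. For instance, \citet{gluch2021spectral} enumerates candidate partitions of the sample set and estimates cluster sizes or outer conductances. I must make sure this enumeration needs only $\poly(k/\varepsilon)\log n$ working memory and that each evaluation can stream through dot products rather than storing random-walk endpoints. I would handle this by inspecting each step and noting that every stored object is indexed by sampled vertices, not by random-walk samples. The preprocessing time is allowed to grow with the enumeration since only space and query time are claimed.
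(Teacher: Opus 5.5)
Your proposal is correct and matches the paper's proof: substitute \textsc{InitOracle}/\textsc{QueryDot} (with $M_{\textup{init}}=M_{\textup{query}}=M$) into the oracles of \citet{gluch2021spectral} and \citet{shen2024sublinear}, inherit the conductance-gap and misclassification guarantees from the unchanged $\xi/n$ accuracy, and bound space by $S_{\textup{init}}$ plus $\poly(k/\gamma)$ (resp.\ $\poly(k/\varepsilon)$) copies of $S_{\textup{query}}$ and query time by the corresponding multiple of $T_{\textup{query}}$. The one nuance you gloss over is that in case~1 the queries go through subspace dot products $\langle\vf_x,\widehat{\Pi}\widehat{\mu}\rangle$ built from \textsc{QueryDot} calls at precision $\xi'=\Theta(\xi\, n^{-80\varepsilon/\varphi^2}k^{-6})$, so $\xi$ is not purely a function of $(k,\varepsilon,\varphi)$; this only adds $n^{O(\varepsilon/\varphi^2)}$ and $\poly(k)$ factors, which are already absorbed in the stated bounds.
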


    We present the proof of Item~\ref{itm:main-case2} of Theorem~\ref{thrm:main} here, while the proof of the remaining case, Item~\ref{itm:main-case1}, is deferred to~\Cref{appendix:section-log(k)-oracle}.

\Cref{itm:main-case2}, which addresses a sublinear spectral clustering oracle under a $\poly(k)$ conductance gap. Our sublinear spectral clustering oracle closely follows the construction in \citet{shen2024sublinear}, except that we substitute our new dot product oracle from \Cref{sec:dot-product-oracle} in place of theirs. 

\paragraph{High-level idea of the algorithm} 
Now we briefly outline the main idea of the oracle. \citet{shen2024sublinear} showed that for most vertices in a $(k,\varphi,\varepsilon)$-clusterable graph, if $x,y\in V$ belong to the same cluster, then $\langle \vf_x,\vf_y\rangle\approx \frac{k}{n}$, otherwise, $\langle \vf_x,\vf_y\rangle\approx 0$. Leveraging this property, we can design a clustering oracle as follows: it first samples $s=\frac{k\log k}{\gamma}$ vertices to form a set $S$, and for each pair $u,v\in S$, it computes the dot product $\langle \vf_u,\vf_v\rangle_\textup{apx}$ using our new dot product oracle. If the value is large, an edge $(u,v)$ is added to the initially empty similarity graph $H=(S,\emptyset)$. At query time, the oracle uses $H$ and its connected components to determine the cluster assignment of vertices. We provide a full description of the clustering oracle in \Cref{appendix:section-poly(k)-oracle}. 
Now we present the proof of~\Cref{itm:main-case2} in~\Cref{thrm:main} as follows.

\vspace{-0.7em}
        \begin{proof}[Proof of~\Cref{itm:main-case2} in~\Cref{thrm:main}]
            \textbf{Space and runtime.} In the preprocessing phase, \textsc{ConstructOracle}$(G,k,\varphi,\varepsilon,\gamma,M)$ (Alg. \ref{alg:shen-construct}) invokes our \textsc{InitOracle}$(G,k,\xi,M)$ (Alg. \ref{algo:bounded-space-initialize-oracle}) one time to get a matrix $\Psi$ (see line $5$ of Alg. \ref{alg:shen-construct}), then \textsc{ConstructOracle}$(G,k,\varphi,\varepsilon,\gamma,M)$ invokes our \textsc{QueryDot}$(G,u,v,\xi,\Psi,M)$ %
            $O((k^2\log^2k)/\gamma^2)$ times (see lines $6\sim 9$ of Alg. \ref{alg:shen-construct}) to get a similarity graph $H$. Therefore, \textsc{ConstructOracle}$(G,k,\varphi,\varepsilon,\gamma,M)$ %
            uses %
            $S_\textup{init}+O((k^2\log^2k)/\gamma^2)\cdot S_\textup{query}$ bits of space. Using~\Cref{thrm:bounded-space-dot-product-oracle}, we get that \textsc{ConstructOracle}$(G,k,\varphi,\varepsilon,\gamma,M)$ %
            uses $O(n^{O(\varepsilon/\varphi^2)}\cdot M\cdot \poly(\frac{k\log n}{\gamma}))$ bits of space to get matrix $\Psi$ and a similarity graph $H$.

            In the query phase, \textsc{WhichCluster}$(G,x,M)$ (Alg. \ref{alg:shen-which}) invokes \textsc{Search}$(H,\ell,x,M)$ (Alg. \ref{alg:shen-search}) one time. \textsc{Search}$(H,\ell,x,M)$ invokes our \textsc{QueryDot}$(G,u,x,\xi,\Psi,M)$ %
            $O((k\log k)/\gamma)$ 
            times (see lines $1\sim 2$ of Alg. \ref{alg:shen-search}) and relies on the similarity graph $H$ (see lines $3\sim 6$ of Alg. \ref{alg:shen-search}). Therefore, \textsc{WhichCluster}$(G,x,M)$ uses %
            $O((k\log k)/\gamma)\cdot S_\textup{query}$ 
            bits of space and runs in %
            $O((k\log k)/\gamma)\cdot T_\textup{query}$ time. Using~\Cref{thrm:bounded-space-dot-product-oracle}, we get that \textsc{WhichCluster}$(G,x,M)$ uses $O(n^{O(\varepsilon/\varphi^2)}\cdot M\cdot \poly(\frac{k\log n}{\gamma}))$ bits of space and runs in $O(n^{1+O(\varepsilon/\varphi^2)}\cdot \frac{1}{M}\cdot \poly(\frac{k\log n}{\gamma\varphi}))$ time. 

            Thus, the oracle constructs a data structure $\mathcal{D}$ (including  $\Psi$, similarity graph $H$ etc) using $O(n^{O(\varepsilon/\varphi^2)}\cdot M\cdot \poly(\frac{k\log n}{\gamma}))$ bits of space. Using $\mathcal{D}$, any \textsc{WhichCluster}$(G,x)$ query can be answered by Alg. \ref{alg:shen-which} in $O(n^{1+O(\varepsilon/\varphi^2)}\cdot \frac{1}{M}\cdot \poly(\frac{k\log n}{\gamma\varphi}))$ time.

            \textbf{Correctness.} Since the correctness guarantees (i.e., conductance gap and misclassification error) of the clustering oracle rely on the properties of the dot product oracle, and our dot product oracle satisfies the same correctness guarantees with the previous one, the correctness of the overall clustering oracle follows directly from the correctness of the clustering oracle in \citet{shen2024sublinear}.
        \end{proof}

\section{Distinguishing $1$-cluster vs.\ $2$-cluster}

In this section, we present both the upper and lower bounds for distinguishing $1$-cluster from $2$-cluster.

\subsection{Upper bound}
\label{subsec:upper-bound}

We now describe our algorithm for distinguishing $1$-cluster from $2$-cluster and then provide the analysis that establishes the upper bound.

\subsubsection{The algorithm}
Algorithm \ref{algo:bounded-space-dinstinguish} is based on estimating the second largest eigenvalue of $\mathbf{M}^t$ using a subroutine \textsc{EstColliProb} (Alg. \ref{algo:bounded-space-esti-colli-prob}) from \Cref{sec:dot-product-oracle}.%
 
    \vspace{-0.5em}
    \begin{algorithm}[H]%
 \DontPrintSemicolon %
\caption{\textsc{Distinguish}($G,M$)}%
    \label{algo:bounded-space-dinstinguish}
        $t\coloneqq\frac{20\log n}{\varphi^2}$, 
        $R\coloneqq \Theta(\frac{n}{M})$, 
        $s\coloneqq O(\log n)$\;
        Let $I_S=\{s_1,\dots,s_s\}$ be the multiset of $s$ indices chosen independently and uniformly at random from $V=\{1,\dots,n\}$\;
        $\mathcal{G}\coloneqq$ \textsc{EstColliProb}$(G,R,t,M, I_S)$\;
        Let $v_2(\frac{n}{s}\mathcal{G})$ be the second largest eigenvalue of matrix $\frac{n}{s}\mathcal{G}$\;
        \If{$\left(v_2(\frac{n}{s}\mathcal{G})\right)^2<0.6$}{
            return ``$1$-cluster''\;
        }
        return ``$2$-cluster''\;
    \end{algorithm}

The formal guarantee of this algorithm is given in Theorem~\ref{thrm:distinguish}. We now describe the main idea of the algorithm and present the proof of Theorem~\ref{thrm:distinguish}.

\begin{theorem}[Restatement of~\Cref{thrm:distinguish}]
For any trade-off parameter $1 \le M \le O(\sqrt{n})$, there exists an algorithm (Alg.~\ref{algo:bounded-space-dinstinguish}) that, with probability at least $1 - 2n^{-100}$, solves the $1$-cluster vs. $2$-cluster problem. Moreover, the algorithm:
\begin{itemize}

\item uses $\widetilde{O}(M)$ bits of space,
\item runs in $\widetilde{O}\left(\tfrac{n}{M}\right)$ time.
\end{itemize}

\end{theorem}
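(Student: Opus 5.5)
The proof analyzes Alg.~\ref{algo:bounded-space-dinstinguish} by comparing $\frac{n}{s}\mathcal{G}$ with the ideal matrix $\frac{n}{s}(\mM^t\mS)^T(\mM^t\mS)$ and showing that the second eigenvalue of the ideal matrix separates the two cases. The argument has three steps.

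\textbf{Step 1: the ideal matrix.} The nonzero eigenvalues of $\frac{n}{s}(\mM^t\mS)^T(\mM^t\mS)$ coincide with those of $\frac{n}{s}\mM^t\mS\mS^T\mM^t$. Note that $\E[\frac{n}{s}\mS\mS^T]=\mI$, so this matrix is an unbiased estimate of $\mM^{2t}$.

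In the $1$-cluster case, $\mM^{2t}=\frac{1}{n}\1\1^T+E$ with $\lVert E\rVert_2\le(1-\varphi^2/4)^{2t}\le n^{-10}$, using the Cheeger bound $\lambda_2\ge\varphi^2/2$ and $t=20\log n/\varphi^2$. The factorization $\mM^t\mS\mS^T\mM^t=\frac{1}{n^2}\1\1^T\mS\mS^T\1\1^T+(\text{terms involving } \mM^t-\frac1n\1\1^T)$ gives the following:
\begin{itemize}
\item the top eigenvalue is $\approx1$, since $\frac{n}{s}\cdot\frac{s}{n^2}\cdot n=1$;
\item the remaining part has norm at most $\frac{n}{s}\lVert(\mM^t-\frac1n\1\1^T)\mS\rVert_2^2\cdot O(1)\le \frac{n}{s}\cdot s\cdot n^{-10}\cdot O(1)$, which is negligible.
\end{itemize}
Hence $v_2\le o(1)$.

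In the $2$-cluster case, $\mM^t\mathds{1}_x\approx\frac{2}{n}\1_{C(x)}$ up to $n^{-10}$ error. Writing $s_1,s_2$ for the number of samples in each half, the ideal matrix is close to $\frac{n}{s}\cdot\frac{2}{n}\,\mathrm{blockdiag}(\1\1^T)$, whose eigenvalues are $\frac{2s_1}{s}$ and $\frac{2s_2}{s}$. Since $s=O(\log n)$ with a large constant, Chernoff gives $s_1,s_2\in[0.45s,0.55s]$ with probability $1-n^{-100}$. Therefore $v_2\ge0.9-o(1)$, and $v_2^2\ge0.8$.

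\textbf{Step 2: the estimation error.} This is the main step. I will reprove Lemma~\ref{lemma:bounded-space-z} for this setting instead of invoking it, because here $k\le2$, $\varepsilon=0$, and $\lVert\mM^t\mathds{1}_x\rVert_2=O(n^{-1/2})$ follows directly from the decomposition in Step 1. Lemma~\ref{lemma:bounded-space-z-E-V} then gives
\[
\Var[Z]\le\frac1R\left(\frac{O(1/n)}{M}+O(n^{-3/2})\right)=O\!\left(\frac{1}{RMn}\right)
\]
for $M\le O(\sqrt n)$. Applying Chebyshev with the median trick of Remark~\ref{remark:median-trick}, and arguing as in Lemma~\ref{lemma:bounded-space-G}, yields $\lVert\mathcal{G}-(\mM^t\mS)^T(\mM^t\mS)\rVert_2\le s\sigma_{\mathrm{err}}$ with probability $1-n^{-100}$, provided $R\ge c/(\sigma_{\mathrm{err}}^2Mn)$.

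The target accuracy is $\frac{n}{s}\cdot s\sigma_{\mathrm{err}}\le0.05$, i.e.\ $\sigma_{\mathrm{err}}=0.05/n$. This requires $R=\Theta(n/M)$, which matches the algorithm's choice of $R$; I expect the constant bookkeeping here to need care. Weyl's inequality then moves $v_2$ by at most $0.05$:
\begin{itemize}
\item $1$-cluster case: $v_2^2\le(0.05+o(1))^2<0.6$;
\item $2$-cluster case: $v_2\ge0.85$, so $v_2^2>0.7>0.6$.
\end{itemize}
A union bound over the sampling event and the estimation event gives success probability $1-2n^{-100}$.

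\textbf{Step 3: resources.} By Lemma~\ref{lemma:bounded-space-G}, the space is $O(M\log^2n\cdot s^2)=\widetilde O(M)$ bits and the time is $O(Rt\log n\cdot s^2)=\widetilde O(n/M)$. The eigenvalue computation on the $s\times s$ matrix adds only $\poly(\log n)$ time and space.
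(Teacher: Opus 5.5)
Your proof is correct. Steps 2 and 3 match the paper's argument: the paper also re-derives the variance bound for the expander setting from $\lVert\mM^t\mathds{1}_x\rVert_2=O(n^{-1/2})$, takes $\sigma_{\mathrm{err}}=\Theta(1/n)$ (it uses $0.0001/n$), and so needs $R=\Theta(n/M)$.

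You differ in Step 1 and in where Weyl's inequality is applied. The paper handles the ideal matrix by citation.
\begin{itemize}
\item In the $1$-cluster case it defers to the proof of Lemma 28 of \citet{gluch2021spectral} for $v_2\le n^{-9}$.
\item In the $2$-cluster case it applies the matrix-sampling bound (Lemma 21 of \citet{gluch2021spectral}) to get $\lVert\mM^{2t}-\frac{n}{s}(\mM^t\mS)(\mM^t\mS)^T\rVert_2\le 1/100$ for $s\ge c\log n$, and combines this with $v_2(\mM^{2t})=1$.
\item It then follows the template of Lemma 24 of \citet{gluch2021spectral}: it bounds $\lVert\widetilde A^2-\widehat A^2\rVert_2$ and applies Weyl to the squared matrices.
\end{itemize}
You instead identify the ideal matrix explicitly, plus an $O(n^{-9})$ perturbation. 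In the $1$-cluster case it is a rank-one matrix with eigenvalue $1$. In the $2$-cluster case it is a two-block matrix with eigenvalues $2s_1/s$ and $2s_2/s$.

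This buys you three things. Only a scalar Chernoff bound on the split $s_1,s_2$ is needed. You apply Weyl directly to $\frac ns\mathcal G$. And the two-sided bound $|v_2(\frac ns\mathcal G)|\le 0.05+n^{-9}$ handles a possibly negative $v_2$ with no further comment. The paper's passage through squares instead asserts $v_2((\frac ns\mathcal G)^2)=(v_2(\frac ns\mathcal G))^2$, which needs an extra word when $\frac ns\mathcal G$ is indefinite. In exchange, the paper's route reuses machinery it already set up for general $k$.

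One small precision point concerns your remainder bound $\frac ns\lVert(\mM^t-\frac1n\1\1^T)\mS\rVert_2^2$. It is exact in the $s\times s$ Gram form. With $E_t=\mM^t-\frac1n\1\1^T$ one has $\1^TE_t=0$, hence $\mS^T\mM^{2t}\mS=\frac1n(\mS^T\1)(\mS^T\1)^T+(E_t\mS)^T(E_t\mS)$ with no cross terms. In the $n\times n$ factorization you actually wrote, cross terms do appear, of size at most $\frac ns\lVert\frac1n\1\1^T\mS\rVert_2\,\lVert E_t\mS\rVert_2\le n^{-4.5}$. They are not of the squared form you state, but they are still negligible, so nothing in the argument changes.
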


\subsubsection{Analysis of the upper bound}
Consider the case when the input graph $G$ is a $\varphi$-expander. By Cheeger’s inequality (\Cref{lemma:cheeger}), we get that the second smallest eigenvalue of $\mL$ satisfies $\lambda_{2}\ge \varphi^{2}/2$. Equivalently, the lazy random walk matrix $\mM=\mI-\mL/2$ has its second largest eigenvalue $v_2(\mM)\le 1-\varphi^{2}/4$. In contrast, if $G$ consists of two disjoint $\varphi$-expanders of equal size, then $\lambda_{2}=0$ and hence $v_2(\mM)=1$. Setting $t=O(\frac{\log n}{\varphi^{2}})$, we obtain that in the $1$-cluster case, the contribution of $v_2(\mM)\le n^{-10}$, while in the $2$-cluster case, $v_2(\mM)$ remains exactly $1$. Thus, $\mM^t$ exhibits a clear spectral gap between the two cases. Alg. \ref{algo:bounded-space-dinstinguish} constructs an approximation $\mathcal{G} \approx (\mM^{t}\mS)^T(\mM^{t}\mS)\in\mathbb{R}^{O(\log n)\times O(\log n)}$ within bounded space (see~\Cref{lemma:expander-bounded-space-G}), where each column of $\mM^t\mS$ corresponds to the $t$-step lazy random walk distribution starting from a vertex in the sampled set~$I_S$.

\begin{lemma}
        \label{lemma:expander-bounded-space-G}
        Let $\varphi\in(0,1)$. Let $G=(V,E)$ be either a $d$-regular $\varphi$-expander with size $n$ or the disjoint union of two identical $d$-regular $\varphi$-expander of size $n/2$. Let $\mM$ be the random walk transition matrix of $G$. Let $I_S=\{s_1,\dots,s_s\}$ be a multiset of $s$ indices chosen from $\{1,\dots,n\}$. Let $\mS\in\mathbb{R}^{n\times s}$ be the matrix whose $i$-th column equals $\mathds{1}_{s_i}$. Let $\mathcal{G}\in \mathbb{R}^{s\times s}$ be the output of \textsc{EstColliProb} $(G,R,t,M,I_S)$ (Alg. \ref{algo:bounded-space-esti-colli-prob}). Let $\sigma_\textup{err}>0$. Let $c>1$ be a large enough constant. For any $t\ge \frac{20\log n}{\varphi^2}$, if $R\ge \frac{c\cdot n^{-1}}{\sigma_\textup{err}^2M}$ and $1\le M\le O\left(n^{1/2}\right)$, then weith probability $1-n^{-100}$, we have
        \[
            \lVert \mathcal{G}-(\mM^t\mS)^T(\mM^t\mS)\rVert_2\le s\cdot\sigma_\textup{err}.
        \]

        Moreover, \textsc{EstColliProb} $(G,R,t,M,I_S)$ runs in $O(Rt\cdot \log n\cdot s^2)$ time and uses $O(M\cdot \log^2 n\cdot s^2)$ bits of space.

    \end{lemma}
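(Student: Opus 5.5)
The proof follows that of \Cref{lemma:bounded-space-G} step for step. The only change is in the collision-norm bound: \Cref{lemma:gluch_0}, which gave $\lVert \mM^t\mathds{1}_x\rVert_2\le O(k n^{-1/2+20\varepsilon/\varphi^2})$, is replaced by a direct bound for the two special graph families. Here $\varepsilon=0$ and $k\le 2$, so we will show that for every $x\in V$ and $t\ge \frac{20\log n}{\varphi^2}$ we have $\lVert \mM^t\mathds{1}_x\rVert_2\le O(n^{-1/2})$.

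\textbf{The $1$-cluster case.} Here $G$ is a $\varphi$-expander. Cheeger's inequality gives $v_2(\mM)\le 1-\varphi^2/4$. Since $\mM$ is lazy, all its eigenvalues lie in $[0,1]$. Decompose $\mathds{1}_x=\frac{1}{n}\1+\vu$ with $\vu\perp\1$ and $\lVert\vu\rVert_2\le 1$. Then
\[
\lVert \mM^t\mathds{1}_x\rVert_2\le n^{-1/2}+(1-\varphi^2/4)^{t}\le n^{-1/2}+n^{-5},
\]
because $t=20\log n/\varphi^2$.

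\textbf{The $2$-cluster case.} Apply the same argument inside the component of size $n/2$ that contains $x$. This gives $\lVert \mM^t\mathds{1}_x\rVert_2\le \sqrt{2/n}+n^{-5}$. In both cases the norm is $O(n^{-1/2})$.

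\textbf{Variance and Chebyshev.} Plug this bound into \Cref{lemma:bounded-space-z-E-V}:
\[
\Var[Z]\le \frac1R\left(\frac{O(n^{-1})}{M}+O(n^{-3/2})\right).
\]
The assumption $M\le O(n^{1/2})$ makes the second term at most a constant times the first, so $\Var[Z]\le O\!\left(\frac{n^{-1}}{RM}\right)$. With $R\ge \frac{c\,n^{-1}}{\sigma_{\rm err}^2M}$ for a large enough constant $c$, Chebyshev's inequality shows that each call \textsc{EstRWDot}$(G,R,t,M,s_i,s_j)$ is within $\sigma_{\rm err}$ of $\langle \mM^t\mathds{1}_{s_i},\mM^t\mathds{1}_{s_j}\rangle$ with probability at least $0.99$.

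\textbf{Median trick and union bound.} Each entry of $\mathcal{G}$ is the median over the $O(\log n)$ repetitions $l$. By \Cref{remark:median-trick}, with a suitable constant in the number of repetitions, each entry is within $\sigma_{\rm err}$ except with probability $n^{-102}$. A union bound over the $s^2\le n^2$ entries (we may assume $s\le n$, or else absorb the count into the constant) shows that all entries are accurate with probability at least $1-n^{-100}$. On that event the entrywise error bound gives
\[
\lVert \mathcal{G}-(\mM^t\mS)^T(\mM^t\mS)\rVert_2\le \lVert \mathcal{G}-(\mM^t\mS)^T(\mM^t\mS)\rVert_F\le s\,\sigma_{\rm err}.
\]

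\textbf{Time and space.} These are counted exactly as in \Cref{lemma:bounded-space-G}. There are $O(\log n\cdot s^2)$ calls to \textsc{EstRWDot}. Each call takes $O(Rt)$ time and $O(M\log n)$ bits of space, by the runtime and space part of \Cref{lemma:bounded-space-z}, which does not depend on clusterability.

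\textbf{Main obstacle.} The only genuinely new step is the norm bound. The point to watch is using laziness so that eigenvalues near $-1$ cannot appear. Once that is checked, everything else is bookkeeping carried over from \Cref{lemma:bounded-space-G}.
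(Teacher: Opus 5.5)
Your proposal is correct and matches the paper's proof. The paper also replaces \Cref{lemma:gluch_0} by a direct $O(n^{-1/2})$ bound on $\lVert \mM^t\mathds{1}_x\rVert_2$ for the two graph families (its \Cref{lemma:expander-M^t1_x}, obtained from the eigen-expansion of $\mathds{1}_x$ and Cheeger's inequality, giving $\sqrt{2/n}$ and $\sqrt{3/n}$), feeds it into \Cref{lemma:bounded-space-z-E-V} and Chebyshev to get an expander analogue of \Cref{lemma:bounded-space-z}, and then repeats the median-trick and Frobenius-norm argument of \Cref{lemma:bounded-space-G}; your orthogonal-decomposition derivation of the norm bound is just a slightly tighter packaging of the same spectral argument.
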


The second largest eigenvalue of $\mathcal{G}$ closely reflects that of $\mM^t$, thereby preserving the above separation (see~\Cref{lemma:expander-bounded-space-W}). Moreover, since $\mathcal{G}$ is a small matrix, we can afford to perform an eigen-decomposition on it directly. Consequently, examining the spectrum of $\mathcal{G}$ suffices to distinguish between the $1$-cluster and $2$-cluster cases using $\widetilde{O}(M)$ bits of space and $\widetilde{O}(n/M)$ time. The proofs of~\Cref{lemma:expander-bounded-space-G} and~\Cref{lemma:expander-bounded-space-W} are deferred to~\Cref{appendix:sec-distinguish}.

\begin{lemma}
        \label{lemma:expander-bounded-space-W}
        Let $\varphi\in(0,1)$. Let $G=(V,E)$ be a $d$-regular graph. Let $I_S=\{s_1,\dots,s_s\}$ be a multiset of $s$ indices chosen independently and uniformly at random form $V=\{1,\dots,n\}$. Let $\mathcal{G}\in \mathbb{R}^{s\times s}$ be the output of \textsc{EstColliProb} $(G,R,t,M,I_S)$ (Alg. \ref{algo:bounded-space-esti-colli-prob}). Let $c_1>1$ be a large enough constant. For any $t\ge \frac{20\log n}{\varphi^2}$, if $R\ge \frac{c_1\cdot n}{M}$ and $1\le M\le O\left(n^{1/2}\right)$, then with probability at least $1-2\cdot n^{-100}$, 
        
        \begin{enumerate}[label=\arabic*]
            \item \label{itm:v2G-case1} if $G$ is a $\varphi$-expander of size $n$ and $s\ge 1$, then $v_2\left(\left(\frac{n}{s}\mathcal{G}\right)^2\right)=\left(v_2(\frac{n}{s}\mathcal{G})\right)^2<0.001$,

            \item \label{itm:v2G-case2} if $G$ is the disjoint union of two identical $\varphi$-expanders of size $n/2$ and $s\ge c_2\cdot \log n$, where $c_2>1$ is a large enough constant, then $v_2\left(\left(\frac{n}{s}\mathcal{G}\right)^2\right)=\left(v_2(\frac{n}{s}\mathcal{G})\right)^2>0.95$.
        \end{enumerate}
    \end{lemma}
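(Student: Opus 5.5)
Write $\mA_S := \sqrt{n/s}\,\mM^t\mS$, so that $\frac{n}{s}(\mM^t\mS)^T(\mM^t\mS)=\mA_S^T\mA_S$. The plan is to compare $\frac{n}{s}\mathcal{G}$ with $\mA_S^T\mA_S$ and then compute the second eigenvalue of $\mA_S^T\mA_S$ exactly, up to negligible terms, in the two cases. First apply \Cref{lemma:expander-bounded-space-G} with $\sigma_\textup{err}=\Theta(1/n)$, or rather $\sigma_\textup{err}=c'/n$ for a small constant $c'$; the required $R\ge c\,n^{-1}/(\sigma_\textup{err}^2M)=\Theta(n/M)$ is met by $R\ge c_1 n/M$. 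With probability $1-n^{-100}$ this gives $\lVert \frac{n}{s}\mathcal{G}-\mA_S^T\mA_S\rVert_2\le \frac{n}{s}\cdot s\sigma_\textup{err}=c'$. Weyl's inequality then moves $v_2$ by at most $c'$. Since $\frac{n}{s}\mathcal{G}$ is symmetric with $v_2\ge 0$ in the relevant regime, the identity $v_2((\frac{n}{s}\mathcal{G})^2)=(v_2(\frac{n}{s}\mathcal{G}))^2$ holds once the spectrum is shown to consist of one or two eigenvalues near $1$ and the rest near $0$, so no large negative eigenvalue reorders the squares.

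\textbf{Expander case.} Write $\mM^t = \frac1n\1\1^T + \mathbf{E}$ with $\lVert \mathbf{E}\rVert_2\le (1-\varphi^2/4)^t\le n^{-5}$ by Cheeger (\Cref{lemma:cheeger}). Then $\mM^t\mS=\frac1n\1\1^T\mS+\mathbf{E}\mS$, whose first term has rank one. Since $\lVert \mathbf{E}\mS\rVert_2\le \lVert\mathbf{E}\rVert_2\sqrt{s}$, the second singular value of $\mA_S$ is at most $\sqrt{n}\,n^{-5}$, which is negligible. Hence $v_2(\mA_S^T\mA_S)\le n^{-8}$, and $v_2(\frac{n}{s}\mathcal{G})\le c'+o(1)$, whose square is below $0.001$ for small $c'$. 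For $s=1$ there is no second eigenvalue, so the claim is vacuous. The negative eigenvalues are also at least $-c'$, so squaring preserves the ordering.

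\textbf{Two-cluster case.} Now $\mM^t = \frac{2}{n}(\1_{C_1}\1_{C_1}^T+\1_{C_2}\1_{C_2}^T)+\mathbf{E}$ with $\lVert\mathbf{E}\rVert_2\le n^{-5}$, applying Cheeger within each expander. The main term of $\mA_S^T\mA_S$ is $\frac{n}{s}\cdot\frac{4}{n^2}\cdot\frac{n}{2}\,(\vb_1\vb_1^T+\vb_2\vb_2^T)=\frac{2}{s}(\vb_1\vb_1^T+\vb_2\vb_2^T)$, where $\vb_j\in\{0,1\}^s$ indicates the samples in $C_j$. Its nonzero eigenvalues are $2s_1/s$ and $2s_2/s$, with $s_j=|I_S\cap C_j|$. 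By Chernoff, with $s\ge c_2\log n$ we get $s_j\in[0.49s,0.51s]$ with probability $1-n^{-100}$. Both eigenvalues therefore lie in $[0.98,1.02]$, and the error terms contribute $O(n^{-4})$. After the $c'$ perturbation, $v_2\ge 0.98-c'$, so $(v_2)^2>0.95$; the other eigenvalues are near $0$, and the ordering survives squaring. A union bound over the two failure events gives $1-2n^{-100}$.

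The main obstacle is the perturbation step. The scaling $n/s$ amplifies the estimation error, so $\sigma_\textup{err}$ must be $\Theta(1/n)$, and one must check that \Cref{lemma:expander-bounded-space-G}'s variance bound, with $\lVert\mM^t\1_x\rVert_2=O(n^{-1/2})$, is exactly what allows $R=\Theta(n/M)$ under the constraint $M\le O(\sqrt n)$. One must also control the Chernoff balance of the sample counts $s_j$.
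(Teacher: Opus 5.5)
Your proposal is correct, and its skeleton matches the paper's: apply \Cref{lemma:expander-bounded-space-G} with $\sigma_{\textup{err}}=\Theta(1/n)$ (the paper takes $0.0001/n$), then use Weyl's inequality to transfer the spectral gap of $\widetilde{A}=\frac{n}{s}(\mM^t\mS)^T(\mM^t\mS)$ to $\widehat{A}=\frac{n}{s}\mathcal{G}$. The details differ in two places.

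First, the paper follows the template of Lemma 24 in Gluch et al. It bounds $\lVert\widetilde{A}^2-\widehat{A}^2\rVert_2\le n^2\sigma_{\textup{err}}^2+2n\sigma_{\textup{err}}\lVert\widetilde{A}\rVert_2$, using $\lVert\widetilde{A}\rVert_2\le 2$ resp.\ $3$, and applies Weyl to the squared matrices. You apply Weyl directly to $\widetilde{A}$ and $\widehat{A}$, which is simpler and keeps track of signs.

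Second, for the spectrum of $\widetilde{A}$, the paper cites Lemma 28 of Gluch et al.\ in the expander case. In the two-cluster case it uses the matrix concentration bound \Cref{lemma:gluch-21} to get $\lVert\mM^{2t}-\frac{n}{s}(\mM^t\mS)(\mM^t\mS)^T\rVert_2\le 1/100$, hence $v_2(\widetilde{A})\ge 0.99$ because $v_2(\mM^{2t})=1$. You instead split $\mM^t$ into the projection onto its top eigenspace plus an $n^{-5}$ remainder. This makes $\widetilde{A}$ an explicit rank-one or rank-two matrix plus $O(n^{-4})$, with eigenvalues $1$ resp.\ $2s_1/s,\,2s_2/s$, and a scalar Chernoff bound on the cluster counts replaces matrix concentration.

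What each route buys: yours is self-contained and gives the full spectrum of $\widehat{A}$. It has one or two eigenvalues near $1$ and all others in $[-c',\,c'+o(1)]$. That is exactly what is needed to pass from $v_2(\widehat{A}^2)$ to $(v_2(\widehat{A}))^2$, the quantity the algorithm actually tests. The paper bounds only $v_2(\widehat{A}^2)$ and asserts this step without argument. The paper's route, in turn, reuses general-$k$ machinery that does not depend on the exact block structure of the two instances.

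One small overstatement: in case 1 the literal identity $v_2(\widehat{A}^2)=(v_2(\widehat{A}))^2$ can fail. A negative eigenvalue of larger magnitude than $v_2(\widehat{A})$ would be the one selected, so squaring does not literally preserve the ordering. Both quantities are still at most $(c'+o(1))^2<0.001$, so your conclusion stands.
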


Now we are ready to prove~\Cref{thrm:distinguish}.
    
    \begin{proof}[Proof of~\Cref{thrm:distinguish}]
        \textbf{Correctness.} By the promise in the theorem statement, the input $d$-regular graph $G=(V,E)$ is guaranteed to be either a $\varphi$-expander or the disjoint union of two identical $\varphi$-expanders, each of size $n/2$. We run algorithm \textsc{Distinguish}$(G,M)$ (Alg. \ref{algo:bounded-space-dinstinguish}) to distinguish the above two cases. Note that the choices of $t$, $s$, and $R$ are made so that all the assumptions required by~\Cref{lemma:expander-bounded-space-W} are satisfied. Therefore, by~\Cref{lemma:expander-bounded-space-W}, we get that in case (i) (when $G$ is a $\varphi$-expander), with probability at least $1-2n^{-100}$, $(v_2(\frac{n}{s}\mathcal{G}))^2<0.001<0.6$; in case (ii), with probability at least $1-2n^{-100}$, $(v_2(\frac{n}{s}\mathcal{G}))^2>0.95>0.6$. Therefore, we get that, with probability at least $1-2n^{-100}$, algorithm \textsc{Distinguish} correctly distinguishes which case holds.

        \textbf{Space and runtime.} 
        According to~\Cref{lemma:expander-bounded-space-G}, getting matrix $\mathcal{G}$ requires $O(R\cdot t\cdot \log n\cdot s^2)$ time and $O(M\cdot \log^2 n\cdot s^2)$ bits of space. Computing $(\tfrac{n}{s}\mathcal{G})^2$ requires $O(s^3)$ time and $O(s^2\cdot \log n)$ bits of space.
        Therefore, the overall runtime and space complexity are $O(R\cdot t\cdot \log n\cdot s^2 + s^3)$ and $O(M\cdot \log^2 n\cdot s^2+s^2\log n)$ bits, respectively. By setting $t=\frac{20\log n}{\varphi^2},R=\Theta(\frac{n}{M})$ and $s=O(\log n)$, we get that \textsc{Distinguish}$(G,M)$ runs in $n\cdot \frac{1}{M}\cdot \poly(\log n)\cdot \frac{1}{\varphi^2}$ time and uses $M\cdot \poly(\log n)$ bits of space.

    \end{proof}

\subsection{Lower bound}
\label{subsec:lower-bound}

\vspace{-0.5em}

In this section, we prove the lower bound for distinguishing $1$-cluster from $2$-cluster, stated in Theorem~\ref{thrm:distinguish_lower}.

\begin{theorem}[Restatement of~\Cref{thrm:distinguish_lower}]

Any algorithm that correctly solves the $1$-cluster vs. $2$-cluster problem with error at most $1/3$ using only random walk oracles must satisfy $S\cdot T \geq \Omega(n)$, where $S$ and $T$ denote the space complexity and time complexity of the algorithm, respectively. 
\end{theorem}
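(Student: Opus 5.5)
We plan to prove the lower bound by reduction from space-bounded distribution testing, following the paper's technical overview. The target problem is to distinguish the uniform distribution $U_n$ on $[n]$ from the uniform distribution on a hidden random half $A\subseteq[n]$, $|A|=n/2$, given a stream of i.i.d.\ samples. For this problem the information-theoretic framework of \citet{diakonikolas2019communication} (and the memory–sample bounds of \citet{berg2022memory}) gives: any $S$-bit streaming algorithm using $T$ samples with error at most $1/3$ must have $S\cdot T\ge \Omega(n)$. The intuition is that each sample adds only about $S/n$ bits of information about $A$ beyond what memory can hold, and collisions between samples are what reveal $A$. We take this as the black-box base lower bound. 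If needed, we would re-derive it with a chi-square/mutual-information argument over a uniformly random balanced partition.

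\textbf{Hard instances.} In case (i), $G_1$ is a fixed $d$-regular $\varphi$-expander on $[n]$. In case (ii), $G_2$ is two copies of a fixed $\varphi$-expander on $n/2$ vertices, placed on $A$ and $[n]\setminus A$ via a uniformly random bijection. By \Cref{lemma:gluch_0}-type mixing, with $t=\Theta(\log n/\varphi^2)$ and Cheeger's inequality, we have $d_{\mathrm{TV}}(\mM^t\mathds{1}_x,\cdot)\le n^{-c}$ for every start $x$. The comparison target is $U_n$ in case (i), and the uniform distribution on the part containing $x$ in case (ii).

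\textbf{Simulation.} Given an algorithm $\mathcal{A}$ for the cluster problem with space $S$ and $T$ random-walk queries, we build a tester. Whenever $\mathcal{A}$ issues a random-walk query from $x$, the tester answers with a fresh sample from the unknown distribution. In case (ii) this is slightly wrong: the endpoint should lie in $x$'s own part, while a fresh sample lies in $A$ and $x$ may not. We fix this by symmetrizing. Case (ii) is modeled as the mixture that with probability $1/2$ gives a sample from $A$ and otherwise from $[n]\setminus A$. We then use the symmetry of the instance under swapping the two parts, which makes the part label of a query indistinguishable from memory. Concretely, the tester answers with a sample $w$, and the relabeling is absorbed into the random bijection. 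Alternatively, we reduce from a variant of the testing problem in which the tester receives samples from the conditional part. That variant has the same $\Omega(n)$ bound by the same framework, since each sample still reveals a single pairwise-collision bit of information about $A$.

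\textbf{Coupling.} Ideal samples and true $t$-step walk endpoints differ by TV $n^{-c}$ per query. We argue by induction on the number of queries $q$, coupling the two transcripts step by step. The memory-state distribution after $q$ queries under real walks is within $q\cdot n^{-c}$ in TV of that under ideal samples, because a single step of a fixed transition (which depends only on memory state and answer) is a contraction in TV. Since $T\le n$ without loss of generality (otherwise $S\cdot T\ge n$ already holds), the total drift is $o(1)$, so error $1/3$ becomes error $1/3+o(1)$ for the tester. The tester's space is $S+O(\log n)$ and it uses $T$ samples, so $S\cdot T\ge\Omega(n)$. This coupling is routine.

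\textbf{Main obstacle.} The delicate step is making the simulation faithful in case (ii): answers to walk queries must depend on which cluster the start vertex lies in. The tester does not know this, and an $S$-bit algorithm could cheaply exploit the dependency. I would first try reducing from the "same-part" testing formulation and verify, via the \citet{diakonikolas2019communication} information bound, that the conditional-on-start-vertex answers leak at most $O(S/n)$ bits per query about $A$.
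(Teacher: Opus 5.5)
Your overall architecture matches the paper's. The ``same-part'' testing variant you fall back on is exactly the paper's Hard Instance I: the player names a vertex $W_t$ and, when $X=1$, receives a uniform sample from the part containing $W_t$. Your inductive TV coupling of walk endpoints with ideal samples is the paper's reduction to Hard Instance II; your remark that $T\le n$ may be assumed, since $S\ge 1$, is a cleaner way to keep the drift $o(1)$ than the paper's $T\le O(n^2)$.

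There is, however, a genuine gap at the step that carries the actual content. Your primary route fails. Answering each walk query from a mixture that picks $A$ or $[n]\setminus A$ with probability $1/2$, independently of the query, produces exactly $U_n$, so the alternative collapses to the null. Absorbing the relabeling into the random bijection only works if all walks start in one part, for example a fixed vertex and previously returned endpoints. It fails once the algorithm starts walks at fresh vertices whose part it does not know. You are therefore forced into the adaptive same-part problem, and for it you only assert that $S\cdot T=\Omega(n)$ holds ``by the same framework.'' The black boxes you cite concern i.i.d., non-adaptive samples. Here $W_t$ is chosen from memory and every answer certifies a same-part relation with $W_t$, so a separate proof is needed; you yourself flag this as the main obstacle.

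Here is what the paper supplies. It uses a paired partition rather than a uniformly random half: i.i.d.\ $Y_i\in\{\pm1\}$ decide which element of pair $\{2i-1,2i\}$ goes to each side. This makes the hidden bits independent. A sample is written as $(C_t,V_t)$. Here $C_t$ is a uniform pair index, independent of $X$ and of memory. $V_t$ is a bit with bias $Y_{C_t}Z_t$ under $X=1$, where $Z_t=\pm1$ records the side of $W_t$.

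The argument then runs as follows. Using $1-H_2(\tfrac12+a)=\Theta(a^2)$ and averaging over $C_t$, the per-step information is $I(X;s_t\mid M_{t-1})=I(X;V_t\mid M_{t-1},C_t,W_t)\le O\bigl(\tfrac1n\sum_j \mathbb{E}[Z_tY_j\mid M_{t-1},W_t]^2\bigr)$. Since $(Z_tY_j)_j$ is a global sign flip of $(Y_j)_j$, its entropy is about $n$, and conditioning on $M$ bits of memory leaves at least about $n-M$. Subadditivity then gives $\sum_j\bigl(1-H(Z_tY_j\mid M_{t-1},W_t)\bigr)\le O(M)$, hence $\sum_j\mathbb{E}[Z_tY_j\mid M_{t-1},W_t]^2=O(M)$. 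Summing over steps via data processing and the chain rule gives $\Omega(1)\le I(X;M_T)\le O(TM/n)$.

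With your uniformly random balanced half, the coordinates are dependent and this subadditivity step does not apply as stated. You would either adopt the pairing or have to redo the entropy bound.
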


At a high level, the proof proceeds in two steps.
First, we establish a fundamental lower bound for distinguishing between two reference distributions: (i) a uniform distribution over all vertices and (ii) two separate uniform distributions, each supported on half of the vertex set. In the second step, we establish a reduction from distinguishing the reference distributions to distinguishing $1$-cluster from $2$-cluster.

\subsubsection{Hard Instance I}

We first consider the following Hard Instance, inspired by \citet{diakonikolas2019communication} and commonly used in uniformity testing. Note that in our construction, at each time $t$, the player is allowed to pick a $W_t\in[2n]$. The proof of~\Cref{thrm:hard-instance-1} follows closely that of Theorem 23 in \citet{diakonikolas2019communication} and is therefore deferred to~\Cref{subsec:hard-instance-1}.%

    \begin{defn}
        [Hard Instance I] 
        \upshape
        Let $X$ be a uniformly random bit. Based on $X$, the adversary chooses the distribution $p$ on $[2n]$ bins as
            follows:
    
            \begin{itemize}
            \item $X=0$ : Pick $p=U_{2 n}$, where $U_{2n}$ is the uniform distribution on $[2n]$.
            \item $X=1$ : 
            We construct two sets as follows: Pair the bins as $\{1,2\},\{3,4\}, \cdots,\{2 n-1,2 n\}$. Now on each pair $\{2 i-1,2 i\}$ pick a random $Y_i \in\{ \pm 1\}$. If $Y_i = 1$, we put bin $2 i-1$ to set $1$ and bin $2i$ to set $2$; otherwise, we put bin $2 i$ to set $1$ and bin $2i-1$ to set $2$. Each time, the player picks $W_t \in [2n]$. If $W_t$ belongs to set $1$, we have $Z_t = 1$; otherwise, $Z_t = -1$. The distribution is then

            $$
            \left(p_{2 i-1}, p_{2 i}\right)= \left(\frac{1+Y_i Z_t}{2 n}, \frac{1-Y_i Z_t}{2 n}\right).
            $$
            \end{itemize}
        
    \end{defn}

    We have the space--time tradeoff of this instance to be:

    \begin{theorem} 
        \label{thrm:hard-instance-1}
        Let $\mathcal{A}$ be an algorithm that detects the Hard Instance I with error at most $1 / 3$. The algorithm can access the samples in a single-pass streaming fashion using $M$ bits of space and $T$ samples. Furthermore, at each step, the algorithm may choose which set to sample by specifying $W_t$. We then have $T \cdot M=\Omega\left(n\right)$.
    \end{theorem}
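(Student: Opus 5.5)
We follow the information-theoretic framework of \citet{diakonikolas2019communication} (their Theorem 23): bound the information about $X$ that an $M$-bit memory state can accumulate from $T$ samples. Model $\mathcal{A}$ as a sequence of memory states $S_0,S_1,\dots,S_T\in\{0,1\}^M$, where $S_t$ is a (possibly randomized) function of $S_{t-1}$ and the $t$-th sample $J_t$, and the choice $W_t$ is a function of $S_{t-1}$. Correctness with error at most $1/3$ implies, via Fano/Pinsker, that the final state distinguishes $X=0$ from $X=1$ with constant advantage. We then show that the distinguishing advantage, or equivalently the divergence between the laws of $S_T$ under $X=0$ and $X=1$, is $O(\sqrt{TM/n})$ or $O(TM/n)$. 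This forces $TM=\Omega(n)$.

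\textbf{Step 1: reduce to the hidden signs.} Condition on $Y=(Y_1,\dots,Y_n)$. Under $X=1$, the sample at step $t$ has law $p^{(t)}=U_{2n}+\frac{Z_t}{2n}\sum_i Y_i(\mathds{1}_{2i-1}-\mathds{1}_{2i})$. So the sample equals a uniform draw perturbed by a single global sign $Z_t$ times the hidden pattern $Y$. Since $Z_t$ is determined by $W_t$ and $Y$, hence by the memory, write the per-step likelihood ratio against $U_{2n}$ as $1+Z_tY_{i(J_t)}\sigma(J_t)$, with $\sigma=\pm1$ according to parity. The key quantity is the ``bias'' the memory holds about $Y$. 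Define $\beta_t(s)$ as the vector of $\E[Y_i\mid S_t=s]$ (or $\E[\prod_{i\in A}Y_i\mid S_t=s]$ for the Fourier coefficients of the posterior on $Y$).

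\textbf{Step 2: per-step information increment.} Following the DGKR analysis, bound the $\chi^2$-type distance between the law of $S_t$ under $X=1$ and under $X=0$ recursively. One step adds at most $O\big(\frac{1}{n}\E\|\beta_{t-1}(S_{t-1})\|_2^2\big)$-type terms, plus the term produced by the new sample itself. The single coordinate $i(J_t)$ touched at step $t$ shifts the degree-one Fourier weight by at most $O(1/n)$ in expectation.

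\textbf{Step 3: a level-1 inequality bounds the memory's knowledge.} The state $S_t$ takes at most $2^M$ values, so the posterior on $Y$ given $S_t=s$ is conditioning on an event of probability about $2^{-M}$. The KKL/level-1 inequality (equivalently, the chain rule with subadditivity, $\sum_i I(Y_i;S_t)\le I(Y;S_t)\le M$) gives $\E_s\|\beta_t(s)\|_2^2=O(M)$. Inserting this into Step 2, each step contributes $O(M/n)$ to the divergence. Summing over $T$ steps, the total divergence between the final-state laws under $X=0$ and $X=1$ is $O(TM/n)$. Hence constant advantage requires $TM=\Omega(n)$.

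\textbf{Main obstacle.} The adaptivity of $W_t$, and hence of $Z_t$, is the delicate point. The sign multiplying $Y$ depends on the memory, so the hybrid/telescoping argument must treat $Z_t$ as a measurable function of $S_{t-1}$ and $Y$. We would first try to absorb $Z_t$ into the bias, since $|Z_tY_i|$-correlations have the same squared magnitude. This keeps the level-1 bound unchanged, because squaring removes the sign. We would then verify that cross terms across steps vanish in expectation over the uniform sample, as in \citet{diakonikolas2019communication}.
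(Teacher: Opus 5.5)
Your overall architecture matches the paper's. It has three parts: a chain rule over memory states; a per-step increment of order $\frac1n\sum_i b_i^2$, where $b_i$ is the memory's bias on pair $i$; and entropy subadditivity to get $\sum_i b_i^2=O(M)$, hence a total of $O(TM/n)$. For the increments to add, you need KL divergence or mutual information via the chain rule, not a $\chi^2$ recursion. A KL version between the laws of $S_T$ under $X=0$ and $X=1$ works and is a clean alternative to the paper's $I(X;M_T)$.

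\textbf{The gap.} It sits exactly where you flagged the difficulty, and your proposed fix does not work. $Z_t$ is not determined by the memory. By construction, $Z_t=\pm Y_w$ with $w=\lceil W_t/2\rceil$, and the sign is fixed by the parity of $W_t$.
\begin{itemize}
\item The bias entering step $t$ is therefore $b_i=\E[Z_tY_i\mid S_{t-1}]=\pm\E[Y_wY_i\mid S_{t-1}]$. This is a pairwise correlation. ``Squaring removes the sign'' does not apply, because the sign is random given $S_{t-1}$.
\item Worse, under $X=1$ the sample law depends on $Y$ only through $Z_tY$, which is invariant under $Y\mapsto -Y$. So every posterior is flip-symmetric, and $\E[Y_i\mid S_t]=0$ for all $i$ and all states.
\item Hence the degree-one quantity your Step 3 bounds (via $\sum_i I(Y_i;S_t)$ or via KKL) is identically zero. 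All the signal lives in the degree-two terms, to which the level-1 inequality does not apply.
\end{itemize}

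\textbf{The missing idea.} The paper runs subadditivity directly on the flipped vector $(Z_tY_1,\dots,Z_tY_n)$.
\begin{itemize}
\item $Y$ is recoverable from $(Z_tY,Z_t)$, and $Z_t$ is a single bit. Therefore $H(Z_tY\mid S_{t-1})\ge H(Y\mid S_{t-1})-1\ge n-M-1$.
\item Subadditivity then gives $\sum_i\bigl(1-H(Z_tY_i\mid S_{t-1})\bigr)\le M+1$.
\item Since $1-H_2(\tfrac12+\beta)=\Theta(\beta^2)$, this yields $\sum_i\E[b_i^2]=O(M)$.
\end{itemize}
The paper states this as ``$Z_t$ just flips $Y$, so $H(Z_tY)=H(Y)=n$.'' Strictly, one bit is lost because coordinate $w$ becomes deterministic, but this is harmless.
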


\subsubsection{Hard Instance II}

For the $1$-cluster vs. $2$-cluster problem, we would consider the following Hard Instance. 
    
    \begin{defn}[Hard Instance II] 
        Let $X$ be a uniformly random bit. Let $\varphi \in (0,1)$ with $\varphi = \Omega(1)$, and let $d = O(1)$. Based on $X$, the adversary chooses a $d$-regular graph $G$ on $2n$ vertices as follows:
    
            \begin{itemize}
            \item $X=0$ : Pick the graph to be a $\varphi$-expander on $2n$ vertices.
            \item $X=1$ : We construct two sets as follows: Pair bins the   as $\{1,2\},\{3,4\}, \cdots,\{2 n-1,2 n\}$. Now on each pair $\{2 i-1,2 i\}$ pick a random $Y_i \in\{ \pm 1\}$. If $Y_i = 1$, we put vertex $2 i-1$ to set $1$ and vertex $2i$ to set $2$; otherwise, we put vertex $2 i$ to set $1$ and vertex $2i-1$ to set $2$. The graph is then composed of two identical $\varphi$-expanders over set $1$ and set $2$. 
            \end{itemize}
    \end{defn}
    
    We would assume that the algorithm has access to the graph only via the random walk queries (see~\Cref{defn:random-walk-queries}). We have the space--time tradeoff of this instance to be:

    \begin{theorem} [Variant of~\Cref{thrm:distinguish_lower}]
        \label{thrm:hard-instance-2}
        Let $\mathcal{A}$ be an algorithm which detects the Hard Instance II with error probability at most $1 / 3$. The algorithm can perform $T$ random walk queries using $M$ bits of space. We have $M \cdot T=\Omega\left(n\right)$.
    \end{theorem}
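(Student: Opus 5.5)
We prove \Cref{thrm:hard-instance-2} by reduction from \Cref{thrm:hard-instance-1}: any algorithm $\mathcal{A}$ that distinguishes the two cases of Hard Instance II with $T$ random walk queries and $M$ bits of space can be turned into a streaming algorithm $\mathcal{A}'$ for Hard Instance I using $T$ samples and $M+O(\log n)$ bits of space, with error at most $1/3+o(1)$. The lower bound $T\cdot M=\Omega(n)$ then transfers directly (after the usual amplification, the constant error loss is absorbed).

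\textbf{Coupling the graph to the distribution.} We fix the graphs so that the pairing structure matches. For $X=0$ we take a fixed $d$-regular $\varphi$-expander $H_0$ on $[2n]$. For $X=1$ we take a fixed $d$-regular $\varphi$-expander $H$ on $[n]$ and place one copy on set $1$ and one on set $2$, where the $i$-th vertex of each copy is the appropriate element of the pair $\{2i-1,2i\}$ determined by $Y_i$. The key point is that in both cases a random walk of length $t=O(\log n/\varphi^2)$ with a large enough constant, started at $W$, ends at a distribution $\delta$-close in TV (with $\delta\le n^{-10}$ by the spectral gap of $\mM$ via Cheeger's inequality, \Cref{lemma:cheeger}) to the following:
\begin{itemize}
\item uniform on $[2n]$ when $X=0$;
\item uniform on the half containing $W$ when $X=1$.
\end{itemize}
The latter is exactly the law of a Hard Instance I sample with $Z_t$ determined by $W_t$: mass $\frac{1+Y_iZ_t}{2n}$ sits on the element of pair $i$ that lies in the same set as $W_t$, and that element has mass $1/n$.

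\textbf{Simulation.} $\mathcal{A}'$ runs $\mathcal{A}$. Whenever $\mathcal{A}$ issues a random walk query from $x$, $\mathcal{A}'$ sets $W_t=x$, draws a sample from $p$, and returns it to $\mathcal{A}$ as the walk endpoint. $\mathcal{A}'$ keeps only $\mathcal{A}$'s memory state plus $O(\log n)$ bits for the current query, and it outputs $\mathcal{A}$'s answer.

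\textbf{Controlling the discrepancy (main obstacle).} Each simulated answer differs in law from a true walk endpoint by at most $\delta$ in TV, but the answers are adaptive and feed back into the memory state. We handle this with an inductive coupling over query steps. Let $\mathcal{S}_j$ denote the memory state after $j$ queries under the true oracle, and $\mathcal{S}'_j$ the state under the simulated one. Conditioned on equal states and on the same hidden $(X,Y)$, we couple the $j$-th answers so that they agree except with probability at most $\delta$. Since $\mathcal{A}$'s transition is a deterministic (or common-randomness) function of state and answer, we get
\[
\Pr[\mathcal{S}_{j}\neq\mathcal{S}'_{j}]\le \Pr[\mathcal{S}_{j-1}\neq\mathcal{S}'_{j-1}]+\delta,
\]
and hence the final outputs differ with probability at most $T\delta$.

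If $T\ge n^{5}$, the claim $MT=\Omega(n)$ is trivial. Otherwise $T\delta\le n^{-5}$, so $\mathcal{A}'$ errs with probability at most $1/3+n^{-5}$. Standard constant amplification, or simply reading \Cref{thrm:hard-instance-1} with error $0.34$, then gives $T\cdot(M+O(\log n))=\Omega(n)$. For $M=\Omega(\log n)$ this yields $MT=\Omega(n)$. In the sub-logarithmic regime we absorb the extra bits by noting that $\mathcal{A}$ must at least store a query index, so $M\ge\log n$ is without loss of generality.

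The delicate part is making this argument rigorous: the per-step TV bound must hold uniformly over the starting vertex, which follows from the $\ell_\infty$ mixing bound for every start. The coupling must also respect the adaptive choice of $W_t$, which it does because we condition on the current state.
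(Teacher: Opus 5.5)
Your proposal is correct and follows essentially the same route as the paper. You simulate $\mathcal{A}$ on Hard Instance I by setting $W_t$ to the start vertex of each query. You then bound the accumulated discrepancy inductively: your coupling recursion $\Pr[\mathcal{S}_j\neq\mathcal{S}'_j]\le\Pr[\mathcal{S}_{j-1}\neq\mathcal{S}'_{j-1}]+\delta$ plays the role of the paper's TV recursion with per-step error $0.01/n^2$ from \Cref{lemma:mixing time}, and the large-$T$ regime is trivial. One minor point: the extra $O(\log n)$ bits are unnecessary, because the query vertex is a function of $\mathcal{A}$'s memory state, just as $W_t$ is a function of $M_{t-1}$ in \Cref{thrm:hard-instance-1}. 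So drop your claim that $M\ge\log n$ is without loss of generality; it is not valid as stated, since an algorithm need not store a vertex index.
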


    To prove~\Cref{thrm:hard-instance-2}, we will use the following lemma, whose proof has been deferred to~\Cref{subsec:hard-instance-2}.

    \begin{lemma} 
        \label{lemma:mixing time}
        Assume $G=(V,E)$ is a $d$-regular $\varphi$-expander on $n$ vertices. Let $\mM$ be the lazy random walk transition matrix of $G$. Let $\mM^t\mathds{1}_x$ be the probability distribution of a random walk with length $O(\frac{\log n}{\varphi^2})$ starting from vertex $x\in V$. Let $\pi=(\frac{1}{n},\dots,\frac{1}{n})^T\in\mathbb{R}^n$ be the uniform distribution over $n$ vertices. We have that $d_{\textup{TV}}(\mM^t\mathds{1}_x,\pi)\le \frac{0.01}{n^2}$.
    \end{lemma}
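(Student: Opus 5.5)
We will derive the bound from the spectral decomposition of the lazy walk matrix together with Cheeger's inequality. Since $G$ is $d$-regular, $\mM=\mI-\mL/2$ is symmetric with eigenvalues $1=\mu_1\ge\mu_2\ge\dots\ge\mu_n\ge 0$. Its top eigenvector is $\vu_1=\frac{1}{\sqrt n}\1$, and $\mM^t$ fixes $\pi$. Because $G$ is a $\varphi$-expander, Cheeger's inequality (\Cref{lemma:cheeger}) gives $\lambda_2\ge\varphi^2/2$, so $\mu_2\le 1-\varphi^2/4$. Laziness gives $\mu_n\ge 0$, so every nontrivial eigenvalue satisfies $|\mu_i|\le 1-\varphi^2/4$ for $i\ge 2$.

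The main computation works in $\ell_2$. Decompose $\mathds{1}_x=\frac{1}{\sqrt n}\vu_1+\vy$ with $\vy\perp\vu_1$. Then $\frac{1}{\sqrt n}\vu_1=\pi$, and $\lVert\vy\rVert_2\le\lVert\mathds{1}_x\rVert_2=1$. Since $\mM^t\pi=\pi$, we get
\[
\lVert\mM^t\mathds{1}_x-\pi\rVert_2=\lVert\mM^t\vy\rVert_2\le (1-\varphi^2/4)^t\le e^{-\varphi^2 t/4}.
\]
Next pass to TV distance by Cauchy--Schwarz:
\[
d_{\textup{TV}}(\mM^t\mathds{1}_x,\pi)=\tfrac12\lVert\mM^t\mathds{1}_x-\pi\rVert_1\le\tfrac{\sqrt n}{2}\lVert\mM^t\mathds{1}_x-\pi\rVert_2\le \tfrac{\sqrt n}{2}e^{-\varphi^2t/4}.
\]

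It remains to choose the constant in $t=O(\frac{\log n}{\varphi^2})$. Take $t=\frac{c\log n}{\varphi^2}$, for example $c=20$ as used throughout the paper. The bound becomes $\frac{\sqrt n}{2}n^{-c/4}$. With $c=20$ this is $\frac12 n^{-4.5}$, which is at most $\frac{0.01}{n^2}$ once $n$ is at least a small absolute constant. For general $c$, any $c\ge 11$ suffices for large $n$.

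The only real subtlety is the constant bookkeeping. First, $\log$ must be taken as the natural logarithm, or the constant adjusted, so that $e^{-\varphi^2t/4}=n^{-c/4}$. Second, the statement's "$O(\frac{\log n}{\varphi^2})$" must be read as a walk length with a sufficiently large constant, since a tiny constant would not give the $n^{-2}$ target. Beyond these, the argument is a direct application of the spectral gap. The $\sqrt n$ loss from converting $\ell_2$ to $\ell_1$ is harmless because the $\ell_2$ error decays polynomially with an arbitrarily large exponent.
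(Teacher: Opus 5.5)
Your proof is correct and follows essentially the paper's route. The paper simply cites the standard spectral mixing-time bound $\tau_\varepsilon(\mM)=O(\phi(G)^{-2}\log(n/\varepsilon))$ with $\varepsilon=0.01/n^2$, while you derive it yourself from Cheeger's inequality, the $\ell_2$ contraction of $\mM^t$ on $\vu_1^{\perp}$ (using laziness for nonnegative eigenvalues), and Cauchy--Schwarz. Your version has the added benefit of an explicit constant: $t=20\log n/\varphi^2$, the walk length used elsewhere in the paper, already suffices.
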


    With the above results, we would show the space--time trade-off of identifying Hard Instance II.

    \begin{proof}[Proof of~\Cref{thrm:hard-instance-2}]
Assume we have an algorithm $\mathcal{A}$ that solves the Hard Instance II. We would show how it can be used to solve Hard Instance I. At each time, the algorithm would choose to make a random walk query starting from vertex $i$. We would then set $W_t$ to the Hard Instance I and get the feedback sample $s_t$. We would feed $s_t$ to the algorithm $\mathcal{A}$ and then to the next round. Finally, after $T$ rounds, we would output the results of $\mathcal{A}$. 
    
    To prove the correctness, we need to show that the total variation distance is $O(1)$ between the history generated by Hard Instance I: $\left(s_1, m_1, \ldots, s_T, m_T\right)$ and the history generated by Hard Instance II: $\left(s_1^{\prime}, m_1^{\prime}, \ldots, s_T^{\prime}, m_T^{\prime}\right)$. We would prove by math induction.

    Now for $d_{\textup{TV}}((m_t,s_t),(m_t{'},s_t^{\prime}))$, we consider any fixed $x \in [2n], m \in [M]$ that 
    
    \begin{align*}
    &\left|p(m_t = m, s_t = x) - p(m_t^{\prime} = m, s_t^{\prime} = x)\right| \\&= \Big|\sum_{(\widetilde{m},\widetilde{x})} p(m_t = m, s_t = x | m_{t-1} = \widetilde{m}, s_{t-1} = \widetilde{x} ) \cdotp(m_{t-1} = \widetilde{m}, s_{t-1} = \widetilde{x} ) 
    \\&  - \sum_{(\widetilde{m},\widetilde{x})} p(m_t^{\prime} = m, s_t^{\prime} = x|m_{t-1}^{\prime} = \widetilde{m}, s_{t-1}^{\prime} = \widetilde{x})\cdot p(m_{t-1}^{\prime} = \widetilde{m}, s_{t-1}^{\prime} = \widetilde{x})\Big|
    \\& \leq \Big|\sum_{(\widetilde{m},\widetilde{x})} p(m_t = m, s_t = x|m_{t-1} = \widetilde{m}, s_{t-1} = \widetilde{x})  
     \cdot \left(p(m_{t-1} = \widetilde{m}, s_{t-1} = \widetilde{x}) -p(m_{t-1}^{\prime} = \widetilde{m}, s_{t-1}^{\prime} = \widetilde{x})\right)\Big|
    \\& +  \Big|\sum_{(\widetilde{m},\widetilde{x})} p(m_{t-1}^{\prime} = \widetilde{m}, s_{t-1}^{\prime} = \widetilde{x})\\
    &\cdot \left(p(m_t = m, s_t = x|m_{t-1} = \widetilde{m}, s_{t-1} = \widetilde{x}) - p(m_t^{\prime} = m, s_t^{\prime} = x|m_{t-1}^{\prime} = \widetilde{m}, s_{t-1}^{\prime} = \widetilde{x})\right)\Big|.
    \end{align*}
    
    Now for the first part, we have 
    \begin{align*}
    & \sum_{(m,x)}  \Big|\sum_{(\widetilde{m},\widetilde{x})} p(m_t = m, s_t = x|m_{t-1} = \widetilde{m}, s_{t-1} = \widetilde{x})\cdot \left(p(m_{t-1} = \widetilde{m}, s_{t-1} = \widetilde{x}) -p(m_{t-1}^{\prime} = \widetilde{m}, s_{t-1}^{\prime} = \widetilde{x})\right)\Big|
    \\& \leq  \sum_{(m,x)}\sum_{(\widetilde{m},\widetilde{x})} \Big(p(m_t = m, s_t = x|m_{t-1} = \widetilde{m}, s_{t-1} = \widetilde{x})  \cdot\left| p(m_{t-1} = \widetilde{m}, s_{t-1} = \widetilde{x}) -p(m_{t-1}^{\prime} = \widetilde{m}, s_{t-1}^{\prime} = \widetilde{x})\right|\Big)
    \\&  =  \sum_{(\widetilde{m},\widetilde{x})}  \Big(\left| p(m_{t-1} = \widetilde{m}, s_{t-1} = \widetilde{x}) -p(m_{t-1}^{\prime} = \widetilde{m}, s_{t-1}^{\prime} = \widetilde{x})\right|  \cdot \sum_{(m,x)} p(m_t = m, s_t = x|m_{t-1} = \widetilde{m}, s_{t-1} = \widetilde{x})\Big)
    \\& =   \sum_{(\widetilde{m},\widetilde{x})}  \left| p(m_{t-1} = \widetilde{m}, s_{t-1} = \widetilde{x}) -p(m_{t-1}^{\prime} = \widetilde{m}, s_{t-1}^{\prime} = \widetilde{x})\right|  
    \\&= 2d_{\textup{TV}}((m_{t-1},s_{t-1}),(m_{t-1}^{\prime},s_{t-1}^{\prime})).
    \end{align*}
    
    For the second part, we notice that 
    
    \begin{align*}
    & p(m_t = m, s_t = x|m_{t-1} = \widetilde{m}, s_{t-1} = \widetilde{x}) - p(m_t^{\prime} = m, s_t^{\prime} = x|m_{t-1}^{\prime} = \widetilde{m}, s_{t-1}^{\prime} = \widetilde{x}) 
    \\&= p(m_t = m|s_t = x, m_{t-1} = \widetilde{m}, s_{t-1} = \widetilde{x})\cdot p( s_t = x|m_{t-1} = \widetilde{m}, s_{t-1} = \widetilde{x}) 
    \\& - p(m_t^{\prime} = m|s_t^{\prime} = x, m_{t-1}^{\prime} = \widetilde{m}, s_{t-1}^{\prime} = \widetilde{x}) \cdot p(s_t^{\prime} = x|m_{t-1}^{\prime} = \widetilde{m}, s_{t-1}^{\prime} = \widetilde{x}).
    \end{align*}
    
    Note that since we are using the same algorithm, when fixing $m_{t-1}$ and $s_t$, the update of $m_t$ and $m_t^{\prime}$ is the same, and thus 
    
    \begin{align*}
    & p(m_t = m, s_t = x|m_{t-1} = \widetilde{m}, s_{t-1} = \widetilde{x}) - p(m_t^{\prime} = m, s_t^{\prime} = x|m_{t-1}^{\prime} = \widetilde{m}, s_{t-1}^{\prime} = \widetilde{x}) 
    \\&= p(m_t = m|s_t = x, m_{t-1} = \widetilde{m})\cdot \left(p( s_t = x|m_{t-1} = \widetilde{m})- p(s_t^{\prime} = x|m_{t-1}^{\prime} = \widetilde{m})\right).
    \end{align*}
    Moreover, by the property of lazy random walk (\Cref{lemma:mixing time}), we should have that for any $\widetilde{m}$, 
    $$\frac{1}{2}\sum_{x} \left|p( s_t = x|m_{t-1} = \widetilde{m})- p(s_t^{\prime} = x|m_{t-1}^{\prime} = \widetilde{m})\right| \leq \frac{0.01}{n^2}.$$

    Summing over all $(m,x)$, we have the second part is bounded by 
    
    \begin{align*}
    &\sum_{(m,x)}\Big|\sum_{(\widetilde{m},\widetilde{x})} p(m_{t-1}^{\prime} = \widetilde{m}, s_{t-1}^{\prime} = \widetilde{x})\cdot p(m_t = m|s_t = x, m_{t-1} = \widetilde{m})\cdot\left(p( s_t = x|m_{t-1} = \widetilde{m})- p(s_t^{\prime} = x|m_{t-1}^{\prime} = \widetilde{m})\right)\Big|
    \\& \leq\sum_{(m,x,\widetilde{m},\widetilde{x})}p(m_{t-1}^{\prime} = \widetilde{m}, s_{t-1}^{\prime} = \widetilde{x})\cdot p(m_t = m|s_t = x, m_{t-1} = \widetilde{m})\cdot \left|p( s_t = x|m_{t-1} = \widetilde{m})- p(s_t^{\prime} = x|m_{t-1}^{\prime} = \widetilde{m})\right|
    \\& = \sum_{(x,\widetilde{m},\widetilde{x})}p(m_{t-1}^{\prime} = \widetilde{m}, s_{t-1}^{\prime} = \widetilde{x})\left|p( s_t = x|m_{t-1} = \widetilde{m})- p(s_t^{\prime} = x|m_{t-1}^{\prime} = \widetilde{m})\right|\cdot \sum_{m}p(m_t = m|s_t = x, m_{t-1} = \widetilde{m})
    \\& = \sum_{(x,\widetilde{m},\widetilde{x})}p(m_{t-1}^{\prime} = \widetilde{m}, s_{t-1}^{\prime} = \widetilde{x})\left|p( s_t = x|m_{t-1} = \widetilde{m})- p(s_t^{\prime} = x|m_{t-1}^{\prime} = \widetilde{m})\right|
    \\& = \sum_{(\widetilde{m},\widetilde{x})}p(m_{t-1}^{\prime} = \widetilde{m}, s_{t-1}^{\prime} = \widetilde{x})\cdot \sum_{x}\left|p( s_t = x|m_{t-1} = \widetilde{m})- p(s_t^{\prime} = x|m_{t-1}^{\prime} = \widetilde{m})\right|
    \\& \leq 2\times\frac{0.01}{n^2} \sum_{(\widetilde{m},\widetilde{x})}p(m_{t-1}^{\prime} = \widetilde{m}, s_{t-1}^{\prime} = \widetilde{x}) \\
    &= 2\times\frac{0.01}{n^2}.
    \end{align*}
    
    Combining the results, we have 
    \begin{align*}
    d_{\textup{TV}}((m_t,s_t),(m_t{'},s_t^{\prime})) &= \frac{1}{2}\sum_{(m,x)} \left|p(m_t = m, s_t = x) - p(m_t^{\prime} = m, s_t^{\prime} = x)\right|\\& \leq  d_{\textup{TV}}((m_{t-1},s_{t-1}),(m_{t-1}^{\prime},s_{t-1}^{\prime})) + \frac{0.01}{n^2}.
    \end{align*}
    
    Moreover, for the initial points, we have that 
    $$d_{\textup{TV}}(s_1,s_1^{\prime}) \leq \frac{0.01}{n^2}.$$
    Since $m_1, m_1^{\prime}$ are merely a function of $s_1,s_1^{\prime}$, we have that  
    $$d_{\textup{TV}}(m_1,m_1^{\prime}) \leq \frac{0.01}{n^2}.$$
    Therefore 
    $$d_{\textup{TV}}((m_1,s_1),(m_1{'},s_1^{\prime})) \leq d_{\textup{TV}}(s_1,s_1^{\prime})+d_{\textup{TV}}(m_1,m_1^{\prime}) \leq \frac{0.02}{n^2},$$
    $$d_{\textup{TV}}((m_t,s_t),(m_t{'},s_t^{\prime})) \leq  \frac{0.01(1+t)}{n^2}.$$
    
    This means that 
    
    $$d_{\textup{TV}}(m_T,m_T^{\prime}) \leq d_{\textup{TV}}((m_T,s_T),(m_T{'},s_T^{\prime})) \leq \frac{0.01(1+T)}{n^2}  \leq 0.01,$$
    where we use the fact that $T \leq O(n^2)$ since otherwise we can get the output using constant space. 
    
    Now note that the output result is only the function of $m_T$. Since the total variation distance of $m_T$ is bounded, the correctness can still be guaranteed using the uniform distribution rather than the random walk distribution. 
    \end{proof}

\section{Experiments}

    To evaluate the space--time trade-off of our sublinear spectral clustering oracles, we conducted experiments in Python on graphs generated from the stochastic block model (SBM) with parameters $n$ (num of vertices), $k$ (num of clusters), and edge probabilities $p$ (within-cluster) and $q$ (between-cluster). Experiments were run on a server with an Intel(R) Xeon(R) Platinum 8562Y processor (2.80 GHz) and 768 GB RAM. Each reported data is the average over five independent runs.

    We implemented two variants of the $\poly(k)$-conductance-gap clustering oracle\footnote{We did not experiment with the $\log(k)$-conductance-gap oracle due to its impractical runtime of $2^{\poly(k)}\cdot n^{1+O(\varepsilon)}\cdot \frac{1}{M}$ for constructing $\mathcal{D}$.}: the original oracle from \citet{shen2024sublinear}, and our memory-efficient variant that operates within a smaller space. For each, we recorded the number of words stored in each component of the data structure $\mathcal{D}$ as a proxy for space $S$, evaluated accuracy (the fraction of vertices correctly classified), the success rate (i.e., the fraction of successful runs among $5$ runs\footnote{If the available space is too limited, the construction of the similarity graph $H$ may yield either too many or too few connected components, in which case the construction of $\mathcal{D}$ fails.}). Both variants used the same number of sampled vertices, random walk length, and median-trick repetitions; differences arose only in space--time-related parameters. We instantiated this setup on an SBM graph with $n=3000$, $k=3$, $p=0.07$, and $q=0.002$, yielding clusters of $1000$ vertices each. Additional implementation details are provided in~\Cref{appendix:section-experiment}.

    \paragraph{Space efficiency} Prior sublinear spectral clustering oracles require at least $\Omega(\sqrt{n})$ space to construct data structure $\mathcal{D}$. In contrast, our clustering oracle allows constructing $\mathcal{D}$ using substantially less space, well below $\sqrt{n}$. In this section, we provide experimental evidence to validate this improvement. 
\vspace{-1em}
\begin{table}[h]
    \centering
    \caption{Comparison of space usage for clustering oracles, with 10400 words used as the baseline.}
    \label{table:space}
    \begin{tabular}{c|ccc|cccc} %
        \toprule
        \multicolumn{1}{c|}{clustering oracle} & 
        \multicolumn{3}{c|}{ours} & 
        \multicolumn{4}{c}{previous} \\
        \midrule
        
        \textcolor{gray}{space (\# of words)}  & \textcolor{gray}{$9900$} & \textcolor{gray}{$10100$} & \textcolor{gray}{$\mathbf{10400}$} %
        & \textcolor{gray}{$34840$} & \textcolor{gray}{$43888$}  & \textcolor{gray}{$44383$}  & \textcolor{gray}{$61223$}\\

        space ($\times$ baseline)  & $0.95\times$ & $0.97\times$ & \textcolor{ForestGreen}{$\mathbf{1\times}$} %
        & \textcolor{red}{$\mathbf{3.35\times}$} & $4.22\times$ & \textcolor{ForestGreen}{$\mathbf{4.27\times}$} & $5.89\times$\\

        success rate for constructing $\mathcal{D}$  & $1$ & $1$ & \textcolor{ForestGreen}{$\mathbf{1}$} %
        & \textcolor{red}{$\mathbf{0}$} & $0.6$ & \textcolor{ForestGreen}{$\mathbf{1}$}  & $1$\\
        accuracy  & $0.9833$ & $0.9900$ & \textcolor{ForestGreen}{$\mathbf{0.9907}$} %
        & \textcolor{red}{$\mathbf{0}$} & $0.9860$ & \textcolor{ForestGreen}{$\mathbf{0.9997}$} & $1.0000$\\
        \bottomrule
    \end{tabular}
\end{table}
    \Cref{table:space} demonstrate that our clustering oracle achieves high accuracy using substantially less space (10400 words as $1\times$). In contrast, the previous clustering oracle requires $4.27$ times of the baseline space to achieve comparable accuracy, and even when given $3.35$ times the baseline space, it fails to construct $\mathcal{D}$ successfully (i.e., success rate is $0$). These results confirm that our approach significantly improves space efficiency without compromising accuracy.

    \paragraph{space--time trade-off} As established in~\Cref{thrm:main}, there is a trade-off between the space $S$ required to construct $\mathcal{D}$ and the query time $T$, satisfying $S \cdot T \approx \widetilde{O}(n^{1+O(\varepsilon)})$, where $\varepsilon$ is the small constant corresponding to the outer conductance.

    To validate this experimentally, we also measured $S$ as the total number of words stored to construct $\mathcal{D}$. %
    We use the total number of random walks per \textsc{WhichCluster} query as a proxy for time $T$, since this dominates the query cost. Across all tested parameter settings, the oracle maintains high accuracy ($0.9833\sim1$), confirming the practical validity of the configurations used.

    \begin{figure}[h]
        \centering
        \includegraphics[width=0.5\linewidth]{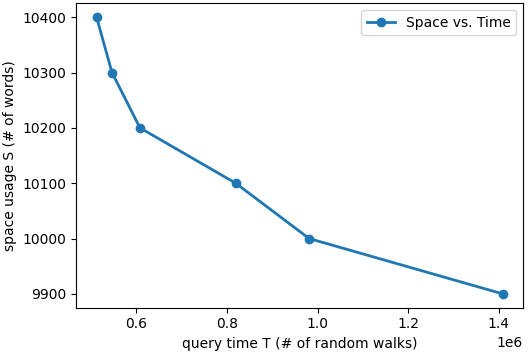}
        \caption{space--time trade-off of the sublinear spectral clustering oracle, showing $S,T$ are inversely proportional.}
        \label{fig:trade-off}
    \end{figure}

    \Cref{fig:trade-off} plots $S$ (y-axis) versus $T$ (x-axis), illustrating the space--time trade-off: memory usage decreases as query time increases, and vice versa, consistent with the theoretical bound.

\clearpage

\newpage

\section*{Acknowledgments}
Ranran Shen and Pan Peng are supported in part by NSFC Grant 62272431 and Quantum Science and Technology - National Science and Technology Major Project (Grant No. 2021ZD0302901). Xiaoyi Zhu and Zengfeng Huang are supported in part by National Natural Science Foundation of China No. 62276066.

\bibliography{iclr2026_conference}
\bibliographystyle{iclr2026_conference}

\newpage
\appendix
\begin{center}\huge\bf Appendix \end{center}

The appendix is organized as follows. 
\begin{itemize}

    \item \Cref{appendix:section-thrm-dot-product} presents the deferred proofs about the dot product oracle that operates under limited memory.

    \item \Cref{appendix:section-log(k)-oracle} provides the proof of~\Cref{itm:main-case1} of our main result (~\Cref{thrm:main}).

    \item \Cref{appendix:section-poly(k)-oracle} describes the sublinear spectral clustering oracle related to~\Cref{itm:main-case2} of our main result (~\Cref{thrm:main}).

    \item \Cref{appendix:sec-distinguish} presents the deferred proofs about the upper bound for distinguishing $1$-cluster vs. $2$-cluster problem (\Cref{thrm:distinguish}).

    \item \Cref{appendix:section-lower-bound} presents the deferred proofs about the lower bound for distinguishing $1$-cluster vs. $2$-cluster problem (\Cref{thrm:distinguish_lower}).

    \item \Cref{appendix:section-experiment} provides details on the experimental setup and parameter choices.
\end{itemize}

\section{Deferred proofs about the dot product oracle with little memory}
    \label{appendix:section-thrm-dot-product}

    Recall that we use $(\mM^t\mathds{1}_x)^T(\mM^t\mS)(\frac{n}{s}\cdot \widetilde{W}_{[k]}\widetilde{\Sigma}^{-4}_{[k]}\widetilde{W}^T_{[k]}) (\mM^t\mS)^T(\mM^t\mathds{1}_y)$ to estimate $\langle \vf_x,\vf_y\rangle$.~\Cref{lemma:bounded-space-W} states that under appropriate parameters, Alg. \ref{algo:bounded-space-initialize-oracle} outputs a matrix $\Psi = \frac{n}{s}\cdot \widehat{W}_{[k]}\widehat{\Sigma}_{[k]}^{-2}\widehat{W}_{[k]}^T$ which, with high probability, is spectrally close to $\frac{n}{s}\cdot \widetilde{W}_{[k]}\widetilde{\Sigma}_{[k]}^{-4}\widetilde{W}_{[k]}^T$. %
    The proof of~\Cref{lemma:bounded-space-W} is analogous to that of Lemma 24 in \citet{gluch2021spectral}. Nevertheless, for completeness, we provide a concise proof here.

    \begin{lemma}
        \label{lemma:bounded-space-W}
        Let $k\ge 2$ be an integer and $\varphi,\varepsilon\in(0,1)$. Let $G=(V,E)$ be a $d$-regular and $(k,\varphi,\varepsilon)$-clusterable graph. Let $\mM$ be the random walk transition matrix of $G$. Let $I_S=\{s_1,\dots,s_s\}$ be a multiset of $s$ indices chosen independently and uniformly at random form $V=\{1,\dots,n\}$. Let $\mS\in\mathbb{R}^{n\times s}$ be the matrix whose $i$-th column equals $\mathds{1}_{s_i}$. Let $\mathcal{G}\in \mathbb{R}^{s\times s}$ be the output of \textsc{EstColliProb} $(G,R,t,M,I_S)$ (Alg. \ref{algo:bounded-space-esti-colli-prob}). Let $\sqrt{\frac{n}{s}}\cdot \mM^t\mS=\widetilde{U}\widetilde{\Sigma}\widetilde{W}^T$ be an SVD of $\sqrt{\frac{n}{s}}\cdot \mM^t\mS$ where $\widetilde{U}\in\mathbb{R}^{n\times n},\widetilde{\Sigma}\in\mathbb{R}^{n\times n},\widetilde{W}\in\mathbb{R}^{s\times n}$. Let $\frac{n}{s}\cdot\mathcal{G}=\widehat{W}\widehat{\Sigma}\widehat{W}^T$ be an eigendecomposition of $\frac{n}{s}\cdot \mathcal{G}$. Let $\frac{1}{n^8}< \xi<1$. Let $c_1>1$ and $c_2>1$ be two large enough constants. For any $t\ge \frac{20\log n}{\varphi^2}$, if $\frac{\varepsilon}{\varphi^2}\le \frac{1}{10^5}$, $s\ge c_1\cdot n^{240\varepsilon/\varphi^2}\cdot \log n\cdot k^4$, $R\ge \frac{c_2\cdot k^6\cdot n^{1+760\varepsilon/\varphi^2}}{M\cdot \xi^2}$ and $1\le M\le O\left(\frac{n^{1/2-20\varepsilon/\varphi^2}}{k}\right)$, then with probability at least $1-2\cdot n^{-100}$, matrices $\widehat{\Sigma}_{[k]}^{-2}$ and $\widetilde{\Sigma}_{[k]}^{-4}$ exist and we have
        \[
            \lVert \widetilde{W}_{[k]}\widetilde{\Sigma}_{[k]}^{-4}\widetilde{W}_{[k]}^T-\widehat{W}_{[k]}\widehat{\Sigma}_{[k]}^{-2}\widehat{W}_{[k]}^T\rVert_2< \xi.
        \]
    \end{lemma}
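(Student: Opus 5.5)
We follow the structure of Lemma 24 in \citet{gluch2021spectral}, substituting Lemma~\ref{lemma:bounded-space-G} for their collision-probability estimator. Write $\mathcal{K}=\frac{n}{s}(\mM^t\mS)^T(\mM^t\mS)$. From the SVD $\sqrt{\frac{n}{s}}\mM^t\mS=\widetilde{U}\widetilde{\Sigma}\widetilde{W}^T$ we get $\mathcal{K}=\widetilde{W}\widetilde{\Sigma}^2\widetilde{W}^T$. Hence the top-$k$ eigenvalues of $\mathcal{K}$ are $\widetilde{\Sigma}_{[k]}^2$, and
\[
\widetilde{W}_{[k]}\widetilde{\Sigma}_{[k]}^{-4}\widetilde{W}_{[k]}^T=\big(\mathcal{K}_{[k]}\big)^{-2}
\]
in the pseudo-inverse sense on the top-$k$ eigenspace. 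Likewise $\widehat{W}_{[k]}\widehat{\Sigma}_{[k]}^{-2}\widehat{W}_{[k]}^T$ is the corresponding object for $\frac{n}{s}\mathcal{G}$. The lemma therefore reduces to a perturbation statement: if $\frac{n}{s}\mathcal{G}$ is spectrally close to $\mathcal{K}$, and $\mathcal{K}$ has a large gap between its $k$-th and $(k+1)$-th eigenvalues, then the matrix functions ``inverse square restricted to the top-$k$ eigenspace'' are close.

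\textbf{Step 1: spectral facts for $\mathcal{K}$.} This is taken from \citet{gluch2021spectral} (their Lemma 23 and its consequences). With probability $\ge 1-n^{-100}$ over the sample $I_S$, when $s\ge c_1 n^{240\varepsilon/\varphi^2}\log n\, k^4$ and $\varepsilon/\varphi^2\le 10^{-5}$:
\begin{itemize}
\item the top $k$ eigenvalues of $\mathcal{K}$ are at least $\Omega(n^{-O(\varepsilon/\varphi^2)})$, roughly $(1-\lambda_k/2)^{2t}\cdot$ const;
\item the $(k+1)$-th eigenvalue is at most $n^{-\Omega(1)}$, roughly $(1-\varphi^2/4)^{2t}\le n^{-10}$.
\end{itemize}
These bounds give existence of $\widetilde{\Sigma}_{[k]}^{-4}$ and a large spectral gap $\Delta=\Omega(n^{-O(\varepsilon/\varphi^2)})$.

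\textbf{Step 2: estimation error.} Apply Lemma~\ref{lemma:bounded-space-G} with $\sigma_{\rm err}=\frac{\xi}{n}\cdot n^{-c\varepsilon/\varphi^2}/k^{2}$, for a suitable constant $c$. The hypothesis on $R$ is exactly what makes $R\ge c\,k^2n^{-1+40\varepsilon/\varphi^2}/(\sigma_{\rm err}^2M)$ hold. With probability $1-n^{-100}$ this yields
\[
\lVert\tfrac{n}{s}\mathcal{G}-\mathcal{K}\rVert_2\le n\sigma_{\rm err}=\xi\, n^{-c\varepsilon/\varphi^2}/k^2=:\eta .
\]
Weyl's inequality then shows that the top $k$ eigenvalues of $\frac{n}{s}\mathcal{G}$ are within $\eta\ll\Delta$ of those of $\mathcal{K}$, so $\widehat{\Sigma}_{[k]}$ is invertible. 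A union bound over Steps 1 and 2 gives the probability $1-2n^{-100}$.

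\textbf{Step 3: perturbation of $f(A)=P_kA^{-2}P_k$, the main obstacle.} We need $\lVert f(\mathcal{K})-f(\tfrac{n}{s}\mathcal{G})\rVert_2<\xi$. Our first attempt is the Cauchy integral representation
\[
f(A)=\frac{1}{2\pi i}\oint_\Gamma z^{-2}(zI-A)^{-1}\,dz ,
\]
with $\Gamma$ a contour enclosing the top-$k$ eigenvalues of both matrices, which lie in $[\Delta/2,\,O(1)]$, and excluding the remaining eigenvalues, which lie in $[0,\,n^{-\Omega(1)}+\eta]$. Using the resolvent identity
\[
(zI-A)^{-1}-(zI-B)^{-1}=(zI-A)^{-1}(A-B)(zI-B)^{-1},
\]
the difference is bounded by $\frac{|\Gamma|}{2\pi}\cdot\max_\Gamma|z|^{-2}\cdot\mathrm{dist}(\Gamma,\mathrm{spec})^{-2}\cdot\eta$. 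Each factor is at most $\mathrm{poly}(n^{O(\varepsilon/\varphi^2)})$. This is why $R$ carries the large exponent $760\varepsilon/\varphi^2$ and $k^6$: they absorb these polynomial losses, and $\eta$ is tuned accordingly.

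If the contour bookkeeping becomes messy, the fallback is a two-term split. Write
\[
f(A)-f(B)=(P_A-P_B)A^{-2}P_A+P_B\big(A^{-2}P_A-B^{-2}P_B\big),
\]
then bound $\lVert P_A-P_B\rVert$ by Davis--Kahan as $\eta/\Delta$, and control the second term using eigenvalue closeness together with $|a^{-2}-b^{-2}|\le|a-b|\,(a+b)/(a^2b^2)$. The key check is that the final bound becomes $\xi$ once the constants $c,c_2$ are chosen large.
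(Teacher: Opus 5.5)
Your proposal is correct, but it reaches the final estimate by a different route than the paper. The paper never perturbs $A\mapsto P_kA^{-2}P_k$ directly. It sets $\widetilde A=\frac{n}{s}(\mM^t\mS)^T(\mM^t\mS)$ and $\widehat A=\frac{n}{s}\mathcal{G}$, and notes that the two matrices in the statement are the top-$k$ restricted inverses of $\widetilde A^2=\widetilde W\widetilde\Sigma^4\widetilde W^T$ and $\widehat A^2=\widehat W\widehat\Sigma^2\widehat W^T$. It then bounds $\lVert\widetilde A^2-\widehat A^2\rVert_2\le n^2\sigma_{\rm err}^2+2n\sigma_{\rm err}\lVert\widetilde A\rVert_2$. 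Finally it plugs this into the ready-made restricted-inverse perturbation bound \Cref{lemma:gluch_1}, using $v_k(\widetilde A^2)\ge n^{-160\varepsilon/\varphi^2}/4$ and $v_{k+1}(\widetilde A^2)\le n^{-18}$ from \Cref{lemma:gluch_2}.

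Your resolvent-integral estimate replaces that black box with a self-contained functional-calculus argument. It also sidesteps a point the paper leaves implicit: that the top-$k$ eigenvectors of $\widehat A^2$ are those of $\widehat A$, i.e.\ that $\widehat A$ has no negative eigenvalue of large magnitude. The paper's route buys brevity. Both arguments lose the same factor, of order $\Lambda/\Delta^4$, where $\Lambda$ is the top eigenvalue and $\Delta=v_k(\widetilde A)$.

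Two caveats on your bookkeeping. First, the exponent $c$ in your $\sigma_{\rm err}$ is not free. The hypothesis on $R$ forces $40+2c\le 760$, i.e.\ $c\le 360$, so only a multiplicative constant in $\sigma_{\rm err}$ (absorbed into $c_2$) can be taken large. You must therefore actually check that the contour loss $\frac{|\Gamma|}{2\pi}\max_\Gamma|z|^{-2}\,\mathrm{dist}(\Gamma,\mathrm{spec})^{-2}$ is $O(k^2n^{360\varepsilon/\varphi^2})$. It is, but only because of two explicit bounds:
\begin{itemize}
\item the gap $\Delta\ge n^{-80\varepsilon/\varphi^2}/2$ of \Cref{lemma:gluch_2}: letting $\Gamma$ cross the real axis near $\Delta/2$ gives a middle factor $O(\Delta^{-4})=O(n^{320\varepsilon/\varphi^2})$;
\item the length $|\Gamma|=O(\Lambda)$, where $\Lambda\le\lVert\widetilde A\rVert_F+\eta=O(k^2n^{40\varepsilon/\varphi^2})$ by \Cref{lemma:gluch_0}.
\end{itemize}
The budget closes exactly. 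The merely qualitative gap $\Omega(n^{-O(\varepsilon/\varphi^2)})$ stated in your Step 1 would not suffice.

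Second, your fallback is incomplete as stated. The term $P_B(A^{-2}P_A-B^{-2}P_B)$ is not controlled by eigenvalue closeness alone, because the eigenvectors inside the possibly clustered top-$k$ block also move. Bounding it requires a restricted-inverse perturbation estimate, which is exactly what \Cref{lemma:gluch_1} supplies.
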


    Equipped with~\Cref{lemma:bounded-space-G}, 
    to prove~\Cref{lemma:bounded-space-W}, we also need the following lemmas.

    \begin{lemma}[Lemma 18 in \citet{gluch2021spectral}]
        \label{lemma:gluch_1}
        Let $\widetilde{A},\widehat{A}\in \mathbb{R}^{n\times n}$ be symmetric matrices with eigendecomposition $\widetilde{A}=\widetilde{Y}\widetilde{\Gamma}\widetilde{Y}^T$ and $\widehat{A}=\widehat{Y}\widehat{\Gamma}\widehat{Y}^T$. Let the eigenvalues of $\widetilde{A}$ be $1\ge \gamma_1\ge \dots\ge \gamma_n\ge 0$. Suppose that $\lVert \widetilde{A}-\widehat{A}\rVert_2\le \frac{\gamma_k}{100}$ and $\gamma_{k+1}<\frac{\gamma_k}{4}$. Then we have
        \[
            \lVert \widetilde{Y}_{[k]}\widetilde{\Gamma}_{[k]}^{-1}\widetilde{Y}^T_{[k]}-\widehat{Y}_{[k]}\widehat{\Gamma}_{[k]}^{-1}\widehat{Y}^T_{[k]}\rVert_2\le \frac{16\lVert \widetilde{A}-\widehat{A}\rVert_2+4\gamma_{k+1}}{\gamma_k^2}.
        \]
    \end{lemma}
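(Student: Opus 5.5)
We will prove the bound by an algebraic identity for the difference of the two truncated inverses, rather than a contour-integral argument. The identity reduces everything to three quantities: the norms of the inverses, the difference of the rank-$k$ truncations, and the angle between the two top-$k$ eigenspaces.

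\textbf{Setup.} Write $\eta=\lVert\widetilde{A}-\widehat{A}\rVert_2\le\gamma_k/100$, and order the eigenvalues $\widehat\gamma_1\ge\dots\ge\widehat\gamma_n$ of $\widehat{A}$ by value. Set $\widetilde P=\widetilde Y_{[k]}\widetilde Y_{[k]}^T$, $\widehat P=\widehat Y_{[k]}\widehat Y_{[k]}^T$, $\widetilde A_k=\widetilde P\widetilde A$ and $\widehat A_k=\widehat P\widehat A$. Also set $X=\widetilde Y_{[k]}\widetilde\Gamma_{[k]}^{-1}\widetilde Y_{[k]}^T$ and $Y=\widehat Y_{[k]}\widehat\Gamma_{[k]}^{-1}\widehat Y_{[k]}^T$.

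By Weyl's inequality:
\begin{itemize}
\item $\widehat\gamma_k\ge\gamma_k-\eta\ge 0.99\gamma_k>0$, so $\widehat\Gamma_{[k]}^{-1}$ exists and $\lVert Y\rVert_2\le 1/(0.99\gamma_k)$;
\item every $\widehat\gamma_i$ with $i>k$ lies in $[-\eta,\gamma_{k+1}+\eta]$, so $\lVert\widehat A-\widehat A_k\rVert_2\le\gamma_{k+1}+\eta$.
\end{itemize}
Trivially $\lVert X\rVert_2\le 1/\gamma_k$ and $\lVert\widetilde A-\widetilde A_k\rVert_2=\gamma_{k+1}$.

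\textbf{Key identity.} Since $Y\widehat A_k=\widehat P$ and $\widetilde A_kX=\widetilde P$, we have
\[
Y(\widetilde A_k-\widehat A_k)X=Y\widetilde P-\widehat PX .
\]
Using $Y=Y\widehat P$ and $X=\widetilde PX$, this rearranges to
\[
Y-X = Y(\widetilde A_k-\widehat A_k)X + Y\widehat P(I-\widetilde P) - (I-\widehat P)\widetilde P X .
\]

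\textbf{Bounding the three terms.}
\begin{itemize}
\item For the first term, the triangle inequality gives $\lVert\widetilde A_k-\widehat A_k\rVert_2\le\lVert\widetilde A-\widetilde A_k\rVert_2+\eta+\lVert\widehat A-\widehat A_k\rVert_2\le 2\eta+2\gamma_{k+1}$. Hence the first term is at most $\frac{2\eta+2\gamma_{k+1}}{0.99\gamma_k^2}$.
\item For the other two terms we need $\lVert\widehat P(I-\widetilde P)\rVert_2$ and $\lVert(I-\widehat P)\widetilde P\rVert_2$, i.e.\ $\sin\Theta$ between the two top-$k$ eigenspaces. We apply the Davis--Kahan $\sin\Theta$ theorem in its one-sided form.
  \begin{itemize}
  \item The eigenvalues of $\widetilde A$ on $\mathrm{range}(\widetilde P)$ are $\ge\gamma_k$.
  \item The eigenvalues of $\widehat A$ on $\mathrm{range}(I-\widehat P)$ are $\le\gamma_{k+1}+\eta$.
  \item The separation is therefore $\delta\ge\gamma_k-\gamma_{k+1}-\eta\ge(1-\tfrac14-\tfrac1{100})\gamma_k\ge 0.74\gamma_k$, using $\gamma_{k+1}<\gamma_k/4$.
  \end{itemize}
  Thus $\lVert(I-\widehat P)\widetilde P\rVert_2\le\eta/\delta\le 1.36\,\eta/\gamma_k$. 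The symmetric separation, between $\widehat A$'s top eigenvalues $\ge0.99\gamma_k$ and $\widetilde A$'s tail $\le\gamma_{k+1}$, gives the same type of bound for $\lVert\widehat P(I-\widetilde P)\rVert_2$. So these two terms together contribute at most about $2\cdot1.4\,\eta/\gamma_k^2$.
\end{itemize}

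\textbf{Conclusion.} Summing, $\lVert Y-X\rVert_2\le(2.03\eta+2.03\gamma_{k+1}+2.8\eta)/\gamma_k^2$, which is comfortably below $(16\eta+4\gamma_{k+1})/\gamma_k^2$.

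\textbf{Main obstacle.} The delicate step is applying Davis--Kahan correctly when $\widehat A$ is not assumed PSD, so its tail eigenvalues may be slightly negative. This is harmless: Davis--Kahan only needs the spectra to be separated by an interval, and $[-\eta,\gamma_{k+1}+\eta]$ lies below $0.99\gamma_k$. One must also state carefully that the hypothesis $\gamma_{k+1}<\gamma_k/4$ is what keeps the separation $\delta=\Theta(\gamma_k)$. If the sharp one-sided form is unavailable, a cruder version with gap $\gamma_k/2$ still fits within the constant $16$.
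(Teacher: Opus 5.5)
The paper does not prove this statement. It is quoted from \citet{gluch2021spectral} and used only as a black box in the proof of Lemma~\ref{lemma:bounded-space-W}, where it is applied to $\widetilde A^2,\widehat A^2$. So there is no in-paper argument to compare with. Judged on its own, your proof is correct and complete:
\begin{itemize}
\item The identity $Y-X=Y(\widetilde A_k-\widehat A_k)X+Y\widehat P(I-\widetilde P)-(I-\widehat P)\widetilde PX$ checks out on expansion, using $\widetilde A_kX=\widetilde P$ and $Y\widehat A_k=\widehat P$.
\item The Weyl bounds are right.
\item The final bookkeeping gives about $(4.8\,\eta+2.1\,\gamma_{k+1})/\gamma_k^2$, which is sharper than the stated $16$ and $4$.
\item You never use the hypothesis $\gamma_1\le 1$.
\end{itemize}

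Two things are worth making explicit when you write it up.

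(i) $\gamma_k>0$ follows from $0\le\gamma_{k+1}<\gamma_k/4$, so $X$ is defined. Moreover, $\gamma_k>\gamma_{k+1}$ and $\widehat\gamma_k\ge 0.99\gamma_k>\gamma_{k+1}+\eta\ge\widehat\gamma_{k+1}$. Hence both top-$k$ spectral projectors are uniquely determined, and $X,Y$ do not depend on the choice of eigenvectors. They also do not depend on whether the eigenvalues of $\widehat A$ are ordered by value or by modulus, since $|\widehat\gamma_i|\le\gamma_{k+1}+\eta$ for $i>k$.

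(ii) Your ``main obstacle'' disappears if you prove the one-sided $\sin\Theta$ bound directly instead of citing it. Set $Z=(I-\widehat P)\widetilde P$; then $\widehat AZ-Z\widetilde A=(I-\widehat P)(\widehat A-\widetilde A)\widetilde P$. Take a top singular pair $(u,v)$ of $Z$ with $\sigma>0$, so that $v\in\mathrm{range}(\widetilde P)$, $u\in\mathrm{range}(I-\widehat P)$, $Zv=\sigma u$ and $u^TZ=\sigma v^T$. Testing the equation against $u$ and $v$ gives $\sigma\,(v^T\widetilde Av-u^T\widehat Au)\le\eta$. Since $v^T\widetilde Av\ge\gamma_k$ and $u^T\widehat Au\le\gamma_{k+1}+\eta$, this yields $\sigma\le\eta/(\gamma_k-\gamma_{k+1}-\eta)$. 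Swapping roles, with $\widehat P(I-\widetilde P)$, $\widehat\gamma_k\ge\gamma_k-\eta$ and the tail of $\widetilde A$ at most $\gamma_{k+1}$, gives the other term.

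This argument uses only that the two relevant spectra lie on opposite sides of a gap, which is exactly what yields constant $1$. For spectra that merely have distance $\delta$ but interlace, the operator-norm constant would be $\pi/2$. The possibly negative tail of $\widehat A$ plays no role.
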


    \begin{lemma}[Lemma 28 in \citet{gluch2021spectral}]
        \label{lemma:gluch_2}
        Let $k\ge 2$ be an integer and $\varphi,\varepsilon\in(0,1)$. Let $G=(V,E)$ be a $d$-regular and $(k,\varphi,\varepsilon)$-clusterable graph. Let $\mM$ be the random walk transition matrix of $G$. Let $I_S=\{s_1,\dots,s_s\}$ be a multiset of $s$ indices chosen independently and uniformly at random form $V=\{1,\dots,n\}$. Let $\mS\in\mathbb{R}^{n\times s}$ be the matrix whose $i$-th column equals $\mathds{1}_{s_i}$. Let $c>1$ be a large enough constant. For any $t\ge \frac{20\log n}{\varphi^2}$, if $\frac{\varepsilon}{\varphi^2}\le \frac{1}{10^5}$ and $s\ge c\cdot n^{240\varepsilon/\varphi^2}\cdot \log n\cdot k^4$, then with probability at least $1-n^{-100}$, we have

        \begin{itemize}
            \item $v_k\left(\frac{n}{s}\cdot (\mM^t\mS)(\mM^t\mS)^T\right)=v_k\left(\frac{n}{s}\cdot (\mM^t\mS)^T(\mM^t\mS)\right)\ge \frac{n^{-80\varepsilon/\varphi^2}}{2}$,

            \item $v_{k+1}\left(\frac{n}{s}\cdot (\mM^t\mS)(\mM^t\mS)^T\right)\le n^{-9}$.
        \end{itemize}
    \end{lemma}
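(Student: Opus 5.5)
The natural route is to view $\frac{n}{s}(\mM^t\mS)(\mM^t\mS)^T$ as an empirical average whose expectation is $\mM^{2t}$. The lower bound on $v_k$ will then come from a matrix Chernoff concentration argument restricted to the top-$k$ eigenspace. The upper bound on $v_{k+1}$ will come from a deterministic argument on the orthogonal complement.

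The equality $v_k(\frac{n}{s}(\mM^t\mS)(\mM^t\mS)^T)=v_k(\frac{n}{s}(\mM^t\mS)^T(\mM^t\mS))$ is immediate, since $AA^T$ and $A^TA$ have the same nonzero spectrum. So I work with $A:=\frac{n}{s}\mM^t\mS\mS^T\mM^t=\sum_{i=1}^s X_i$, where $X_i=\frac{n}{s}\mM^t\mathds{1}_{s_i}\mathds{1}_{s_i}^T\mM^t$. The summands are i.i.d. and PSD, and since $\E[\mathds{1}_{s_i}\mathds{1}_{s_i}^T]=\frac1n\mI$ we get $\E[A]=\mM^{2t}$.

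\textbf{Spectral facts about $\mM^{2t}$.} Since $\mM=\mI-\mL/2$, its eigenvectors are $\vu_j$ with eigenvalues $1-\lambda_j/2$. For a $(k,\varphi,\varepsilon)$-clusterable graph I will use the standard bounds $\lambda_k\le 2\varepsilon$ and $\lambda_{k+1}\ge \varphi^2/2$.
\begin{itemize}
\item The bound $\lambda_k\le 2\varepsilon$ follows by plugging the $k$ normalized cluster indicators into Courant--Fischer.
\item The bound $\lambda_{k+1}\ge\varphi^2/2$ is the higher-order Cheeger-type bound used in \citet{gluch2021spectral} and earlier work.
\end{itemize}
Hence the top $k$ eigenvalues of $\mM^{2t}$ are at least $(1-\varepsilon)^{2t}$. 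With $t=\Theta(\log n/\varphi^2)$ and $\varepsilon$ small, this is at least $n^{-c\varepsilon/\varphi^2}$ with $c$ around $40$ to $60$. Meanwhile $\lVert \mM^t P_\perp\rVert_2\le(1-\varphi^2/4)^t\le n^{-5}$, where $P_\perp=\mI-\mU_{[k]}\mU_{[k]}^T$.

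\textbf{Upper bound on $v_{k+1}$.} This part is deterministic. By Courant--Fischer,
\[
v_{k+1}(A)\le \max_{x\perp \mathrm{span}(\vu_1,\dots,\vu_k),\ \lVert x\rVert=1} x^TAx=\frac{n}{s}\lVert \mS^T\mM^tP_\perp\rVert_2^2\le \frac{n}{s}\lVert\mS\rVert_2^2\, n^{-10}.
\]
Here $\mM^t$ commutes with $P_\perp$, and $\lVert \mS\rVert_2^2$ equals the maximum multiplicity in $I_S$, which is at most $s$. This gives $v_{k+1}(A)\le n^{-9}$ with no probabilistic loss.

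\textbf{Lower bound on $v_k$.} This is the main obstacle. By Courant--Fischer, $v_k(A)\ge\lambda_{\min}(\mU_{[k]}^TA\mU_{[k]})$. The matrix $\mU_{[k]}^TA\mU_{[k]}$ is a sum of $s$ i.i.d. $k\times k$ PSD matrices $Y_i=\mU_{[k]}^TX_i\mU_{[k]}$, and its expectation has minimum eigenvalue $\mu_{\min}\ge n^{-c\varepsilon/\varphi^2}$. Each summand satisfies
\[
\lVert Y_i\rVert\le \frac{n}{s}\lVert\mM^t\mathds{1}_{s_i}\rVert_2^2\le \frac{n}{s}\cdot O\big(k^2 n^{-1+40\varepsilon/\varphi^2}\big)=:L
\]
by Lemma~\ref{lemma:gluch_0}. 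The matrix Chernoff lower tail (Tropp) then gives
\[
\Pr\big[\lambda_{\min}\le \mu_{\min}/2\big]\le k\,e^{-\mu_{\min}/(8L)}.
\]
This is at most $n^{-100}$ once $s\gtrsim k^2 n^{(40+c)\varepsilon/\varphi^2}\log n$, which the hypothesis $s\ge c'\,n^{240\varepsilon/\varphi^2}\log n\cdot k^4$ amply covers. The resulting bound is $v_k(A)\ge \mu_{\min}/2\ge n^{-80\varepsilon/\varphi^2}/2$. The delicate points are tracking the exact exponent in $(1-\varepsilon)^{2t}$ against the claimed $n^{-80\varepsilon/\varphi^2}$, which requires choosing $t$ at the threshold and using $\log(1-\varepsilon)\ge-2\varepsilon$, and confirming the cited eigenvalue bounds $\lambda_k\le2\varepsilon$, $\lambda_{k+1}\ge\varphi^2/2$ under $\varepsilon/\varphi^2\le10^{-5}$.
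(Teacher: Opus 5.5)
Your proof is correct. The $v_{k+1}$ part is the same deterministic Courant--Fischer argument behind the cited lemma: restrict to $\mathrm{span}(\vu_1,\dots,\vu_k)^\perp$ and use $\mS\mS^T\preceq s\mI$. That is why the paper can reuse it for every $s\ge 1$ in the expander case of \Cref{appendix:sec-distinguish}.

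For the $v_k$ part you take a genuinely different route. The cited argument, which the paper mirrors in its own expander analogue, works as follows:
\begin{itemize}
\item it applies the additive full-matrix bound \Cref{lemma:gluch-21} to $\mM^t$, with $b=O(k^2n^{-1+40\varepsilon/\varphi^2})$ from \Cref{lemma:gluch_0} and $\xi=\frac12 n^{-80\varepsilon/\varphi^2}$;
\item it then uses Weyl's inequality (\Cref{lemma:weyl}) against $v_k(\mM^{2t})\ge n^{-80\varepsilon/\varphi^2}$;
\item the resulting condition $s\ge 40n^2b^2\log n/\xi^2=O(k^4 n^{240\varepsilon/\varphi^2}\log n)$ is exactly the source of the hypothesis on $s$.
\end{itemize}

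You instead compress to the top-$k$ eigenspace and use the multiplicative matrix Chernoff lower tail. Its cost is linear in $L/\mu_{\min}$ rather than quadratic in $nb/\xi$, and it only has to control a $k\times k$ matrix. As a result your argument already works for $s\gtrsim k^2n^{120\varepsilon/\varphi^2}\log n$, a strictly weaker requirement.

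What the cited route buys is a single reusable black box giving two-sided spectral-norm control of the whole matrix $\frac ns(\mM^t\mS)(\mM^t\mS)^T$. However, its $\xi$ is far above $n^{-9}$, so it cannot yield the $v_{k+1}$ bound and needs the same separate deterministic step you use.

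You are also right to pin $t=\frac{20\log n}{\varphi^2}$. The $v_k$ bound cannot hold for arbitrary $t\ge\frac{20\log n}{\varphi^2}$ as the statement says. By your own deterministic argument, $v_k\bigl(\frac ns(\mM^t\mS)(\mM^t\mS)^T\bigr)\le n(1-\lambda_k/2)^{2t}$, which tends to $0$ as $t\to\infty$ whenever $\lambda_k>0$. This defect affects the original argument equally. It is harmless here because every algorithm in the paper fixes $t=\frac{20\log n}{\varphi^2}$.
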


    \begin{lemma}[Weyl's Inequality]
        \label{lemma:weyl}
        Let $A,B\in \mathbb{R}^{n\times n}$ be symmetric matrices. Let $\alpha_1,\dots,\alpha_n$ and $\beta_1,\dots,\beta_n$ be the eigenvalues of $A$ and $B$ respectively. Then for any $i\in[n]$,  we have
        \[
            \left| \alpha_i-\beta_i\right|\le \Vert A-B\Vert_2.
        \]
    \end{lemma}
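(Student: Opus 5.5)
We prove Weyl's inequality through the Courant--Fischer min--max characterization of the eigenvalues of a symmetric matrix. Order the eigenvalues decreasingly, $\alpha_1\ge\dots\ge\alpha_n$ and $\beta_1\ge\dots\ge\beta_n$, as in the convention $v_i(\cdot)$ used earlier in the paper. Courant--Fischer states that for any symmetric $A$,
\[
\alpha_i=\max_{\substack{U\subseteq\mathbb{R}^n\\ \dim U=i}}\ \min_{\substack{\vx\in U\\ \lVert\vx\rVert_2=1}}\vx^TA\vx ,
\]
and likewise for $\beta_i$ with $B$ in place of $A$.

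The key pointwise estimate is obtained by setting $E=B-A$, which is symmetric, and $\delta=\lVert E\rVert_2$. For every unit vector $\vx$, Cauchy--Schwarz and the definition of the spectral norm give
\[
|\vx^TE\vx|\le\lVert\vx\rVert_2\,\lVert E\vx\rVert_2\le\delta .
\]
Hence
\[
\vx^TA\vx-\delta\le\vx^TB\vx\le\vx^TA\vx+\delta .
\]

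Next we transfer this pointwise bound through the min--max.
\begin{itemize}
\item Fix any $i$-dimensional subspace $U$. Taking the minimum over unit $\vx\in U$ preserves the inequality, so $\min_{U}\vx^TB\vx\ge\min_{U}\vx^TA\vx-\delta$.
\item Taking the maximum over all such $U$ then yields $\beta_i\ge\alpha_i-\delta$.
\item Swapping the roles of $A$ and $B$ (note $\lVert A-B\rVert_2=\delta$ as well) gives $\alpha_i\ge\beta_i-\delta$.
\end{itemize}
Together these give $|\alpha_i-\beta_i|\le\lVert A-B\rVert_2$ for every $i\in[n]$.

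The only non-elementary ingredient is Courant--Fischer itself, which we take as standard. If a self-contained argument were required, we would prove only the inequality $\beta_i\ge\alpha_i-\delta$ directly. Take $U$ to be the span of the top $i$ eigenvectors of $A$, and $W$ the span of the bottom $n-i+1$ eigenvectors of $B$. A dimension count forces $U\cap W$ to contain a unit vector $\vx$. On $U$ we have $\vx^TA\vx\ge\alpha_i$, and on $W$ we have $\vx^TB\vx\le\beta_i$. Combining these with the pointwise estimate gives $\alpha_i-\delta\le\vx^TB\vx\le\beta_i$. This intersection step is the one point needing care; the rest is routine.
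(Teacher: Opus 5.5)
Your proof is correct. It is the standard Courant--Fischer argument, and your self-contained version is also sound: $\dim U+\dim W=n+1$ forces a nonzero common vector, giving $\alpha_i-\delta\le \vx^TB\vx\le\beta_i$, and swapping $A$ and $B$ gives the other direction.

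The paper gives no proof of its own. It invokes Weyl's inequality as a standard fact, so there is no alternative route to compare against.

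Your explicit choice to order both spectra decreasingly is worth keeping. It matches the paper's $v_i(\cdot)$ convention and supplies a hypothesis the statement leaves implicit. Without the same ordering for both matrices, the inequality is false: for example, take $A=B$ and list the eigenvalues of $B$ in a different order.
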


    Now we are ready to prove~\Cref{lemma:bounded-space-W}.

    \begin{proof}[Proof of~\Cref{lemma:bounded-space-W}]
        Let $c_3>1$ be a large enough constant and let $\sigma_\textup{err}=\frac{\xi\cdot n^{-1-360\varepsilon/\varphi^2}}{c_3\cdot k^2}$. Let $c$ be a constant from~\Cref{lemma:bounded-space-G}. By the assumption of the lemma for a large enough constant $c_2>1$, we have
        \[
            R\ge \frac{c_2\cdot k^6\cdot n^{1+760\varepsilon/\varphi^2}}{M\cdot \xi^2}\ge \frac{c\cdot k^2n^{-1+40\varepsilon/\varphi^2}}{\sigma_\textup{err}^2M}.
        \]
        Thus we can apply~\Cref{lemma:bounded-space-G}. Hence, with probability at least $1-n^{-100}$, we have
        \[
            \lVert \mathcal{G}-(\mM^t\mS)^T(\mM^t\mS)\rVert_2\le s\cdot\sigma_\textup{err}.
        \]

        Let $\widetilde{A}=\frac{n}{s}\cdot (\mM^t\mS)^T(\mM^t\mS)=\widetilde{W}\widetilde{\Sigma}^2\widetilde{W}^T$ and $\widehat{A}=\frac{n}{s}\cdot \mathcal{G}$. Thus, we have $\widetilde{A}^2=\left(\frac{n}{s}\cdot (\mM^t\mS)^T(\mM^t\mS)\right)^2=\widetilde{W}\widetilde{\Sigma}^4\widetilde{W}^T$ and $\widehat{A}^2=\left(\frac{n}{s}\cdot \mathcal{G}\right)^2=\widehat{W}\widehat{\Sigma}^2\widehat{W}^T$. To use~\Cref{lemma:gluch_1}, we have to bound $\lVert \widetilde{A}^2-\widehat{A}^2 \rVert_2=\left(\frac{n}{s}\right)^2\lVert \left((\mM^t\mS)^T(\mM^t\mS)\right)^2-\mathcal{G}^2 \rVert_2$. Using the triangle inequality and sub-multiplicativity of spectral norm and the above $\lVert \mathcal{G}-(\mM^tS)^T(\mM^tS)\rVert_2\le s\cdot\sigma_\textup{err}$ bound, we can get that 
        \[
            \lVert \left((\mM^t\mS)^T(\mM^t\mS)\right)^2-\mathcal{G}^2 \rVert_2\le (s\cdot \sigma_\textup{err})^2+2\cdot s\cdot \sigma_\textup{err}\lVert(\mM^t\mS)^T(\mM^t\mS) \rVert_2.
        \]

        Note that $\lVert(\mM^t\mS)^T(\mM^t\mS) \rVert_2\le \lVert(\mM^t\mS)^T(\mM^t\mS) \rVert_F=\sqrt{\sum_{i=1}^s{\sum_{j=1}^s{((\mM^t\mathds{1}_{s_i})^T(\mM^t\mathds{1}_{s_j}))^2}}}$, by Cauchy Schwarz inequality and~\Cref{lemma:gluch_0}, we can get that $\lVert(\mM^t\mS)^T(\mM^t\mS) \rVert_2\le O(s\cdot k^2\cdot n^{-1+40\varepsilon/\varphi^2})$.
        Put them together and by the choice of $\sigma_\textup{err}=\frac{\xi\cdot n^{-1-360\varepsilon/\varphi^2}}{c_3\cdot k^2}$, we have that 
        \[
            \lVert \widetilde{A}^2-\widehat{A}^2 \rVert_2\le O\left(\frac{\xi \cdot n^{-320\varepsilon/\varphi^2}}{c_3}\right).
        \]

        Moreover, let $c_1$ be the constant from~\Cref{lemma:gluch_2}, since $s\ge c_1\cdot n^{240\varepsilon/\varphi^2}\cdot \log n\cdot k^4$, by~\Cref{lemma:gluch_2}, with probability at least $1-n^{-100}$, we have
        \[
            v_k\left(\widetilde{A}^2\right)=v_k\left(\left(\frac{n}{s}\cdot (\mM^t\mS)^T(\mM^t\mS)\right)^2\right)\ge \left(\frac{n^{-80\varepsilon/\varphi^2}}{2}\right)^2=\frac{n^{-160\varepsilon/\varphi^2}}{4},
        \]

        and 
        \[
            v_{k+1}\left(\widetilde{A}^2\right)=v_{k+1}\left(\left(\frac{n}{s}\cdot (\mM^t\mS)^T(\mM^t\mS)\right)^2\right)\le (n^{-9})^2=n^{-18}.
        \]

        By Weyl's inequality, we have that $v_k(\widehat{A}^2)\ge v_k(\widetilde{A}^2)-\lVert \widetilde{A}^2-\widehat{A}^2 \rVert_2\ge \frac{n^{-160\varepsilon/\varphi^2}}{4}-O(\frac{\xi\cdot n^{-320\varepsilon/\varphi^2}}{c_3})>0$, so $\widehat{\Sigma}_{[k]}^{-2}$ exists. Moreover, since $\widetilde{A}^2,\widehat{A}^2$ are symmetric matrices, $\lVert \widetilde{A}^2-\widehat{A}^2\rVert_2\le \frac{v_k(\widetilde{A}^2)}{100}$ and $v_{k+1}(\widetilde{A}^2)<\frac{v_k(\widetilde{A}^2)}{4}$, by~\Cref{lemma:gluch_1}, we have that 
        \begin{align*}
            \lVert \widetilde{W}_{[k]}\widetilde{\Sigma}_{[k]}^{-4}\widetilde{W}_{[k]}^T-\widehat{W}_{[k]}\widehat{\Sigma}_{[k]}^{-2}\widehat{W}_{[k]}^T\rVert_2&\le \frac{16\lVert \widetilde{A}^2-\widehat{A}^2\rVert_2+4v_{k+1}(\widetilde{A}^2)}{v_k(\widetilde{A}^2)^2}\\
            &\le\frac{O\left(\frac{\xi\cdot n^{-320\varepsilon/\varphi^2}}{c_3}\right)+4n^{-18}}{\frac{n^{-320\varepsilon/\varphi^2}}{16}}\\
            &\le O\left(\frac{\xi}{c_3}\right)+64n^{-17}\\
            &\le \xi.&&\text{$\frac{1}{n^8}\le\xi$}
        \end{align*}

        Moreover, both~\Cref{lemma:bounded-space-G} and~\Cref{lemma:gluch_2} fail with probability at most $n^{-100}$, by union bound, we can get that the above inequality holds with probability at least $1-2n^{-100}$.
    \end{proof}

    \begin{lemma}[Restatement of~\Cref{lemma:bounded-space-apx}]
        Let $k\ge 2$ be an integer and $\varphi,\varepsilon\in(0,1)$. Let $G=(V,E)$ be a $d$-regular and $(k,\varphi,\varepsilon)$-clusterable graph. Let $\mM$ be the random walk transition matrix of $G$. Let $I_S=\{s_1,\dots,s_s\}$ be a multiset of $s$ indices chosen independently and uniformly at random form $V=\{1,\dots,n\}$. Let $\mS\in\mathbb{R}^{n\times s}$ be the matrix whose $i$-th column equals $\mathds{1}_{s_i}$. Let $\sqrt{\frac{n}{s}}\cdot \mM^t\mS=\widetilde{U}\widetilde{\Sigma}\widetilde{W}^T$ be an SVD of $\sqrt{\frac{n}{s}}\cdot \mM^t\mS$ where $\widetilde{U}\in\mathbb{R}^{n\times n},\widetilde{\Sigma}\in\mathbb{R}^{n\times n},\widetilde{W}\in\mathbb{R}^{s\times n}$. Let $\frac{1}{n^6}<\xi<1$ and $1\le M_\textup{init}\le O\left(\frac{n^{1/2-20\varepsilon/\varphi^2}}{k}\right)$. Let $t\ge \frac{20\log n}{\varphi^2}$. Let $c>1$ be a large enough constant. Let $s\ge c\cdot n^{240\varepsilon/\varphi^2}\cdot \log n\cdot k^4$. Let $\Psi$ denote the matrix constructed by \textsc{InitOracle} $(G,k,\xi,M_{\textup{init}})$ (Alg. \ref{algo:bounded-space-initialize-oracle}). 
        
        Let $x,y\in V$. Let $\langle \vf_x,\vf_y\rangle_{\textup{apx}}\in \mathbb{R}$ denote the value returned by \textsc{QueryDot} $(G,x,y,\xi,\Psi,M_{\textup{query}})$ (Alg. \ref{algo:bounded-space-query-dot}). If $\frac{\varepsilon}{\varphi^2}\le \frac{1}{10^5}$, Alg. \ref{algo:bounded-space-initialize-oracle} succeeds and $1\le M_\textup{query}\le O\left(\frac{n^{1/2-20\varepsilon/\varphi^2}}{k}\right)$, then with probability at least $1-5n^{-100}$ matrix $\widetilde{\Sigma}_{[k]}^{-4}$ exists and we have 
        \[
            \left|\langle \vf_x,\vf_y\rangle_{\textup{apx}}-(\mM^t\mathds{1}_x)^T(\mM^t\mS)\left(\frac{n}{s}\cdot \widetilde{W}_{[k]}\widetilde{\Sigma}_{[k]}^{-4}\widetilde{W}_{[k]}^T\right)(\mM^t\mS)^T(\mM^t\mathds{1}_y)\right|<\frac{\xi}{n}.
        \]
        
        \begin{proof}
            Note that in line $8$ of Alg. \ref{algo:bounded-space-query-dot}, $\langle \vf_x,\vf_y\rangle_{\textup{apx}}$ is defined as $\bm{\alpha}_x^T\Psi\bm{\alpha}_y$, where in line $8$ of Alg. \ref{algo:bounded-space-initialize-oracle}, $\Psi\in\mathbb{R}^{s\times s}$ is defined to be $\Psi = \frac{n}{s}\cdot \widehat{W}_{[k]}\widehat{\Sigma}_{[k]}^{-2}\widehat{W}^T_{[k]}$ and $\bm\alpha_x,\bm\alpha_y\in\mathbb{R}^s$ are vectors obtained by taking entriwise median over all $O(\log n)$ runs (see lines $3\sim 7$ of Alg. \ref{algo:bounded-space-query-dot}). 

            For any vertex $x\in V$, we use $\vp_x$ to denote $\vp_x=\mM^t\mathds{1}_x$. We then define 
            \begin{align*}
                \mathbf{a}_x = \vp_x^T(\mM^t\mS), A=\frac{n}{s}\cdot \widetilde{W}_{[k]}\widetilde{\Sigma}_{[k]}^{-4}\widetilde{W}_{[k]}^T, \mathbf{a}_y = (\mM^t\mS)^T\vp_x,\\
                \mathbf{e}_x=\bm\alpha_x^T-\mathbf{a}_x,\quad E=\Psi-A, \quad \mathbf{e}_y=\bm\alpha_y-\mathbf{a}_y.
            \end{align*}
            Then by triangle inequality, we have
            \begin{align*}
                &\left|\bm\alpha^T\Psi\bm\alpha_y-\vp_x^T(\mM^t\mS)\left(\frac{n}{s}\cdot \widetilde{W}_{[k]}\widetilde{\Sigma}_{[k]}^{-4}\widetilde{W}_{[k]}^T\right)(\mM^t\mS)^T\vp_y\right|\\
                &=\left|(\mathbf{a}_x+\mathbf{e}_x)(A+E)(\mathbf{a}_y+\mathbf{e}_y)-\mathbf{a}_xA\mathbf{a}_y\right|\\
                &\le \lVert \mathbf{e}_x\rVert_2\lVert E\rVert_2\lVert \mathbf{e}_y\rVert_2 + \lVert \mathbf{e}_x\rVert_2\lVert A\rVert_2\lVert \mathbf{e}_y\rVert_2 + \lVert \mathbf{a}_x\rVert_2\lVert E\rVert_2\lVert \mathbf{e}_y\rVert_2\\
                &+ \lVert \mathbf{a}_x\rVert_2\lVert A\rVert_2\lVert \mathbf{e}_y\rVert_2 + \lVert \mathbf{a}_x\rVert_2\lVert E\rVert_2\lVert \mathbf{a}_y\rVert_2 + \lVert \mathbf{e}_x\rVert_2\lVert A\rVert_2\lVert \mathbf{a}_y\rVert_2 + \lVert \mathbf{a}_x\rVert_2\lVert E\rVert_2\lVert \mathbf{a}_y\rVert_2.
            \end{align*}
            
            In the following, we bound $\lVert \mathbf{a}_x\rVert_2,\lVert \mathbf{a}_y\rVert_2, \lVert E\rVert_2, \lVert A\rVert_2, \lVert \mathbf{e}_x\rVert_2$ and $\lVert \mathbf{e}_x\rVert_2$. 
            
            Let $c^\prime>1$ be a constant and let $\xi^\prime=\frac{\xi}{c^\prime \cdot k^4\cdot n^{80\varepsilon/\varphi^2}}$. Thus for large enough constant $c$, we have $s\ge c_1\cdot n^{240\varepsilon/\varphi^2}\cdot \log n\cdot k^4$ and $R_{\textup{init}}=\Theta(\frac{n^{1+920\varepsilon/\varphi^2}}{M_\textup{init}}\cdot \frac{k^{14}}{\xi^2})\ge \frac{c_2k^6\cdot n^{1+760\varepsilon/\varphi^2}}{M_\textup{init}\cdot \xi^{\prime2}}$ as in line $2$ of Alg. \ref{algo:bounded-space-initialize-oracle}, hence, by~\Cref{lemma:bounded-space-W} applied with $\xi^\prime$ we have that with probability at least $1-2n^{-100}$, $\widehat{\Sigma}_{[k]}^{-2}$ and $\widetilde{\Sigma}_{[k]}^{-4}$ exist and we have
            \begin{align}
                \label{eq:1}
                \lVert E\rVert_2=\frac{n}{s}\cdot \lVert\widehat{W}_{[k]}\widehat{\Sigma}_{[k]}^{-2}\widehat{W}_{[k]}^T -  \widetilde{W}_{[k]}\widetilde{\Sigma}_{[k]}^{-4}\widetilde{W}_{[k]}^T\rVert_2< \frac{n}{s}\cdot \xi^\prime=\frac{\xi\cdot n}{c^\prime \cdot k^4\cdot n^{80\varepsilon/\varphi^2}\cdot s}.
            \end{align}

            Moreover, according to the proof of Lemma 29 in \cite{gluch2021spectral}, we have that, with probability at least $1-n^{-100}$, 
            \begin{align}
            \label{eq:2}
                \lVert A\rVert_2\le \frac{4\cdot n^{1+160\varepsilon/\varphi^2}}{s}.
            \end{align}

            And with probability $1$, we have
            \begin{align}
            \label{eq:3}
                \lVert \mathbf{a}_x\rVert_2\le O(\sqrt{s}\cdot k^2\cdot n^{-1+40\varepsilon/\varphi^2})
            \end{align}
            and
            \begin{align}
            \label{eq:4}
                \lVert \mathbf{a}_y\rVert_2\le O(\sqrt{s}\cdot k^2\cdot n^{-1+40\varepsilon/\varphi^2}).
            \end{align}

            Now we need to bound $\mathbf{e}_x$ and $\mathbf{e}_y$. Recall that $\mathbf{e}_x=\bm\alpha_x^T-\vp_x^T(\mM^t\mS)$, where $\bm\alpha_x\in \mathbb{R}^s$ is obtained by taking entrywise median over all $\vx_l$'s. Note that in line $5$ of Alg. \ref{algo:bounded-space-query-dot}, $\vx_l(i)$ is the output of \textsc{EstRWDot} $(G,R_{\textup{query}},t,M_\textup{query},x,s_i)$ (Alg. \ref{algo:bounded-space-esti-rw-dot}). Let $c_3$ be a constant infront of $R$ in~\Cref{lemma:bounded-space-z}. Let $\sigma_\textup{err}=\frac{\xi}{c^\prime\cdot k^2\cdot n^{1+200\varepsilon/\varphi^2}}$. Thus by our choice of $R_{\textup{query}}=\Theta(\frac{n^{1+440\varepsilon/\varphi^2}}{M_{\textup{query}}}\cdot\frac{k^6}{\xi^2})$ in line $2$ of Alg. \ref{algo:bounded-space-query-dot}, the prerequisites of~\Cref{lemma:bounded-space-z} are satisfied:
            \[
                R_{\textup{query}}=\Theta\left(\frac{n^{1+440\varepsilon/\varphi^2}}{M_\textup{query}}\cdot \frac{k^6}{\xi^2}\right)\ge \frac{c_3\cdot k^2n^{-1+40\varepsilon/\varphi^2}}{\sigma_\textup{err}^2\cdot M_\textup{query}}.
            \]
            Thus we can apply~\Cref{lemma:bounded-space-z}. Hence, for any $1\le i\le s$ with probability at least $0.99$, we have
            \[
                |\vx_l(i)-\vp_x^T\vp_{s_i}|\le \sigma_\textup{err}.
            \]

            Since we are running $O(\log n)$ rounds to compute $\vx_l$'s and $\bm\alpha_x$ is obtained by taking entrywise median, we can get that with probability at least $1-n^{-100}$ for all $z\in I_S$ (see~\Cref{remark:median-trick}), we have
            \[
                |\bm\alpha_x(z)-\vp_x^T\vp_{z}|\le \sigma_\textup{err}.
            \]

            Therefore, with probability at least $1-n^{-100}$, we can get
            \begin{align}
            \label{eq:5}
                \lVert \mathbf{e}_x \rVert_2=\lVert \bm\alpha_x^T-\vp_x^T(\mM^t\mS)\rVert_2\le \sqrt{s}\cdot \sigma_\textup{err}=\frac{\sqrt{s}\cdot \xi}{c^\prime\cdot k^2\cdot n^{1+200\varepsilon/\varphi^2}}.
            \end{align}
            Using the same analysis, with probability at least $1-n^{-100}$, we can get that
            \begin{align}
            \label{eq:6}
                \lVert \mathbf{e}_y \rVert_2=\lVert \bm\alpha_y-(\mM^t\mS)^T\vp_y\rVert_2\le \sqrt{s}\cdot \sigma_\textup{err}=\frac{\sqrt{s}\cdot \xi}{c^\prime\cdot k^2\cdot n^{1+200\varepsilon/\varphi^2}}.
            \end{align}
            Putting (\ref{eq:1}),(\ref{eq:2}),(\ref{eq:3}),(\ref{eq:4}),(\ref{eq:5}),(\ref{eq:6}) together and for large enough $n$, we can get
            \begin{align*}
                &\left|\bm\alpha^T\Psi\bm\alpha_y-\vp_x^T(\mM^t\mS)\left(\frac{n}{s}\cdot \widetilde{W}_{[k]}\widetilde{\Sigma}_{[k]}^{-4}\widetilde{W}_{[k]}^T\right)(\mM^t\mS)^T\vp_y\right|\\
                &\le \lVert \mathbf{e}_x\rVert_2\lVert E\rVert_2\lVert \mathbf{e}_y\rVert_2 + \lVert \mathbf{e}_x\rVert_2\lVert A\rVert_2\lVert \mathbf{e}_y\rVert_2 + \lVert \mathbf{a}_x\rVert_2\lVert E\rVert_2\lVert \mathbf{e}_y\rVert_2\\
                &+ \lVert \mathbf{a}_x\rVert_2\lVert A\rVert_2\lVert \mathbf{e}_y\rVert_2 + \lVert \mathbf{a}_x\rVert_2\lVert E\rVert_2\lVert \mathbf{a}_y\rVert_2 + \lVert \mathbf{e}_x\rVert_2\lVert A\rVert_2\lVert \mathbf{a}_y\rVert_2 + \lVert \mathbf{a}_x\rVert_2\lVert E\rVert_2\lVert \mathbf{a}_y\rVert_2\\
                &\le O(\frac{\xi}{c^\prime\cdot n})\\
                &\le \frac{\xi}{n}.
            \end{align*}
            The last inequality holds by setting $c^\prime$ be a large enough constant to cancel the hidden constant of $O(\frac{\xi}{c^\prime\cdot n})$.

            Using union bound, if Alg. \ref{algo:bounded-space-initialize-oracle} succeeds, then the above inequality holds with probability at least $1-2n^{-100}-n^{-100}-2n^{-100}=1-5n^{-100}$.
        \end{proof}
    \end{lemma}

\section{Proof of~\Cref{itm:main-case1} in~\Cref{thrm:main}}
\label{appendix:section-log(k)-oracle}

In this section, we first present an algorithm for computing the spectral dot product in a subspace, which will serve as a building block for the sublinear spectral clustering oracle that relies on a $\log(k)$ conductance gap. Next, we introduce the sublinear spectral clustering oracle, originally proposed in \citet{gluch2021spectral}, corresponding to~\Cref{itm:main-case1} in~\Cref{thrm:main}. Finally, we provide the proof of~\Cref{itm:main-case1} in~\Cref{thrm:main}.

\subsection{Dot product oracle on subspace}

Note that the clustering oracle in \citet{gluch2021spectral} relies on cluster centers: 

\begin{defn}[Cluster center]
    \label{defn:cluster-center}
    \upshape
    For a vertex set $C\subset V$, the \emph{cluster center} of $C$ is defined to be
    \[
        \mu_C=\frac{1}{|C|}\sum_{x\in C}{\vf_x}.
    \]
\end{defn}

They proved that if $x\in C_i$, then $\vf_x$ is close to $\mu_{C_i}$, which means $\langle \vf_x,\mu_C\rangle\ge c\cdot \lVert \mu_C\rVert_2^2$, where $c$ is a constant. Therefore, the key idea behind the clustering oracle in \citet{gluch2021spectral} is to sample a subset of vertices and enumerate possible $k$-partition in order to obtain a good approximation $\widehat{\mu}_1,\dots,\widehat{\mu}_k$ to the true cluster centers $\mu_1,\dots,\mu_k$ (see lines $6\sim 11$ of Alg. \ref{alg:gluch-find-centers}). When answering an arbitrary \textsc{WhichCluster} $(G,x)$ query, the oracle assigns the $x$ to the cluster whose center is close to $\vf_x$ while other cluster centers are not close to $\vf_x$ (see line $5$ of Alg. \ref{alg:gluch-hyperplane}).

In fact, their clustering algorithm uses hyperplane partitioning, which requires computing dot products in the subspace (i.e., $\langle \vf_x, \Pi\mu\rangle$). Therefore, we first present the algorithm that computes the dot products in the subspace based on our improved version. We highlight that this (i.e., Alg. \ref{alg:gluch-on-subspace}) is not our contribution.

\begin{algorithm}[H]%
    \DontPrintSemicolon
    \caption{\textsc{DotProductOracleOnSubspace}$(G,x,y, \xi,\Psi,M,B_1,\dots,B_r)$}
    \label{alg:gluch-on-subspace}

    Let $\mX\in\mathbb{R}^{r\times r},\vh_x\in \mathbb{R}^r,\vh_y\in \mathbb{R}^r$\;

    Let $\xi^\prime = \Theta(\xi \cdot n^{-80\varepsilon/\varphi^2}\cdot k^{-6})$\;
    \For{$i,j\in[r]$}{
        $\mX(i,j)\coloneqq\frac{1}{|B_i||B_j|}\cdot \sum_{z_i\in B_i}\sum_{z_j\in B_j}\textsc{QueryDot}(G,z_i,z_j,\xi^\prime,\Psi,M)$\;
    }

    \For{$i\in [r]$}{
        $\vh_x(i)\coloneqq\frac{1}{|B_i|}\cdot \sum_{z_i\in B_i}\textsc{QueryDot}(G,z_i,x,\xi^\prime,\Psi,M)$\;
        $\vh_y(i)\coloneqq\frac{1}{|B_i|}\cdot \sum_{z_i\in B_i}\textsc{QueryDot}(G,z_i,y,\xi^\prime,\Psi,M)$\;
    }
    return $\langle \vf_x,\widehat{\Pi}f_y\rangle_\textup{apx}\coloneqq\textsc{QueryDot}(G,x,y,\xi^\prime,\Psi,M)-\vh_x^T\mX^{-1}\vh_y$\;
\end{algorithm}

In the following, we will give some informal theorem and corollaries about Alg. \ref{alg:gluch-on-subspace}. Note that the only modification we make to Alg. \ref{alg:gluch-on-subspace} is to replace \textsc{SpectralDotProduct} with our improved version. Since our dot product oracle provides the same correctness guarantees as the original one, the correctness of the theorem and corollaries concerning Alg.~\ref{alg:gluch-on-subspace} follows immediately from the proof of Theorem $6$ in \citet{gluch2021spectral}. Therefore, we focus on analyzing the time and space complexities.

\begin{theorem}[Informal]
    \label{thrm:subspace}
    Let $k\ge$ be an integer, $\varphi$, $\frac{1}{n^5}<\xi<1$ and $\frac{\varepsilon}{\varphi^2}$ be smaller than a positive absolute constant. Let $G=(V,E)$ be a $d$-regular and $(k,\varphi,\varepsilon)$-clusterable graph with $C_1,\dots,C_k$. %

    Let $r\in[k]$. Let $B_1,\dots,B_r$ denote multisets of vertices. Let $b=\max_{i\in[r]}{|B_i|}$. %
    Let $\widehat{\mu}_i=\frac{1}{|B_i|}\sum_{x\in B_i}{\vf_x}$. Let $\widehat{\Pi}$ is defined as a orthogonal projection onto the span $(\{\widehat{\mu}_1,\dots,\widehat{\mu}_r\})^\bot$. Then for all $x,y\in V$, we have
    
    \begin{enumerate}[label=\arabic*]%
        \item $\left|\langle \vf_x,\widehat{\Pi}\vf_y\rangle_\textup{apx}-\langle\vf_x,\widehat{\Pi}\vf_y \rangle\right|\le \frac{\xi}{n}$, where $\langle \vf_x,\widehat{\Pi}\vf_y\rangle_\textup{apx}$ is the output of Alg. \ref{alg:gluch-on-subspace},
        \item Alg. \ref{alg:gluch-on-subspace} runs in $b^2\cdot (\frac{k}{\xi})^{O(1)}\cdot n^{1+O(\varepsilon/\varphi^2)}\cdot \frac{1}{M}\cdot \log^3 n\cdot \frac{1}{\varphi^2}$ time,
        \item Alg. \ref{alg:gluch-on-subspace} uses $b^2\cdot (\frac{k}{\xi})^{O(1)}\cdot n^{O(\varepsilon/\varphi^2)}\cdot M\cdot \log^3 n$ bits of space.
    \end{enumerate}

    \begin{proof}
        In lines $3\sim 4$ of Alg. \ref{alg:gluch-on-subspace}, to compute $\mX$, Alg. \ref{alg:gluch-on-subspace} calls \textsc{QueryDot} for $r^2\cdot b^2\le k^2\cdot b^2$ times. In lines $5\sim 7$ of Alg. \ref{alg:gluch-on-subspace}, to compute $\vh_x,\vh_y$, Alg. \ref{alg:gluch-on-subspace} calls \textsc{QueryDot} for $r\cdot b\le k\cdot b$ times. To compute $\mX^{-1}$, it takes $r^3\le k^3$ time. Therefore, Alg. \ref{alg:gluch-on-subspace} runs in $k^2\cdot b^2\cdot T_\textup{query}+k\cdot b\cdot T_\textup{query}+k^3$ time and it uses $k^2\cdot b^2\cdot S_\textup{query}+k\cdot b\cdot S_\textup{query}+k^2$ bits of space. Note that $T_\textup{query}=(\frac{k}{\xi^\prime})^{O(1)}\cdot n^{1+O(\varepsilon/\varphi^2)}\cdot \frac{1}{M}\cdot \log^3 n\cdot \frac{1}{\varphi^2}$ and $S_\textup{query}=(\frac{k}{\xi^\prime})^{O(1)}\cdot n^{O(\varepsilon/\varphi^2)}\cdot M\cdot \log^3 n$, where $\xi ^\prime= \Theta(\xi \cdot n^{-80\varepsilon/\varphi^2}\cdot k^{-6})$. Therefore, we get that Alg. \ref{alg:gluch-on-subspace} runs in $b^2\cdot (\frac{k}{\xi})^{O(1)}\cdot n^{1+O(\varepsilon/\varphi^2)}\cdot \frac{1}{M}\cdot \log^3 n\cdot \frac{1}{\varphi^2}$ time and uses $b^2\cdot (\frac{k}{\xi})^{O(1)}\cdot n^{O(\varepsilon/\varphi^2)}\cdot M\cdot \log^3 n$ bits of space.
    \end{proof}
\end{theorem}

\begin{corollary}
    \label{coro:subspace-1}
    There exists an algorithm that
    \begin{enumerate}[label=\arabic*]
        \item returns a value $\langle \vf_x,\widehat{\Pi}\widehat{\mu}\rangle_\textup{apx}$ such that $\left|\langle \vf_x,\widehat{\Pi}\widehat{\mu}\rangle_\textup{apx}-\langle \vf_x,\widehat{\Pi}\widehat{\mu}\rangle\right|\le \frac{\xi}{n}$,
        \item runs in $b^3\cdot (\frac{k}{\xi})^{O(1)}\cdot n^{1+O(\varepsilon/\varphi^2)}\cdot \frac{1}{M}\cdot \log^3 n\cdot \frac{1}{\varphi^2}$ time,
        \item uses $b^3\cdot (\frac{k}{\xi})^{O(1)}\cdot n^{O(\varepsilon/\varphi^2)}\cdot M\cdot \log^3 n$ bits of space.

    \end{enumerate}

    \begin{proof}
        One can compute $\langle \vf_x,\widehat{\Pi}\widehat{\mu}\rangle_\textup{apx}\coloneqq\frac{1}{|B|}\cdot \sum_{y\in B}{\textsc{DotProductOracleOnSubspace}}(G,x,y,\\ \xi,\Psi,M,B_1,\dots,B_r)$ (Alg. \ref{alg:gluch-on-subspace}). Therefore, the algorithm that computes $\langle \vf_x,\widehat{\Pi}\widehat{\mu}\rangle_\textup{apx}$ calls Alg. \ref{alg:gluch-on-subspace} $b$ times, which ends the proof.
    \end{proof}
    
\end{corollary}

\begin{corollary}
    \label{coro:subspace-2}
    There exists an algorithm that
    \begin{enumerate}[label=\arabic*]
        \item returns a value $\lVert \widehat{\Pi}\widehat{\mu}\rVert^2_\textup{apx}$ such that $\left|\lVert \widehat{\Pi}\widehat{\mu}\rVert^2_\textup{apx}-\lVert \widehat{\Pi}\widehat{\mu}\rVert^2\right|\le \frac{\xi}{n}$,
        \item runs in $b^4\cdot (\frac{k}{\xi})^{O(1)}\cdot n^{1+O(\varepsilon/\varphi^2)}\cdot \frac{1}{M}\cdot \log^3 n\cdot \frac{1}{\varphi^2}$ time,
        \item uses $b^4\cdot (\frac{k}{\xi})^{O(1)}\cdot n^{O(\varepsilon/\varphi^2)}\cdot M\cdot \log^3 n$ bits of space.

    \end{enumerate}

    \begin{proof}
        One can compute $\lVert \widehat{\Pi}\widehat{\mu}\rVert^2_\textup{apx}=(\widehat{\Pi}\widehat{\mu})^T(\widehat{\Pi}\widehat{\mu})=\widehat{\mu}^T\widehat{\Pi}^T\widehat{\Pi}\widehat{\mu}=\widehat{\mu}^T\widehat{\Pi}\widehat{\mu}=\langle\widehat{\mu}, \widehat{\Pi}\widehat{\mu}\rangle=\frac{1}{|B|}\cdot \sum_{x\in B}\langle \vf_x,\widehat{\Pi}\widehat{\mu}\rangle_\textup{apx}$. Therefore, the algorithm that computes $\lVert \widehat{\Pi}\widehat{\mu}\rVert^2_\textup{apx}$ calls the algorithm in ~\Cref{coro:subspace-1} $b$ times, which ends the proof.
    \end{proof}
    
\end{corollary}

    \subsection{Sublinear spectral clustering oracle}
        Now we present the sublinear spectral clustering oracle with a $\log(k)$ gap between inner and outer conductance, originally proposed in \citet{gluch2021spectral}, and adapt it by incorporating our dot product oracle, which operates with very little memory.
        
        Algorithm \ref{alg:gluch-find-centers} finds some cluster centers that reflects the clustering structure of the input graph.
    
    \begin{algorithm}[H]%
        \DontPrintSemicolon
        \caption{\textsc{FindCenters}($G, M$)}
        \label{alg:gluch-find-centers}

        \textsc{InitOracle}$(G,k,10^{-6}\cdot \frac{\sqrt{\varepsilon}}{\varphi}, M)$\;
        $s_1\coloneqq\Theta\left(\frac{\varphi^2}{\varepsilon}k^5\log^2k\log (1/\eta)\right)$, $s_2\coloneqq\Theta\left(\frac{\varphi^4}{\varepsilon^2}k^5\log^2k\log (1/\eta)\right)$\;

        \For{$t\in [1 \dots \log(2/\eta)]$}{
            $S\coloneqq$Random samples of vertices of $V$ of size $s=\Theta(\frac{\varphi^2}{\varepsilon}k^4\log k)$\;
            \For{$(P_1,P_2,\dots,P_k)\in $\textsc{Partition} $(S)$}{
                \For{$i=1$ to $k$}{
                    $\widehat{\mu}_i\coloneqq \frac{1}{|P_i|}\sum_{x\in P_i}{f_x}$\;
                }
                $(r,C)\coloneqq$ \textsc{ComputerOrderedPartition}$(G,(\widehat{\mu}_1,\dots,\widehat{\mu}_k)),s_1,s_2,M)$\;

                \If{$r=$\textsc{True}}{
                    return $C$\;
                }
            }
        }
    \end{algorithm}

    \begin{algorithm}[H]%
        \DontPrintSemicolon
        \caption{\textsc{ComputeOrderedPartition}($G, (\widehat{\mu}_1,\dots,\widehat{\mu}_k),s_1,s_2,M$)}
        \label{alg:gluch-compute-ordered-partition}

        $S\coloneqq\{\widehat{\mu}_1,\dots,\widehat{\mu}_k\}$\;
        
        \For{$i=1$ to $\lceil \log k\rceil$}{
            $T_i\coloneqq\emptyset$\;
            \For{$\widehat{\mu}\in S$}{
                $\psi\coloneqq$\textsc{OuterConductance}$(G,\widehat{\mu},(T_1,\dots,T_{i-1}),S,s_1,s_2,M)$\;
                \If{$\psi\le O(\frac{\varepsilon}{\varphi^2}\cdot \log k)$}{
                    $T_i\coloneqq T_i\cup \{\widehat{\mu}\}$\;
                }
            }
            $S\coloneqq S\backslash T_i$\;
            \If{$S=\emptyset$}{
                return $(\textsc{True},(T_1,\dots,T_i))$\;
            }
        }
        return $(\textsc{False},\bot)$\;
    \end{algorithm}

    \begin{algorithm}[H]%
        \DontPrintSemicolon
        \caption{\textsc{OuterConductance}($G, \widehat{\mu},(T_1,\dots,T_b),S,s_1,s_2,M$)}
        \label{alg:gluch-outer-conductance}

        $\textup{cnt}\coloneqq 0$\;
        \For{$t=1$ to $s_1$}{
            $x\sim $\textsc{Uniform}$\{1\dots n\}$\;
            \If{\textsc{IsInside}$(x,\widehat{\mu},(T_1,\dots,T_b),S,M)$}{
                $\textup{cnt}\coloneqq \textup{cnt}+1$\;
            }
        }
        \If{$\frac{n}{s_1}\cdot \textup{cnt}<\min_{p\in[k]}{|C_p|/2}$}{
            return $\infty$\;
        }
        $e\coloneqq 0,a\coloneqq 0$\;
        \For{$t=1$ to $s_2$}{
            $x\sim \textsc{Uniform}\{1\dots n\}$\;
            $y\sim \textsc{Uniform}\{w\in \mathcal{N}(u)\}$\;
            \If{\textsc{IsInside}$(x,\widehat{\mu},(T_1,\dots,T_b),S,M)$}{
                $a\coloneqq a+1$\;
                \If{$\neg$\textsc{IsInside}$(y,\widehat{\mu},(T_1,\dots,T_b),S,M)$}{
                    $e\coloneqq e+1$\;
                }
            }
        }
        return $\frac{e}{a}$\;
    \end{algorithm}

    \begin{algorithm}[H]%
        \DontPrintSemicolon
        \caption{\textsc{IsInside}($x, \widehat{\mu},(T_1,\dots,T_b),S,M$)}
        \label{alg:gluch-is-inside}

        \For{$i=1$ to $b$}{
            Let $\Pi$ be the projection onto the span $(\cup_{j<i}T_j)^{\bot}$\;
            Let $S_i=(\cup_{j\ge i}{T_j})\cup S$\;
            \For{$\widehat{\mu}_i\in T_i$}{
                \If{$x\in C_{\Pi\widehat{\mu}_i,0.93}^{\textup{apx}}\backslash \cup_{\widehat{\mu}^\prime\in S_i\backslash \{\widehat{\mu}_i\}}{C_{\Pi\widehat{\mu}^\prime,0.93}^{\textup{apx}}}$}{
                    return \textsc{False}\;
                }
            }
        }

        Let $\Pi$ be the projection onto the span $(\cup_{j\le b}T_j)^{\bot}$\;
        \If{$x\in C_{\Pi\widehat{\mu},0.93}^{\textup{apx}}\backslash \cup_{\widehat{\mu}^\prime\in S\backslash \{\widehat{\mu}\}}{C_{\Pi\widehat{\mu}^\prime,0.93}^{\textup{apx}}}$}{
            return \textsc{True}\;
        }
        return \textsc{False}\;
        
    \end{algorithm}

    Algorithm \ref{alg:gluch-hyperplane} corresponds to the query phase of the clustering oracle where it is used to assign vertices to clusters based on cluster centers.

    \begin{algorithm}[H]%
        \DontPrintSemicolon
        \caption{\textsc{HyperplanePartitioning}($x,(T_1,\dots,T_b),M$)}
        \label{alg:gluch-hyperplane}

        \For{$i=1$ to $b$}{
            Let $\Pi$ be the projection onto the span $(\cup_{j<i}T_j)^{\bot}$\;
            Let $S_i=(\cup_{j\ge i}{T_j})$\;
            \For{$\widehat{\mu}\in T_i$}{
                \If{$x\in C_{\Pi\widehat{\mu},0.93}^{\textup{apx}}\backslash \cup_{\widehat{\mu}^\prime\in S_i\backslash \{\widehat{\mu}\}}{C_{\Pi\widehat{\mu}^\prime,0.93}^{\textup{apx}}}$}{
                    return $\widehat{\mu}$\;
                }
            }
        }
        
    \end{algorithm}

    \subsection{Deferred proof}
        \label{section:proof-of-log(k)}

        \begin{theorem}[Restate of~\Cref{itm:main-case1} in~\Cref{thrm:main}]
            \label{thrm:main-restate-item-1}
            
            Let $k\ge 2$ be an integer, $\varphi,\varepsilon\in (0,1)$ and $h_1(k,\varphi),h_2(k,\varepsilon)$ and $h_3(k,\varphi,\varepsilon)$ be three functions. Let $\varepsilon \ll h_1(k,\varphi)$. Let $G=(V,E)$ be a $d$-regular and $(k,\varphi,\varepsilon)$-clusterable graph with $C_1,\dots,C_k$. Let $1\le M\le O\left(\frac{n^{1/2-O(\varepsilon/\varphi^2)}}{k}\right)$ be a trade-off parameter. There exists a sublinear spectral clustering oracle that:%

            \begin{itemize}
                \item constructs a data structure $\mathcal{D}$ using $\widetilde{O}_\varphi\left(h_2(k)\cdot n^{O(\varepsilon/\varphi^2)}\cdot M\right)$ bits of space,
                \item answers any \textsc{WhichCluster} query using $\mathcal{D}$ in $\widetilde{O}_\varphi\left(h_2(k)\cdot n^{1+O(\varepsilon/\varphi^2)}\cdot \frac{1}{M}\right)$ time,
                \item has $O\left(h_3(k,\varphi,\varepsilon)\right)|C_i|$ misclassification error for each $i\in[k]$,
            \end{itemize}
            where we use $O_\varphi$ to suppress dependence on $\varphi$ and $\widetilde{O}$ to hide all $\textup{poly}(\log n)$ factors and:
            \begin{enumerate}[label=\arabic*]
                \item if $h_1(k,\varphi)=\frac{\varphi^3}{\log k}$, then $h_2(k,\varepsilon)=\left(\frac{k}{\varepsilon}\right)^{O(1)}$ and $h_3(k,\varphi,\varepsilon)=\frac{\varepsilon}{\varphi^3}\cdot \log k$.
            \end{enumerate}
        \end{theorem}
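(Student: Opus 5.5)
The plan mirrors the proof of \Cref{itm:main-case2}: correctness is inherited from \citet{gluch2021spectral}, and only the space and time accounting needs new work. The oracle is \textsc{FindCenters} (Alg.~\ref{alg:gluch-find-centers}) with the subroutines \textsc{ComputeOrderedPartition}, \textsc{OuterConductance} and \textsc{IsInside}, followed by \textsc{HyperplanePartitioning} (Alg.~\ref{alg:gluch-hyperplane}) at query time. Their original dot product oracle is replaced by \textsc{InitOracle}/\textsc{QueryDot} and by the subspace routines of \Cref{thrm:subspace}, \Cref{coro:subspace-1} and \Cref{coro:subspace-2}.

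\textbf{Correctness.} The correctness argument of \citet{gluch2021spectral} uses the dot product oracle only through the guarantee $|\langle \vf_x,\vf_y\rangle_\textup{apx}-\langle \vf_x,\vf_y\rangle|\le \xi/n$ with high probability. \Cref{thrm:bounded-space-dot-product-oracle} supplies the same guarantee, and \Cref{thrm:subspace} and its corollaries supply the projected versions. Substituting these, with $\xi=10^{-6}\sqrt{\varepsilon}/\varphi$ as in \textsc{FindCenters}, should let their analysis go through unchanged. That analysis covers the enumeration over partitions of the sample, the acceptance test $\psi\le O(\frac{\varepsilon}{\varphi^2}\log k)$, the $\lceil\log k\rceil$ rounds of ordered partitioning, and the membership tests $x\in C^{\textup{apx}}_{\Pi\widehat\mu,0.93}$. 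It yields the $O(\frac{\varepsilon}{\varphi^3}\log k)|C_i|$ misclassification bound under $\varepsilon\ll \varphi^3/\log k$.

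Two points need checking:
\begin{itemize}
\item the number of dot product calls is polynomial in $k/\varepsilon$ and $\log n$, so a union bound over the $1-O(n^{-100})$ failure probabilities still leaves overall success probability $0.9$;
\item the bound $M\le O(n^{1/2-O(\varepsilon/\varphi^2)}/k)$ meets the hypothesis of \Cref{thrm:bounded-space-dot-product-oracle}.
\end{itemize}

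\textbf{Space for constructing $\mathcal{D}$.} \textsc{InitOracle} costs $S_\textup{init}=(k/\xi)^{O(1)} n^{O(\varepsilon/\varphi^2)} M\log^4 n$. The enumeration loop in \textsc{FindCenters} runs sequentially, so working space is reused across candidate partitions. At any moment the algorithm stores only:
\begin{itemize}
\item the current sample $S$ of size $\Theta(\frac{\varphi^2}{\varepsilon}k^4\log k)$ and the current partition $(P_1,\dots,P_k)$;
\item the counters of \textsc{OuterConductance};
\item the workspace of one call to \Cref{coro:subspace-2}, which costs $b^4 (k/\xi)^{O(1)} n^{O(\varepsilon/\varphi^2)} M\log^3 n$ with $b\le |S|=\poly(k/\varepsilon)$.
\end{itemize}
Summing these gives $(k/\varepsilon)^{O(1)} n^{O(\varepsilon/\varphi^2)} M\,\poly\log n$ bits. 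The stored $\mathcal{D}$ consists of $\Psi$, the sample sets, and the ordered partition $(T_1,\dots,T_b)$, all of that size as well.

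\textbf{Query time.} \textsc{HyperplanePartitioning} performs at most $k$ projected membership tests. Each test requires $O(k)$ evaluations of $\langle \vf_x,\Pi\widehat\mu\rangle_\textup{apx}$ and $\|\Pi\widehat\mu\|^2_\textup{apx}$, and by \Cref{coro:subspace-1} and \Cref{coro:subspace-2} each evaluation costs $b^4 (k/\xi)^{O(1)} n^{1+O(\varepsilon/\varphi^2)}\frac{\log^3 n}{M\varphi^2}$ time. With $b,\xi^{-1}=\poly(k/\varepsilon)$ this totals $\widetilde O_\varphi((k/\varepsilon)^{O(1)} n^{1+O(\varepsilon/\varphi^2)}/M)$.

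The main obstacle is the construction-time budget in \textsc{FindCenters}. It enumerates all $k$-partitions of $S$, which is roughly $k^{|S|}=2^{\poly(k/\varepsilon)}$ candidates. This inflates preprocessing time but not space, since candidates are processed one at a time and their workspace is discarded. I will therefore state the preprocessing cost as time $2^{\poly(k/\varepsilon)}n^{1+O(\varepsilon/\varphi^2)}/M$, consistent with the experiments footnote, and emphasize that the theorem only claims the space bound for building $\mathcal{D}$.
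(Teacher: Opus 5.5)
Your plan is essentially the paper's proof. Correctness is inherited from \citet{gluch2021spectral} by swapping in \textsc{InitOracle}/\textsc{QueryDot} and the subspace routines, and the space and query-time bounds come from the per-evaluation costs in \Cref{coro:subspace-1} and \Cref{coro:subspace-2} (with $b\le|S|=\poly(k/\varepsilon)$ and $\xi=10^{-6}\sqrt{\varepsilon}/\varphi$) times $\poly(k/\varepsilon)$ call counts; your reuse of workspace across sequential calls is, if anything, tighter than the paper's summation over calls.

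One slip is in your union-bound check. The total number of dot product calls in \textsc{FindCenters} is not polynomial in $k/\varepsilon$: it carries the $k^{|S|}=2^{\poly(k/\varepsilon)}$ enumeration factor you point out yourself. So a union bound over $n^{-100}$ failures per call needs $2^{\poly(k/\varepsilon)}\ll n^{99}$, or an argument with fixed per-vertex randomness that union-bounds over vertices rather than calls. The paper never addresses this point, since it defers correctness wholesale to \citet{gluch2021spectral}.
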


        \begin{proof}
            \textbf{Space and runtime.} In the preprocessing phase, as line $1$ of \textsc{FindCenters} (Alg. \ref{alg:gluch-find-centers}), it invokes \textsc{InitOracle}$(G,k,\xi,M)$ one time to get a matrix $\Psi$, which takes %
            $S_\textup{init}$ bits of space according to~\Cref{thrm:bounded-space-dot-product-oracle}. Then it samples $s=\frac{\varphi^2}{\varepsilon}k^4\log k$ vertices and tests all the possible $k$-partitions of the sample set. For each partition, it invokes Alg. \ref{alg:gluch-compute-ordered-partition} one time. Each run of Alg. \ref{alg:gluch-compute-ordered-partition} invokes Alg. \ref{alg:gluch-outer-conductance} $k\log k$ times. Each run of Alg. \ref{alg:gluch-outer-conductance} invokes Alg. \ref{alg:gluch-is-inside} $(s_1+s_2)$ times. Each run of Alg. \ref{alg:gluch-is-inside} computes $C_{\Pi\widehat{\mu},0.93}^{\textup{apx}}$ about $k^{O(1)}$ times, where $C_{\Pi\widehat{\mu},0.93}^{\textup{apx}}=\{x\in V,\frac{\langle\vf_x,\Pi\widehat{\mu} \rangle_\textup{apx}}{\lVert \Pi\widehat{\mu}\rVert^2_\textup{apx}}\ge 0.93\}$. According to ~\Cref{coro:subspace-1} and ~\Cref{coro:subspace-2}, computing $\frac{\langle\vf_x,\Pi\widehat{\mu} \rangle_\textup{apx}}{\lVert \Pi\widehat{\mu}\rVert^2_\textup{apx}}$ takes %
            $s^4\cdot (\frac{k\varphi}{\varepsilon})^{O(1)}\cdot n^{O(\varepsilon/\varphi^2)}\cdot M_\textup{query}\cdot \log^3 n$ bits of space, where we set $\xi =10^{-6}\cdot \frac{\sqrt{\varepsilon}}{\varphi}$. Therefore, Alg. \ref{alg:gluch-find-centers} uses $S_\textup{init}+k\log k\cdot (s_1+s_2)\cdot s^4\cdot (\frac{k\varphi}{\varepsilon})^{O(1)}\cdot n^{O(\varepsilon/\varphi^2)}\cdot M_\textup{query}\cdot \log^3 n$ bits of space. By setting $s_1\coloneqq\Theta\left(\frac{\varphi^2}{\varepsilon}k^5\log^2k\log (1/\eta)\right)$, $s_2\coloneqq\Theta\left(\frac{\varphi^4}{\varepsilon^2}k^5\log^2k\log (1/\eta)\right)$, $\eta= O(\log n)$ and $M_\textup{query}=M$, we get that Alg. \ref{alg:gluch-find-centers} %
            uses $(\frac{k\varphi}{\varepsilon})^{O(1)}\cdot n^{O(\varepsilon/\varphi^2)}\cdot M\cdot \poly(\log n)$ bits of space to get a matrix $\Psi$ and a collection of vertex sets $C$ that represents the cluster centers.
            
            In the query phase, \textsc{HyperplanePartitioning} (Alg. \ref{alg:gluch-hyperplane}) computes $C_{\Pi\widehat{\mu},0.93}^{\textup{apx}}$ about $k^{O(1)}$ times, where $C_{\Pi\widehat{\mu},0.93}^{\textup{apx}}=\{x\in V,\frac{\langle\vf_x,\Pi\widehat{\mu} \rangle_\textup{apx}}{\lVert \Pi\widehat{\mu}\rVert^2_\textup{apx}}\ge 0.93\}$. According to ~\Cref{coro:subspace-1} and ~\Cref{coro:subspace-2}, computing $\frac{\langle\vf_x,\Pi\widehat{\mu} \rangle_\textup{apx}}{\lVert \Pi\widehat{\mu}\rVert^2_\textup{apx}}$ takes 
            $s^4\cdot (\frac{k\varphi}{\varepsilon})^{O(1)}\cdot n^{O(\varepsilon/\varphi^2)}\cdot M\cdot \log^3 n$ bits of space and $s^4\cdot (\frac{k}{\varepsilon})^{O(1)}\cdot n^{1+O(\varepsilon/\varphi^2)}\cdot \frac{1}{M}\cdot \log^3 n\cdot \frac{1}{\varphi^2}$ time, where we set $\xi =10^{-6}\cdot \frac{\sqrt{\varepsilon}}{\varphi}$. By setting $s=\frac{\varphi^2}{\varepsilon}k^4\log k$, we get that Alg. \ref{alg:gluch-hyperplane} takes $(\frac{k\varphi}{\varepsilon})^{O(1)}\cdot n^{O(\varepsilon/\varphi^2)}\cdot M\cdot \poly(\log n)$ bits of space and $(\frac{k\varphi}{\varepsilon})^{O(1)}\cdot n^{1+O(\varepsilon/\varphi^2)}\cdot \frac{1}{M}\cdot \poly(\log n)$ time.

            Thus, the clustering oracle constructs a data structure $\mathcal{D}$ (including matrix $\Psi$, cluster centers $C$ and other information used by the query phase) using $(\frac{k\varphi}{\varepsilon})^{O(1)}\cdot n^{O(\varepsilon/\varphi^2)}\cdot M\cdot \poly(\log n)$ bits of space. Using $\mathcal{D}$, any \textsc{WhichCluster} query can be answered by Alg. \ref{alg:gluch-hyperplane} in $(\frac{k\varphi}{\varepsilon})^{O(1)}\cdot n^{1+O(\varepsilon/\varphi^2)}\cdot \frac{1}{M}\cdot \poly(\log n)$ time.

            \textbf{Correctness.} We highlight that the sublinear spectral clustering oracle is not our contribution. Note that the only modification we make to the clustering oracle is to replace the dot product oracle used in the original work \citep{gluch2021spectral} with our improved oracle. Since the correctness guarantees (i.e., conductance gap and misclassification error) of the clustering oracle rely on the properties of the dot product oracle, and our dot product oracle satisfies the same correctness guarantees with the previous one, the correctness of the overall clustering oracle follows directly from the correctness of the clustering oracle in \citet{gluch2021spectral}.
            
        \end{proof}

\section{Sublinear clustering oracle related to~\Cref{itm:main-case2} in~\Cref{thrm:main}}
    \label{appendix:section-poly(k)-oracle}
    In this section, we present the sublinear spectral clustering oracle with a $\poly(k)$ gap between inner and outer conductance, originally proposed in \citet{shen2024sublinear}, and adapt it by incorporating our dot product oracle, which operates with very little memory.

    Algorithm \ref{alg:shen-construct} first initializes our dot product oracle to get a matrix $\Psi$ (see line $5$). It then leverages our dot product oracle to estimate $\langle \vf_x,\vf_y\rangle$ for all pairs of vertices $x,y$ in the sample set $S$, which are subsequently used to construct a similarity graph $H$ (see lines $6\sim 9$).
    
    \begin{algorithm}[H]%
        \DontPrintSemicolon
        \caption{\textsc{ConstructOracle}($G, k, \varphi, \varepsilon, \gamma,M$)}
        \label{alg:shen-construct}
    
        Let $\xi=\frac{\sqrt{\gamma} }{1000}$ and let $s=\frac{10\cdot k\log k}{\gamma}$\;
        Let $\theta = 0.96(1-\frac{4\sqrt{\varepsilon}}{\varphi})\frac{\gamma k}{n}-\frac{\sqrt{k}}{n}(\frac{\varepsilon}{\varphi^2})^{1/6}-\frac{\xi}{n}$\;
        Sample a set S of $s$ vertices independently and uniformly at random from $V$\;
        Generate a similarity graph $H=(S, \emptyset)$\;
        Let $\Psi=$ \textsc{InitOracle}($G, k,\xi, M$)\;
        
        \For{any $u, v\in S$}{
            Let $\langle \vf_u,\vf_v\rangle_{\rm apx}=$ \textsc{QueryDot}($G, u, v, \xi, \Psi, M$)\;
            \If{$\langle \vf_u,\vf_v\rangle_{\rm apx}\ge \theta$}{
                Add an edge $(u, v)$ to the similarity graph $H$\;
            }
        }
        \eIf{$H$ has exactly $k$ connected components}{
            Label the connected components with $1,2,\dots,k$ (we write them as $S_1,\dots,S_k$)\;
            Label $x\in S$ with $i$ if $x\in S_i$\;
            Return $H$ and the vertex labeling $\ell$
        }{
            return $\textbf{fail}$
        }
    \end{algorithm}

    \begin{algorithm}[h]
        \DontPrintSemicolon
        \caption{\textsc{SearchIndex}($H,\ell,x,M$)}
        \label{alg:shen-search}
        
        \For{any vertex $u\in S$}{
            Let $\langle \vf_u,\vf_x\rangle_{\rm apx}=$ \textsc{QueryDot}($G, u, x, \xi, \Psi, M$)\;
        }
        \eIf{there exists a unique index $1\le i\le k$ such that $\langle \vf_u,\vf_x\rangle_{\rm apx}\ge \theta$ for all $u\in S_i$}{
            return index $i$
        }
        {
            return $\textbf{outlier}$
        }
    \end{algorithm}

    Algorithm \ref{alg:shen-which} corresponds to the query phase of the sublinear spectral clustering oracle, where it answers any \textsc{WhichCluster} query using matrix $\Psi$ and similarity graph $H$.

    \begin{algorithm}[H]
        \DontPrintSemicolon
        \caption{\textsc{WhichCluster}($G,x,M$)}
        \label{alg:shen-which}
     
        \If{preprocessing phase $\textbf{fails}$}{
            return $\textbf{fail}$
        }
        \eIf{\textsc{SearchIndex}($H,\ell,x,M$) return $\textbf{outlier}$}{
            return a random index$\in[k]$
        }
        {
            return \textsc{SearchIndex}($H,\ell,x,M$)
        }
    \end{algorithm}

\section{Deferred proofs of ~\Cref{thrm:distinguish}}\label{appendix:sec-distinguish}

    \begin{lemma}[Cheeger's inequality]
        \label{lemma:cheeger}
        In holds for any graph $G$ that
        \[
            \frac{\lambda_2}{2}\le \phi(G)\le \sqrt{2\lambda_2}
            .
        \]
    \end{lemma}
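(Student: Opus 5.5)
This is the classical Cheeger inequality. In the paper it can simply be cited; for completeness, here is the standard two-sided argument specialized to the $d$-regular setting, where $\mL=\mI-\mA/d$. The starting point is the variational characterization
\[
\lambda_2=\min_{\vx\perp \1,\ \vx\neq 0}\frac{\vx^T\mL\vx}{\vx^T\vx}=\min_{\vx\perp\1,\ \vx\ne 0}\frac{\sum_{(u,v)\in E}(\vx(u)-\vx(v))^2}{d\sum_{u}\vx(u)^2},
\]
which holds because $\vu_1$ is proportional to $\1$ for a regular graph.

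\textbf{Easy direction ($\lambda_2/2\le\phi(G)$).} Fix a set $C$ with $0<\vol(C)\le \vol(V)/2$ that attains $\phi(G)$, and take the test vector $\vx=\mathds{1}_C-\frac{|C|}{n}\1$, which is orthogonal to $\1$.
\begin{itemize}
\item The numerator equals $|E(C,V\setminus C)|$, since only cut edges contribute, each with difference $1$.
\item The denominator equals $d\,|C|\,|V\setminus C|/n$.
\end{itemize}
Because $|V\setminus C|\ge n/2$, the Rayleigh quotient is at most $2|E(C,V\setminus C)|/(d|C|)=2\phi_{\rm out}(C,V)=2\phi(G)$. Hence $\lambda_2\le 2\phi(G)$.

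\textbf{Hard direction ($\phi(G)\le\sqrt{2\lambda_2}$).} I will use the sweep-cut argument.
\begin{enumerate}
\item Let $\vu_2$ be the Fiedler vector, and shift it by its median $m$ so that the sets $\{\vu_2>m\}$ and $\{\vu_2<m\}$ each contain at most $n/2$ vertices.
\item Split the shifted vector into its positive and negative parts $\vy^+$ and $\vy^-$. Using $\mL\vu_2=\lambda_2\vu_2$ and $\vu_2\perp\1$, one of the two parts, say $\vy=\vy^+$, has Rayleigh quotient at most $\lambda_2$. Its support has size at most $n/2$.
\item Apply the co-area argument to $\vy$. Choose a threshold $\tau$ with $\tau^2$ uniform on $(0,\max_u\vy(u)^2]$, and set $S_\tau=\{u:\vy(u)^2\ge\tau^2\}$. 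Then $\E|E(S_\tau,V\setminus S_\tau)|\le \frac{1}{\max\vy^2}\sum_{(u,v)\in E}|\vy(u)^2-\vy(v)^2|$ and $\E[d|S_\tau|]=\frac{d\sum_u\vy(u)^2}{\max\vy^2}$.
\item Bound the numerator by Cauchy--Schwarz, writing $|\vy(u)^2-\vy(v)^2|=|\vy(u)-\vy(v)|\,|\vy(u)+\vy(v)|$ and using $\sum_{(u,v)\in E}(\vy(u)+\vy(v))^2\le 2d\sum_u\vy(u)^2$. This gives the ratio of expectations at most $\sqrt{2\lambda_2}$.
\item Some threshold $\tau$ therefore yields a set with $\vol\le\vol(V)/2$ and conductance at most $\sqrt{2\lambda_2}$.
\end{enumerate}

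\textbf{Main obstacle.} The only delicate step is step 2: checking that restricting to one side of the median does not increase the Rayleigh quotient beyond $\lambda_2$. I would handle it with the standard inequality $(\vy^+)^T\mL\vy^+\le\lambda_2\|\vy^+\|^2$. This follows from $(\mL\vy)(u)\le\lambda_2\vy(u)$ on the support of $\vy^+$, because the shift by $m$ changes $\mL\vu_2$ only through the term $\mL\1=0$. Combined with the choice of the sign side, this gives the required bound.

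For $d$-bounded graphs, the same argument applies to the self-loop-regularized graph described in \Cref{sec:prelimi}.
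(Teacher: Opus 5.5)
The paper gives no proof of this lemma. It invokes the classical Cheeger inequality as a black box and only ever applies it to $d$-regular expanders. Your write-up is the standard argument: the test vector $\mathds{1}_C-\frac{|C|}{n}\1$ for the easy direction, and a median shift plus sweep cut for the hard direction. It is a correct self-contained substitute, and your constants agree with the paper's normalization $\mL=\mI-\mA/d$, $\vol(C)=d|C|$.

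The one place to tighten is the justification in your ``main obstacle'' paragraph. Write $\bm{z}=\vu_2-m\1$. It is true that $\mL\bm{z}=\mL\vu_2$, but the eigen-equation then reads $\mL\bm{z}=\lambda_2\bm{z}+\lambda_2 m\1$, not $\lambda_2\bm{z}$.

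You also need the monotonicity fact that zeroing out negative entries can only decrease $(\mL\cdot)(u)$ at a vertex $u$ with $\bm{z}(u)>0$. Together these give $(\mL\vy^+)(u)\le\lambda_2\vy^+(u)+\lambda_2 m$ on the support of $\vy^+$. So your pointwise bound follows for $\vy^+$ when $m\le 0$, and the analogous bound follows for $\vy^-$ when $m\ge 0$. ``The choice of the sign side'' must therefore mean choosing by the sign of $m$. The chosen part is nonzero, because $\vu_2\perp\1$ rules out $\vu_2\le m\le 0$ (or $\vu_2\ge m\ge 0$) unless $\vu_2=0$.

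A cleaner route avoids pointwise bounds entirely:
\begin{itemize}
\item Since $\lVert\bm{z}\rVert_2^2=\lVert\vu_2\rVert_2^2+m^2n$, you have $\bm{z}^T\mL\bm{z}=\vu_2^T\mL\vu_2=\lambda_2\lVert\vu_2\rVert_2^2\le\lambda_2\lVert\bm{z}\rVert_2^2$.
\item Edge by edge, $\bm{z}^T\mL\bm{z}\ge(\vy^+)^T\mL\vy^++(\vy^-)^T\mL\vy^-$, using $(a+b)^2\ge a^2+b^2$ for $a,b\ge 0$ on edges crossing the median.
\item Also $\lVert\bm{z}\rVert_2^2=\lVert\vy^+\rVert_2^2+\lVert\vy^-\rVert_2^2$.
\end{itemize}
Hence the smaller of the two Rayleigh quotients is at most $\lambda_2$. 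With either fix, your steps 3--5 go through as written.
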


    Lemma~\ref{lemma:expander-M^t1_x} bounds the $\ell_2$-norm of the $t$-step random walk distribution starting from any vertex $x$ in a $d$-regular graph, distinguishing between the case where the graph is a single $\varphi$-expander and the case where it consists of two disjoint $\varphi$-expanders.
    
    \begin{lemma}[Expander related version of~\Cref{lemma:gluch_0}]
        \label{lemma:expander-M^t1_x}
        Let $\varphi\in (0,1)$. Let $G$ be a $d$-regular graph. Let $\mM$ be the random walk transition matrix of $G$. For any $t\ge \frac{20\log n}{\varphi^2}$ and any $x\in V$, 

        \begin{enumerate}[label=\arabic*]
            \item \label{itm:mt-case1} if $G$ is a $\varphi$-expander of size $n$, then $\lVert \mM^t\mathds{1}_x \rVert_2\le \sqrt{\frac{2}{n}}$,
            \item \label{itm:mt-case2} if $G$ is the disjoint union of two identical $\varphi$-expanders of size $n/2$, then $\lVert \mM^t\mathds{1}_x \rVert_2\le \sqrt{\frac{3}{n}}$.
        \end{enumerate}

        \begin{proof}
        \textbf{\Cref{itm:mt-case1}.} 
            Let $\mL$ be the normalized Laplacian matrix of $G$. Recall that we use $0= \lambda_1\le\dots\le\lambda_n\le 2$ to denote the eigenvalues of $\mL$ and we use $\vu_1,\dots,\vu_n$ to denote the corresponding eigenvectors, where $\vu_1,\dots,\vu_n$ form an orthonormal basis of $\mathbb{R}^n$ and $\vu_1(x)=\frac{1}{\sqrt{n}}$ for any $x\in V$. Note that $\mM=\mI-\frac{\mL}{2}$. Hence, the eigenvalues of $\mM$ are given by $1= 1-\frac{\lambda_1}{2}\ge \dots\ge 1-\frac{\lambda_n}{2}\ge 0$, and the corresponding eigenvectors are still $\vu_1,\dots,\vu_n$. For convenience, we relabel the eigenvalues of $\mM$ as $1=v_1(\mM)=(1-\frac{\lambda_1}{2})\ge v_2(\mM)=(1-\frac{\lambda_2}{2})\ge \dots\ge v_n(\mM)=(1-\frac{\lambda_n}{2})\ge 0$. Moreover, we can write that $\mathds{1}_x=\sum_{i=1}^n{\alpha_i\vu_i}$. Note that $\vu_j^T\mathds{1}_x=\sum_{i=1}^n{\alpha_i\vu_j^T\vu_i}=\alpha_j$. Therefore, $\alpha_j$ corresponds to $\vu_j^T\mathds{1}_x=\vu_j(x)$. Now, we have

            \[
                \mM^t\mathds{1}_x=\mM^t\sum_{i=1}^n{\alpha_i\vu_i}=\sum_{i=1}^n{\alpha_i\mM^t\vu_i}=\sum_{i=1}^n{\alpha_i\left(v_i(\mM)\right)^t\vu_i}.
            \]

            Thus, we have
            \begin{align*}
                \lVert \mM^t\mathds{1}_x\rVert_2^2&=(\mM^t\mathds{1}_x)^T(\mM^t\mathds{1}_x)=\sum_{i=1}^n{\alpha_i^2\left(v_i(\mM)\right)^{2t}}\\
                &=\alpha_1^2\left(v_1(\mM)\right)^{2t}+\sum_{i=2}^n{\alpha_i^2\left(v_i(\mM)\right)^{2t}}\\
                &\le \frac{1}{n}+\left(v_2(\mM)\right)^{2t}\cdot \sum_{i=2}^n{\alpha_i^2}\\
                &\le \frac{1}{n}+\left(v_2(\mM)\right)^{2t}\cdot (n-1).
            \end{align*}

            Since $G$ is a $\varphi$-expander, according to Cheeger's inequality (\Cref{lemma:cheeger}), we get that $\lambda_2\ge \frac{\varphi^2}{2}$. Therefore, for any $t\ge \frac{20\log n}{\varphi^2}$, we have
            \[
                v_2(\mM)^{2t}=\left(1-\frac{\lambda_2}{2}\right)^{2t}\le \left(1-\frac{\varphi^2}{4}\right)^{\frac{4}{\varphi^2}\cdot 10\log n}\le \frac{1}{n^{10}}.
            \]

            Combine above results together, we get that 
            \[
                \lVert \mM^t\mathds{1}_x\rVert_2^2\le \frac{1}{n}+\frac{1}{n^{10}}\cdot (n-1)=\frac{1}{n}+\frac{1}{n^9}\le \frac{2}{n}.
            \]

            \textbf{\Cref{itm:mt-case2}.} We use $C_1,C_2$ to denote the two $\varphi$-expanders in $G$. Since $C_1$ and $C_2$ are disconnected, the normalized Laplacian matrix $\mL$ of $G$ can be written in block-diagonal form as 
            \[
                \mL = \begin{pmatrix}
                \mL_{C_1} & 0 \\
                0   & \mL_{C_2}
                \end{pmatrix},
            \]
            where $\mL_{C_1}\in \mathbb{R}^{\frac{n}{2}\times \frac{n}{2}}$ and $\mL_{C_2}\in \mathbb{R}^{\frac{n}{2}\times \frac{n}{2}}$ are the normalized Laplacian matrix of $C_1$ and $C_2$, respectively. For $\mL_{C_i}$, we use $0=\lambda_1^{C_i}\le \dots\le \lambda_{n/2}^{C_i}\le 2$ to denote the eigenvalues of $\mL_{C_i}$ and we use $\vu_1^{C_i},\dots,\vu_{n/2}^{C_i}\in\mathbb{R}^{\frac{n}{2}\times \frac{n}{2}}$ to denote the corresponding eigenvectors, where $\vu_1^{C_i},\dots,\vu_{n/2}^{C_i}$ from an orthonormal basis of $\mathbb{R}^{\frac{n}{2}\times \frac{n}{2}}$ and $\vu_1^{C_i}(x)=\sqrt{\frac{2}{n}}$ for any $x\in V$. Therefore, the eigenvalues of $\mL$ are given by $0=\lambda_1\le \dots\le \lambda_{n/2}\le 2$, each of which has multiplicity two, where $\lambda_i=\lambda_i^{C_1}=\lambda_i^{C_2}$. For $\lambda_i$, we use $\vu_{2i-1},\vu_{2i}\in\mathbb{R}^n$ to denote the corresponding eigenvectors, where $\vu_{2i-1}=((\vu_i^{C_1})^T,0,\dots,0)^T$ and $\vu_{2i}=(0,\dots,0,(\vu_i^{C_2})^T)^T$. Note that $\mM=\mI-\frac{\mL}{2}$. Hence, the eigenvalues of $\mM$ are given by $1= 1-\frac{\lambda_1}{2}\ge \dots\ge 1-\frac{\lambda_{n/2}}{2}\ge 0$, each of which has multiplicity two, and the corresponding eigenvectors are still $\vu_1,\dots,\vu_n$. For convenience, we relabel the eigenvalues of $\mM$ as $1=v_1(\mM)=v_2(\mM)=(1-\frac{\lambda_1}{2})\ge v_3(\mM)=v_4(\mM)=(1-\frac{\lambda_2}{2})\ge \dots\ge v_{n-1}(\mM)=v_n(\mM)=(1-\frac{\lambda_{n/2}}{2})\ge 0$. %

            Similar to the proof of item $1$, we get
            \begin{align*}
                \lVert \mM^t\mathds{1}_x\rVert_2^2&=(\mM^t\mathds{1}_x)^T(\mM^t\mathds{1}_x)=\sum_{i=1}^n{\alpha_i^2\left(v_i(\mM)\right)^{2t}}\\
                &=\alpha_1^2+\alpha_2^2+\sum_{i=3}^n{\alpha_i^2\left(v_i(\mM)\right)^{2t}}\\
                &\le \frac{2}{n}+\left(v_3(\mM)\right)^{2t}\cdot \sum_{i=3}^n{\alpha_i^2}\\
                &\le \frac{2}{n}+\left(v_3(\mM)\right)^{2t}\cdot (n-2).
            \end{align*}

            Since $C_1$ and $C_2$ both are $\varphi$-expander, according to Cheeger's inequality (\Cref{lemma:cheeger}), we get that $\lambda_2^{C_1}=\lambda_2^{C_2}\ge \frac{\varphi^2}{2}$. Therefore, for any $t\ge \frac{20\log n}{\varphi^2}$, we have
            \[
                (v_3(\mM))^{2t}=\left(1-\frac{\lambda_2}{2}\right)^{2t}=\left(1-\frac{\lambda_2^{C_1}}{2}\right)^{2t}\le \left(1-\frac{\varphi^2}{4}\right)^{\frac{4}{\varphi^2}\cdot 10\log n}\le \frac{1}{n^{10}}.
            \]

            Combine above results together, we get that 
            \[
                \lVert \mM^t\mathds{1}_x\rVert_2^2\le \frac{2}{n}+\frac{1}{n^{10}}\cdot (n-2)=\frac{2}{n}+\frac{1}{n^9}\le \frac{3}{n}.
            \]
        \end{proof}
    \end{lemma}

    The following lemma shows that, under appropriate parameters, Alg. \ref{algo:bounded-space-esti-rw-dot} can estimate the dot product of the random walk distributions from any two vertices up to $\sigma_\textup{err}$, whether the graph is a single $\varphi$-expander or consists of two disjoint $\varphi$-expanders.
    
    \begin{lemma}[Expander related version of~\Cref{lemma:bounded-space-z}]
        \label{lemma:expander-bounded-space-z}

        Let $\varphi\in(0,1)$. Let $G=(V,E)$ be either a $d$-regular $\varphi$-expander with size $n$ or the disjoint union of two identical $d$-regular $\varphi$-expander of size $n/2$. Let $\mM$ be the random walk transition matrix of $G$. Let $Z$ be the output of \textsc{EstRWDot}$(G,R,t,M,x,y)$ (Alg. \ref{algo:bounded-space-esti-rw-dot}). Let $\sigma_\textup{err}>0$. Let $c>1$ be a large enough constant. For any $t\ge \frac{20\log n}{\varphi^2}$ and any $x,y\in V$, if $R\ge  \frac{c\cdot n^{-1}}{\sigma_\textup{err}^2M}$ and $1\le M\le O(n^{1/2})$, then with probability at least $0.99$, we have
        \[
            |Z-\langle \mM^t\mathds{1}_x,\mM^t\mathds{1}_y\rangle|\le \sigma_\textup{err}.
        \]

        Moreover, \textsc{EstRWDot}$(G,R,t,M,x,y)$ runs in $O(Rt)$ time and uses $O(M\cdot \log n)$ bits of space.
        
        \begin{proof}
            \textbf{Runtime and space}. See the proof of~\Cref{lemma:bounded-space-z}.
            
            \textbf{Correctness.} 
            
            By~\Cref{lemma:bounded-space-z-E-V} and~\Cref{lemma:expander-M^t1_x}, we can get that 
            
            \begin{align*}
                \Var[Z]&\le \frac{1}{R}\left[\frac{1}{M}\lVert \mM^t\mathds{1}_x\rVert_2\cdot \lVert \mM^t\mathds{1}_y\rVert_2+\left(\lVert \mM^t\mathds{1}_x\rVert_2 \cdot\lVert \mM^t\mathds{1}_y\rVert_2^2+\lVert \mM^t\mathds{1}_x\rVert_2^2 \cdot\lVert \mM^t\mathds{1}_y\rVert_2\right)\right]\\
                &=\frac{1}{R}\left(\frac{O(n^{-1})}{M}+O(n^{-3/2})\right).\\
            \end{align*}

            Using Chebyshev's inequality, we have
        
            \begin{align*}
                \Pr[|Z-\langle \mM^t\mathds{1}_x,\mM^t\mathds{1}_y\rangle|\ge \sigma_\textup{err}]&=\Pr[|Z-\E[Z]|\ge \sigma_\textup{err}]\\
                &\le \frac{\Var[Z]}{\sigma_\textup{err}^2}\\
                &\le \frac{1}{\sigma_\textup{err}^2}\cdot \frac{1}{R}\left(\frac{O(n^{-1})}{M}+O(n^{-3/2})\right)\\
                &\le \frac{1}{\sigma_\textup{err}^2}\cdot \frac{1}{R}\cdot O\left(\frac{n^{-1}}{M}\right) &&\text{$M\le O\left(n^{1/2}\right)$}\\
                &\le \frac{1}{100}.
            \end{align*}
            The last inequality holds by our choice of $R$ as follows, where $c$ is a large enough constant that cancels the constant hidden in $O\left(\frac{n^{-1}}{M}\right)$:
            \[
                R\ge \frac{c\cdot n^{-1}}{\sigma_\textup{err}^2M}.
            \]

        \end{proof}
    \end{lemma}

    \Cref{lemma:expander-bounded-space-G} asserts that, under suitable parameters, the output $\mathcal{G}$ of \textsc{EstColliProb} (Alg.~\ref{algo:bounded-space-esti-colli-prob}) approximates $(\mM^t \mS)^T (\mM^t \mS)$ in spectral norm, where the latter is the Gram matrix of the random walk distributions from sampled vertices, and this holds whether the graph is a single $\varphi$-expander or two disjoint $\varphi$-expanders.
    
    \begin{lemma}[Expander related version of~\Cref{lemma:bounded-space-G}, restatement of~\Cref{lemma:expander-bounded-space-G}]
        Let $\varphi\in(0,1)$. Let $G=(V,E)$ be either a $d$-regular $\varphi$-expander with size $n$ or the disjoint union of two identical $d$-regular $\varphi$-expander of size $n/2$. Let $\mM$ be the random walk transition matrix of $G$. Let $I_S=\{s_1,\dots,s_s\}$ be a multiset of $s$ indices chosen from $\{1,\dots,n\}$. Let $\mS\in\mathbb{R}^{n\times s}$ be the matrix whose $i$-th column equals $\mathds{1}_{s_i}$. Let $\mathcal{G}\in \mathbb{R}^{s\times s}$ be the output of \textsc{EstColliProb} $(G,R,t,M,I_S)$ (Alg. \ref{algo:bounded-space-esti-colli-prob}). Let $\sigma_\textup{err}>0$. Let $c>1$ be a large enough constant. For any $t\ge \frac{20\log n}{\varphi^2}$, if $R\ge \frac{c\cdot n^{-1}}{\sigma_\textup{err}^2M}$ and $1\le M\le O\left(n^{1/2}\right)$, then weith probability $1-n^{-100}$, we have
        \[
            \lVert \mathcal{G}-(\mM^t\mS)^T(\mM^t\mS)\rVert_2\le s\cdot\sigma_\textup{err}.
        \]

        Moreover, \textsc{EstColliProb} $(G,R,t,M,I_S)$ runs in $O(Rt\cdot \log n\cdot s^2)$ time and uses $O(M\cdot \log^2 n\cdot s^2)$ bits of space.

        \begin{proof}
            Note that we have established~\Cref{lemma:expander-bounded-space-z}, which is an analogue of~\Cref{lemma:bounded-space-z} for graph that is either a $\varphi$-expander of size $n$ or the disjoint union of two identical $\varphi$-expanders of size $n/2$. Since the proof of~\Cref{lemma:bounded-space-G} relies only on~\Cref{lemma:bounded-space-z}, the same augment immediately yields~\Cref{lemma:expander-bounded-space-G}, the corresponding analogue of~\Cref{lemma:bounded-space-G}. 
        \end{proof}
    \end{lemma}

    \Cref{lemma:expander-v_k(M^tS)} demonstrates that $(\mM^t \mS)(\mM^t \mS)^T$ has a clear spectral gap between the $1$-cluster and $2$-cluster cases.
    
    \begin{lemma}[Expander related version of~\Cref{lemma:gluch_2}]
        \label{lemma:expander-v_k(M^tS)}
        Let $\varphi\in (0,1)$. Let $G$ be a $d$-regular graph. Let $\mM$ be the random walk transition matrix of $G$. Let $I_S=\{s_1,\dots,s_s\}$ be a multiset of $s$ indices chosen independently and uniformly at random form $V=\{1,\dots,n\}$. Let $\mS\in\mathbb{R}^{n\times s}$ be the matrix whose $i$-th column equals $\mathds{1}_{s_i}$. For any $t\ge \frac{20\log n}{\varphi^2}$, with probability at least $1-n^{-100}$, we have
        
        \begin{enumerate}[label=\arabic*]
            \item \label{itm:v2-mts-case1} if $G$ is a $\varphi$-expander of size $n$ and $s\ge 1$, then $v_{2}\left(\frac{n}{s}\cdot (\mM^t\mS)(\mM^t\mS)^T\right)\le n^{-9}$,

            \item \label{itm:v2-mts-case2} if $G$ is the disjoint union of two identical $\varphi$-expanders of size $n/2$ and $s\ge c\cdot \log n$, where $c>1$ is a large enough constant, then $v_{2}\left(\frac{n}{s}\cdot (\mM^t\mS)(\mM^t\mS)^T\right)\ge 0.99$.
        \end{enumerate}
    \end{lemma}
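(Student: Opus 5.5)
The plan is to work with the singular values of $\mM^t\mS$, using $v_2\left(\frac{n}{s}(\mM^t\mS)(\mM^t\mS)^T\right)=\frac{n}{s}\,\sigma_2(\mM^t\mS)^2$. In each case I will split $\mM^t$ into an exact low-rank "stationary" part $P$ plus an error $E$. The bound $\lVert E\rVert_2\le n^{-5}$ comes from the same Cheeger computation as in the proof of \Cref{lemma:expander-M^t1_x}: every eigenvalue of $\mM$ outside the top block is at most $1-\varphi^2/4$, so its $t$-th power is at most $n^{-5}$ for $t\ge \frac{20\log n}{\varphi^2}$. I will also use the crude bound $\lVert \mS\rVert_2\le \sqrt{s}$, which holds because $\mS^T\mS$ is diagonal with entries at most $s$ (multiplicities in the multiset). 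Weyl's inequality for singular values (the analogue of \Cref{lemma:weyl}) then gives $|\sigma_i(\mM^t\mS)-\sigma_i(P\mS)|\le \lVert E\mS\rVert_2\le n^{-5}\sqrt{s}$.

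\textbf{Case 1 (single expander).} Take $P=\frac1n\mathbf{1}\mathbf{1}^T$ and $E=\sum_{i\ge2}v_i(\mM)^t\vu_i\vu_i^T$. Since $P\mS=\frac1n\mathbf{1}\mathbf{1}_s^T$ has rank one, $\sigma_2(P\mS)=0$, and hence $\sigma_2(\mM^t\mS)\le n^{-5}\sqrt{s}$. Therefore
\[
v_2\le \frac{n}{s}\cdot n^{-10}s=n^{-9}.
\]
This holds deterministically for every $s\ge1$, so no probability is consumed in this case.

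\textbf{Case 2 (two disjoint expanders).} Let $C_1,C_2$ be the two components. Take $P=\frac2n(\mathbf{1}_{C_1}\mathbf{1}_{C_1}^T+\mathbf{1}_{C_2}\mathbf{1}_{C_2}^T)$, with $E$ the remaining spectral part, which is bounded through $v_3(\mM)^t\le n^{-5}$. Let $s_j$ be the number of samples that land in $C_j$. Then $(P\mS)(P\mS)^T=\frac{4}{n^2}\left(s_1\mathbf{1}_{C_1}\mathbf{1}_{C_1}^T+s_2\mathbf{1}_{C_2}\mathbf{1}_{C_2}^T\right)$, whose nonzero eigenvalues are $\frac{2s_1}{n}$ and $\frac{2s_2}{n}$. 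After scaling by $\frac ns$, these become $\frac{2s_1}{s}$ and $\frac{2s_2}{s}$, so
\[
\frac ns\,\sigma_2(P\mS)^2=\frac{2\min(s_1,s_2)}{s}.
\]
Each $s_j\sim\mathrm{Binomial}(s,1/2)$. A Chernoff bound gives $\min(s_1,s_2)\ge(\frac12-0.002)s$ with probability at least $1-n^{-100}$, provided $s\ge c\log n$ for a large enough constant $c$. On that event $\frac ns\,\sigma_2(P\mS)^2\ge 0.996$.

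The perturbation then changes $\sigma_2$ by at most $n^{-5}\sqrt s$. Consequently
\[
\frac ns\,\sigma_2(\mM^t\mS)^2\ge \frac ns\left(\sqrt{\tfrac{0.996\,s}{n}}-n^{-5}\sqrt s\right)^2=\left(\sqrt{0.996}-n^{-4.5}\right)^2\ge 0.99
\]
for large $n$. The only step that really needs care is this concentration step. The constant $c$ must be large enough both for the deviation $0.002$ and for the failure probability $n^{-100}$. Everything else is an exact spectral computation plus Weyl's inequality.
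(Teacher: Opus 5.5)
Your proof is correct, but it takes a different route from the paper in Case 2. In Case 1 yours is self-contained, whereas the paper simply defers to item 2 of Lemma 28 in \citet{gluch2021spectral}.

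For the two-cluster case the paper never decomposes $\mM^t$. It sets $A=\mM^{2t}$ and notes $v_2(A)=1$. It then applies the matrix sampling bound of \Cref{lemma:gluch-21}, with $b\le\lVert\mM^t\mathds{1}_x\rVert_2^2\le 3/n$ from \Cref{lemma:expander-M^t1_x} and $\xi=1/100$. This gives $\lVert A-\frac{n}{s}(\mM^t\mS)(\mM^t\mS)^T\rVert_2\le 1/100$ once $s\ge c\log n$, and Weyl's inequality finishes.

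You instead split off the exact rank-two projection onto the cluster indicators. You compute the spectrum of $(P\mS)(P\mS)^T$ explicitly as $2s_1/n,\,2s_2/n$, so all the randomness reduces to a scalar Chernoff bound on $\mathrm{Binomial}(s,1/2)$. The remainder is handled deterministically via $\lVert E\mS\rVert_2\le n^{-5}\sqrt{s}$ and Weyl for singular values.

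What each route buys:
\begin{itemize}
\item Your argument is more elementary, since it needs no matrix concentration. It also shows exactly why the value is near $1$: up to negligible error it equals $2\min(s_1,s_2)/s$.
\item The paper's argument is more generic. It needs only an $\ell_2$ bound on the walk distributions and the value $v_2(\mM^{2t})=1$, without identifying the top eigenvectors, and it reuses the machinery from the $k$-cluster analysis.
\end{itemize}

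Two small fixes:
\begin{itemize}
\item It is $\mS\mS^T$ (the $n\times n$ diagonal matrix of multiplicities) that is diagonal, not $\mS^T\mS$. The bound $\lVert\mS\rVert_2\le\sqrt{s}$ is unaffected, e.g.\ via $\lVert\mS\rVert_2\le\lVert\mS\rVert_F=\sqrt{s}$.
\item Bounding $\lVert E\rVert_2$ by $(1-\varphi^2/4)^t$ uses that the lazy walk matrix has nonnegative eigenvalues, so the lower end of the spectrum causes no trouble. You should say this explicitly.
\end{itemize}
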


    To prove~\Cref{lemma:expander-v_k(M^tS)}, we need the following lemma.

    \begin{lemma}[Lemma 21 in \citet{gluch2021spectral}]
        \label{lemma:gluch-21}
        Let $\mA\in \mathbb{R}^{n\times n}$ be a matrix. Let $b=\max_{\ell \in\{1,\dots,n\}}{\lVert (\mA\mathds{1}_\ell)(\mA\mathds{1}_\ell)^T\rVert_2}$. Let $0<\xi<1$. Let $s\ge \frac{40n^2 b^2\log n}{\xi^2}$. Let $I_S=\{s_1,\dots,s_s\}$ be a multiset of $s$ indices chosen independently and uniformly at random form $V=\{1,\dots,n\}$. Let $\mS\in\mathbb{R}^{n\times s}$ be the matrix whose $i$-th column equals $\mathds{1}_{s_i}$. Then we have
        \[
            \Pr\left[\lVert \mA\mA^T-\frac{n}{s}(\mA\mS)(\mA\mS)^T\rVert_2\ge \xi\right]\le n^{-100}.
        \]
    \end{lemma}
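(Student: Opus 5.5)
The plan is to write $\frac{n}{s}(\mA\mS)(\mA\mS)^T$ as an empirical average of i.i.d.\ rank-one random matrices whose mean is exactly $\mA\mA^T$, and then apply a matrix concentration inequality (matrix Bernstein, or matrix Hoeffding/Ahlswede--Winter).

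\textbf{Step 1: the decomposition.} Since the $i$-th column of $\mS$ is $\mathds{1}_{s_i}$, we have $(\mA\mS)(\mA\mS)^T=\sum_{i=1}^s (\mA\mathds{1}_{s_i})(\mA\mathds{1}_{s_i})^T$. Define the i.i.d.\ symmetric PSD matrices
\[
X_i=n\,(\mA\mathds{1}_{s_i})(\mA\mathds{1}_{s_i})^T ,\qquad i=1,\dots,s.
\]
Because $s_i$ is uniform on $[n]$,
\[
\E[X_i]=\sum_{\ell=1}^n(\mA\mathds{1}_\ell)(\mA\mathds{1}_\ell)^T=\mA\mA^T,
\]
and $\frac{n}{s}(\mA\mS)(\mA\mS)^T=\frac1s\sum_{i=1}^s X_i$. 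So the event to control is $\lVert\sum_{i=1}^s Y_i\rVert_2\ge s\xi$, where $Y_i=X_i-\mA\mA^T$ are i.i.d., symmetric and mean zero.

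\textbf{Step 2: the uniform bound and the variance.} By the definition of $b$, $\lVert X_i\rVert_2\le nb$ almost surely. By the triangle inequality, $\lVert\mA\mA^T\rVert_2\le\sum_\ell\lVert(\mA\mathds{1}_\ell)(\mA\mathds{1}_\ell)^T\rVert_2\le nb$, so $\lVert Y_i\rVert_2\le 2nb$. For the variance, $X_i\succeq 0$ gives $X_i^2\preceq \lVert X_i\rVert_2X_i\preceq nb\,X_i$, hence
\[
\E[Y_i^2]=\E[X_i^2]-(\mA\mA^T)^2\preceq nb\,\mA\mA^T,
\]
so $\lVert\E[Y_i^2]\rVert_2\le n^2b^2$ and $\sigma^2=\lVert\sum_i\E[Y_i^2]\rVert_2\le s\,n^2b^2$.

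\textbf{Step 3: apply matrix Bernstein.} Applying matrix Bernstein for $n\times n$ matrices with $t=s\xi$ gives
\[
\Pr\Bigl[\bigl\lVert \textstyle\sum_i Y_i\bigr\rVert_2\ge s\xi\Bigr]\le 2n\exp\!\left(-\frac{s\xi^2/2}{n^2b^2+2nb\xi/3}\right).
\]
Substituting $s\ge 40n^2b^2\log n/\xi^2$ should drive this below $n^{-100}$.

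\textbf{Main obstacle: constants.} The bottleneck is purely the constant bookkeeping in Step 3. First, the linear term $2nb\xi/3$ is only dominated by $n^2b^2$ when $nb\gtrsim\xi$. If $nb$ is smaller, then $\lVert\mA\mA^T\rVert_2\le nb$ and $\lVert\frac1s\sum X_i\rVert_2\le nb$. In that regime I would use the bounded-difference form of matrix Hoeffding with $\lVert Y_i\rVert_2\le 2nb$, or simply observe that the deviation is trivially at most $2nb$. Second, the literal constant $40$ together with the target probability $n^{-100}$ need not come out of the standard Bernstein/Hoeffding constants directly. I would first try Tropp's Bernstein form above. If the exponent falls short (for instance, only $\approx 10\log n$), I would note that the result is Lemma 21 of \citet{gluch2021spectral} and track their normalization of $\log$ and constants, or equivalently enlarge the constant in front of $s$; that is harmless for every use in the paper, since $s$ is always chosen with a large unspecified constant.
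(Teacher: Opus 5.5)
The paper gives no proof of this lemma; it is quoted from \citet{gluch2021spectral}, so there is no in-paper argument to compare with. Your route is the standard proof of such sampling bounds: write $\frac ns(\mA\mS)(\mA\mS)^T$ as an average of i.i.d.\ rank-one matrices $X_i$ with mean $\mA\mA^T$, control $\lVert Y_i\rVert_2$ and the variance via $X_i^2\preceq nb\,X_i$, then apply matrix Bernstein. Each estimate you wrote is correct.

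Combined with your case split, it proves the lemma with $40$ replaced by a large enough absolute constant (roughly $500$). For $nb<\xi/2$ the event is impossible. For $nb\ge\xi/2$ the Bernstein denominator is at most $\tfrac73 n^2b^2$. You could sharpen to $\lVert Y_i\rVert_2\le nb$, since $Y_i$ is a difference of two PSD matrices of norm at most $nb$. You could also use $\lVert\E[Y_i^2]\rVert_2\le\max_{\lambda\in[0,nb]}(nb\lambda-\lambda^2)=n^2b^2/4$. But this only raises the exponent to $80\log n/(1+\frac{4\xi}{3nb})$.

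Your suspicion about the constants is justified, and tracking the original normalization will not repair it: with $\log=\ln$ the statement as written is false. Here is a counterexample.
\begin{itemize}
\item Take $n$ even. Let the first row of $\mA$ be $a^T$, where $a\in\{0,1\}^n$ has exactly $n/2$ ones; all other rows are zero. Set $\xi=\tfrac12$.
\item Then $b=1$ and $\mA\mA^T=\tfrac n2\mathds{1}_1\mathds{1}_1^T$.
\item Also $\frac ns(\mA\mS)(\mA\mS)^T=\frac{nN}{s}\mathds{1}_1\mathds{1}_1^T$ with $N\sim\mathrm{Bin}(s,\tfrac12)$.
\item For $s=\lceil 160n^2\ln n\rceil$, the bad event is $|N-\frac s2|\ge\frac{s}{2n}$, a deviation of $\sqrt{160\ln n}$ standard deviations.
\item Since $s=\Theta(n^2\log n)$, this is a moderate deviation. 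Its probability is $\Theta(n^{-80}/\sqrt{\log n})$, far above $n^{-100}$.
\end{itemize}
In base $2$ this example only gives about $n^{-115}$. Even so, Bernstein cannot certify $n^{-100}$ once $nb$ is comparable to $\xi$.

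So enlarging the constant is necessary, not merely convenient. It costs nothing in the paper. The lemma's only use there is item~2 of the two-expander spectral-gap lemma, with $nb\le 3$, $\xi=1/100$ and $s=c\log n$ for an unspecified large $c$.
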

    
    \begin{proof}[Proof of~\Cref{lemma:expander-v_k(M^tS)}]
        \textbf{\Cref{itm:v2-mts-case1}.} 
        The proof follows directly from the proof of item $2$ of Lemma 28 in \citet{gluch2021spectral}.

        \textbf{\Cref{itm:v2-mts-case2}.} Let $A=(\mM^t)(\mM^t)^T=\mM^{2t}$, we get $v_2(A)=v_2(\mM)^{2t}$. Since $G$ is the disjoint union of two identical $\varphi$-expanders, $G$ has two connected components. Therefore, the normalized Laplacian matrix $\mL$ of $G$ has two smallest eigenvalues equal to $0$. Consequently, since $\mM=I-\frac{\mL}{2}$, the two largest eigenvalues of $\mM$ are $1-\frac{0}{2}=1$. Thus, $v_2(A)=1$. 
        
        Let $\widetilde{A}=\frac{n}{s}\cdot (\mM^t\mS)(\mM^t\mS)^T$. By~\Cref{itm:mt-case2} in~\Cref{lemma:expander-M^t1_x}, we have $b=\lVert (\mM^t\mathds{1}_x)(\mM^t\mathds{1}_x)^T\rVert_2\le \lVert \mM^t\mathds{1}_x\rVert_2^2\le \frac{3}{n}$. Let $\xi=\frac{1}{100}$. Therefore, for a large enough constant $c>1$, we have $s=c\cdot \log n\ge \frac{40n^2b^2\log n}{(\frac{1}{100})^2}$. Thus, according to~\Cref{lemma:gluch-21}, we get that with probability at least $1-n^{-100}$,
        \[
            \lVert A-\widetilde{A}\rVert_2\le \frac{1}{100}.
        \]

        By Weyl's inequality (\Cref{lemma:weyl}), we get that $v_2(\widetilde{A})\ge v_2(A)-\lVert\widetilde{A}\rVert_2\ge 1-\frac{1}{100}=0.99$.
        \end{proof}

    The proof of~\Cref{lemma:expander-bounded-space-W} follows directly from the proof of Lemma 24 in \citet{gluch2021spectral}. Nevertheless, for the sake of completeness, we provide a concise proof here. 
    
    \begin{lemma}[Expander related version of~\Cref{lemma:bounded-space-W}, restatement of~\Cref{lemma:expander-bounded-space-W}]
        Let $\varphi\in(0,1)$. Let $G=(V,E)$ be a $d$-regular graph. Let $I_S=\{s_1,\dots,s_s\}$ be a multiset of $s$ indices chosen independently and uniformly at random form $V=\{1,\dots,n\}$. Let $\mathcal{G}\in \mathbb{R}^{s\times s}$ be the output of \textsc{EstColliProb} $(G,R,t,M,I_S)$ (Alg. \ref{algo:bounded-space-esti-colli-prob}). Let $c_1>1$ be a large enough constant. For any $t\ge \frac{20\log n}{\varphi^2}$, if $R\ge \frac{c_1\cdot n}{M}$ and $1\le M\le O\left(n^{1/2}\right)$, then with probability at least $1-2\cdot n^{-100}$, 
        
        \begin{enumerate}[label=\arabic*]
            \item \label{itm:v2G-case1} if $G$ is a $\varphi$-expander of size $n$ and $s\ge 1$, then $v_2\left(\left(\frac{n}{s}\mathcal{G}\right)^2\right)=\left(v_2(\frac{n}{s}\mathcal{G})\right)^2<0.001$,

            \item \label{itm:v2G-case2} if $G$ is the disjoint union of two identical $\varphi$-expanders of size $n/2$ and $s\ge c_2\cdot \log n$, where $c_2>1$ is a large enough constant, then $v_2\left(\left(\frac{n}{s}\mathcal{G}\right)^2\right)=\left(v_2(\frac{n}{s}\mathcal{G})\right)^2>0.95$.
        \end{enumerate}
    \end{lemma}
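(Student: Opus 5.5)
The plan is to compare $\widehat{A}=\frac{n}{s}\mathcal{G}$ with $\widetilde{A}=\frac{n}{s}(\mM^t\mS)^T(\mM^t\mS)$. The nonzero spectrum of $\widetilde{A}$ coincides with that of $\frac{n}{s}(\mM^t\mS)(\mM^t\mS)^T$, so \Cref{lemma:expander-v_k(M^tS)} controls $v_2(\widetilde{A})$. Weyl's inequality (\Cref{lemma:weyl}) then transfers this to $v_2(\widehat{A})$ once $\lVert\widehat{A}-\widetilde{A}\rVert_2$ is a small constant.

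\textbf{Step 1 (the perturbation bound).} Apply \Cref{lemma:expander-bounded-space-G} with $\sigma_\textup{err}=\frac{c_0}{n}$ for a small constant $c_0$. Its requirement reads $R\ge \frac{c\,n^{-1}}{\sigma_\textup{err}^2 M}=\frac{c\,n}{c_0^2M}$, and this is guaranteed by $R\ge \frac{c_1 n}{M}$ for $c_1$ large. With probability $1-n^{-100}$ this gives
\[
\lVert\widehat{A}-\widetilde{A}\rVert_2\le \frac{n}{s}\cdot s\sigma_\textup{err}=c_0 .
\]

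\textbf{Step 2 (Case 1).} By \Cref{lemma:expander-v_k(M^tS)}, $v_2(\widetilde{A})\le n^{-9}$, so $v_2(\widehat{A})\le n^{-9}+c_0$. For the square we also need a lower bound, $v_2(\widehat{A})\ge v_2(\widetilde{A})-c_0\ge -c_0$, because $\widetilde{A}$ is PSD. Hence $|v_2(\widehat{A})|\le 2c_0$ and $(v_2(\widehat{A}))^2<0.001$ once $c_0\le 0.01$.

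\textbf{Step 3 (Case 2).} Here $v_2(\widetilde{A})\ge 0.99$, so $v_2(\widehat{A})\ge 0.99-c_0\ge 0.98$, and its square exceeds $0.95$. A union bound over \Cref{lemma:expander-bounded-space-G} and \Cref{lemma:expander-v_k(M^tS)} gives probability $1-2n^{-100}$.

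\textbf{Main obstacle: the identity $v_2(\widehat{A}^2)=(v_2(\widehat{A}))^2$.} The eigenvalues of $\widehat{A}^2$ are the squares of those of $\widehat{A}$, but order can change if $\widehat{A}$ has negative eigenvalues. Any negative eigenvalue of $\widehat{A}$ is at least $-c_0$ by Weyl, since $\widetilde{A}\succeq 0$.
\begin{itemize}
\item In Case 2 the top two eigenvalues of $\widehat{A}$ are at least $0.98>c_0$, so they remain the top two after squaring. The identity holds and the value exceeds $0.95$.
\item In Case 1, $\widehat{A}$ has at most one eigenvalue larger than $2c_0$, namely $v_1\approx 1$. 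Every other eigenvalue lies in $[-c_0,2c_0]$ and squares to at most $4c_0^2$. So $v_2(\widehat{A}^2)$ is at most $4c_0^2<0.001$, and the same holds for $(v_2(\widehat{A}))^2$.
\end{itemize}
Stating the bound for both quantities is consistent with the claimed equality in the regime that matters. If a literal equality is required, I would instead argue via absolute values.
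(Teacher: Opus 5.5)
Your proof is correct, but it is organized differently from the paper's. The paper works with squared matrices throughout, mirroring \Cref{lemma:bounded-space-W}. It bounds $\lVert\widetilde{A}^2-\widehat{A}^2\rVert_2$ via $\lVert X^2-Y^2\rVert_2\le\lVert X-Y\rVert_2^2+2\lVert X-Y\rVert_2\lVert Y\rVert_2$. This additionally needs the Gram-matrix estimate $\lVert(\mM^t\mS)^T(\mM^t\mS)\rVert_2\le 2s/n$ (resp.\ $3s/n$), which comes from the $\ell_2$-bound on walk distributions. It then applies Weyl to $\widetilde{A}^2$ versus $\widehat{A}^2$ and simply asserts $v_2(\widehat{A}^2)=(v_2(\widehat{A}))^2$.

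You instead apply \Cref{lemma:weyl} to $\widehat{A}$ versus $\widetilde{A}$ directly, which removes the need for the Gram-matrix norm bound. You also use $\widetilde{A}\succeq 0$ to get the floor $v_s(\widehat{A})\ge -c_0$. That floor is exactly what the paper's argument is missing, since $\mathcal{G}$ need not be PSD. In Case 2, the bound $v_2(\widehat{A}^2)>0.95$ alone does not yield $(v_2(\widehat{A}))^2>0.95$: a spectrum $1,0,\dots,0,-1$ satisfies the first but not the second.

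In Case 1 your two-sided treatment is the right one, because the stated equality can genuinely fail there. Take $v_2(\widehat{A})$ small and positive and $v_s(\widehat{A})$ negative with larger magnitude. So no argument ``via absolute values'' can establish the equality literally. None is needed either: \textsc{Distinguish} tests $(v_2(\frac{n}{s}\mathcal{G}))^2$, and you control that quantity in both cases.
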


    \begin{proof}
        Let $\mM$ be the random walk transition matrix of $G$. Let $\mS\in\mathbb{R}^{n\times s}$ be the matrix whose $i$-th column equals $\mathds{1}_{s_i}$. Let $\sqrt{\frac{n}{s}}\cdot \mM^t\mS=\widetilde{U}\widetilde{\Sigma}\widetilde{W}^T$ be an SVD of $\sqrt{\frac{n}{s}}\cdot \mM^t\mS$ where $\widetilde{U}\in\mathbb{R}^{n\times n},\widetilde{\Sigma}\in\mathbb{R}^{n\times n},\widetilde{W}\in\mathbb{R}^{s\times n}$. Let $\frac{n}{s}\cdot\mathcal{G}=\widehat{W}\widehat{\Sigma}\widehat{W}^T$ be an eigendecomposition of $\frac{n}{s}\cdot \mathcal{G}$. 
        
        \textbf{\Cref{itm:v2G-case1}.} Let $\sigma_\textup{err}=\frac{0.0001}{n}$. Let $c$ be the constant from~\Cref{lemma:expander-bounded-space-G}. By the assumption of the lemma, we have
        \[
            R=\frac{c_1\cdot n}{M}\ge \frac{c\cdot 10^8\cdot n}{M}= \frac{c\cdot n^{-1}}{\sigma_\textup{err}^2M}.
        \]
        Thus we can apply~\Cref{lemma:expander-bounded-space-G}. Hence, with probability at least $1-n^{-100}$, we have
        \[
            \lVert \mathcal{G}-(\mM^t\mS)^T(\mM^t\mS)\rVert_2\le s\cdot\sigma_\textup{err}.
        \]

        Let $\widetilde{A}=\frac{n}{s}\cdot (\mM^t\mS)^T(\mM^t\mS)=\widetilde{W}\widetilde{\Sigma}^2\widetilde{W}^T$ and $\widehat{A}=\frac{n}{s}\cdot \mathcal{G}$. Thus, we have $\widetilde{A}^2=\left(\frac{n}{s}\cdot (\mM^t\mS)^T(\mM^t\mS)\right)^2=\widetilde{W}\widetilde{\Sigma}^4\widetilde{W}^T$ and $\widehat{A}^2=\left(\frac{n}{s}\cdot \mathcal{G}\right)^2=\widehat{W}\widehat{\Sigma}^2\widehat{W}^T$. Moreover, we have $\lVert \widetilde{A}^2-\widehat{A}^2 \rVert_2=\left(\frac{n}{s}\right)^2\lVert \left((\mM^t\mS)^T(\mM^t\mS)\right)^2-\mathcal{G}^2 \rVert_2$. Using the triangle inequality and sub-multiplicativity of spectral norm and the above $\lVert \mathcal{G}-(\mM^tS)^T(\mM^tS)\rVert_2\le s\cdot\sigma_\textup{err}$ bound, we can get that 
        \[
            \lVert \left((\mM^t\mS)^T(\mM^t\mS)\right)^2-\mathcal{G}^2 \rVert_2\le (s\cdot \sigma_\textup{err})^2+2\cdot s\cdot \sigma_\textup{err}\lVert(\mM^t\mS)^T(\mM^t\mS) \rVert_2.
        \]

        Note that $\lVert(\mM^t\mS)^T(\mM^t\mS) \rVert_2\le \lVert(\mM^t\mS)^T(\mM^t\mS) \rVert_F=\sqrt{\sum_{i=1}^s{\sum_{j=1}^s{((\mM^t\mathds{1}_{s_i})^T(\mM^t\mathds{1}_{s_j}))^2}}}$, by Cauchy Schwarz inequality and~\Cref{itm:mt-case1} of~\Cref{lemma:expander-M^t1_x}, we can get that $\lVert(\mM^t\mS)^T(\mM^t\mS) \rVert_2\le s\cdot \frac{2}{n}$.
        Put them together and by the choice of $\sigma_\textup{err}=\frac{0.0001}{n}$, we have that 
        \[
            \lVert \widetilde{A}^2-\widehat{A}^2 \rVert_2\le \left(\frac{n}{s}\right)^2\cdot \left(s^2\sigma_\textup{err}^2+2\cdot s\cdot \sigma_\textup{err}\cdot s\cdot \frac{2}{n}\right)=n^2\sigma_\textup{err}^2+4n\sigma_\textup{err}\le 0.00005.
        \]

        Moreover, since $s\ge 1$, by~\Cref{itm:v2-mts-case1} of~\Cref{lemma:expander-v_k(M^tS)}, with probability at least $1-n^{-100}$, we have
        \[
            v_{2}\left(\widetilde{A}^2\right)=v_{2}\left(\left(\frac{n}{s}\cdot (\mM^t\mS)^T(\mM^t\mS)\right)^2\right)\le (n^{-9})^2=n^{-18}.
        \]

        By Weyl's inequality, we have that 
        \[
            v_2(\widehat{A}^2)\le v_2(\widetilde{A}^2)+\lVert \widetilde{A}^2-\widehat{A}^2 \rVert_2\le n^{-18}+0.0005\le 0.001.
        \]

        \textbf{\Cref{itm:v2G-case2}.} By the same augment of the proof of~\Cref{itm:v2G-case1} and~\Cref{itm:v2-mts-case2} of~\Cref{lemma:expander-M^t1_x}, we can get that $\lVert(\mM^t\mS)^T(\mM^t\mS) \rVert_2\le s\cdot \frac{3}{n}$. Thus, by the choice of $\sigma_\textup{err}=\frac{0.0001}{n}$, we have that
        \[
            \lVert \widetilde{A}^2-\widehat{A}^2 \rVert_2\le \left(\frac{n}{s}\right)^2\cdot \left(s^2\sigma_\textup{err}^2+2\cdot s\cdot \sigma_\textup{err}\cdot s\cdot \frac{3}{n}\right)=n^2\sigma_\textup{err}^2+6n\sigma_\textup{err}\le 0.0007.
        \]

        Moreover, since $s\ge c_2\cdot \log n$, by~\Cref{itm:v2-mts-case2} of~\Cref{lemma:expander-v_k(M^tS)}, with probability at least $1-n^{-100}$, we have
        \[
            v_{2}\left(\widetilde{A}^2\right)=v_{2}\left(\left(\frac{n}{s}\cdot (\mM^t\mS)^T(\mM^t\mS)\right)^2\right)\ge (0.99)^2>0.98.
        \]

        By Weyl's inequality, we have that 
        \[
            v_2(\widehat{A}^2)\ge v_2(\widetilde{A}^2)-\lVert \widetilde{A}^2-\widehat{A}^2 \rVert_2\ge 0.98-0.0007>0.95.
        \]
    \end{proof}

\section{Deferred proofs of~\Cref{thrm:distinguish_lower}}
    \label{appendix:section-lower-bound}

\subsection{Hard Instance I}
\label{subsec:hard-instance-1}

Before we start the proof, we would first introduce some basic definitions in
information theory.

\subsubsection{Basic definitions}

    \begin{defn} [Entropy]
        \upshape
        Given a random variable $X$ taking values in the set $\mathcal{X}$ and distributed according to $p:\mathcal{X}\rightarrow[0,1]$, the \emph{entropy} of $X$ is defined as
        \[
            H(X)\coloneqq -\sum_{x \in \mathcal{X}} p(x) \log p(x).
        \]

        In the special case where $X$ has only two possible outcoms, the entropy is given by
        \[
            H_2(X)\coloneqq -p\log p-(1-p)\log (1-p).
        \]

    \end{defn}
    
    The entropy of a random variable quantifies the average level of uncertainty or information associated with the random variable. Note that for the special case of $H_2$, we have the following property:

    \begin{lemma}
        $$1-H_2\left(\frac{1}{2}+a\right)=\frac{1}{2 \ln 2} \sum_{l=1}^{\infty} \frac{(2 a)^{2 l}}{l(2 l-1)}=O\left(a^2\right) .$$
    \end{lemma}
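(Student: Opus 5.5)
The plan is to Taylor-expand the binary entropy around $\frac12$ and match coefficients. First I fix the convention that $\log$ is base $2$, so $H_2(p)=-\frac{1}{\ln 2}\bigl(p\ln p+(1-p)\ln(1-p)\bigr)$. Substituting $p=\frac12+a$ with $|a|<\frac12$ gives
\[
H_2\left(\tfrac12+a\right)=1-\frac{1}{2\ln 2}\Bigl((1+2a)\ln(1+2a)+(1-2a)\ln(1-2a)\Bigr).
\]
To see this, write $\ln(\frac12\pm a)=-\ln 2+\ln(1\pm 2a)$. The two $\ln 2$ terms contribute $\frac{1}{\ln 2}\cdot\ln 2\cdot\bigl((\frac12+a)+(\frac12-a)\bigr)=1$. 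The remaining terms carry the factor $\frac12(1\pm 2a)$.

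It therefore suffices to expand $g(u)=(1+u)\ln(1+u)+(1-u)\ln(1-u)$ with $u=2a\in(-1,1)$. I will use $\ln(1+u)=\sum_{m\ge1}(-1)^{m+1}u^m/m$. The function $g$ is even, so only even powers survive, and each coefficient can be read off directly. The cleaner route is to note that $g'(u)=\ln(1+u)-\ln(1-u)=2\sum_{l\ge1}u^{2l-1}/(2l-1)$ and $g(0)=0$. Integrating term by term gives
\[
g(u)=\sum_{l\ge1}\frac{2u^{2l}}{2l(2l-1)}=\sum_{l\ge1}\frac{u^{2l}}{l(2l-1)}.
\]
Hence $1-H_2(\frac12+a)=\frac{1}{2\ln2}\sum_{l\ge1}\frac{(2a)^{2l}}{l(2l-1)}$, which is the stated identity. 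Term-by-term integration is valid because the power series converges uniformly on compact subsets of $(-1,1)$.

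For the $O(a^2)$ bound, I factor out $(2a)^2$. The remainder $\sum_{l\ge1}\frac{(2a)^{2l-2}}{l(2l-1)}$ is at most $\sum_{l\ge1}\frac{1}{l(2l-1)}=2\ln 2<\infty$ for all $|a|\le\frac12$. This gives $1-H_2(\frac12+a)\le \frac{4a^2}{2\ln 2}\cdot 2\ln 2=4a^2$. The endpoint $|a|=\frac12$ is covered by Abel's theorem or by direct evaluation, since both sides equal $1$ there.

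The only delicate point is bookkeeping the $\ln 2$ normalization and the factor $\frac12$. There is no real obstacle.
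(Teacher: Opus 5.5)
Your proof is correct and complete. The normalization $H_2(\frac12+a)=1-\frac{1}{2\ln 2}\,g(2a)$, the series for $g'(u)=\ln(1+u)-\ln(1-u)$, the term-by-term integration, the value $\sum_{l\ge1}\frac{1}{l(2l-1)}=2\ln 2$, and the endpoint check at $|a|=\frac12$ all hold.

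The paper states this identity as a standard fact and gives no proof, so there is nothing to compare against; yours is the standard derivation.

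One remark is worth adding. Every term of the series is nonnegative, so you actually get the uniform two-sided bound $\frac{2}{\ln 2}a^2\le 1-H_2(\frac12+a)\le 4a^2$ for all $|a|\le\frac12$. This is what the proof of \Cref{thrm:hard-instance-1} really uses, rather than a merely asymptotic $O(a^2)$, for two reasons:
\begin{itemize}
\item There $a=\frac12\E[Z_tY_{C_t}\mid M_{t-1},W_t]$ need not be small.
\item The lower bound is what justifies the step $M\ge\Theta\bigl(\sum_i\E[Z_tY_i\mid M_{t-1},W_t]^2\bigr)$.
\end{itemize}
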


    Given the outcome of another random variable $Y$, we can also quantify this randomness using conditional entropy.

    \begin{defn} [Conditional entropy]
        \upshape
        Given random variables $X$ and $Y$ taking values in sets $\mathcal{X}$ and $\mathcal{Y}$, respectively, with joint distribution $p:\mathcal{X}\times \mathcal{Y}\rightarrow[0,1]$, the \emph{conditional entropy} of $X$ given $Y$ is defined as
        \[
            H(X \mid Y)=H(X,Y) - H(Y) =-\sum_{x \in \mathcal{X}, y \in \mathcal{Y}} p(x, y) \log \frac{p(x, y)}{p(y)}.
        \]

    \end{defn}
    
    Furthermore, the amount of information that is shared between two random variables is called mutual information. 

    \begin{defn}[Mutual Information]
        \upshape
        \label{defn:mutual-information}
        Given random variables $X$ and $Y$ taking values in $\mathcal{X}$ and $\mathcal{Y}$, respectively, the \emph{mutual information} between $X$ and $Y$ is defined as
        \[
        I(X; Y) = H(X) - H(X \mid Y) = H(Y) - H(Y \mid X).
        \]
        
        Similarly, given a random variable $Z$ taking values in $\mathcal{Z}$, the \emph{conditional mutual information} of $X$ and $Y$ given $Z$ is defined as
        \[
        I(X; Y \mid Z) = H(X \mid Z) - H(X \mid Y, Z).
        \]
    \end{defn}

    Our proof will also use the following key properties of mutual information. 
    
    \begin{lemma}[Data Processing Inequality]
        \label{lemma:data-processing-ineq}
        Given random variables $X,Y$ and $Z$ taking values in sets $\mathcal{X},\mathcal{Y}$ and $\mathcal{Z}$, respectively, such that $X \perp Z \mid Y$. Then
        \[
            I(X ; Z) \leq I(X ; Y).
        \]
    \end{lemma}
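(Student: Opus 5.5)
We prove the Data Processing Inequality by expanding the mutual information $I(X;(Y,Z))$ in two ways with the chain rule. The chain rule itself is only an identity for entropies, obtained directly from \Cref{defn:mutual-information} and the definition of conditional entropy. Writing $H(X\mid Y,Z)$ for the conditional entropy given the pair, we have
\begin{align*}
I(X;(Y,Z)) &= H(X)-H(X\mid Y,Z) = \bigl(H(X)-H(X\mid Z)\bigr)+\bigl(H(X\mid Z)-H(X\mid Y,Z)\bigr)\\
&= I(X;Z)+I(X;Y\mid Z),
\end{align*}
and symmetrically $I(X;(Y,Z)) = I(X;Y)+I(X;Z\mid Y)$. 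Equating the two expansions gives
\[
I(X;Z)+I(X;Y\mid Z)=I(X;Y)+I(X;Z\mid Y).
\]

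Next we use the hypothesis $X\perp Z\mid Y$ to show $I(X;Z\mid Y)=0$. Conditional independence means $p(x,z\mid y)=p(x\mid y)\,p(z\mid y)$, so $p(x\mid y,z)=p(x\mid y)$ whenever $p(y,z)>0$. Substituting into the sum defining $H(X\mid Y,Z)$, each term $\log p(x\mid y,z)$ becomes $\log p(x\mid y)$. Summing out $z$ then gives $H(X\mid Y,Z)=H(X\mid Y)$, and hence $I(X;Z\mid Y)=H(X\mid Y)-H(X\mid Y,Z)=0$.

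Finally we need $I(X;Y\mid Z)\ge 0$; this is the main step. For each fixed $z$, the quantity $I(X;Y\mid Z=z)$ is the Kullback--Leibler divergence between $p(x,y\mid z)$ and $p(x\mid z)p(y\mid z)$. This divergence is nonnegative by Jensen's inequality applied to the concave function $\log$ (Gibbs' inequality): $-\mathrm{KL}(P\|Q)=\sum P\log(Q/P)\le\log\sum Q\le 0$. Averaging over $z$ with weights $p(z)$ gives $I(X;Y\mid Z)\ge 0$.

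Combining the three steps, $I(X;Z)=I(X;Y)-I(X;Y\mid Z)\le I(X;Y)$, which is the claim. The only subtlety is handling zero-probability outcomes. We use the convention $0\log 0=0$ and restrict every sum to the support, which keeps both the chain-rule identities and the Jensen step valid.
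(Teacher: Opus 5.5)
Your proof is correct and complete. It is the standard textbook argument: expand $I(X;(Y,Z))$ by the chain rule in both orders, kill $I(X;Z\mid Y)$ using $p(x\mid y,z)=p(x\mid y)$, and use $I(X;Y\mid Z)=\sum_z p(z)\,\mathrm{KL}\bigl(p(\cdot,\cdot\mid z)\,\|\,p(\cdot\mid z)p(\cdot\mid z)\bigr)\ge 0$ via Gibbs' inequality.

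The paper gives no proof of this lemma. It only quotes it as a standard information-theoretic fact for the lower bound in the proof of Theorem~\ref{thrm:hard-instance-1}, so your argument supplies the omitted justification. Its one implicit assumption is that entropies are finite, so the differences in Definition~\ref{defn:mutual-information} make sense. That assumption is already built into the paper's definitions, and it holds for the finite memory states and samples to which the lemma is applied.
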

    
    \begin{lemma}[Chain Rule]
        \label{lemma:chain-rule}
        Given random variables $X,Y$ and $Z$ taking values in sets $\mathcal{X},\mathcal{Y}$ and $\mathcal{Z}$, respectively, we have 
        \[
        I(X ; Y, Z)=I(X ; Z)+I(X ; Y \mid Z). 
        \]
    \end{lemma}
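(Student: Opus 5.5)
The plan is to prove the identity directly from \Cref{defn:mutual-information}, treating the pair $(Y,Z)$ as a single random variable taking values in $\mathcal{Y}\times\mathcal{Z}$. I will assume the sets $\mathcal{X},\mathcal{Y},\mathcal{Z}$ are finite (or at least that all entropies involved are finite), so that every quantity below is a real number and the subtractions are legitimate. This is the setting in which the lemma is used later, since memory states and samples range over finite sets.

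First, I expand each term using only the definitions.
\begin{itemize}
\item By \Cref{defn:mutual-information} applied to $X$ and the joint variable $(Y,Z)$: $I(X;Y,Z)=H(X)-H(X\mid Y,Z)$.
\item Likewise $I(X;Z)=H(X)-H(X\mid Z)$.
\item By the definition of conditional mutual information: $I(X;Y\mid Z)=H(X\mid Z)-H(X\mid Y,Z)$.
\end{itemize}
Adding the last two expressions, the $H(X\mid Z)$ terms cancel and leave $H(X)-H(X\mid Y,Z)$, which is exactly $I(X;Y,Z)$.

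The only point needing care is that the term $H(X\mid Y,Z)$ in the conditional mutual information is the same quantity as the conditional entropy of $X$ given the joint variable $(Y,Z)$. By the conditional entropy definition, this is $H(X,Y,Z)-H(Y,Z)$, i.e.\ $-\sum p(x,y,z)\log\frac{p(x,y,z)}{p(y,z)}$, in both places. Once the two occurrences are identified, the telescoping above is purely algebraic.

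I do not expect a genuine obstacle. The main thing to verify is this notational consistency, together with finiteness, which avoids $\infty-\infty$.
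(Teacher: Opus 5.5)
Your proof is correct: with $(Y,Z)$ treated as a single variable, the identity is the telescoping $\bigl(H(X)-H(X\mid Z)\bigr)+\bigl(H(X\mid Z)-H(X\mid Y,Z)\bigr)=H(X)-H(X\mid Y,Z)$ obtained from the paper's definitions. Your finiteness assumption also covers every use in the lower-bound argument (a bit $X$, finite memory states and samples); the paper itself gives no proof and cites the chain rule as a standard property of mutual information, so your derivation is the standard argument it implicitly relies on.
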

    
\subsubsection{The proof}

    Now we prove~\Cref{thrm:hard-instance-1}.
    
    \begin{theorem}[Restatement of~\Cref{thrm:hard-instance-1}]
        Let $\mathcal{A}$ be an algorithm that detects the Hard Instance I with error at most $1 / 3$. The algorithm can access the samples in a single-pass streaming fashion using $M$ bits of space and $T$ samples. Furthermore, at each step, the algorithm may choose which set to sample by specifying $W_t$. We then have $T \cdot M=\Omega\left(n\right)$.
    \end{theorem}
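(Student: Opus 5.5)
We follow the information-theoretic approach of \citet{diakonikolas2019communication}. Let $m_t$ be the $M$-bit memory state after $t$ samples. Since the output depends only on $m_T$, and the algorithm decides $X$ with error at most $1/3$, the data processing inequality (\Cref{lemma:data-processing-ineq}) gives $I(X;m_T)\ge \Omega(1)$: a constant advantage in guessing $X$ forces constant mutual information. The aim is to show
\[
I(X;m_T)\le O\!\left(\frac{T M}{n}\right),
\]
which directly yields $T\cdot M=\Omega(n)$.

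\textbf{Step 1: reduce to the hidden bits.} Under $X=0$ every sample is uniform on $[2n]$, and under $X=1$ it is uniform conditioned on the hidden vector $Y=(Y_1,\dots,Y_n)$. A direct way to link the two is to write $I(X;m_T)\le I(X,Y;m_T)$ and to control how much information about $Y$ each new sample adds to the memory. A cleaner route, following Diakonikolas et al., bounds $d_{\mathrm{TV}}$ or $\chi^2$ between the laws of $m_T$ under $X=0$ and under $X=1$ by a sum over steps. At step $t$, conditioned on $m_{t-1}$ and on $W_t$ (chosen as a function of $m_{t-1}$), the sample $s_t$ lands in pair $i$ with probability $1/n$ under both hypotheses. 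Within pair $i$, it picks the element determined by $Y_iZ_t$ with probability $1$ under $X=1$, and with probability $1/2$ under $X=0$.

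\textbf{Step 2: per-step bound.} The key per-step bound is
\[
I(X;m_t)-I(X;m_{t-1})\lesssim \frac{1}{n}\sum_i \bigl(\text{bias of } Y_iZ_t \text{ given } m_{t-1}\bigr)^2.
\]
To prove it, write $a_i=\E[Y_iZ_t\mid m_{t-1}]$. Conditioned on landing in pair $i$, the new sample is a bit with bias $a_i/2$ under the mixture, which reveals $1-H_2(\tfrac12+a_i/2)=O(a_i^2)$ by the binary-entropy lemma above. Averaging over $i$ gives a contribution $O\!\bigl(\tfrac{1}{n}\sum_i a_i^2\bigr)$.

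\textbf{Step 3: the main obstacle.} The hard part is showing $\sum_i a_i^2\le O(M)$ for an $M$-bit state, for every choice of $W_t$. Since $Z_t$ is a deterministic function of $W_t$ and $Y$, the memory must carry information about $Y_i$ for $a_i$ to be nonzero. I would try the standard argument for bounding the Fourier weight of a bounded-entropy state. Because $H(m_{t-1})\le M$ and the $Y_i$ are independent uniform signs, we get $\sum_i I(Y_i;m_{t-1})\le I(Y;m_{t-1})\le M$. Moreover $I(Y_i;m)\ge \Omega(\E[\E[Y_i\mid m]^2])$ by Pinsker, equivalently by the $1-H_2$ expansion. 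This gives $\sum_i \E[a_i^2]\le O(M)$. The complication from $W_t$ is that $Z_t$ depends on $Y_{\lceil W_t/2\rceil}$. I would handle it by noting that $W_t$ changes the sign convention in a single pair only, so it adds at most $O(1)$ to the sum.

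\textbf{Step 4: conclude.} Summing over the $T$ steps gives $I(X;m_T)\le O(TM/n)$, and combining with $I(X;m_T)=\Omega(1)$ completes the proof. The delicate points are the correct chain-rule bookkeeping (\Cref{lemma:chain-rule}) when conditioning on $X$, and ensuring that the per-step inequality uses the mixture over $X$ rather than conditioning on it.
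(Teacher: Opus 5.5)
Your skeleton matches the paper's: $I(X;m_T)=\Omega(1)$, the chain rule, a per-step bound of order $\frac1n\sum_i a_i^2$ from the expansion of $1-H_2$, and then $\sum_i a_i^2=O(M)$. But Step~3 breaks exactly at the complication you flag.

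With $i^*=\lceil W_t/2\rceil$ one has $Z_t=\pm Y_{i^*}$, with the sign fixed by the parity of $W_t$. Hence $Y_iZ_t=\pm Y_iY_{i^*}$ for \emph{every} $i$. Multiplying by $Z_t$ is a random global flip, not a change of convention in one pair. So $a_i$ is the pairwise correlation $\E[Y_iY_{i^*}\mid m_{t-1}]$, and the quantity you bound, $\sum_i\E\bigl[\E[Y_i\mid m_{t-1}]^2\bigr]$, does not control it.

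In this instance your quantity is in fact identically zero. Under $X=1$ each sample is uniform on the half containing $W_t$, and that half is invariant under $Y\mapsto -Y$. So the posterior of $Y$ given the memory is flip-symmetric, and $\E[Y_i\mid m_{t-1}]=0$. Your Step~3 would therefore give $\sum_i a_i^2=O(1)$, hence $T=\Omega(n)$ for every memory size. That is false: take $W_t\equiv 1$ and store all $O(\sqrt n)$ samples ($\widetilde O(\sqrt n)$ bits). Under $X=0$, two distinct elements of a common pair appear with constant probability, which never happens under $X=1$. Relatedly, the memory need not carry information about $Y_i$ for $a_i\neq 0$; information about $Y_iY_{i^*}$ suffices.

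The repair, which is the paper's route, is to apply your entropy argument to $U=(Z_tY_1,\dots,Z_tY_n)$ rather than to $Y$. Conditioned on $m_{t-1}=m$, $W_t$ is fixed and the map $Y\mapsto U$ is exactly two-to-one, with fibres $\{y,-y\}$. So under $X=1$, $H(U\mid m_{t-1})\ge H(Y\mid m_{t-1})-1\ge n-M-1$. Subadditivity then gives $\sum_i\bigl(1-H(U_i\mid m_{t-1})\bigr)\le M+1$, and $1-H_2(\tfrac12+\tfrac a2)\ge a^2/(2\ln 2)$ yields $\E\sum_i a_i^2=O(M)$. The paper asserts $H(U)=n$; it is $n-1$ because $U_{i^*}$ is deterministic, which is harmless.

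Also settle the bookkeeping you leave open at the end. Define $a_i$ conditionally on $X=1$, and keep the factor $\alpha_{t-1}=\Pr[X=1\mid m_{t-1}]$ in Step~2: there the information given $m_{t-1}$ and the pair $i$ is at most $\alpha_{t-1}\bigl(1-H_2(\tfrac12+\tfrac{a_i}{2})\bigr)$. That factor turns the average over $m_{t-1}$ under the mixture into half an average under $X=1$. This is the only law under which $Y$ is uniform and the $M$-bit bound holds. If you take $a_i$ to be the bias under the mixture instead, the information is not $O(a_i^2)$ when $\alpha_{t-1}$ is small.
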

    
    \begin{proof}[Proof of~\Cref{thrm:hard-instance-1}]
            
            In either case, we can think of the output of $p$ as being a pair $(C, V)$, where $C$ is an element of $[n]$ is chosen uniformly, and $V \in\{0,1\}$ is a fair coin if $X=0$ and has bias $ Y_CZ_t$ if $X=1$.
            
            Let $s_1, \ldots, s_T$ be the observed samples from $p$. Let $M_t$ denote the bits stored in the memory after the algorithm sees the $t$-th sample $s_t$.
            
            Since the algorithm $\mathcal{A}$ learns $X$ with probability at least $2 / 3$ after viewing $T$ samples, we know that $I\left(X ; M_T\right)>\Omega(1)$. On the other hand, $M_t$ is computed from $\left(M_{t-1}, s_t\right)$ without using any information about $X$. More formally, $X \perp M_t \mid\left(M_{t-1}, s_t\right)$ and therefore we can use the data processing inequality (\Cref{lemma:data-processing-ineq}) and chain rule (\Cref{lemma:chain-rule}) to get:
            
            $$
            I\left(X ; M_t\right) \leq I\left(X ; M_{t-1}, s_t\right)=I\left(X ; M_{t-1}\right)+I\left(X ; s_t \mid M_{t-1}\right).
            $$
            
            Since irrespective of $X, C$ is uniform over the pairs of bins, we note that $C$ is independent of $X$ even when conditioned on the memory $M$. Moreover, player's choice of $W_t$ is computed only from $M_{t-1}$. Thus,
            
            $$
            I\left(X ; s_t \mid M_{t-1}\right)=I\left(X ; C_t V_t \mid M_{t-1}\right)=I\left(X ; V_t \mid M_{t-1} C_t\right) = I\left(X ; V_t \mid M_{t-1} C_tW_t\right).
            $$

            Let $\alpha_{t-1}=\operatorname{Pr}\left[X=1 \mid M_{t-1} C_t W_t\right]$ and thus $\operatorname{Pr}\left[X=0 \mid M_{t-1} C_t W_t\right]=1-\alpha_{t-1}$.
            
            We have that 
            
            $$
            \begin{aligned}
            \operatorname{Pr}\left[V_t=0 \mid X=0, M_{t-1}, C_t, W_t\right] & =\frac{1}{2}, \\
            \operatorname{Pr}\left[V_t=0 \mid X=1, M_{t-1}, C_t, Z_t\right] & =\frac{1+ \E\left[Z_t Y_{C_t} \mid M_{t-1},W_t\right]}{2},\\
            \operatorname{Pr}\left[V_t=0 \mid M_{t-1}, C_t\right] & =\left(1-\alpha_{t-1}\right) \frac{1}{2}+\alpha_{t-1} \frac{1+ \E\left[Z_t Y_{C_t} \mid M_{t-1},W_t\right]}{2} \\&=\frac{1}{2}+\frac{\alpha_{t-1} \E\left[Z_t Y_{C_t} \mid M_{t-1},W_t\right]}{2} .
            \end{aligned}
            $$
            
            We can calculate
            
            $$
            \begin{aligned}
            I & \left(X ; V_t \mid M_{t-1} C_t W_t\right)=H\left(V_t \mid M_{t-1} C_t W_t\right)-H\left(V_t \mid M_{t-1} C_t W_t X\right) \\
            = & H_2\left(\Pr\left[V_t=0 \mid M_{t-1}, C_t, W_t\right]\right)\\
            &-\left\{\Pr\left[X=1 \mid M_{t-1} C_t W_t\right] H_2\left(\Pr\left[V_t=0 \mid X=1, M_{t-1}, C_t, W_t\right]\right)\right. \\
            & \left.+\Pr\left[X=0 \mid M_{t-1} C_t W_t\right] H_2\left(\Pr\left[V_t=0 \mid X=0, M_{t-1}, C_t, W_t\right]\right)\right\} \\
            = & H_2\left(\frac{1}{2}+\frac{\alpha_{t-1}  \E\left[Z_t Y_{C_t} \mid M_{t-1},W_t\right]}{2}\right)-\alpha_{t-1} H_2\left(\frac{1}{2}+\frac{ \E\left[Z_tY_{C_t} \mid M_{t-1},W_t\right]}{2}\right)-\left(1-\alpha_{t-1}\right) H_2\left(\frac{1}{2}\right) \\
            = & \alpha_{t-1}\left[1-H_2\left(\frac{1}{2}+\frac{ \E\left[Z_t Y_{C_t} \mid M_{t-1},W_t\right]}{2}\right)\right]-\left[1-H_2\left(\frac{1}{2}+\frac{\alpha_{t-1}  \E\left[Z_t Y_{C_t} \mid M_{t-1}, W_t\right]}{2}\right)\right] \\
            = & \Theta(1) \left[ \alpha_{t-1} \left(\frac{ \E\left[Z_t Y_{C_t} \mid M_{t-1},W_t\right]}{2}\right)^2 - \left(\frac{\alpha_{t-1} \E\left[Z_t Y_{C_t} \mid M_{t-1},W_t\right]}{2}\right)^2 \right] \\
            = & \Theta(1) \alpha_{t-1}\left(1-\alpha_{t-1}\right) \E\left[Z_t Y_{C_t} \mid M_{t-1},W_t\right]^2 \\
            \leq & O(1) \E\left[Z_t Y_{C_t} \mid M_{t-1},W_t\right]^2.
            \end{aligned}
            $$
            Since $C_t$ is uniformly random, we have that

            $$
            \begin{aligned}
            &I\left(X ; V_t \mid M_{t-1} C_t W_t\right)=\frac{1}{n} \cdot \sum_{j=1}^n O(1) \E\left[Z_t Y_j \mid M_{t-1}, W_t\right]^2.
            \end{aligned}
            $$

            Now to bound this part, note that we first have $H\left(M_{t-1}, W_t\right) \leq M$ that $I\left(Z_t Y_1 \ldots Z_t Y_n ; M_{t-1}, W_t\right) \leq M$. At the same time, notice that $Z_t$ is just flipping the value of $Y_1,\ldots,Y_n$ and thus $H\left(Z_t Y_1 \ldots Z_t Y_n\right) = H\left(Y_1 \ldots Y_n\right) = n$. Thus we have  
            
            $$H\left(Z_t Y_1 \ldots Z_t Y_n \mid M_{t-1}, W_t\right)=H\left(Z_t Y_1 \ldots Z_t Y_n\right)-I\left(Z_t Y_1 \ldots Z_t Y_n ; M_{t-1}, W_t\right) \geq n-M.$$
            
            On the other hand, we have that 
    
    $$
    \sum_{i=1}^n H\left(Z_t Y_i \mid M_{t-1}, W_t\right) \geq H\left(Z_t Y_1 \ldots Z_t Y_n \mid M_{t-1}, W_t\right) \geq n-M.
    $$

    Thus,
    
    $$
    M \geq \sum_{i=1}^n\left[1-H\left(Z_t Y_i \mid M_{t-1},W_t\right)\right]=\Theta\left(\sum_{i=1}^n \E\left[Z_t Y_i \mid M_{t-1}, W_t\right]^2\right),
    $$
    
    where the equality comes from the fact that if $\operatorname{Pr}\left[Z_t Y_i=1 \mid M_{t-1}, W_t\right]=\frac{1}{2}+\beta$, then 
    
    \begin{align*}
    \E\left[Z_t Y_i \mid M_{t-1}, W_t\right]&=\Pr\left[Z_t Y_i=1 \mid M_{t-1}, W_t\right](+1)+\Pr\left[Z_t Y_i=-1 \mid M_{t-1}, W_t\right](-1)\\&=\left(\frac{1}{2}+\beta\right)-\left(\frac{1}{2}-\beta\right)=2 \beta.
    \end{align*}
            
    We finally have that 
    
    $$
    \begin{aligned}
    \Omega(1) \leq I\left(M_T ; X\right) & =\sum_{t=0}^{T-1} I\left(M_{t+1} ; X\right)-I\left(M_t ; X\right) \\
    & =\sum_{t=0}^{T-1} I\left(M_t, S_{t+1} ; X\right)-I\left(M_t ; X\right) \\
    & =\sum_{t=0}^{T-1} I\left(S_{t+1} ; X \mid M_t\right) \\
    & =\sum_{t=0}^{T-1} I\left(V_{t+1} ; X \mid M_t, C_{t+1}, W_{t+1}\right) \\
    & = O(1) \frac{T\cdot M}{n} .
    \end{aligned}
    $$
    
    We conclude that $T\cdot M \geq \Omega(n)$.
    \end{proof}

\subsection{Hard Instance II}
\label{subsec:hard-instance-2}

To prove~\Cref{lemma:mixing time}, we first introduce the definition of mixing time.
    \begin{defn}[Mixing time]
        \label{defn:mixing time}
        \upshape
        Let $G=(V,E)$ be a $d$-regular graph on $n$ vertices. Let $\mM$ be the lazy random walk transition matrix of $G$. Let $\vm_t = \mM^t\vm_0$, where $\vm_0$ is a distribution over $[n]$. Let $\pi=(\frac{1}{n},\dots,\frac{1}{n})^T$ be the stationary distribution of $G$. Then the \emph{mixing time} $\tau_\varepsilon(\mM)$ is defined to be the smallest $t$ such that for any $\vm_0$, $d_\textup{TV}(\vm_x,\pi)\le \varepsilon$.
    \end{defn}
    
    \begin{proof}[Proof of~\Cref{lemma:mixing time}]
        Note that $\pi=(\frac{1}{n},\dots,\frac{1}{n})^T\in\mathbb{R}^n$ is the stationary distribution of $G$. According to spectral graph theory, we have $\tau_\varepsilon(\mM)=O(\frac{1}{\phi(G)^2})\log (\frac{n}{\varepsilon})$. Let $\varepsilon=\frac{0.01}{n^2}$. Note that $G$ is a $\varphi$-expander, we have that $\phi(G)=\varphi$ (see~\Cref{defn:conductance}). Therefore, according to the definition of mixing time, we get that for $t=\tau_\varepsilon(\mM)=O(\frac{1}{\varphi^2}\log (\frac{n}{\frac{0.01}{n^2}}))=O(\frac{\log n}{\varphi^2})$, we have that $d_\textup{TV}(\mM^t\mathds{1_x},\pi)\le \frac{0.01}{n^2}$.
    \end{proof}

\section{Experimental details}
    \label{appendix:section-experiment}
    
    \paragraph{Accuracy} Let $C_1,\dots,C_k$ be the ground-truth clustering and let $\widehat{C}_1,\dots,\widehat{C}_k$ be the clusters produced by the oracle, where $\widehat{C}_i=\{x\in V|$\textsc{WhichCluster}$(G,x)=i\}$. The accuracy is defined as $\frac{1}{n}\cdot \max_{\pi}\sum_{i=1}^k{|C_i\cap \widehat{C}_{\pi(i)}|}$, where $\pi:[k]\rightarrow [k]$ is a permutation. 

    \paragraph{Implementation details}

    In our experiments, we implemented three main components: (i) the new dot product oracle proposed in this paper (Alg. \ref{algo:bounded-space-initialize-oracle} and Alg. \ref{algo:bounded-space-query-dot}), (ii) the original dot product oracle in \citet{gluch2021spectral}, and (iii) the spectral clustering oracle relies on a $\poly(k)$ conductance gap itself. The clustering oracle relies on accurate dot product estimates to function correctly; hence, we first needed to identify parameters that ensure reliable dot product estimation performance. These parameters include (i) $s_\textup{dot}$, the number of sampled vertices in dot product oracle, (ii) $t$, the random walk length and (iii) $l$, the number of repetitions in the median trick, and a set of space--time-related parameters.

\begin{figure}[H]
    \centering
    \begin{subfigure}{0.49\textwidth}
        \centering
        \includegraphics[width=\linewidth]{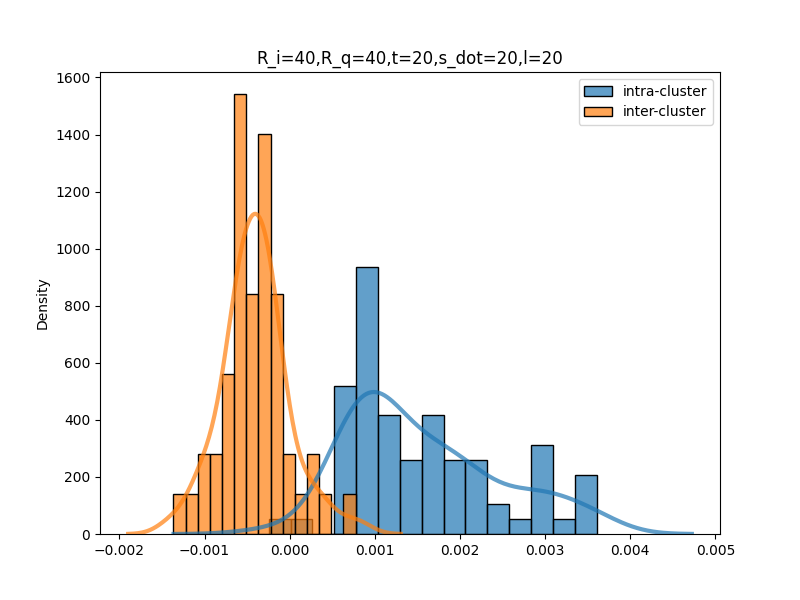}
        \caption{unsuitable parameter values: $R_\textup{init}=R_\textup{query}=40$}
        \label{fig:fig1}
    \end{subfigure}
    \hfill
    \begin{subfigure}{0.49\textwidth}
        \centering
        \includegraphics[width=\linewidth]{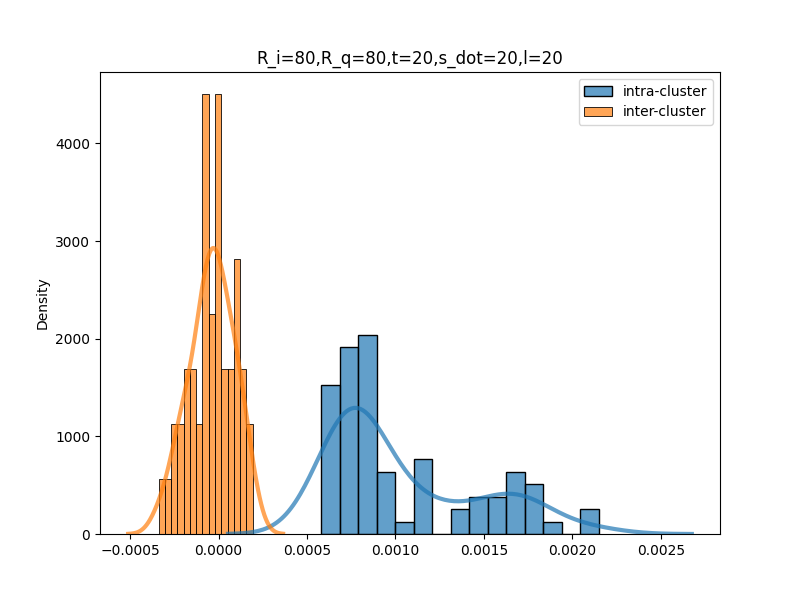}
        \caption{suitable parameter values: $R_\textup{init}=R_\textup{query}=80$}
        \label{fig:fig2}
    \end{subfigure}
    \caption{Effect of parameter settings on the original dot product oracle. (a): an unsuitable configuration where the estimated spectral dot products for intra-cluster and inter-cluster pairs overlap. (b): a suitable configuration where a clear gap emerges between the two distributions.}
    \label{fig:two_figs}
\end{figure}

    For the original dot product oracle in \citet{gluch2021spectral}, $R_\textup{init},R_\textup{query}$ are the space--time-related parameters. We set $R_\textup{init}$ and $R_\textup{query}$ according to the theoretical guarantee, which states that the oracle works when $R_\textup{init}=R_\textup{query}=O(\sqrt{n})$. Following the implementation details in \citet{shen2024sublinear}, we explored multiple parameter configurations for $s_\textup{dot},t,l,R_\textup{init}=R_\textup{query}$. For each configuration, we initialized the dot product oracle with the corresponding parameters, sampled a subset of vertex pairs, computed their estimated spectral dot products, and plotted the density graphs (see~\Cref{fig:two_figs}). The presence of a clear gap (see~\Cref{fig:fig2}) in the density graph was used as the criterion for selecting suitable parameter values. In fact, for a graph with parameters $n=3000$, $k=3$, $p=0.07$, and $q=0.002$, we found that $s_\textup{dot}=20$, $t=20$, $l=20$, and $R_\textup{init}=R_\textup{query}\ge 80$ provided reliable estimates. And we make $80\times 80$ a concrete instantiation of $O(\sqrt{n})\times O(\sqrt{n})=O(n)$.
    
    For the new dot product oracle, we set $s_\textup{dot}=20,t=20$ and $l=20$ like above. The space--time-related parameters $M_\textup{init}=M_\textup{query}$ serve as inputs, corresponding to $R_\textup{init}^\textup{our}=R_\textup{query}^\textup{our}=\frac{80\times 80}{M_\textup{init}}=\frac{6400}{M_\textup{init}}$ (see line $2$ of Alg. \ref{algo:bounded-space-initialize-oracle} and Alg. \ref{algo:bounded-space-query-dot}). In our experiments, we varied $M_\textup{init}=M_\textup{query}$ in the range $[30,80]$.
    
    Finally, for the clustering oracle itself, we determined the number of sampled vertices $s$ (see line $3$ of Alg. \ref{alg:shen-construct}) through extensive testing of multiple candidate values, and selected $s=21$ for all experiments. Additionally, we set a threshold $\theta$ (see line $8$ of Alg. \ref{alg:shen-construct}) to construct similarity graph; based on the density plots of estimated dot products (see~\Cref{fig:fig2}), we chose $\theta\approx 0.0005$.

\end{document}